\newif\ifpxfont\pxfonttrue
\newif\ifdraft\draftfalse
\newif\iffull\fulltrue
\newif\ifsubmission\submissionfalse
\definecolor{DarkGreen}{RGB}{0,100,0}
\newcommand\nk[1]{\textcolor{red}{[#1 -nk]}}
\newcommand\changed[1]{\textcolor{red}{#1}}
\newcommand\wtnb[1]{\textcolor{blue}{[#1 -wtnb]}}
\newcommand\tk[1]{\textcolor{DarkGreen}{[#1 -tk]}}
\newcommand\tkchanged[1]{\textcolor{DarkGreen}{#1}}
\newcommand\nk[1]{}
\newcommand\changed[1]{#1}
\newcommand\wtnb[1]{}
\newcommand\tk[1]{}
\newcommand\tkchanged[1]{#1}
\newcommand\IT{\bigwedge}
\newenvironment{myproof}{\begin{proof}}{\qed\end{proof}}
\newcommand\judg[2]{#1\pHFL #2}
\newcommand\HFLp[1]{\textrm{HFL}\(_{#1}\)}
\newcommand\HFLZ{\HFLp{\INTSET}}
\newcommand\HFLN{\HFLp{\mathbf{Nat}}}
\newcommand\HFLP{\HFLp{\emptyset}}
\newcommand\citeN[1]{\cite{#1}}
\newcommand\pGlb{\textstyle\bigwedge}
\newcommand\pLub{\textstyle\bigvee}
\newcommand\LUB{\bigsqcup}
\newcommand\Lub{\textstyle\bigsqcup}
\newcommand\GLB{\bigsqcap}
\newcommand\Glb{\textstyle\bigsqcap}
\newcommand\Ra{\Rightarrow}
\newcommand\obsolete[1]{}
\newcommand\rulesp{\ \\[-1.5ex]}
\newcommand\Cont{\mathit{Cont}}
\newcommand\Cocont{\mathit{Cocont}}
\newcommand\eqcont[1]{=_{\mathit{cont},#1}}
\newcommand\eqcocont[1]{=_{\mathit{cocont},#1}}
\newcommand\Eqcont{=_{\mathit{cont}}}
\newcommand\Eqcocont{=_{\mathit{cocont}}}
\newcommand\Incseq[1]{{#1}^{(\uparrow\omega)}}
\newcommand\Decseq[1]{{#1}^{(\downarrow\omega)}}
\newcommand\COL{\mathbin{:}}
\newcommand\DEF{=}
\newcommand\dom{\mathit{dom}}
\newcommand\codom{\mathit{codom}}
\newcommand\IFF{\Leftrightarrow}
\newcommand\OP{\mathbin{\mathtt{op}}}
\newcommand\INT{\typInt}
\newcommand\Tint{\typInt}
\newcommand\imply{\Rightarrow}
\newcommand\lab{a}
\newcommand\labseq{\seq{\ell}}
\newcommand\Lab[1]{\mathtt{#1}}
\newcommand\p{\vdash}
\newcommand\TRUE{\mathbf{true}}
\newcommand\FALSE{\mathbf{false}}
\newcommand\formT{\TRUE}
\newcommand\formF{\FALSE}
\newcommand\arity{\mathtt{arity}}
\newcommand\order{\mathtt{order}}
\newcommand\FAIL{\mathtt{fail}}
\newcommand\lts{\mathtt{L}}
\newcommand\seq[1]{\widetilde{#1}}
\newcommand\set[1]{\{#1\}}
\newcommand\Vunit{(\,)}
\newcommand\READ{\mathtt{read}}
\newcommand\CLOSE{\mathtt{close}}
\newcommand\D{\mathcal{D}} 
\newcommand\red{\longrightarrow}
\newcommand\PredSet{\mathbf{Pred}}
\newcommand\HES{\mathcal{E}}
\newcommand\HESf{\Phi}
\newcommand\toHFL{\mathit{toHFL}}
\newcommand\form{\varphi}
\newcommand\Some[1]{\langle{#1}\rangle}
\newcommand\typProp{\bullet}
\newcommand\typInt{\mathtt{int}}
\newcommand\All[1]{[#1]}
\newcommand\munu{\alpha}
\newcommand\HFLte{\Delta}
\newcommand\HFLae{\form}
\newcommand\pHFL{\p_{\mathtt{H}}}
\newcommand\typ{\tau}
\newcommand\etyp{\sigma}
\newcommand\sqleq{\sqsubseteq}
\newcommand\INTSET{\mathbf{Z}}
\newcommand\sem[1]{\llbracket #1 \rrbracket}
\newcommand\Sem[2]{\sem{#2}_{#1}}
\newcommand\csem[1]{\sem{#1}_{\mathit{cont}}}
\newcommand\cocsem[1]{\sem{#1}_{\mathit{cocont}}}
\newcommand\HFLenv{\rho}
\newcommand\LFP[1]{\mathbf{lfp}_{#1}}
\newcommand\GFP[1]{\mathbf{gfp}_{#1}}
\newcommand\St{U}
\newcommand\init{\mathtt{init}}
\newcommand\stunique{\st_\star}
\newcommand\st{\mathtt{s}}
\newcommand\Act{A}
\newcommand\TR{\mathbin{\longrightarrow}}
\newcommand\Ar[1]{\stackrel{#1}{\TR}}
\newcommand\Pred[2]{\stackrel{#1}\longrightarrow_{#2}}
\newcommand\Preds[3][*]{\mathbin{\stackrel{#2}{\longrightarrow}\!\!\!{\,}^{#1}_{#3}}}
\newcommand\Traces{\mathbf{Traces}}
\newcommand\FinTraces{\mathbf{FinTraces}}
\newcommand\FullTraces{\mathbf{FullTraces}}
\newcommand\InfTraces{\mathbf{InfTraces}}
\newcommand\term{t}
\newcommand\termalt{s} 
\newcommand\termaltu{u} 
\newcommand\Pae{\term} 
\newcommand\ifexp[2]{\mathbf{if}\ #1\ \mathbf{then}\ #2\ \mathbf{else}\ }
\newcommand\unitexp{(\,)}
\newcommand\evexp[2]{\mathbf{event}\ #1; #2}
\newcommand\evatom[1]{\mathbf{event}\ #1}
\newcommand\evname[1]{\mathtt{#1}}
\newcommand\Tunit{\star}
\newcommand\Pst{\kappa}
\newcommand\Pest{\eta}
\newcommand\pST{\p}
\newcommand\STE{\mathcal{K}}
\newcommand\nondet{\Box}
\newcommand\EvSet{\mathbf{Ev}}
\newcommand\mainfun{\mathbf{main}}
\newcommand\prog{P}
\newcommand\Funcs{\mathbf{funs}}
\newcommand\progd{D}
\newcommand\Assert{\mathbf{assert}}
\newcommand\Sum{\mathit{sum}}
\newcommand\redv[1]{\longrightarrow_{#1}}
\newcommand\nonredv[1]{\mathbin{\ \ \not\!\!\longrightarrow_{#1}}}
\newcommand\redvd[1]{\longrightarrow_{#1,\mathtt{dist}}}
\newcommand\negredvd[1]{\mathbin{\ \ \not\!\!\!\longrightarrow_{#1,\mathtt{dist}}}}
\newcommand\redvds[1]{\longrightarrow^*_{#1,\mathtt{dist}}}
\newcommand\redvs[1]{\longrightarrow^*_{#1}}
\newcommand\EC{E} 
\newcommand\Hole{[\,]}
\newcommand\PWA{\mathcal{A}} 
\newcommand\INF{\mathbf{Inf}}
\newcommand\Callseq{\mathbf{Callseq}}
\newcommand\InfCallseq{\mathbf{InfCallseq}}
\newcommand\CSA{\mathit{csa}}
\newcommand\MAX{\mathbf{max}}
\newcommand\livetrans[1]{{#1}^{\dagger_{\CSA}}}
\newcommand\ITE{\Gamma} 
\newcommand\RaiseP{\uparrow} 
\newcommand\ESRel{\Rightarrow} 
\newcommand\Atype{\theta}
\newcommand\Itype{\rho}
\newcommand\App[1]{{#1}\;}
\newcommand\dup[1]{\mathtt{dup}(#1)}
\newcommand\pInter{\p_{\PWA}}
\newcommand\qinit{q_I}
\newcommand\qinitA[1]{q_{I,#1}}
\newcommand\Pfun{\Omega}
\newcommand\may{\textit{may}}
\newcommand\must{\textit{must}}
\newcommand\Path{\textit{path}}
\newcommand\Must{\textbf{Must}}
\newcommand\trMay[1]{{#1}^{\dagger_{\may}}}
\newcommand\trMust[1]{{#1}^{\dagger_{\must}}}
\newcommand\trPath[1]{{#1}^{\dagger_{\Path}}}
\newcommand\A{A} 
\newcommand{\SGame}{\mathcal{G}}
\newcommand{\Nodes}{V}
\newcommand{\Edges}{E}
\newcommand{\Strategy}{\mathcal{S}}
\newcommand{\CounterGame}{\overline{\SGame}}
\newcommand{\CallSeq}{\rightsquigarrow}
\newcommand{\CallSeqN}[1]{\stackrel{#1}{\CallSeq}}
\newcommand{\RTyPath}{\pi}
\newcommand\Types[1]{\mathbf{Types}_{#1}}
\newcommand{\CounterStrategy}{\overline{\Strategy}}
\newcommand{\PLeft}{\mathtt{L}}
\newcommand{\PRight}{\mathtt{R}}
\newcommand{\TopEnv}{\Xi}
 \newcommand\APP{\mathtt{app}}
\newcommand{\Lang}{\mathcal{L}}
\newcommand{\runseq}{R}
\newcommand\Mark{\sharp}
\newcommand\recall[1]{{\CallSeq}_{#1}}
\newcommand\Nat{\mathbb{N}}
\newcommand{\EnvRel}{\lhd}
\newcommand{\ZEnv}[1]{[#1]}
\newcommand{\TopEnvZ}{\ZEnv{\TopEnv}}
\newcommand{\TopEnvZM}{\ZEnv{\TopEnv^\Mark}}
\newcommand{\enondet}{\circ} 
\newcommand\eESRel{\Rrightarrow} 
\newcommand\InfruleS[3]{\begin{minipage}{#1\textwidth}\infrule{#2}{#3}\end{minipage}}
\begin{document}

\title{Higher-Order Program Verification\\ via HFL Model Checking}         

\author{Naoki Kobayashi\and
Takeshi Tsukada\and
Keiichi Watanabe}
\institute{The University of Tokyo}

\maketitle

\begin{abstract}
There are two kinds of higher-order extensions of model checking: HORS model checking
and HFL model checking. Whilst the former has been applied to automated verification of
higher-order functional programs, applications of the latter have not been well studied.
In the present paper, we show that various verification
problems for functional programs,
including may/must-reachability, trace properties, and linear-time temporal properties (and their negations),
can be naturally reduced to (extended) HFL model checking.
The reductions yield a sound and complete logical characterization of those program properties.
Compared with the previous approaches based on HORS model checking, our approach
provides a more uniform, streamlined method for higher-order program 
\iffull
verification.\footnote{A shorter version of this article is published in Proceedings of ESOP 2018.}
\else
verification.
\fi
\end{abstract}

\section{Introduction}
\label{sec:intro}

There are two kinds of higher-order extensions of model checking in the literature:
HORS model checking~\cite{Knapik02FOSSACS,Ong06LICS} and HFL model checking~\cite{Viswanathan04}.
The former is concerned about whether the tree generated by a given higher-order tree grammar
called a higher-order recursion scheme (HORS) satisfies the property expressed by
a given modal \(\mu\)-calculus formula (or a tree automaton), and the latter is concerned about
whether a given finite state system satisfies the property expressed by a given formula of
higher-order modal fixpoint logic (HFL), a higher-order extension of the modal \(\mu\)-calculus.
Whilst HORS model checking has been applied to automated verification of higher-order functional
programs~\cite{Kobayashi09POPL,
Kobayashi13JACM,KSU11PLDI,Ong11POPL,UnnoTK13,
Kuwahara2015Nonterm,Watanabe16ICFP}, there have been few studies on applications of HFL model checking to
program/system verification. Despite that HFL has been introduced more than 10 years ago, 
we are only aware of applications to 
 assume-guarantee reasoning~\cite{Viswanathan04} and 
 process equivalence checking~\cite{LangeLG14}.

In the present paper, we show that various verification problems for
higher-order functional programs can actually be reduced to (extended) HFL model checking
in a rather natural manner.
We briefly explain the idea of our reduction below.\footnote{In this section, we use
only a fragment of HFL that can be expressed in the modal \(\mu\)-calculus.
Some familiarity with the modal \(\mu\)-calculus~\cite{Kozen83} would help.} 
We translate a program to an HFL formula that says ``the program
has a valid behavior'' (where the \emph{validity} of a behavior depends on each verification problem).
Thus, a program is actually mapped to a \emph{property}, and a program property is mapped to a system to be
verified; this has been partially inspired by the recent work of Kobayashi et al.~\cite{Kobayashi17POPL},
where HORS model checking problems have been translated to HFL model checking problems by switching 
the roles of models and properties.

For example, consider a simple program fragment \(\READ(x); \CLOSE(x)\) that reads and then
closes a file (pointer)
\(x\). The transition system in Figure~\ref{fig:file} shows a valid access protocol to 
read-only files. Then, the property that a read operation is allowed in the current state
can be expressed by a formula of the form \(\Some{\READ}\varphi\), which
says that the current state has a \(\READ\)-transition, after which \(\varphi\) is satisfied. 
Thus, the program \(\READ(x); \CLOSE(x)\) being valid is expressed as
\(\Some{\READ}\Some{\CLOSE}\TRUE\),\footnote{Here, for the
sake of simplicity, we assume that we are interested in the usage of the single file pointer \(x\), 
so that the name \(x\) can be ignored in HFL formulas; usage of multiple files can be tracked by using
the technique of \citeN{Kobayashi09POPL}.} which is indeed satisfied by the initial state
\(q_0\) of the transition system in Figure~\ref{fig:file}. Here, we have just replaced
the operations \(\READ\) and \(\CLOSE\) of the program with the corresponding modal operators
\(\Some{\READ}\) and \(\Some{\CLOSE}\). We can also naturally deal with branches and recursions.
For example, consider the program \(\CLOSE(x)\nondet (\READ(x); \CLOSE(x))\), 
where \(e_1\nondet e_2\) represents a non-deterministic choice between \(e_1\) and \(e_2\).
Then the property that the program always accesses \(x\) in a valid manner
can be expressed by \((\Some{\CLOSE}\TRUE) \land (\Some{\READ}\Some{\CLOSE}\TRUE)\).
 Note that
we have just replaced the non-deterministic branch with the logical conjunction, as we wish here
to require that the program's behavior is valid in \emph{both} branches. We can also deal with
conditional branches if HFL is extended with predicates; \(\ifexp{b}{\CLOSE(x)}{(\READ(x);\CLOSE(x))}\)
can be translated to \((b\imply \Some{\CLOSE}\TRUE) \land (\neg b\imply \Some{\READ}\Some{\CLOSE}\TRUE)\).
Let us also consider
the recursive function \(f\) defined by:
\[{f\,x}={\CLOSE(x)\nondet (\READ(x); \READ(x); f x)},\]
Then, the program \(f\,x\) being valid can be represented by using a (greatest) fixpoint formula:
\[\nu F.(\Some{\CLOSE}\TRUE)\land (\Some{\READ}\Some{\READ}F).\]
If the state \(q_0\) satisfies this formula (which is indeed the case), then
we know that all the file accesses made by \(f\,x\) are valid.
So far, we have used only the modal \(\mu\)-calculus formulas. If we wish to express the validity
of higher-order programs, we need HFL formulas; such examples are given later.

\begin{figure}[tb]
\begin{center}
\[
\xymatrix@!R=2pc{%
 *+<1pc>[o][F-]{q_0}  \ar@(u,r)^{read} \ar@/_/[r]^{close}
& *+<1pc>[o][F-]{q_1}  }\]
\end{center}
\caption{File access protocol}
\label{fig:file}
\end{figure}

We generalize the above idea and formalize reductions from various classes of
verification problems for 
simply-typed higher-order functional programs with recursion, integers and non-determinism
-- including verification of may/must-reachability, trace properties, and linear-time temporal properties 
(and their negations) -- 
to (extended) HFL model checking where HFL is extended with integer predicates,
and prove soundness and completeness of the reductions. 
Extended HFL model checking problems obtained by the reductions are (necessarily) undecidable in general,
but for finite-data programs (i.e., programs that consist of only functions and data from finite data domains
such as Booleans), the reductions yield \emph{pure} HFL model checking problems, which are decidable~\cite{Viswanathan04}. 

Our reductions provide sound and complete logical characterizations of 
a wide range of program properties mentioned above.
Nice properties of the logical characterizations include: 
(i)~(like verification conditions for Hoare triples,) once the logical characterization 
is obtained as an HFL formula, purely logical reasoning can be used to prove or disprove it
(without further referring to the program semantics);
for that purpose, one may use theorem provers with various degrees of automation,
ranging from interactive ones like Coq, semi-automated ones requiring some annotations,
to fully automated ones (though the
latter two are yet to be implemented),
(ii)~(unlike the standard verification condition generation for Hoare triples using invariant 
annotations) the logical characterization
can \emph{automatically} be computed, without any annotations,\footnote{This does not mean that
invariant discovery is unnecessary; invariant discovery is just postponed to the later phase of discharging verification conditions,
so that it can be uniformly performed among various verification problems.}
(iii)~standard logical reasoning can be applied based on the semantics of formulas; for example,
co-induction and induction can be used for proving \(\nu\)- and \(\mu\)-formulas respectively,
and (iv) thanks to the completeness, the set of program properties characterizable by HFL formula
is closed under negations; for example, from a formula characterizing may-reachability,
one can obtain a formula characterizing non-reachability by just taking the De Morgan dual.


Compared with previous approaches based on HORS model checking~\cite{Kobayashi13JACM,KSU11PLDI,Ong11POPL,Tobita12FLOPS,Kuwahara2015Nonterm}, our approach based on (extended) HFL model checking provides
more uniform, streamlined methods for higher-order program verification. 
HORS model checking provides sound and complete verification methods for
\emph{finite-data} programs~\cite{Kobayashi09POPL,Kobayashi13JACM}, but for infinite-data programs,
other techniques such as 
predicate abstraction~\cite{KSU11PLDI} and program transformation~\cite{Kuwahara2014Termination,MTSUK16POPL}
had to be combined to obtain sound (but incomplete) reductions to HORS model checking.
Furthermore, the techniques were different for each of program properties, 
such as reachability~\cite{KSU11PLDI}, termination~\cite{Kuwahara2014Termination}, 
non-termination~\cite{Kuwahara2015Nonterm}, fair termination~\cite{MTSUK16POPL}, and 
fair non-termination~\cite{Watanabe16ICFP}. In contrast, our reductions are sound and complete even 
for infinite-data programs. Although the obtained HFL model checking problems are undecidable in general,
the reductions allow us to treat various program properties uniformly; all the verifications
are boiled down to the issue of how to prove \(\mu\)- and \(\nu\)-formulas (and as remarked above,
we can use induction and co-induction to deal with them). Technically, our reduction to HFL model checking
may actually be considered an extension of HORS model checking 
in the following sense.
HORS model checking algorithms~\cite{Ong06LICS,KO09LICS} usually consist of two phases,
one for computing a kind of higher-order ``procedure summaries'' in the form of variable profiles~\cite{Ong06LICS}
or intersection types~\cite{KO09LICS},
and the other for 
nested least/greatest fixpoint computations.
Our reduction from program verification to extended HFL model checking (the reduction given in Section~\ref{sec:liveness},
in particular) can be regarded as an extension of the first phase to deal with infinite data domains, where
the problem for the second phase is expressed in the form of extended HFL model checking:
\iffull
see Appendix~\ref{sec:hors-vs-hfl} for more details.
\else
see \cite{ESOP2018full} for more details.
\fi

The rest of this paper is structured as follows.
Section~\ref{sec:pre} introduces HFL extended with integer predicates and
defines the HFL model checking problem.
Section~\ref{sec:examples} informally demonstrates some examples of reductions from
program verification problems to HFL model checking.
Section~\ref{sec:lang} introduces a functional language used to formally discuss the reductions
in later sections. Sections~\ref{sec:reachability}, \ref{sec:path},
and \ref{sec:liveness} consider may/must-reachability, trace properties,
and temporal properties respectively, and present (sound and complete) 
reductions from verification of those properties to HFL model checking. 
Section~\ref{sec:related} discusses related work, and Section~\ref{sec:conc} concludes the paper.
\iffull
Proofs are found in Appendices.
\else
Proofs are found in an extended version~\cite{ESOP2018full}.
\fi

\section{(Extended) HFL}
\label{sec:pre}
In this section, we introduce an extension of higher-order modal fixpoint logic (HFL)~\cite{Viswanathan04}
with integer predicates (which we call \HFLZ{}; we often drop the subscript and
\iffull just \fi
write HFL, as in Section~\ref{sec:intro}), and
define the \HFLZ{} model checking problem. The set of integers can actually be replaced by
another infinite set \(X\) of data (like the set of natural numbers  or the set of finite trees)
to yield \HFLp{X}. 

\subsection{Syntax}
For a map \(f\), we write \(\dom(f)\) and \(\codom(f)\) for the domain and codomain of \(f\)
respectively.
We write \(\INTSET\) for the set of integers, ranged over by the meta-variable \(n\) below.
We assume a set \(\PredSet\)
of primitive predicates on integers, ranged over by \(p\). We write \(\arity(p)\) for 
the arity of \(p\). We assume that \(\PredSet\) contains standard integer predicates such as \(=\) and \(<\),
and also assume that, for each predicate \(p\in \PredSet\), 
there also exists a predicate \(\neg p\in \PredSet\) such that,
for any integers \(n_1,\ldots,n_k\),
\(p(n_1,\ldots,n_k)\) holds if and only if \(\neg p(n_1,\ldots,n_k)\) does not hold;
thus, \(\neg p(n_1,\ldots,n_k)\) should be parsed as \((\neg p)(n_1,\ldots,n_k)\), but
can semantically be interpreted as \(\neg (p(n_1,\ldots,n_k))\).

The syntax of \emph{\HFLZ{} formulas} is given by:
\[
\begin{array}{l}
\form \mbox{ (formulas) }::= 
n \mid \form_1\OP \form_2  \mid\TRUE \mid \FALSE \mid 
p(\HFLae_1,\ldots,\HFLae_k) 
\mid  \form_1\vee\form_2\mid\form_1\wedge \form_2 \\\qquad\qquad\qquad
\mid X 
\mid\Some{\lab}\form\mid\All{\lab}\form 
\mid \mu X^{\typ}.\form \mid \nu X^{\typ}.\form 
\mid \lambda X\COL{\etyp}.\form \mid \form_1\; \form_2 \\
\typ \mbox{ (types) }::=  \typProp \mid \etyp\to\typ \qquad \etyp \mbox{ (extended types) }::=  \typ \mid \typInt
\end{array}
\]
Here, \(\OP\) ranges over a set of binary operations on integers, such as \(+\),
and \(X\) ranges over a denumerable set of variables.
We have extended the original HFL~\cite{Viswanathan04} with integer expressions (\(n\) and 
\(\form_1\OP\form_2\)), and atomic formulas \(p(\form_1,\ldots,\form_k)\) on integers
(here, the arguments of integer operations or predicates
 will be restricted to integer expressions by the type system introduced below).
Following \citeN{Kobayashi17POPL}, we have omitted negations, as any formula can be transformed
to an equivalent negation-free formula~\cite{DBLP:journals/corr/Lozes15}.

We explain the meaning of each formula informally; the formal semantics is given in Section~\ref{sec:HFL-semantics}.
Like modal \(\mu\)-calculus~\cite{Kozen83,Automata}, each formula expresses a property of a labeled transition system.
The first line of the syntax of formulas consists of the standard constructs of predicate logics.
On the second line, as in the standard modal \(\mu\)-calculus,
\(\Some{\lab}\form\) means that there exists an \(\lab\)-labeled transition to
a state that satisfies \(\form\). The formula \(\All{\lab}\form\) means that after any \(\lab\)-labeled transition,
\(\form\) is satisfied.
The formulas \(\mu X^{\typ}.\form\) and  \(\nu X^{\typ}.\form\) represent the least and greatest fixpoints
respectively
(the least and greatest \(X\) that \(X=\form\)) respectively; 
unlike the modal \(\mu\)-calculus, \(X\) may range over not only propositional variables but
also higher-order predicate variables (of type \(\typ\)). The \(\lambda\)-abstractions 
\(\lambda X\COL{\etyp}.\form\) and applications \(\form_1\; \form_2\) are used to manipulate higher-order
predicates. We often omit type annotations in 
\(\mu X^{\typ}.\form\), \(\nu X^{\typ}.\form\) and \(\lambda X\COL{\etyp}.\form\),
and just write 
\(\mu X.\form\), \(\nu X.\form\) and \(\lambda X.\form\).

\begin{example}
\label{ex:HFL-anbn}
Consider 
\(\form_{\mathtt{ab}}\,\form\) where
\(\form_{\mathtt{ab}} = \mu X^{\typProp\to\typProp}.\lambda Y\COL\typProp.Y\lor \Some{\Lab{a}}(X(\Some{\Lab{b}}Y))\).
We can expand the formula as follows:
\[
\begin{array}{l}
\form_{\mathtt{ab}}\,\form = (\lambda Y.\typProp.Y\lor \Some{\Lab{a}}(\form_{\mathtt{ab}}(\Some{\Lab{b}}Y)))\form
= \form \lor \Some{\Lab{a}}(\form_{\mathtt{ab}}(\Some{\Lab{b}}\form))\\\qquad
= \form \lor \Some{\Lab{a}}(\Some{\Lab{b}}\form \lor \Some{\Lab{a}}(\form_{\mathtt{ab}}(\Some{\Lab{b}}\Some{\Lab{b}}\form))) = \cdots,
\end{array}
\]
and obtain 
\(
\form \lor (\Some{\Lab{a}}\Some{\Lab{b}}\form) \lor
(\Some{\Lab{a}}\Some{\Lab{a}}\Some{\Lab{b}}\Some{\Lab{b}}\form) \lor  \cdots
\).
Thus, the formula means that there is a transition sequence of the form \(\Lab{a}^n\Lab{b}^n\) for 
some \(n\geq 0\) that leads to a state satisfying \(\form\).
\end{example}

Following \citeN{Kobayashi17POPL},
we exclude out unmeaningful formulas such as \((\Some{\lab}\TRUE)+1\) by using 
a simple type 
\iffull
system.\footnote{The original type system of \citeN{Viswanathan04} was more complex due to the presence of negations.}
\else
system.
\fi
The types \(\typProp\), \(\INT\), and \(\etyp\to\typ\) describe propositions, integers, and 
(monotonic) functions from 
\(\etyp\) to \(\typ\), respectively. Note that the integer type \(\INT\) may occur only in an argument position;
this restriction is required to ensure that least and greatest fixpoints are well-defined.
The typing rules for formulas are given in Figure~\ref{fig:HFLtyping}. In the figure, \(\HFLte\) denotes
a type environment, which is a finite map from variables to (extended) types. Below we consider
\iffull
only well-typed formulas, i.e., formulas \(\form\) such that \(\HFLte\pHFL \form:\typ\) for some \(\HFLte\)
and \(\typ\).
\else
only well-typed formulas.
\fi
\begin{figure}
\begin{multicols}{2}
\typicallabel{HT-Int}
\infrule[HT-Int]{}{\HFLte\pHFL n:\typInt}
\infrule[HT-Op]{\HFLte\pHFL \HFLae_i:\typInt\mbox{ for each $i\in\set{1,2}$}}
{\HFLte\pHFL \HFLae_1\OP \HFLae_2:\typInt}
\infrule[HT-True]{}{\HFLte\pHFL \formT:\typProp}
\infrule[HT-False]{}{\HFLte\pHFL \formF:\typProp}
\rulesp
\infrule[HT-Pred]{\arity(p)=k\\ \HFLte\pHFL \HFLae_i:\typInt\mbox{ for each $i\in\set{1,\ldots,k}$}}
{ \HFLte\pHFL p(\HFLae_1,\ldots,\HFLae_k):\typProp}

\infrule[HT-Var]{}{\HFLte,X\COL\etyp\pHFL X:\etyp}

\infrule[HT-Or]{\judg{\HFLte}{\form_i:\typProp}\mbox{ for each $i\in\set{1,2}$}}
{\judg{\HFLte}{\form_1\vee\form_2:\typProp}}

\infrule[HT-And]{\judg{\HFLte}{\form_i:\typProp}\mbox{ for each $i\in\set{1,2}$}}
{\judg{\HFLte}{\form_1\wedge\form_2:\typProp}}

\infrule[HT-Some]{\judg{\HFLte}{\form:\typProp}}{\judg{\HFLte}{\Some{a}\form:\typProp}}

\infrule[HT-All]{\judg{\HFLte}{\form:\typProp}}{\judg{\HFLte}{\All{a}\form:\typProp}}

\infrule[HT-Mu]{\judg{\HFLte,X:\typ}{\form:\typ}}
{\judg{\HFLte}{\mu X^\typ.\ \form :\typ}}

\infrule[HT-Nu]{\judg{\HFLte,X:\typ}{\form:\typ}}
{\judg{\HFLte}{\nu X^\typ.\ \form :\typ}}

\infrule[HT-Abs]{\judg{\HFLte,X:\etyp}{\form:\typ}}
{\judg{\HFLte}{\lambda X\COL{\etyp}.\ \form :\etyp\to\typ}}

\infrule[HT-App]{\judg{\HFLte}{\form_1:\etyp\to\typ}\quad\judg{\HFLte}{\form_2:\etyp}}
{\judg{\HFLte}{\form_1\ \form_2 :\typ}}
\end{multicols}
\caption{Typing Rules for \HFLZ{} Formulas}
\label{fig:HFLtyping}
\end{figure}

\subsection{Semantics and \HFLZ{} Model Checking}
\label{sec:HFL-semantics}

We now define the formal semantics of \HFLZ{} formulas.
A \emph{labeled transition system} (LTS) is a quadruple 
\(\lts = (\St{}, \Act{}, \mathbin{\TR}, \st_\init)\), where \(\St{}\) is a finite set of states,
\(\Act{}\) is a finite set of actions, \(\mathbin{\TR} \subseteq \St{}\times\Act{}\times\St\) is
a labeled transition relation, and \(\st_\init\in\St\) is the initial state.
We write \(\st_1\stackrel{\lab}{\TR}\st_2\) when \((\st_1,\lab,\st_2)\in \TR\).

For an LTS \(\lts=(\St{}, \Act{}, \mathbin{\TR}, \st_\init)\) and an extended type \(\etyp\),
we define the partially ordered set \((\D_{\lts,\etyp}, \sqleq_{\lts,\etyp})\) inductively by:
\[
\begin{array}{l}
\D_{\lts,\typProp} = {2^\St}\qquad \sqleq_{\lts,\typProp} = \subseteq\qquad
\D_{\lts,\INT} = \INTSET \qquad \sqleq_{\lts,\INT} = \set{(n,n) \mid n\in \INTSET}\\
\D_{\lts,\etyp\to\typ}=\set{f\in \D_{\lts,\etyp}\to \D_{\lts,\typ} \mid
    \forall x,y.(x\sqleq_{\lts,\etyp}y \imply f\,x\sqleq_{\lts,\typ}f\,y)}\\
\sqleq_{\lts,\etyp\to\typ} = \set{(f,g) \mid \forall x\in \D_{\lts,\etyp}.f(x)\sqleq_{\lts,\typ}g(x)}
\end{array}
\]
Note that \((\D_{\lts,\typ}, \sqleq_{\lts,\typ})\) forms a complete lattice
(but \((\D_{\lts,\INT},\sqleq_{\lts,\INT})\) does not). We write \(\bot_{\lts,\typ}\) and
\(\top_{\lts,\typ}\) for the least and greatest elements of \(\D_{\lts,\typ}\) 
(which are \(\lambda\seq{x}.\emptyset\) and \(\lambda\seq{x}.\St\)) respectively.
We sometimes omit the subscript \(\lts\) below.
Let \(\Sem{\lts}{\HFLte}\) be the set of functions (called \emph{valuations}) that maps
\(X\) to an element of \(\D_{\lts,\etyp}\) for each \(X\COL\etyp\in \HFLte\). 
For an HFL formula \(\form\) such that \(\HFLte\pHFL \form:\etyp\),
we define \(\Sem{\lts}{\HFLte\pHFL \form:\etyp}\) as a map from
\(\Sem{\lts}{\HFLte}\) to \(\D_{\etyp}\), by induction on the derivation\footnote{Note that
the derivation of each judgment \(\HFLte\pHFL \form:\etyp\) is unique if there is any.} of \(\HFLte\pHFL \form:\etyp\),
as follows.
\allowdisplaybreaks[1]
\begin{align*}
&\Sem{\lts}{\HFLte\pHFL n:\INT}(\HFLenv) = n\qquad
\Sem{\lts}{\HFLte\pHFL \formT:\typProp}(\HFLenv) =  \St 
\qquad
\Sem{\lts}{\HFLte\pHFL \formF:\typProp}(\HFLenv) =  \emptyset
\\&
\Sem{\lts}{\HFLte\pHFL \HFLae_1\OP \HFLae_2:\INT}(\HFLenv) =
  (\Sem{\lts}{\HFLte\pHFL \HFLae_1:\INT}(\HFLenv)) \sem{\OP}(\Sem{\lts}{\HFLte\pHFL \HFLae_2:\INT}(\HFLenv)) \\&
\Sem{\lts}{\HFLte\pHFL p(\HFLae_1,\ldots,\HFLae_k):\typProp}(\HFLenv) =\\&
  \left\{\begin{array}{ll}
    \St & \mbox{if $(\Sem{\lts}{\HFLte\pHFL \HFLae_1:\INT}(\HFLenv),\ldots,\Sem{\lts}{\HFLte\pHFL \HFLae_k:\INT}(\HFLenv))\in \sem{p}$}\\
    \emptyset & \mbox{otherwise}
  \end{array}\right.\\&
\Sem{\lts}{\HFLte,X:\etyp\pHFL X:\etyp}(\HFLenv)  
=\HFLenv(X) \\&
\Sem{\lts}{\HFLte\pHFL\form_1\vee\form_2:\typProp}(\HFLenv)  
=\Sem{\lts}{\HFLte\pHFL\form_1:\typProp}(\HFLenv)\cup\Sem{\lts}{\HFLte\pHFL\form_2:\typProp}(\HFLenv)  
\\&
\Sem{\lts}{\HFLte\pHFL\form_1\wedge\form_2:\typProp}(\HFLenv)  
=\Sem{\lts}{\HFLte\pHFL\form_1:\typProp}(\HFLenv)\cap\Sem{\lts}{\HFLte\pHFL\form_2:\typProp}(\HFLenv)  
\\&
\Sem{\lts}{\HFLte\pHFL \Some{a}\form:\typProp}(\HFLenv) 
=\{\st\mid \exists \st'\in \Sem{\lts}{\HFLte\pHFL\form:\typProp}(\HFLenv).\ \st\Ar{a}\st'\}
\\&
\Sem{\lts}{\HFLte\pHFL \All{a}\form:\typProp}(\HFLenv) 
=\{\st\mid \forall \st'\in \St.\
(\st\Ar{a}\st' \mbox{ implies } \st'\in\Sem{\lts}{\HFLte\pHFL\form:\typProp}(\HFLenv))\}
\\&
\Sem{\lts}{\HFLte\pHFL \mu X^\typ.\form:\typ}(\HFLenv)  
=\LFP{\lts,\typ} (\Sem{\lts}{\HFLte\pHFL\lambda X\COL{\typ}.\ \form:\typ\to\typ}(\HFLenv))
\\&
\Sem{\lts}{\HFLte\pHFL \nu X^\typ.\form:\typ}(\HFLenv)  
=\GFP{\lts,\typ} (\Sem{\lts}{\HFLte\pHFL\lambda X\COL{\typ}.\ \form:\typ\to\typ}(\HFLenv))
\\&
\Sem{\lts}{\HFLte\pHFL\lambda X\COL{\etyp}.\ \form:\etyp\to\typ}(\HFLenv) 
= \set{(v, \Sem{\lts}{\HFLte,X\COL\etyp\pHFL\form:\typ}(\HFLenv[X\mapsto v])) \mid v\in \D_{\lts,\etyp}}
\\&
\Sem{\lts}{\HFLte\pHFL \form_1\ \form_2:\typ}(\HFLenv) 
=\Sem{\lts}{\HFLte\pHFL \form_1:\etyp\to\typ}(\HFLenv)(\Sem{\lts}{\HFLte\pHFL\form_2:\etyp}(\HFLenv)) \\
\end{align*}
Here, \(\sem{\OP}\) denotes the binary function on integers represented by \(\OP\)
and \(\sem{p}\) denotes the \(k\)-ary relation on integers represented by \(p\).
The least/greatest fixpoint operators \(\LFP{\lts,\typ}\) and \(\GFP{\lts,\typ}\) are defined 
\iffull
by:
\[
\begin{array}{l}
\LFP{\lts,\typ}(f) = \bigsqcap_{\lts,\typ}\set{x\in \D_{\lts,\typ} \mid f(x)\sqleq_{\lts,\typ} x}\qquad
\GFP{\lts,\typ}(f) = \bigsqcup_{\lts,\typ}\set{x\in \D_{\lts,\typ} \mid x\sqleq_{\lts,\typ} f(x)}\\
\end{array}
\]
\else
by 
\(\LFP{\lts,\typ}(f) = \bigsqcap_{\lts,\typ}\set{x\in \D_{\lts,\typ} \mid f(x)\sqleq_{\lts,\typ} x}\)
and \(
\GFP{\lts,\typ}(f) = \bigsqcup_{\lts,\typ}\set{x\in \D_{\lts,\typ} \mid x\sqleq_{\lts,\typ} f(x)}\).
\fi
Here, \(\bigsqcup_{\lts,\typ}\) and \(\bigsqcap_{\lts,\typ}\) respectively denote 
the least upper bound and the greatest lower bound
with respect to \(\sqleq_{\lts,\typ}\).

We often omit the subscript \(\lts\) and write 
\(\sem{\HFLte\pHFL \form:\etyp}\) for
\(\Sem{\lts}{\HFLte\pHFL \form:\etyp}\).
For a closed formula, i.e., a formula well-typed under the empty type environment \(\emptyset\), we often write
\(\Sem{\lts}{\form}\) or just \(\sem{\form}\) for 
\(\Sem{\lts}{\emptyset\pHFL \form:\etyp}(\emptyset)\).

\begin{example}
\label{ex:HFL-semantics}
For the LTS \(\lts_{\mathit{file}}\) in Figure~\ref{fig:file}, we have:
\[
\begin{array}{l}
\sem{\nu X^{\typProp}.(\Some{\CLOSE}\TRUE \land \Some{\READ}X)} = \\\quad
\GFP{\lts,\typProp}(\lambda x\in \D_{\lts,\typProp}.\sem{X\COL\typProp\p \Some{\CLOSE}\TRUE \land \Some{\READ}X:\typProp}(\set{X\mapsto x}))
=
\set{q_0}.
\end{array}\]
In fact, \(x=\set{q_0}\in \D_{\lts,\typProp}\) satisfies the equation:
\(\Sem{\lts}{X\COL\typProp \p \Some{\CLOSE}\TRUE \land \Some{\READ}X:\typProp}(\set{X\mapsto x})
= x\),
and \(x=\set{q_0}\in \D_{\lts,\typProp}\) is the greatest such element.

Consider 
\iffull
the LTS \(\lts_1\) in Figure~\ref{fig:lts1}
\else
the following LTS \(\lts_1\):
\begin{center}
\vspace*{-2ex}
\[
\xymatrix@!R=2pc{%
 *+<1pc>[o][F-]{q_0} \ar@/^/[r]^{a} 
& *+<1pc>[o][F-]{q_1}  \ar@/^/[r]^{b} 
& *+<1pc>[o][F-]{q_2}  \ar@/^/[l]^{c} }
\]
\end{center}
\fi
and \(\form_{\mathtt{ab}}\,(\Some{c}\TRUE)\) where \(\form_{\mathtt{ab}}\) is the one introduced in Example~\ref{ex:HFL-anbn}.
Then, \(\Sem{\lts_1}{\form_{\mathtt{ab}}\,(\Some{c}\TRUE)} = \set{q_0,q_2}\).

\iffull
Consider the formula \(\form_2 = \form_{\mathtt{even}}\,(\Some{c}\TRUE)\,0\), where
\(\form_{\mathtt{even}}\) is:
\[\nu X^{\typProp\to\typProp}.\lambda Y\COL\typProp.\lambda Z\COL\INT.
(\mathtt{even}(Z)\land Y)\lor \Some{\Lab{a}}(X\,(\Some{\Lab{b}}Y)\,(Z+1)).\]
Here, \(\mathtt{even}\in \PredSet\) is a unary predicate on integers such that \(\mathtt{even}(n)\) if and only
if \(n\) is even.
Then, \(\form_2\) denotes  the set of states from which there is a transition sequence of
the form \(\Lab{a}^{2n}\Lab{b}^{2n}\) to a state where a \(\Lab{c}\)-labeled transition is possible.
Thus, \(\Sem{\lts_1}{\form_2}=\set{q_2}\).
\fi
\iffull
\begin{figure}
\begin{center}
\[
\xymatrix@!R=2pc{%
 *+<1pc>[o][F-]{q_0} \ar@/^/[r]^{a} 
& *+<1pc>[o][F-]{q_1}  \ar@/^/[r]^{b} 
& *+<1pc>[o][F-]{q_2}  \ar@/^/[l]^{c} }
\]
\end{center}
\caption{LTS \(\lts_1\)}
\label{fig:lts1}
\end{figure}
\fi
\end{example}

\begin{definition}[\HFLZ{} model checking]
For a closed formula \tkchanged{\(\form\)} of type \(\typProp\),
we write \(\lts, \st\models\form\) if \(\st\in \Sem{\lts}{\form}\),
and write \(\lts\models \form\) if \(\st_\init\in \Sem{\lts}{\form}\).
\emph{\HFLZ{} model checking} is the problem of, given \(\lts\) and \(\form\),
deciding whether \(\lts\models \form\) holds.
\end{definition}

The \HFLZ{}  model checking problem is
\emph{un}decidable, due to the presence of integers; in fact, the semantic domain \(\D_{\lts,\etyp}\) is not finite
for \(\etyp\) that contains \(\INT\). The undecidability is obtained as a corollary of the soundness and completeness
of the reduction from the may-reachability problem to HFL model checking discussed in Section~\ref{sec:reachability}.
For the fragment of pure HFL (i.e., \HFLZ{} without integers, which we write \HFLP{} below), 
the model checking problem  is decidable~\cite{Viswanathan04}.

The \emph{order} of an \HFLZ{} model checking problem \(\lts\stackrel{?}{\models}\form\) is 
the highest order of types of subformulas of \(\form\), where the order of a type is defined by:
\(\order(\typProp)=\order(\INT) = 0\) and \(\order(\etyp\to\typ) = \max(\order(\etyp)+1,\order(\typ))\).
The complexity of order-\(k\) \HFLP{} model checking is 
\(k\)-EXPTIME complete~\cite{DBLP:journals/lmcs/AxelssonLS07},
but polynomial time in the size of HFL formulas under the assumption that the other parameters 
(the size of LTS and the largest size of types used in formulas) are fixed~\cite{Kobayashi17POPL}.

\begin{remark}
Though we do not have quantifiers on integers as primitives, we can encode them using fixpoint operators.
Given a formula \(\form:\INT\to \typProp\), we can express
\(\exists x\COL\INT.\form(x)\) and \(\forall x\COL\INT.\form(x)\) by
\((\mu X^{\INT\to\typProp}.\lambda x\COL\INT.\form(x)\lor X(x-1)\lor X(x+1))0\)
and 
\((\nu X^{\INT\to\typProp}.\lambda x\COL\INT.\form(x)\land X(x-1)\land X(x+1))0\)
respectively.
\end{remark}
\subsection{HES}
As in \cite{Kobayashi17POPL}, we often write an \HFLZ{} formula as 
a sequence of fixpoint equations, called a \emph{hierarchical equation system} (HES).

\begin{definition}
\label{def:HES}
An (extended) \emph{hierarchical equation system} (HES) is a pair \((\HES,\form)\) where \(\HES\) is a 
sequence of fixpoint equations, of the 
form:
\iffull
\[
 X_1^{\typ_1} =_{\munu_1} \form_1;  \cdots; X_n^{\typ_n} =_{\munu_n} \form_n.
\]
Here, \(\munu_i\in\set{\mu,\nu}\). 
\else
\(
 X_1^{\typ_1} =_{\munu_1} \form_1;  \cdots; X_n^{\typ_n} =_{\munu_n} \form_n
\),
where \(\munu_i\in\set{\mu,\nu}\). 
\fi
We assume that \(X_1\COL\typ_1,\ldots,X_n\COL\typ_n \pHFL \form_i:\typ_i\)
holds for each \(i\in\set{1,\ldots,n}\), and 
that \(\form_1,\ldots,\form_n,\form\) do not contain any fixpoint operators.
\end{definition}

The HES \(\HESf = (\HES, \form)\) 
represents
the \HFLZ{} formula \(\toHFL(\HES,\form)\) 
defined inductively by:
\iffull
\[
\begin{array}{l}
\toHFL(\epsilon, \form) = \form\\
\toHFL(\HES;X^\typ=_\munu \form', \form) = \toHFL([\munu X^\typ.\form'/X]\HES, [\munu X^\typ.\form'/X]\form)
\end{array}
\]
\else
\(\toHFL(\epsilon, \form) = \form\) and
\(\toHFL(\HES;X^\typ=_\munu \form', \form) = \toHFL([\munu X^\typ.\form'/X]\HES, [\munu X^\typ.\form'/X]\form)\).
\fi
Conversely, every \HFLZ{} formula can be easily converted to an equivalent HES.
In the rest of the paper, we often represent an \HFLZ{} formula in the form of HES, and just call it
an \HFLZ{} formula. We write \(\sem{\HESf}\) for \(\sem{\toHFL(\HESf)}\).
An HES
\( (X_1^{\typ_1} =_{\munu_1} \form_1;  \cdots; X_n^{\typ_n} =_{\munu_n} \form_n, \form)\) can be normalized to
\( (X_0^{\typ_0} =_{\nu} \form;X_1^{\typ_1} =_{\munu_1} \form_1;  \cdots; X_n^{\typ_n} =_{\munu_n} \form_n, X_0)\) 
where \(\typ_0\) is the type of \(\form\). Thus, we sometimes call
just a sequence of equations
\(X_0^{\typ_0} =_{\nu} \form;X_1^{\typ_1} =_{\munu_1} \form_1;  \cdots; X_n^{\typ_n} =_{\munu_n} \form_n\) 
an HES, with the understanding that ``the main formula'' is the first variable \(X_0\).
Also, we often write \(X^\typ\;x_1\,\cdots\, x_k =_\munu \form\) for
the equation \(X^\typ =_\munu \lambda x_1.\cdots\lambda x_k.\form\). We often omit type annotations
and just write \(X =_\munu \form\) for \(X^\typ =_\munu \form\).

\begin{example}
\label{ex:HES}
\iffull
The formula 
\(\form_2 = \form_{\mathtt{ab}}\,(\Some{c}\TRUE)\) in Example~\ref{ex:HFL-semantics} is expressed
as the following HES:
\[
\left(\,X =_\mu \lambda Y\COL\typProp.Y\lor \Some{\Lab{a}}(X(\Some{\Lab{b}}Y)), \quad X\,(\Some{c}\TRUE)\,\right).
\]
\fi
The formula
\(\nu X.\mu Y.\Some{\Lab{b}}X\lor \Some{\Lab{a}}Y\) (which means that
the current state has  a transition sequence of the form \((\Lab{a}^*\Lab{b})^\omega\)) 
is expressed as the following HES:
\[
\left((X =_\nu Y; Y=_\mu \Some{\Lab{b}}X\lor \Some{\Lab{a}}Y), \quad X\right).
\]
\iffull
Note that the order of the equations matters. 
\(\left((Y=_\mu \Some{\Lab{b}}X\lor \Some{\Lab{a}}Y); X =_\nu Y, \quad X\right)\)
represents the \HFLZ{} formula \(\nu X.\mu Y.\Some{\Lab{b}}\nu X.Y\lor \Some{\Lab{a}}Y\equiv
\mu Y.\Some{\Lab{b}}Y\lor \Some{\Lab{a}}Y\), which is equivalent to \(\FALSE\).
\fi
\end{example}

\section{Warming Up}
\label{sec:examples}
\label{sec:resource-usage}

To help readers get more familiar with \HFLZ{} and the idea of reductions,
we give here some variations of the examples of verification of file-accessing programs 
in Section~\ref{sec:intro},
which are instances of the ``resource usage verification problem''~\cite{IK05TOPLAS}.
General reductions will be discussed
in Sections~\ref{sec:reachability}--\ref{sec:liveness}, after the target language is set up in
Section~\ref{sec:lang}.

Consider the following OCaml-like program, which uses exceptions.
\begin{quote}
\begin{verbatim}
let readex x = read x; (if * then () else raise Eof) in
let rec f x = readex x; f x in
let d = open_in "foo" in try f d with Eof -> close d
\end{verbatim}
\end{quote}
Here, \texttt{*} represents a non-deterministic boolean value.
The function \(\texttt{readex}\) reads the file pointer \(x\), and then non-deterministically
raises an end-of-file (\texttt{Eof}) exception. 
The main expression (on the third line) first opens file ``foo'', 
 calls \texttt{f} to read the file repeatedly, and closes the file upon an end-of-file exception.
Suppose, as in the example of Section~\ref{sec:intro},
 we wish to verify that the file ``foo'' is accessed following the protocol in Figure~\ref{fig:file}.

First, we can remove exceptions by 
representing an exception handler as a special 
continuation~\cite{Blume08APLAS}:
\begin{quote}
\begin{verbatim}
let readex x h k = read x; (if * then k() else h()) in
let rec f x h k = readex x h (fun _ -> f x h k) in
let d = open_in "foo" in f d (fun _ -> close d) (fun _ -> ())
\end{verbatim}
\end{quote}
Here, we have added to each function two parameters \texttt{h} and \texttt{k},
which represent an exception handler and a (normal) continuation respectively.

Let \(\HESf\) be \((\HES,F\;\TRUE\;(\lambda r.\Some{\CLOSE}\TRUE)\;(\lambda r.\TRUE))\) where \(\HES\) is:
\newcommand\Readex{\mathit{Readex}}
\[
\begin{array}{l}
\Readex\;x\;h\;k =_\nu \Some{\READ}(k\,\TRUE \land h\,\TRUE);\\
F\;x\;h\;k =_\nu \Readex\;x\;h\;(\lambda r.F\;x\;h\;k). 
\end{array}
\]
Here, we have just replaced read/close operations with the modal operators
\(\Some{\READ}\) and \(\Some{\CLOSE}\), non-deterministic choice with 
a logical conjunction, and the unit value \((\,)\) with \(\TRUE\).
Then, \(\lts_{\mathit{file}}\models \HESf\) if and only if the program performs only valid
accesses to the file (e.g., it does not access the file after a close operation),
where \(\lts_{\mathit{file}}\) is the LTS shown in Figure~\ref{fig:file}.
The correctness of the reduction can be informally understood by observing that 
there is a close correspondence between reductions of the program and
those of the HFL formula above, and when the program reaches a read command
\(\READ\;x\), the corresponding formula is of the form \(\Some{\READ}\cdots\),
meaning that the read operation is valid in the current state; a similar condition holds also for
close operations. We will present a general translation and prove its correctness
in Section~\ref{sec:path}.

Let us consider another example, which uses integers:
\begin{quote}
\begin{verbatim}
let rec f y x k = if y=0 then (close x; k()) 
                  else (read x; f (y-1) x k) in
let d = open_in "foo" in f n d (fun _ -> ())
\end{verbatim}
\end{quote}
Here, \(\mathtt{n}\) is an integer constant.
The function \(\mathtt{f}\) reads \(\mathtt{x}\) \(\mathtt{y}\) times, and 
then calls the continuation \(\mathtt{k}\).
Let \(\lts'_{\mathit{file}}\) be the LTS obtained by adding to 
\(\lts_{\mathit{file}}\) a new state \(q_2\) and
the transition \(q_1\stackrel{\Lab{end}}{\TR}q_2\) (which intuitively 
means that
a program is allowed to terminate in the state \(q_1\)), and
let \(\HESf'\) be \((\HES',F\;n\; \TRUE\; (\lambda r.\Some{\Lab{end}}\TRUE))\) where \(\HES'\) is:
\[
\begin{array}{l}
F\;y\; x\;k =_\mu (y=0\imply \Some{\CLOSE}(k\,\TRUE))\land 
   (y\neq 0\imply \Some{\READ}(F\;(y-1)\;x\;k)). 
\end{array}
\]
Here, \(p(\form_1,\ldots,\form_k)\imply \form\) is an abbreviation of
\(\neg p(\form_1,\ldots,\form_k)\lor \form\).
Then, \(\lts'_{\mathit{file}}\models \HESf'\) if and only if
(i) the program performs only valid accesses to the file,  (ii) it eventually terminates,
and (iii) the file is closed when the program terminates.
 Notice the use of \(\mu\) instead of \(\nu\) above; by using \(\mu\), we can
express liveness properties.
The property \(\lts'_{\mathit{file}}\models \HESf'\) indeed holds for \(n\geq 0\),
but not for \(n<0\). In fact, \(F\;n\;x\;k\) is equivalent to
\(\FALSE\) for \(n<0\), and \(\Some{\READ}^n\Some{\CLOSE}(k\;\TRUE)\) for \(n\geq 0\).

\section{Target Language}
\label{sec:lang}

This section sets up, as the target of program verification,
a call-by-name\footnote{Call-by-value programs can be handled by applying the CPS
transformation before applying the reductions to HFL model checking.}
higher-order functional language extended with events.
The language is essentially the same as the one used by Watanabe et al.~\cite{Watanabe16ICFP} for discussing fair 
non-termination.

\subsection{Syntax and Typing}
We assume a finite set \(\EvSet\) of names called \emph{events}, ranged over by \(\lab\),
and a denumerable set of variables, ranged over by \(x,y,\ldots\).
Events are used to express temporal properties of programs.
We write \(\seq{x}\) (\(\seq{\term}\), resp.) for a sequence of variables (terms, resp.),
and write \(|\seq{x}|\) for the length of the sequence.

A \emph{program} is a pair \((\progd, \term)\) consisting of a set \(\progd\) of function definitions
\( \set{f_1\;\seq{x}_1 = \term_1,\ldots,f_n\;\seq{x}_n=\term_n}\) and a term \(\term\).
The set of \emph{terms}, ranged over by \(\term\), is defined by:
\[
\begin{array}{l}
\term ::= \unitexp \mid x \mid n \mid \term_1\OP\term_2
\mid \evexp{\lab}{\term}\mid \ifexp{p(\term'_1,\ldots,\term'_k)}{\term_1}{\term_2}\\\qquad
      \mid \term_1\term_2 
\mid \term_1\nondet \term_2.\\
\end{array}
\]
Here, \(n\) and \(p\) range over the sets of integers and integer predicates as in HFL formulas.
The expression \(\evexp{\lab}{\term}\) raises an event \(\lab\), and then evaluates \(\term\).
Events are used to encode program properties of interest. For example, an assertion \(\Assert(b)\)
can be expressed as \(\ifexp{b}{\unitexp}{(\evexp{\FAIL}{\Omega})}\), where \(\FAIL\) is an event 
that expresses an assertion failure and \(\Omega\) is a non-terminating term. If 
program termination is
of interest, one can insert ``\(\evatom{\evname{end}}\)'' to every termination point
and check whether an \(\evname{end}\) event occurs.
The expression \(\term_1\nondet\term_2\) evaluates \(\term_1\) or \(\term_2\) in a non-deterministic manner;
it can be used to model, e.g., unknown inputs from an environment.
We use the meta-variable \(\prog\) for programs.
When \(\prog = (\progd,\term)\) with \(\progd = 
\set{f_1\;\seq{x}_1 = \term_1,\ldots,f_n\;\seq{x}_n=\term_n}\),
we write \(\Funcs(\prog)\) for \(\set{f_1,\ldots,f_n}\)
(i.e., the set of function names defined in \(\prog\)). 
Using \(\lambda\)-abstractions, we sometimes write \(f=\lambda \seq{x}.\term\)
for the function definition \(f\;\seq{x}=\term\).
We also regard \(\progd\)
as a map from function names to terms, and write \(\dom(\progd)\) for 
\(\set{f_1,\ldots,f_n}\) and \(\progd(f_i)\) for \(\lambda \seq{x}_i.\term_i\).

Any program \((\progd,\term)\) can be normalized to \((\progd\cup\set{\mainfun=\term},\mainfun)\) where
\(\mainfun\) is a name for the ``main'' function. We sometimes write just \(\progd\) for a program
\((\progd,\mainfun)\), with the understanding that \(\progd\) contains a definition of \(\mainfun\).

We restrict the syntax of expressions using a type system.
The set of \emph{simple types}, ranged over by \(\Pst\), is defined by:
\[
\begin{array}{l}
\Pst ::= \Tunit \mid \Pest \to \Pst \qquad
\qquad \Pest ::= \Pst \mid \Tint.
\end{array}
\]
The types \(\Tunit\), \(\Tint\), and \(\Pest\to\Pst\) describe the unit value,
integers, and functions from \(\Pest\) to \(\Pst\) respectively.
Note that \(\Tint\) is allowed to occur only in argument positions. 
We defer typing rules to 
\iffull
Appendix~\ref{sec:lang-typing}, 
\else
\cite{ESOP2018full},
\fi
as they are standard,
except that we require that the righthand side of each
function definition must have type \(\Tunit\); this restriction, as well as the restriction that
\(\Tint\) occurs only in argument positions, does 
not lose generality, as those conditions
can be ensured by applying CPS transformation.
We consider below only well-typed programs.

\subsection{Operational Semantics}
We define the labeled transition relation
\(\term\Pred{\ell}{\progd}\term'\), where \(\ell\) is either \(\epsilon\) or an event name,
as the least relation closed under the rules in Figure~\ref{fig:os}. 
We implicitly assume that the program \((\progd,\term)\) is well-typed, and this assumption is
maintained throughout reductions by the standard type preservation 
\iffull property (which we omit to prove). \else property. \fi
In the rules for if-expressions, 
\(\sem{\term'_i}\) represents the integer value denoted by \(\term'_i\);
note that the well-typedness of \((\progd,\term)\) guarantees that
 \(\term'_i\) must be arithmetic expressions consisting of integers and integer operations; thus, 
\(\sem{\term'_i}\) is well defined.
We often omit the subscript \(\progd\) when it is clear from the context.
We write \(\term\Preds{\ell_1\cdots\ell_k}{\progd}\term'\) if
\(\term\Pred{\ell_1}{\progd}\cdots \Pred{\ell_k}{\progd}\term'\). Here, \(\epsilon\) is
treated as an empty sequence; thus, for example, we write \(\term\Preds{ab}{\progd}\term'\)
if \(\term\Pred{a}{\progd}\Pred{\epsilon}{\progd}\Pred{b}{\progd}\Pred{\epsilon}{\progd}\term'\).
\begin{figure}[tb]
 \begin{multicols}{2}
 \InfruleS{0.21}{\ }{
 \evexp{\lab}{\term}  \Pred{\lab}{\progd}\term 
 }
 \InfruleS{0.3}{
 f \seq{x} = \termaltu \in \progd \andalso
 |\seq{x}|=|\seq{\term}|
 } {
 f\;\seq{\term} \Pred{\epsilon}{\progd} [\seq{\term}/\seq{x}]\termaltu
 }
 \infrule{i\in\set{1,2}} {
 \term_1 \nondet \term_2 \Pred{\epsilon}{\progd} \term_i
 }
 \infrule{
  (\sem{\term'_1},\ldots,\sem{\term'_k})\in\sem{p} 
 } {
 \ifexp{p(\term'_1,\dots, \term'_k)}{\term_1}\term_2 \Pred{\epsilon}{\progd} \term_1
 }
 \infrule{
  (\sem{\term'_1},\ldots,\sem{\term'_k})\not\in \sem{p} 
 } {
 \ifexp{p(\term'_1,\dots, \term'_k)}{\term_1}\term_2 \Pred{\epsilon}{\progd} \term_2
 }
 \end{multicols}
\caption{Labeled Transition Semantics}
\label{fig:os}
\end{figure}

\noindent
For a program \(\prog=(\progd,\term_0)\), 
we define the set \(\Traces(\prog) (\subseteq \EvSet^*\cup\EvSet^\omega)\)
of \emph{traces} by: 
\[
\begin{array}{l}
\Traces(\progd,\term_0) =
\set{\ell_0\cdots \ell_{n-1} \in (\set{\epsilon}\cup\EvSet)^* 
\mid 
\forall i\in\set{0,\ldots,n-1}. \term_{i}\Pred{\ell_i}{\progd} \term_{i+1}}\\\qquad\qquad\qquad
\cup
\set{\ell_0\ell_1\cdots \in (\set{\epsilon}\cup\EvSet)^\omega
\mid 
\forall i\in \omega. \term_i\Pred{\ell_i}{\progd} \term_{i+1}}.
\end{array}
\]
Note that since the label \(\epsilon\) is regarded as an empty sequence, 
\(\ell_0\ell_1\ell_2 = \lab\lab\) if \(\ell_0=\ell_2=\lab\) and \(\ell_1=\epsilon\), and
an element of \((\set{\epsilon}\cup\EvSet)^\omega\) is regarded as that of 
\(\EvSet^*\cup\EvSet^\omega\).
We write
\(\FinTraces(\prog)\) and \(\InfTraces(\prog)\) for
\(\Traces(\prog)\cap\EvSet^*\) and \(\Traces(\prog)\cap \EvSet^\omega\) respectively.
The set of \emph{full traces} \(\FullTraces(\progd,\term_0)(\subseteq \EvSet^*\cup\EvSet^\omega)\) is defined as:
\[
\begin{array}{l}
\set{\ell_0\cdots \ell_{n-1} \in (\set{\epsilon}\cup\EvSet)^* \mid 
\term_n=\Vunit\land
\forall i\in\set{0,\ldots,n-1}. \term_{i}\Pred{\ell_i}{\progd} \term_{i+1}}\\\qquad
\cup
\set{\ell_0\ell_1\cdots\in (\set{\epsilon}\cup\EvSet)^\omega\mid 
\forall i\in \omega. \term_i\Pred{\ell_i}{\progd} \term_{i+1}}.
\end{array}
\]
\begin{example}
The last example in Section~\ref{sec:intro} is modeled as 
\(P_{\mathit{file}} = (D, f\,(\,))\), where 
\(D = \set{f\,x = (\evexp{\CLOSE}\Vunit)\nondet (\evexp{\READ}\evexp{\READ}f\,x)}\).
We have:
\[
\begin{array}{l}
\Traces(P) = \set{\READ^{n} \mid n\geq 0} \cup \set{\READ^{2n}\CLOSE \mid n\geq 0} \cup 
 \set{\READ^\omega}\\
\FinTraces(\prog) = \set{\READ^{n} \mid n\geq 0} \cup \set{\READ^{2n}\CLOSE \mid n\geq 0} \\
\InfTraces(\prog) =  \set{\READ^\omega}\ 
\FullTraces(P) = \set{\READ^{2n}\CLOSE \mid n\geq 0} \cup \set{\READ^\omega}.
\end{array}
\]
\end{example}



\section{May/Must-Reachability Verification}
\label{sec:reachability}
\label{SEC:REACHABILITY}

Here we consider the following problems:
\begin{itemize}
\item May-reachability: ``Given a program \(\prog\) and an event \(\lab\),
may \(\prog\) raise 
\(\lab\)?''
\item Must-reachability: ``Given a program \(P\) and an event \(\lab\),
must \(\prog\) raise 
\(\lab\)?''
\end{itemize}
Since we are interested in a particular event \(\lab\), we restrict here the event set \(\EvSet\)
 to a singleton set of the form \(\set{a}\). Then, the may-reachability is
formalized as \(a\stackrel{?}{\in}\Traces(P)\), whereas the must-reachability is formalized as
``does every trace in \(\FullTraces(P)\) contain \(a\)?''
We encode both problems into the validity of \HFLZ{} formulas (without any modal operators
\(\Some{\lab}\) or \(\All{\lab}\)), or the \HFLZ{} model checking of
those formulas against a trivial model (which consists of a single state without any transitions).
Since our reductions are sound and complete, the characterizations
of their negations --non-reachability and may-non-reachability-- can also be obtained immediately.
Although these are the simplest classes of properties among those
discussed in Sections~\ref{sec:reachability}--\ref{sec:liveness}, they are already large
enough to accommodate many program properties discussed in the literature, including
lack of assertion failures/uncaught exceptions~\cite{KSU11PLDI} (which can be characterized as non-reachability;
recall the encoding of assertions in Section~\ref{sec:lang}), termination~\cite{Ledesma-Garza2012,Kuwahara2014Termination}
(characterized as must-reachability), and non-termination~\cite{Kuwahara2015Nonterm}
(characterized as may-non-reachability).

\subsection{May-Reachability}
\label{sec:mayreach}
As in the examples in Section~\ref{sec:resource-usage}, we translate a program to a formula
that says ``the program may raise an event \(\lab\)'' in a compositional manner. 
For example, \(\evexp{\lab}{\term}\) can be
translated to \(\TRUE\) (since the event will surely be raised immediately), and
\(\term_1\nondet\term_2\) can be translated to \(\term_1^\dagger \lor \term_2^\dagger\)
where \(\term_i^\dagger\) is the result of the translation of \(\term_i\) (since
only one of \(\term_1\) and \(\term_2\) needs to raise an event).

\begin{definition}
Let \(\prog=(\progd,\term)\) be a program.
\(\HESf_{\prog,\may}\) is the HES \((\trMay{\progd}, \trMay{\term})\), where
\(\trMay{\progd}\) and \(\trMay{\term}\) are defined by:
\[
\begin{array}{l}
\trMay{\set{f_1\,\seq{x}_1=\term_1,\ldots,f_n\,\seq{x}_n=\term_n}} =
\left(f_1\;\seq{x}_1=_\mu \trMay{\term_1};\cdots; f_n\;\seq{x}_n=_\mu \trMay{\term_n}\right)\\
\trMay{\Vunit} = \FALSE \qquad \trMay{x} = x \qquad  \trMay{n}=n\qquad
\trMay{(\term_1\OP\term_2)}=\trMay{\term_1}\OP\trMay{\term_2}\\
\trMay{(\ifexp{p(\term'_1,\ldots,\term'_k)}{\term_1}{\term_2})}= \\\qquad\qquad
  (p(\trMay{\term'_1},\ldots,\trMay{\term'_k})\land \trMay{\term_1})
  \lor (\neg p(\trMay{\term'_1},\ldots,\trMay{\term'_k})\land \trMay{\term_2})\\
\trMay{(\evexp{\lab}{\term})} = \TRUE\quad 
\trMay{(\term_1\term_2)} = \trMay{\term_1}\trMay{\term_2}\quad
\trMay{(\term_1\nondet\term_2)} = \trMay{\term_1}\lor \trMay{\term_2}.\\
\end{array}
\]
\end{definition}
Note that, in the definition of \(\trMay{\progd}\), the order of function definitions in \(\progd\) does not matter
(i.e., the resulting HES is unique up to the semantic equality), since all the fixpoint variables are bound by \(\mu\).
\begin{example}
Consider the program:
\[P_{\mathit{loop}} = (\set{\mathit{loop}\;x = \mathit{loop}\;x},
\mathit{loop}(\evexp{\lab}{(\,)})).\] It is translated to
the HES \(\HESf_{\mathit{loop}} = (\mathit{loop}\;x=_\mu \mathit{loop}\;x, \mathit{loop}(\TRUE))\).
Since \(\mathit{loop} \equiv \mu \mathit{loop}.\lambda x.\mathit{loop}\;x\) is equivalent to
\(\lambda x.\FALSE\), \(\HESf_{\mathit{loop}}\) is equivalent to \(\FALSE\).
In fact, \(P_{\mathit{loop}}\) never raises an event \(\lab\) (recall that our language is call-by-name).
\end{example}
\begin{example}
Consider the program \(\prog_{\mathit{sum}}=(\progd_{\mathit{sum}},\mainfun)\) where \(\progd_{\mathit{sum}}\) is:
\[
\begin{array}{l}
\mainfun = \Sum\ n\ (\lambda r.\Assert(r\geq n))\\
\Sum\ x\ k = \ifexp{x=0}{k\, 0}{\Sum\ (x-1)\ (\lambda r.k(x+r))}\\
\end{array}
\]
Here, \(n\) is some integer constant, and \(\Assert(b)\) is the macro introduced in Section~\ref{sec:lang}. We have used \(\lambda\)-abstractions for the sake of readability.
The function \(\Sum\) is a CPS version of a function that computes the summation of integers
from \(1\) to \(x\). The main function computes the sum \(r=1+\cdots + n\), and asserts 
 \(r\geq n\).
It is translated to the HES \(\HESf_{\prog_2,\may} = (\HES_{\Sum},\mainfun)\) where \(\HES_{\Sum}\) is:
\[
\begin{array}{l}
\mainfun =_\mu \Sum\ n\ (\lambda r.(r\geq n\land \FALSE)\lor (r<n\land \TRUE));\\
\Sum\ x\ k =_\mu (x=0\land k\, 0)\lor (x\neq 0\land \Sum\ (x-1)\ (\lambda r.k(x+r))).
\end{array}
\]
Here, \(n\) is treated as a constant. Since the shape of the formula does not depend on the value of \(n\), 
the property ``an assertion failure may occur for some \(n\)'' can be expressed by
\(\exists n.\HESf_{\prog_2,\may}\).
\iffull
Thanks to the completeness of the encoding (Theorem~\ref{th:mayreach} below),
the lack of assertion failures can be characterized by \(\forall n.\HESf\),
where \(\HESf\) is the De Morgan dual of the above HES:
\[
\begin{array}{l}
\mainfun =_\nu \Sum\ n\ (\lambda r.(r< n\lor \TRUE)\land (r\geq n \lor \FALSE))\\
\Sum\ x\ k =_\nu (x\neq 0\lor k\, 0)\land (x= 0\lor \Sum\ (x-1)\ (\lambda r.k(x+r))).
\end{array}
\]
\nk{A note on the general encoding of properties of functions of the form
\(f: x:\{\Tint|P(x)\} \to \{r:\Tint|Q(x,r)\}\).}
\fi
\qed
\end{example}

The following theorem states that \(\HESf_{\prog,\may}\) is a complete characterization of
the may-reachability of \(\prog\).
\begin{theorem}
\label{th:mayreach}
Let \(\prog\) be a program. Then, \(\lab\in\Traces(\prog)\) if and only if
\(\lts_0 \models \HESf_{\prog,\may}\) for \(\lts_0 = (\set{\stunique},\emptyset,\emptyset,\stunique)\).
\end{theorem}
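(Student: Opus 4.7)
The plan is to prove the following stronger statement: for every closed, well-typed term $\term$ of type $\Tunit$, one has $\lts_0, \stunique \models \trMay{\term}$ (where function symbols are interpreted by the least fixpoint of $\trMay{\progd}$) if and only if $\lab \in \Traces(\progd, \term)$. The theorem then follows by instantiating with $\term$ being the main term of $\prog$. A useful observation up front is that because $\lts_0$ has a single state and no transitions, $\D_{\lts_0, \typProp}$ is isomorphic to the two-element Boolean lattice, and no modal operator occurs anywhere in $\trMay{\cdot}$, so the LTS structure is essentially inert; everything boils down to whether a first-order least-fixpoint computation over integers produces $\TRUE$.

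For the $(\Leftarrow)$ direction (soundness of the encoding), I would pick a witnessing reduction $\term_0 \Pred{\ell_0}{\progd} \term_1 \Pred{\ell_1}{\progd} \cdots \Pred{\ell_{n-1}}{\progd} \term_n$ with $\ell_k = \lab$ for some least $k$, so that $\term_k = \evexp{\lab}{\term_{k+1}}$ and hence $\sem{\trMay{\term_k}} = \{\stunique\}$ directly from the clause defining $\trMay{(\evexp{\lab}{\cdot})}$. The key lemma is a \emph{monotonicity along reductions}: if $\term \Pred{\epsilon}{\progd} \term'$ then $\sem{\trMay{\term'}} \subseteq \sem{\trMay{\term}}$. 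A downward induction on $i$ from $k$ to $0$ then yields $\sem{\trMay{\term_0}} = \{\stunique\}$. The monotonicity lemma is proved by case analysis on the reduction step; the function-unfolding case $f\,\seq{\term} \Pred{\epsilon}{\progd} [\seq{\term}/\seq{x}]u$ uses the fixpoint equation $f\,\seq{x} =_\mu \trMay{u}$ together with a substitution lemma $\trMay{[\seq{\term}/\seq{x}]u} = [\seq{\trMay{\term}}/\seq{x}]\trMay{u}$, while the non-deterministic and conditional cases are immediate from the definitions of $\trMay{(\term_1\nondet\term_2)}$ and $\trMay{(\ifexp{p(\cdots)}{\term_1}{\term_2})}$.

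For the $(\Rightarrow)$ direction (completeness), I would exploit that every fixpoint in $\HESf_{\prog,\may}$ is a $\mu$. Let $\StepFun$ be the continuous operator on $\prod_{f \in \dom(\progd)} \D_{\lts_0, \typ_f}$ induced by the equations $\trMay{\progd}$; by Kleene's theorem its least fixpoint equals $\bigsqcup_{n \ge 0} \StepFun^n(\bot)$. Write $\sem{\trMay{\term}}^{(n)}$ for the interpretation obtained by using $\StepFun^n(\bot)$ for function symbols; then $\sem{\trMay{\term}} = \bigsqcup_n \sem{\trMay{\term}}^{(n)}$. By induction on $n$, with a side induction on the syntactic structure of $\term$, I would show that $\sem{\trMay{\term}}^{(n)} = \{\stunique\}$ implies the existence of a reduction $\term \Preds{w}{\progd} \term'$ with $\lab$ occurring in $w$. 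The base case $n = 0$ is vacuous since $\StepFun^0(\bot)$ sends every function to $\emptyset$. In the step, the interesting case is $\term = f\,\seq{\term}$: by the equation for $f$ we have $\sem{f\,\seq{\term}}^{(n+1)} = \sem{[\seq{\trMay{\term}}/\seq{x}]\trMay{u}}^{(n)}$, which by the substitution lemma equals $\sem{\trMay{[\seq{\term}/\seq{x}]u}}^{(n)}$; since $f\,\seq{\term} \Pred{\epsilon}{\progd} [\seq{\term}/\seq{x}]u$, the inner induction hypothesis finishes the case. The cases for $\nondet$, if-then-else and $\evexp{\lab}{\cdot}$ correspond directly to the operational rules.

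The main technical hurdle will be the substitution lemma relating operational substitution $[\seq{\term}/\seq{x}]u$ and logical substitution $[\seq{\trMay{\term}}/\seq{x}]\trMay{u}$: one must carefully track that integer-typed program variables are translated to $\HFLZ{}$ variables of type $\Tint$ whose denotation is a concrete integer, while function-typed program variables become higher-order predicate variables whose denotation lives in $\D_{\lts_0,\typ}$, and verify that the compositional clauses of $\trMay{\cdot}$ commute with both forms of substitution. Once this lemma is in place, both inductions above are routine, and the theorem drops out.
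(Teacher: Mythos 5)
Your $(\Leftarrow)$ direction is sound and is a legitimate variant of the paper's argument: where the paper introduces a modified reduction relation that keeps both branches of $\nondet$ and the $\evexp{\lab}{\cdot}$ wrapper so that the semantics is preserved \emph{exactly} (Lemma~\ref{lem:mayreachsem-preserved-by-reduction}) and then analyses normal forms, you work directly with the labelled transition semantics and settle for the inclusion $\sem{\trMay{\term'}}\subseteq\sem{\trMay{\term}}$ along a single witnessing run, which is all that direction needs. The substitution lemma you flag is indeed needed but is routine, and your overall architecture for $(\Rightarrow)$ --- Kleene iteration of the least fixpoint and an induction on the approximation index with a side induction on term structure --- matches the paper's strategy of finite unfoldings $\prog^{(i)}$.

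The genuine gap is the sentence ``Let $\StepFun$ be the continuous operator \dots; by Kleene's theorem its least fixpoint equals $\bigsqcup_{n}\StepFun^n(\bot)$.'' This is precisely the non-trivial point, and it is \emph{false} in the standard sense you are implicitly invoking: the domains $\D_{\lts,\etyp\to\typ}$ consist of \emph{all} monotone functions, not only Scott-continuous ones, and the interpretation of an application $\form_1\,\form_2$ applies a possibly non-continuous element to a limit. The paper's Remark after Definition~\ref{def:continuity} gives a concrete counterexample: the monotone $g\in\D_{(\INT\to\typProp)\to\typProp}$ with $g(p)=\set{\st}$ iff $\st\in p(n)$ for all $n$ (definable by a $\nu$-formula) is not continuous, and $(\lambda x.\lambda y.x\,y)(g)$ destroys continuity even though the outer $\lambda$-term is harmless. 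Consequently $\StepFun$ is not Scott-continuous on the whole product domain, and Kleene's theorem does not apply off the shelf; what saves the argument is that the particular chain $\StepFun^n(\bot)$ consists of \emph{hereditarily} continuous elements, which requires the logical-relation-style notion of continuity of Definition~\ref{def:continuity}, the preservation theorem for fixpoint-free formulas (Lemma~\ref{lem:continuity}), and the resulting fixpoint characterization (Lemma~\ref{lem:fixpoint}). Without establishing this --- the bulk of the paper's Appendix~\ref{sec:proofs-mayreach} --- your identity $\sem{\trMay{\term}}=\bigsqcup_n\sem{\trMay{\term}}^{(n)}$, on which the whole completeness direction rests, is unjustified. (A small additional slip: the base case $n=0$ of your outer induction is not vacuous, since a term such as $\evexp{\lab}{\unitexp}$ already satisfies $\sem{\trMay{\term}}^{(0)}=\set{\stunique}$; it is discharged by the side induction on the term, not by the emptiness of $\StepFun^0(\bot)$.)
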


\iffull
To prove the theorem, we first show the theorem for recursion-free programs
and then lift it to arbitrary programs by using the continuity of functions represented in
the fixpoint-free fragment of \HFLZ{} formulas. 
To show that the theorem holds for recursion-free programs, 
See Appendix~\ref{sec:proofs-mayreach} for a concrete proof.
\else
A proof of the theorem above is found in \cite{ESOP2018full}.
We only provide an outline.
We first show the theorem for recursion-free programs
and then lift it to arbitrary programs by using the continuity of functions represented in
the fixpoint-free fragment of \HFLZ{} formulas. 
To show the theorem for recursion-free programs, we define the reduction relation
\(\term\redv{\progd}\term'\) by:\\[0.3ex]
\quad \InfruleS{0.3}{
 f \seq{x} = \termaltu \in \progd \andalso
 |\seq{x}|=|\seq{\term}|
 } {
 \EC[f\;\seq{\term}] \redv{\progd} \EC[[\seq{\term}/\seq{x}]\termaltu]
 }
 \InfruleS{0.6}{
  (\sem{\term'_1},\ldots,\sem{\term'_k})\in\sem{p} 
 } {
 \EC[\ifexp{p(\term'_1,\dots, \term'_k)}{\term_1}\term_2] \redv{\progd} \EC[\term_1]
 }
 \InfruleS{0.6}{
  (\sem{\term'_1},\ldots,\sem{\term'_k})\not\in \sem{p} 
 } {
 \EC[\ifexp{p(\term'_1,\dots, \term'_k)}{\term_1}\term_2] \redv{\progd} \EC[\term_2]
 }\\[0.3ex]
Here, \(E\) ranges over the set of evaluation contexts given by
\(\EC ::= \Hole \mid \EC\nondet \term \allowbreak\mid \term\nondet \EC \mid \evexp{\lab}\EC\).
The reduction relation differs from 
the labeled transition relation given in Section~\ref{sec:lang}, in that
\(\nondet\) and \(\evexp{\lab}{\cdots}\) are not eliminated.
By the definition of the translation, the theorem holds for programs in normal form (with respect to
the reduction relation), and the semantics of translated HFL formulas is preserved by the reduction relation;
thus the theorem holds for 
 recursion-free programs, as they are strongly normalizing.
\fi

\subsection{Must-Reachability}
\label{sec:mustreach}
The characterization of must-reachability can be 
obtained by an easy modification of the 
characterization of may-reachability: we just need to replace branches with logical conjunction.
\begin{definition}
Let \(\prog=(\progd,\term)\) be a program.
\(\HESf_{\prog,\must}\) is the HES \((\trMust{\progd}, \trMust{\term})\), where
\(\trMust{\progd}\) and \(\trMust{\term}\) are defined by:
\[
\begin{array}{l}
\trMust{\set{f_1\,\seq{x}_1=\term_1,\ldots,f_n\,\seq{x}_n=\term_n}} =
\left(f_1\;\seq{x}_1=_\mu \trMust{\term_1};\cdots; f_n\;\seq{x}_n=_\mu \trMust{\term_n}\right)\\
\trMust{\Vunit} = \FALSE \qquad \trMust{x} = x \qquad  \trMust{n}=n\qquad
\trMust{(\term_1\OP\term_2)}=\trMust{\term_1}\OP\trMust{\term_2}\\
\trMust{(\ifexp{p(\term'_1,\ldots,\term'_k)}{\term_1}{\term_2})}= \\\qquad\qquad
  (p(\trMust{\term'_1},\ldots,\trMust{\term'_k})\imply \trMust{\term_1})
  \land (\neg p(\trMust{\term'_1},\ldots,\trMust{\term'_k})\imply \trMust{\term_2})\\
\trMust{(\evexp{\lab}{\term})} = \TRUE\quad 
\trMust{(\term_1\term_2)} = \trMust{\term_1}\trMust{\term_2}\ 
\trMust{(\term_1\nondet\term_2)} = \trMust{\term_1}\land \trMust{\term_2}.\\
\end{array}
\]
Here, \(p(\form_1,\ldots,\form_k)\imply \form\) is a shorthand for
\(\neg p(\form_1,\ldots,\form_k)\lor \form\).
\end{definition}

\begin{example}
\newcommand\LOOP{\mathtt{loop}}
Consider \(\prog_{\LOOP} = (D, \LOOP\,m\,n)\) where \(D\) is:
\[
\begin{array}{l}
\LOOP\; x\; y = \ifexp{x\leq 0\lor y\leq 0}{(\evexp{\mathtt{end}} \Vunit)\\\qquad\qquad}
  {(\LOOP\; (x-1)\;(y*y)) \nondet (\LOOP\;x\;(y-1))}
\end{array}
\]
Here, the event \(\mathtt{end}\) is used to signal the termination of the program.
The function \(\LOOP\) non-deterministically updates the values of \(x\) and \(y\)
until either \(x\) or \(y\) becomes non-positive. The must-termination of the program
is characterized by \(\HESf_{\prog_{\LOOP,\must}} = (\HES, \LOOP\,m\,n)\) where \(\HES\) is:
\[
\begin{array}{l}
\LOOP\;x\;y =_\mu (x\leq 0\lor y\leq 0\imply \TRUE)\\\qquad\qquad\qquad
  \land (\neg(x\leq 0\lor y\leq 0) \imply (\LOOP\;(x-1)\;(y*y))\land (\LOOP\;x\;(y-1))).
\end{array}
\]
\end{example}

We write \(\Must_\lab(\prog)\) if every \(\pi\in \FullTraces(\prog)\) contains \(\lab\).
The following theorem, which can be proved in a manner similar to 
Theorem~\ref{th:mayreach}, guarantees that \(\HESf_{\prog,\must}\) is indeed 
a sound and complete characterization of the 
 must-reachability.
\begin{theorem}
\label{th:must-reachability}
Let \(\prog\) be a program. Then, \(\Must_\lab(\prog)\) if and only if
\(\lts_0 \models \HESf_{\prog,\must}\) for \(\lts_0 = (\set{\stunique},\emptyset,\emptyset,\stunique)\).
\end{theorem}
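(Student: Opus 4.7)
The proof follows the same three-stage strategy as the proof of Theorem~\ref{th:mayreach}: we first establish the result for terms in normal form, then lift to recursion-free programs via strong normalization, and finally to arbitrary programs via continuity of fixed points. The three clauses that differ from the may-reachability translation---$\land$ in place of $\lor$ for $\nondet$, conjunctions of implications in place of disjunctions of conjunctions for conditionals, and the unchanged $\trMust{\Vunit}=\FALSE$---are precisely the changes needed to swap ``some full trace contains $\lab$'' for ``every full trace contains $\lab$''.

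First I would reintroduce the auxiliary reduction relation $\redv{\progd}$ on terms (via evaluation contexts that do not consume events or resolve $\nondet$), as in the may-reachability proof, and prove the semantic preservation lemma: if $\term\redv{\progd}\term'$ then $\sem{\HESf_{(\progd,\term),\must}}=\sem{\HESf_{(\progd,\term'),\must}}$. For function-call reductions this uses the unfolding equation for the $\mu$-binders introduced in $\trMust{\progd}$; for conditional reductions it uses the classical equivalence between $(p\imply\form_1)\land(\neg p\imply\form_2)$ and $\form_i$ when $p$ (respectively $\neg p$) holds.

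Second, I would prove the theorem for terms in normal form by structural induction. A normal form is either $\Vunit$ (where $\FullTraces=\{\epsilon\}$ does not contain $\lab$ and $\trMust{\Vunit}=\FALSE$), an expression $\evexp{\lab}{\term}$ (where every full trace begins with $\lab$ and the translation is $\TRUE$), or a term containing a non-deterministic choice in a reachable position, for which $\Must_\lab$ distributes over $\nondet$ in exact correspondence with $\land$ on the logical side. The type discipline restricting integer-typed subterms to closed arithmetic expressions rules out stuck conditionals. Combining this base case with the preservation lemma and the strong normalization of $\redv{\progd}$ on recursion-free programs settles that case.

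Finally, I would lift to arbitrary programs using the approximation $\sem{\mu X.F(X)}=\bigsqcup_k F^k(\bot)$ provided by continuity. Each approximant $F^k(\bot)$ coincides with the translation of the recursion-free program obtained by inlining function definitions up to depth $k$ and replacing deeper calls by diverging terms (whose translation is $\FALSE$). I expect the main obstacle to be the forward direction: from $\Must_\lab(\prog)$ one only obtains, for each full trace individually, some finite prefix already containing $\lab$, whereas matching the $\bigsqcup_k$ characterization requires a \emph{uniform} depth $k$. This is the converse of the easy direction in the may-reachability proof, and bridging it calls for a K\"onig-style compactness argument exploiting the finite branching of $\redv{\progd}$ (only $\nondet$-redexes yield more than one successor, and they yield exactly two) to turn the pointwise bounds into a single one.
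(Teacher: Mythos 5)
Your proposal follows essentially the same route as the paper's proof: a semantic-preservation lemma for the reduction relation, a base case for terms in normal form (where $\Must_\lab$ distributes over $\nondet$ in correspondence with $\land$), strong normalization to handle recursion-free programs, and continuity of the least fixpoint to pass to arbitrary programs via the finite approximants $\prog^{(i)}$. The one place where you go beyond the paper is the final equivalence $\Must_\lab(\prog)\IFF\exists i.\ \Must_\lab(\prog^{(i)})$: the paper asserts this step without comment, and your observation that the forward direction requires a K\"onig-style compactness argument (using the finite branching of the operational semantics to turn the per-trace bounds on when $\lab$ first occurs into a single uniform unfolding depth) is correct and fills a genuine, if small, gap in the paper's own write-up.
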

\iffull
The proof is given in 
Appendix~\ref{sec:proofs-mustreach}.
\fi

\section{Trace Properties}
\label{sec:path}
\label{SEC:PATH}
Here we consider the verification problem:
``Given a (non-\(\omega\)) regular language \(L\) and a program \(\prog\),
does \emph{every} finite event sequence of \(\prog\) belong to \(L\)?
(i.e. \(\FinTraces(\prog)\stackrel{?}{\subseteq} L\))'' 
and reduce it to an \HFLZ{} model checking problem. 
The verification of file-accessing programs considered in Section~\ref{sec:resource-usage}
may be considered an instance of the 
\iffull
problem.\footnote{The last example in 
Section~\ref{sec:resource-usage} is actually a combination with the must-reachability problem.}
\else
problem.
\fi

Here we assume that the language
\(L\) is closed under the prefix operation; this does not lose generality because \(\FinTraces(P)\)
is also closed under the prefix operation. We write \(\A_L=
(Q,\Sigma,\delta,q_0,F)\) for
the minimal, deterministic automaton with no dead states (hence the transition function
\(\delta\) may be partial). 
Since \(L\) is prefix-closed and the automaton is minimal, 
\(w\in L\) if and only if \(\hat{\delta}(q_0,w)\) is defined (where \(\hat{\delta}\) is defined by:
\(\hat{\delta}(q,\epsilon)=q\) and \(\hat{\delta}(q,aw) = \hat{\delta}(\delta(q,a),w)\)).
We use the corresponding LTS \(\lts_L = (Q, \Sigma, \set{(q,a,q') \mid \delta(q,a)=q'}, q_0)\)
as the model of the reduced \HFLZ{} model checking problem.

Given the LTS \(\lts_L\) above, whether an event sequence \(\lab_1\cdots\lab_k\) belongs to \(L\) 
can be expressed as \(\lts_L \stackrel{?}{\models} \Some{\lab_1}\cdots\Some{\lab_k}\TRUE\).
Whether all the event sequences in \(\set{\lab_{j,1}\cdots\lab_{j,k_j}\mid j\in\set{1,\ldots,n}}\)
belong to \(L\) can be expressed as
\(\lts_L \stackrel{?}{\models} \bigwedge_{j\in\set{1,\ldots,n}} 
\Some{\lab_{j,1}}\cdots\Some{\lab_{j,k_j}}\TRUE\).
We can lift these translations for event sequences to
the translation from a program (which can be considered a description of a set of event sequences)
to an \HFLZ{} formula, as follows.
\begin{definition}
Let \(\prog=(\progd,\term)\) be a program.
\(\HESf_{\prog,\Path}\) is the HES \((\trPath{\progd}, \trPath{\term})\), where
\(\trPath{\progd}\) and \(\trPath{\term}\) are defined by:
\[
\begin{array}{l}
\trPath{\set{f_1\,\seq{x}_1=\term_1,\ldots,f_n\,\seq{x}_n=\term_n}} =
\left(f_1\;\seq{x}_1=_\nu \trPath{\term_1};\cdots; f_n\;\seq{x}_n=_\nu \trPath{\term_n}\right)\\
\trPath{\Vunit} = \TRUE \qquad \trPath{x} = x \qquad  \trPath{n}=n\qquad
\trPath{(\term_1\OP\term_2)}=\trPath{\term_1}\OP\trPath{\term_2}\\
\trPath{(\ifexp{p(\term'_1,\ldots,\term'_k)}{\term_1}{\term_2})}= \\\qquad\qquad
  (p(\trPath{\term'_1},\ldots,\trPath{\term'_k})\imply \trPath{\term_1})
  \land (\neg p(\trPath{\term'_1},\ldots,\trPath{\term'_k})\imply \trPath{\term_2})\\
\trPath{(\evexp{\lab}{\term})} = \Some{\lab}\trPath{\term}\quad 
\trPath{(\term_1\term_2)} = \trPath{\term_1}\trPath{\term_2}\ 
\trPath{(\term_1\nondet\term_2)} = \trPath{\term_1}\land \trPath{\term_2}.\\
\end{array}
\]
\end{definition}

\begin{example}
The last program discussed in Section~\ref{sec:resource-usage} is modeled as \(\prog_2 = (\progd_2,f\; m\; g)\), 
where \(m\) is an integer constant and \(\progd_2\) consists of:
\[
\begin{array}{l}
f\; y\; k = \ifexp{y=0}{(\evexp{\CLOSE}{k\,\Vunit})}{(\evexp{\READ}{f\;(y-1)\;k})}\\
g\; r = \evexp{\mathtt{end}}{\Vunit}
\end{array}
\]
Here, we have modeled accesses to the file, and termination as events.
Then, \(\HESf_{\prog_2,\Path} = (\HES_{\prog_2,\Path}, f\;m\;g)\) where 
\(\HES_{\prog_2,\Path}\) is:\footnote{Unlike in Section~\ref{sec:examples}, the variables are 
bound by \(\nu\) since we are not concerned with the termination property here.}
\[
\begin{array}{l}
f\, n\; k =_\nu (n= 0\imply \Some{\CLOSE}{(k\,\TRUE)})\land (n\neq 0\imply \Some{\READ}(f\;(n-1)\;k))\\
g\;r =_\nu \Some{\mathtt{end}}{\TRUE}.
\end{array}
\]
Let \(L\) be the prefix-closure of \(\READ^*\cdot \CLOSE\cdot\mathtt{end}\). Then \(\lts_L\)
is \(\lts'_{\mathit{file}}\) in Section~\ref{sec:resource-usage}, and
 \(\FinTraces(\prog_2){\subseteq} L\) can be verified by 
checking \(\lts_L{\models} \HESf_{\prog_2,\Path}\).
\qed
\end{example}

\begin{theorem}
\label{th:path}
Let \(\prog\) be a program and \(L\) be a regular, prefix-closed language.
Then, \(\FinTraces(\prog)\subseteq L\) if and only if \(\lts_L\models \HESf_{\prog,\Path}\).
\end{theorem}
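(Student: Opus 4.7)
I plan to follow the two-stage strategy employed for Theorem~\ref{th:mayreach}: first establish the result for recursion-free programs via a reduction-based argument, then lift it to general programs by exploiting the fixpoint equation for $\nu$.

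\textbf{Stage 1 (recursion-free case).} Reusing the reduction relation $\redv{\progd}$ on open terms (with evaluation contexts admitting $\nondet$ and $\evexp{\lab}{\cdot}$), I would prove a preservation lemma: for any silent step $\term \redv{\progd} \term'$, $\sem{\trPath{\term}}_\rho = \sem{\trPath{\term'}}_\rho$ for every valuation $\rho$; and for an event step $\evexp{\lab}{\term} \Pred{\lab}{\progd} \term$, a state $q \in Q$ lies in $\sem{\trPath{\evexp{\lab}{\term}}}_\rho$ iff $\delta(q,\lab)$ is defined and lies in $\sem{\trPath{\term}}_\rho$. The verification is a case analysis on the reduction rules, using determinism of $\lts_L$ for the event case and distributivity of $\wedge$ over $\trPath{(\cdot)}$ for the $\nondet$ and conditional cases. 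For terms in normal form (necessarily $\Vunit$), $\FinTraces = \{\epsilon\}$ and $\trPath{\Vunit} = \TRUE$, giving the base case; strong normalization of recursion-free programs then yields the theorem for this fragment.

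\textbf{Stage 2 (general case).} Let $\rho^*$ denote the valuation assigning each $f_i$ its greatest-fixpoint interpretation, so that $\sem{\HESf_{\prog,\Path}}_{\lts_L} = \sem{\trPath{\term_0}}_{\rho^*}$. For the \emph{soundness} direction ($\lts_L \models \HESf_{\prog,\Path}$ implies $\FinTraces(\prog) \subseteq L$), I would induct on the length of a trace $w = \lab_1\cdots\lab_k$, using the Stage-1 preservation lemma now extended to allow unfolding of function calls, justified by the fixpoint equation $\rho^*(f_i) = \sem{\lambda \seq{x}_i.\trPath{\term_i}}_{\rho^*}$. For the \emph{completeness} direction, I would construct a valuation $\rho^\sharp$ assigning each $f_i$ a semantic value in $\D_{\lts_L,\typ_i}$ that captures the predicate ``from argument $\vec{a}$, every finite trace of $f_i\,\vec{\cdot}$ stays in $L$,'' and verify that $\rho^\sharp$ is a post-fixpoint of the functional induced by the equations $f_i\,\seq{x}_i =_\nu \trPath{\term_i}$. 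By Knaster--Tarski this gives $\rho^\sharp \sqleq \rho^*$, and since $\rho^\sharp(\trPath{\term_0})$ contains $q_0$ by hypothesis, the completeness conclusion follows.

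\textbf{Main obstacle.} The delicate step is the construction of $\rho^\sharp$ in the higher-order setting: $f_i$ may take function-typed arguments, so $\rho^\sharp(f_i)(\vec{a})$ has to be defined for arbitrary semantic values $\vec{a} \in \D_{\lts_L,\seq{\etyp}_i}$, not merely those arising from syntactic source terms. I would address this by a logical-relations argument that relates a semantic value $v \in \D_{\lts_L,\etyp}$ to a (possibly open) source-language value of the corresponding simple type, and shows this relation is preserved by $\trPath{\cdot}$: at integer type the relation is identity, at function type it is the usual pointwise lifting, and at $\typProp$ it equates $v \subseteq Q$ with a safety predicate on the associated term. The verification that this relation is preserved then reduces, in each clause of the definition of $\trPath{\cdot}$, to the corresponding Stage-1 computation, and the post-fixpoint condition follows by monotonicity.
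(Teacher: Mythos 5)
Your Stage 1 matches the paper's proof in substance: the paper likewise proves a preservation lemma for a reduction relation that keeps $\nondet$ and $\mathbf{event}$ uninterpreted, and handles the interaction of events with branching by adding an explicit distribution rule $E[\evexp{\lab}{(t_1\nondet t_2)}]\red E[(\evexp{\lab}{t_1})\nondet(\evexp{\lab}{t_2})]$, whose soundness rests exactly on the determinism of $\lts_L$ (so that $\Some{\lab}(\form_1\land\form_2)$ and $\Some{\lab}\form_1\land\Some{\lab}\form_2$ coincide). Normal forms are then $\nondet$-combinations of event sequences ending in $\Vunit$, not just $\Vunit$; your variant, which consumes events and shifts the state of $\lts_L$, is a coherent alternative but then the lemma is no longer literally ``preservation of semantics,'' so state the invariant carefully.

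Stage 2 is where you genuinely diverge, and where there is a gap. Your soundness direction (induction on the reduction sequence, unfolding the fixpoint equation at each call) is fine and arguably more elementary than the paper's. For completeness, however, your plan hinges on constructing a post-fixpoint valuation $\rho^\sharp$ and you have correctly identified, but not resolved, the obstacle: $\rho^\sharp(f_i)$ must be a total monotone function on all of $\D_{\lts_L,\seq{\etyp}_i}$, and the post-fixpoint inequality must be checked at \emph{every} argument, whereas your logical relation only ties semantic values to syntactic terms. At types involving $\INT$ the domain is uncountable, so almost no arguments are definable, and the relation leaves $\rho^\sharp$ unconstrained exactly where the inequality still has to hold. ``Reduces to the Stage-1 computation'' does not go through at such points. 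The paper avoids this entirely by a different mechanism: it proves that fixpoint-free formulas denote \emph{co-continuous} functions (a non-standard, relation-indexed notion of co-continuity on these domains, whose $\Some{a}$/$\All{a}$ cases crucially exploit finiteness of the state set of $\lts_L$), concludes $\GFP{}(F)=\bigsqcap_{i}F^i(\top)$ via Beki\'c, and observes that the $i$-th approximant is precisely $\HESf_{\prog^{(i)},\Path}$ for the program unfolded $i$ times with recursive calls cut off by $\Vunit$ (which translates to $\TRUE=\top$); prefix-closedness of $L$ and of $\FinTraces$ then reduces everything to the recursion-free case. If you want to keep your coinductive route, you would need either to restrict attention to the definable sublattice and argue that the GFP is already determined there, or to define $\rho^\sharp$ by a glb over all syntactic approximations at each semantic argument and prove the inequality by a separate continuity-style argument --- at which point you have essentially reconstructed the paper's co-continuity lemma.
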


As in Section~\ref{sec:reachability}, we first prove the theorem for programs in normal form,
and then lift it
to recursion-free programs by using the preservation of the semantics of \HFLZ{} formulas by
reductions, and further to arbitrary programs by using the (co-)continuity of the functions
represented by fixpoint-free \HFLZ{} formulas. 
\iffull
The proof is given in Appendix~\ref{sec:proofs-trace}.
\else
See \cite{ESOP2018full} for a concrete proof.
\fi

\section{Linear-Time Temporal Properties}
\label{sec:liveness}
\label{SEC:LIVENESS}

This section considers the following problem:
 ``Given a program \(\prog\) and an \(\omega\)-regular word language \(L\), does 
\(\InfTraces(P){\cap} L = \emptyset\) hold\(?\)''
From the viewpoint of program verification, \(L\) represents the set of ``bad'' behaviors.
\newcommand\LOOP{\mathit{loop}}
This can be considered an extension of the problems considered in the previous 
\iffull 
sections.\footnote{Note that
finite traces can be turned into infinite ones by inserting a dummy event for every function call
and replacing each occurrence of the unit value \(\Vunit\) with \(\LOOP(\,)\) where \(\LOOP\,x=\evexp{\Lab{dummy}}\LOOP\,x\).}
\else sections.
\fi 

The reduction to HFL model checking is more involved than those in
the previous sections. To see the difficulty, consider the
 program \(P_0\):
\[ \left(\set{f = \ifexp{c}{(\evexp{\Lab{a}}f)}{(\evexp{\Lab{b}}f)}}, \quad f\right),\]
where \(c\) is some boolean expression.
Let \(L\) be the complement of \((\Lab{a}^*\Lab{b})^\omega\), i.e., 
the set of infinite sequences that contain only finitely many 
\(\Lab{b}\)'s. Following Section~\ref{sec:path}
(and noting that
\(\InfTraces(P){\cap} L = \emptyset\) is equivalent to 
\(\InfTraces(P) \subseteq (\Lab{a}^*\Lab{b})^\omega\) in this case),
 one may be tempted to 
prepare an LTS like the one in Figure~\ref{fig:abomega} (which corresponds to the transition function
of a (parity) word automaton accepting \((\Lab{a}^*\Lab{b})^\omega\)),
and translate the program to an HES \(\HESf_{P_0}\) of the form:
\[ \left(f =_\alpha (c\imply \Some{\Lab{a}}f) \land (\neg c\imply \Some{\Lab{b}}f), \quad f\right),\]
where \(\alpha\) is \(\mu\) or \(\nu\). 
However, such a translation would not work.
If \(c=\TRUE\), then \(\InfTraces(P_0)=\Lab{a}^\omega\), hence 
\(\InfTraces(P_0)\cap L\neq\emptyset\);
thus, \(\alpha\) should be \(\mu\) for \(\HESf_{P_0}\) to be unsatisfied.
If \(c=\FALSE\), however, \(\InfTraces(P_0)=\Lab{b}^\omega\), hence 
\(\InfTraces(P_0)\cap L=\emptyset\); thus,
\(\alpha\) must be \(\nu\) for \(\HESf_{P_0}\) to be satisfied. 
\begin{figure}[tb]
\begin{center}
\[
\xymatrix@!R=2pc{%
 *+<1pc>[o][F-]{q_a}  \ar@(d,l)^{a} \ar@/^/[r]^{b}
& *+<1pc>[o][F-]{q_b} \ar@/^/[l]^{\Lab{a}} \ar@(u,r)^{b} }\]
\end{center}
\caption{LTS for \((a^*b)^\omega\)}
\label{fig:abomega}
\end{figure}

The example above suggests that
 we actually need to distinguish between the two occurrences of \(f\) in the body
of \(f\)'s definition. Note that in the then- and else-clauses respectively, \(f\) is called 
after different events \(\Lab{a}\) and \(\Lab{b}\). This difference is important, since we are 
interested in whether \(\Lab{b}\) occurs infinitely often.
We thus duplicate \(f\), and replace the program with the following program \(\prog_{\mathit{dup}}\):
\[\begin{array}{ll}
(\{f_b = \ifexp{c}{(\evexp{\Lab{a}}f_a)}{(\evexp{\Lab{b}}f_b)}, \\
 \quad f_a = \ifexp{c}{(\evexp{\Lab{a}}f_a)}{(\evexp{\Lab{b}}f_b)}\}, & f_b).
\end{array}
\]
For checking \(\InfTraces(P_0)\cap L = \emptyset\),
it is now sufficient to check that \(f_b\) is recursively called infinitely often.
We can thus obtain the following HES:
\[\begin{array}{ll}
((f_b =_\nu (c\imply \Some{\Lab{a}}f_a)\land (\neg c\imply \Some{\Lab{b}}f_b); 
 \quad f_a =_\mu (c\imply \Some{\Lab{a}}f_a)\land (\neg c\imply \Some{\Lab{b}}f_b)), \quad & f_b).
\end{array}
\]
Note that \(f_b\) and \(f_a\) are bound by \(\nu\) and \(\mu\) respectively, reflecting the fact
that \(\Lab{b}\) should occur infinitely often, but \(\Lab{a}\) need not. If \(c=\TRUE\),
the formula is equivalent to \(\nu f_b.\Some{\Lab{a}}\mu f_a.\Some{\Lab{a}}f_a\), which is false.
If \(c=\FALSE\), then the formula is equivalent to \(\nu f_b.\Some{\Lab{b}}f_b\), 
which is satisfied by by the LTS in Figure~\ref{fig:abomega}.

The general translation is more involved due to the presence of higher-order functions,
but, as in the example above, the overall translation consists of two steps.
We first replicate functions according to what events may occur between two recursive calls,
and reduce the problem \(\InfTraces(P)\cap L\stackrel{?}{=} \emptyset\)
to a problem of analyzing which functions are recursively called infinitely often,
which we call a \emph{call-sequence analysis}. We can then reduce the call-sequence analysis
to HFL model checking in a rather straightforward manner (though the proof of the correctness
is non-trivial).
The resulting HFL formula actually does not contain modal operators.\footnote{In the example above,
we can actually remove \(\Some{\Lab{a}}\) and \(\Some{\Lab{b}}\), as information about events has
been taken into account when \(f\) was duplicated.}
So, as in Section~\ref{sec:reachability}, the resulting problem is the validity checking of
HFL formulas without modal operators.

In the rest of this section, we first introduce the call-sequence analysis problem and
its reduction to HFL model checking in Section~\ref{sec:callsequence}.
We then show how to reduce the temporal verification problem \(\InfTraces(P)\cap L\stackrel{?}{=} \emptyset\)
to an instance of the call-sequence analysis problem in Section~\ref{sec:to-callsequence}.

\subsection{Call-sequence analysis}
\label{sec:callsequence}
We define the call-sequence analysis and reduce it to an
 HFL model-checking problem.
As mentioned above, in the call-sequence analysis,
we are interested in analyzing which functions are \emph{recursively called} infinitely often.
Here, we say that \(g\) is \emph{recursively called from} \(f\), if
\(f\,\seq{s}\Pred{\epsilon}{\progd} [\seq{s}/\seq{x}]t_f\Preds{\labseq}{\progd} g\,\seq{t}\),
where \(f\,\seq{x}=t_f\in \progd\) and \(g\) ``originates from'' \(t_f\) (a more formal
definition will be given in Definition~\ref{def:recall} below). For example, consider the following program \(\prog_{\mathit{app}}\),
which is a twisted version of 
\(\prog_{\mathit{dup}}\) above.
\[\begin{array}{ll}
(\{\APP\,h\,x = h\,x,\\\quad
f_b\,x = \ifexp{x> 0}{(\evexp{\Lab{a}}\APP\,f_a\,(x-1))}{(\evexp{\Lab{b}}\APP\,f_b\,5)}, \\
 \quad f_a\,x = \ifexp{x> 0}{(\evexp{\Lab{a}}\APP\,f_a\,(x-1))}{(\evexp{\Lab{b}}\APP\,f_b\,5)}\},& f_b\,5).
\end{array}
\]
Then \(f_a\) is
``recursively called'' from \(f_b\) in \(f_b\,5 \Preds{\Lab{a}}{\progd} \APP\,f_a\,4\Preds{\epsilon}{\progd}
f_a\,4\) (and so is \(\APP\)).
We are interested in infinite chains 
of recursive calls
\(f_0f_1f_2\cdots\), and which functions may occur infinitely often in each chain.
For instance, the program above has the unique infinite chain \((f_b f_a^5)^\omega\), in which 
both \(f_a\) and \(f_b\) occur infinitely often. (Besides the infinite chain, the program has finite
chains like \(f_b\,\APP\); note that the chain cannot be extended further, as the body of \(\APP\) does not have
any occurrence of recursive functions: \(\APP,f_a\) and \(f_b\).)

We define the notion of ``recursive calls'' and call-sequences formally below.


\begin{definition}[recursive call relation, call sequences]
\label{def:recall}
Let \(\prog=(\progd, f_1\,\seq{s})\) be a program, with \(\progd= \{ f_i\,\tilde{x}_i = u_i \}_{1 \le i \le n}\).
We define \( \progd^{\Mark} := \progd \cup \{ f^{\Mark}_i\,\tilde{x} = u_i \}_{1 \le i \le n} \) where \( f^{\Mark}_1, \dots, f^{\Mark}_n \) are fresh symbols. (Thus, \( \progd^{\Mark} \) has two copies of each function symbol, one of which is marked by \(\Mark\).)
For the terms \(\seq{t}_i\) and \(\seq{t}_j\) that do not contain marked symbols,
we write \(f_i\,\seq{t}_i \recall{\progd} f_j\,\seq{t}_j\) if
(i) \([\seq{t}_i/\seq{x}_i][f_1^\Mark/f_1,\ldots,f_n^\Mark/f_n]u_i \Preds{\labseq}{\progd^\Mark} f_j^\Mark\,\seq{t}'_j\)
and (ii) \(\seq{t}_j\) is obtained by erasing all the marks in \(\seq{t}'_j\).
We write \(\Callseq(P)\) for the set of (possibly infinite) sequences of function symbols:
\[ \set{f_1\,g_1\,g_2\dots \mid f_1\,\seq{s}\recall{\progd} g_1\,\seq{t}_1\recall{\progd} g_2\,\seq{t}_2\recall{\progd}\cdots}.\]
We write \(\InfCallseq(P)\) for the subset of \(\Callseq(\prog)\) consisting of infinite sequences, i.e.,
\(\Callseq(P)\cap \set{f_1,\ldots,f_n}^\omega\).
\end{definition}

For example, for \(\prog_{\mathit{app}}\) above,  
\(\Callseq(P)\) is the prefix closure of
\(\set{(f_bf_a^5)^\omega}\cup 
\set{s\cdot \APP \mid \mbox{$s$ is a non-empty finite prefix of \((f_bf_a^5)^\omega\)}}\),
and
\(\InfCallseq(P)\) is the singleton set  \(\set{(f_bf_a^5)^\omega}\).

\begin{definition}[Call-sequence analysis]
\label{def:call-sequence-analysis}
  A \emph{priority assignment} for a program \( P \) is a function \( \Pfun \COL \Funcs(P) \to \mathbb{N} \)
from the set of function symbols of \( P \) to the set \(\mathbb{N}\) of natural numbers.
We write \(\models_{\CSA} (\prog,\Pfun)\) if
every infinite call-sequence \(g_0g_1g_2\dots\in
\InfCallseq(P)\) 
satisfies the parity condition w.r.t.~\( \Pfun \), i.e.,
the largest number occurring infinitely often in \( \Pfun(g_0) \Pfun(g_1) \Pfun(g_2) \dots \) is even.
\emph{Call-sequence analysis} is the problem of, given a program \( P \) with a priority assignment \( \Pfun \),
deciding whether \(\models_{\CSA} (\prog,\Pfun)\) holds.
\end{definition}

For example, for \(\prog_{\mathit{app}}\) and the priority assignment \(\Pfun_{\mathit{app}} =
\set{\APP\mapsto 3, f_a\mapsto 1, f_b\mapsto 2}\),
\(\models_\CSA(\prog_{\mathit{app}}, \Pfun_{\mathit{app}})\) holds.

The call-sequence analysis can naturally be reduced to
 HFL model checking against the trivial LTS \( \lts_0 = (\set{\stunique}, \emptyset, \emptyset, \stunique) \)
(or validity checking).
\begin{definition}
  Let \(\prog=(D,t)\)
be a program and \(\Pfun\) be a priority assignment for \(\prog\).
The HES \(\HESf_{(\prog,\Pfun),\CSA}\) is \((\livetrans{\progd}, \livetrans{t})\), where 
\(\livetrans{\progd}\) and \(\livetrans{t}\) are defined by:
  \[
  \begin{array}{l}
\livetrans{
\set{f_1\,\seq{x}_1=\term_1,\ldots,f_n\,\seq{x}_n=\term_n}}
= \left(f_1\;\seq{x}_1=_{\munu_1} \livetrans{\term_1};\cdots; f_n\;\seq{x}_n=_{\munu_n} \livetrans{\term_n}\right)\\
   \livetrans{\unitexp} = \TRUE \qquad
    \livetrans{x} = x \qquad
    \livetrans{n} = n \qquad
    \livetrans{(\term_1 \OP \term_2)} = \livetrans{\term_1} \OP \livetrans{\term_2} \\
    \livetrans{(\ifexp{p(\Pae_1',\ldots,\Pae_k')}{\term_1}{\term_2})} = \\\qquad
(p(\livetrans{\Pae_1'},\ldots,\livetrans{\Pae_k'}) \imply \livetrans{\term_1})\land (\neg p(\livetrans{\Pae_1'},\ldots,\livetrans{\Pae_k'}) \imply \livetrans{\term_2})\\
    \livetrans{(\evexp{\lab}{\term})} = \livetrans{\term} \qquad
    \livetrans{(\term_1\,\term_2)} = \livetrans{\term_1}\,\livetrans{\term_2} \qquad
    \livetrans{(\term_1 \nondet \term_2)} = \livetrans{\term_1} \land \livetrans{\term_2}.
  \end{array}
  \]
Here, we assume that \(\Pfun(f_i) \ge \Pfun(f_{i+1})\) for each \(i \in \{1,\dots,n-1\}\),
and \(\munu_i = \nu \) if \(\Pfun(f_i)\) is even and \(\mu\) otherwise.
\end{definition}

The following theorem states the soundness and completeness of the reduction. 
\iffull
See Appendix~\ref{sec:proof-csa} for a proof.
\else
See \cite{ESOP2018full} for a proof.
\fi
\begin{theorem}
\label{theorem:callseq}
  Let \( P \) 
be a program and \( \Pfun \) be a priority assignment for \(P\).
Then \(\models_\CSA (P,\Pfun)\) if and only if \( \lts_0 \models \HESf_{(P,\Pfun),\CSA}\).
\end{theorem}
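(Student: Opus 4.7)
My plan is to prove the equivalence by recasting both sides as winning conditions of the same parity game.

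First, I would observe that the LTS $\lts_0$ has a single state and no transitions, and that the translation $\livetrans{\cdot}$ produces a formula containing no modal operators. Consequently the semantic domain $\D_{\lts_0,\typProp}$ collapses to $\{\emptyset,\{\stunique\}\}$, which I will identify with the booleans $\{0,1\}$, and the only higher-order values that arise are monotone functions over boolean-valued lattices lifted by integers. Integers are inert (passed around but never inspected beyond guards), and guards in if-expressions become propositional implications that route control deterministically. Thus $\sem{\HESf_{(P,\Pfun),\CSA}}$ is equivalent to the winning region of a parity game whose positions are function-call configurations $f_i\,\seq{t}$ together with subformulas currently being evaluated, with disjunctions controlled by $\exists$loise and conjunctions by $\forall$belard, and priorities assigned so that the position $f_i\,\seq{t}$ carries priority $\Pfun(f_i)$ --- this last point is guaranteed by the assumption that definitions in $\HESf$ are ordered by decreasing priority and that $\munu_i$ is chosen by the parity of $\Pfun(f_i)$.

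Second, I would establish a bisimulation between plays of this game and the call-tree generated by $\recall{\progd}$. A single step of the game from position $f_i\,\seq{t}$ corresponds to unfolding $f_i$'s definition and then resolving conjunctions (coming from $\nondet$ and the two branches of \textbf{if}) and disjunctions (coming from $\nondet$ and branches --- which role they play depends on the context established by the outer operator), until reaching a new head call $f_j\,\seq{t}'$. Operationally, the same step is witnessed by $[\seq{t}/\seq{x}_i][f_1^{\Mark}/f_1,\dots,f_n^{\Mark}/f_n]u_i \Preds{\labseq}{\progd^\Mark} f_j^\Mark\,\seq{t}''$ in Definition \ref{def:recall}: the marks delimit exactly one unfolding of $f_i$'s body, which matches exactly one round of beta/propositional reduction of the HFL formula before it exposes a new recursion variable. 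Because events are discarded ($\livetrans{(\evexp{\lab}{\term})}=\livetrans{\term}$), whatever labels are emitted in $\labseq$ are irrelevant. With this correspondence, every infinite play of the game projects to an element of $\InfCallseq(P)$ and conversely, and the parity of the play agrees with $\Pfun(g_0)\Pfun(g_1)\cdots$.

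Third, from the bisimulation I would conclude: $\lts_0\models \HESf_{(P,\Pfun),\CSA}$ iff $\exists$loise wins the parity game iff every maximal branch of the call-tree from $f_1\,\seq{s}$, when infinite, satisfies the parity condition w.r.t.\ $\Pfun$, which is exactly $\models_{\CSA}(P,\Pfun)$. The finite-branch case (where a leaf is reached) contributes a $\TRUE$ value on the $\exists$loise side and is automatically accepted on the call-sequence side, because the quantification in Definition \ref{def:call-sequence-analysis} ranges only over infinite sequences.

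The main obstacle will be making the bisimulation rigorous in the presence of higher-order arguments such as the $\APP\,f_a\,\cdots$ pattern. There, one HFL unfolding step of $\APP$ does not immediately expose a new recursion variable; a further beta-reduction is required before $f_a$ surfaces as the head. I would handle this by defining the game at the granularity of ``expose the next recursion variable,'' which may involve several syntactic reductions, and by proving a lemma that the intermediate beta-reductions preserve the HFL denotation and do not alter the priority-relevant structure, so the game's priority labeling is well-defined and coincides with the operational priority of the next recursive call. This lemma, together with Knaster--Tarski and the standard approximation-based proof of the parity-game characterization of nested fixed points, closes the argument.
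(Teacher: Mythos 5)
Your high-level picture --- a parity game whose infinite plays correspond to infinite call-sequences, with priorities $\Pfun(f_i)$ attached at the recursion variables --- matches the intuition behind the paper's proof, and you correctly flag the higher-order-argument pattern ($\APP\,f_a\,\cdots$) as the danger point. But the step you dispose of with ``Knaster--Tarski and the standard approximation-based proof of the parity-game characterization of nested fixed points'' is precisely where the theorem's content lives, and it does not go through as stated. The standard unfolding-game characterization you are invoking is available for the modal $\mu$-calculus and for HFL over \emph{finite} semantic domains; here the domains $\D_{\lts_0,\etyp}$ are infinite (they contain $\INT$ in argument positions), the least fixpoints may require transfinite iteration, and a game position for a higher-order recursion variable must record \emph{what is being asked of it}, not merely a syntactic configuration $f_i\,\seq{t}$ paired with a subformula. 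The paper has to build this characterization from scratch: it defines a \emph{semantic} parity game whose Proponent positions are pairs $(p,X_i)$ with $p$ a complete prime of the lattice and whose Opponent positions are full valuations, proves the correspondence with $\sem{\HESf}$ via parity progress measures and (for the converse direction) a dual ``opposite game'' over complete coprimes --- the two directions of the iff are genuinely asymmetric and are not delivered by a single bisimulation.

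The second gap is in the claimed bisimulation itself. A challenge on a higher-order argument in the game is answered by a single semantic value, but operationally that argument may be used many times, in different calls, after different priorities have been encountered; the correspondence between one game move and one step of $\recall{\progd}$ is therefore not local. The paper resolves this by assigning to each head occurrence $t\,\tilde u$ in the reduction sequence a summary element $d_{t,\tilde u}$ defined as a \emph{join over all future uses} of each argument (and, for soundness, a dual meet $\bar d_{t,\tilde u,\RTyPath}$ along a fixed choice sequence), proves an ``infinite subject expansion'' lemma showing these are legal strategy moves, and needs the marking/origin discipline plus the uniqueness of the infinite call-sequence along a fixed choice sequence to recover an element of $\InfCallseq(P)$ from an infinite play. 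Your proposed fix --- a lemma that intermediate $\beta$-reductions preserve the denotation and the priority-relevant structure --- addresses only the benign syntactic reshuffling, not the quantification over all future uses of a higher-order argument; without something playing the role of $d_{t,\tilde u}$ (or an equivalent intersection-type/variable-profile argument), neither direction of ``every infinite play projects to an infinite call-sequence and conversely'' is established.
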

\begin{example}
For \(\prog_{\mathit{app}}\) and \(\Pfun_{\mathit{app}}\) above,
\(\livetrans{(\prog_{\mathit{app}},\Pfun_{\mathit{app}})} = (\HES, f_b\,5)\), where: \(\HES\) is:
\[
\begin{array}{ll}
 &\APP\,h\,x =_\mu h\,x;\quad
f_b\,x =_\nu (x> 0\imply \APP\,f_a\,(x-1))\land (x\leq 0\imply \APP\,f_b\,5);\\ &
f_a\,x =_\mu (x> 0\imply \APP\,f_a\,(x-1))\land (x\leq 0\imply \APP\,f_b\,5).
\end{array}
\]
Note that \(\lts_0\models \livetrans{(\prog_{\mathit{app}},\Pfun_{\mathit{app}})}\) holds.
\end{example}


\subsection{From temporal verification to call-sequence analysis}
\label{sec:to-callsequence}

This subsection shows a reduction from the temporal verification problem \(\InfTraces(\prog) \cap L \stackrel{?}{=} \emptyset\)
to a call-sequence analysis problem \(\stackrel{?}{\models}_{\CSA}(\prog',\Pfun)\).

For the sake of simplicity, we assume without loss of 
\iffull
generality\footnote{As noted at the beginning of this section,
every finite trace can be turned into an infinite trace by inserting (fresh) dummy events. Then,
\(\InfTraces(\prog)\cap L = \emptyset\) holds if and only if
\(\InfTraces(\prog')\cap L'=\emptyset\), where \(\prog'\) is the program obtained from \(\prog\) by inserting dummy
events, and \(L'\) is the set of all event sequences obtained by inserting dummy events 
into a sequence in \(L\).
}
\else
generality
\fi
that every program \(\prog=(\progd, \term)\) in this section is non-terminating and
every infinite reduction sequence produces infinite events,
so that \(\FullTraces(\prog) = \InfTraces(\prog)\) holds.
We also assume that the \(\omega\)-regular language \(L\) for the temporal verification problem
is specified by using a non-deterministic,
parity word automaton~\cite{Automata}. We recall the definition of
non-deterministic,
parity word automata below.

\begin{definition}[Parity automaton]
\label{def:pwa}
 A \emph{non-deterministic parity word automaton} 
\iffull
(NPW)\footnote{Note that non-deterministic B\"{u}chi automata
may be viewed as instances of non-deterministic parity word automata, where there are only two priorities
\(1\) and \(2\), and accepting and non-accepting states have priorities \(2\) and \(1\) respectively. 
We also note that the classes of
deterministic parity, non-deteterministic parity, and non-deteterministic B\"{u}chi word automata
accept the same class of \(\omega\)-regular languages; here we opt for non-deteterministic parity word automata,
because the translations from the others to NPW are trivial but the other directions may blow up the size of automata.}
\else
\fi
is a quintuple \(\PWA = (Q, \Sigma, \delta, \qinit, \Omega)\) where
 \begin{inparaenum}[(i)]
  \item \(Q\) is a finite set of states;
  \item \(\Sigma\) is a finite alphabet;
  \item \(\delta\), called a transition function, is a \emph{total} map from \(Q\times \Sigma\) to \(2^Q\);
  \item \(\qinit \in Q\) is the initial state; and
  \item \(\Omega \in Q \to \tkchanged{\mathbb{N}}\) is the priority function.
 \end{inparaenum}
 A \emph{run} of \(\PWA\) on an \(\omega\)-word
\(a_0 a_1 \dots \in
\Sigma^{\omega}\) is
 an infinite sequence of states
\(\rho = \rho(0) \rho(1) \dots \in
Q^{\omega}\) such that
 \begin{inparaenum}[(i)]
  \item \(\rho(0) = \qinit\), and
  \item \(\rho(i+1) \in \delta(\rho(i),a_i)\) for each \(i\in\omega\).
 \end{inparaenum}
 An \(\omega\)-word \(w \in \Sigma^{\omega}\) is \emph{accepted} by \(\PWA\) if,
there exists a run \(\rho\) of \(\PWA\) on \(w\) such that
\(
  \MAX\{\Omega(q) \mid q \in \INF(\rho)\} \text{ is even}
 \),
 where \(\INF(\rho)\) is the set of states that occur infinitely often in \(\rho\).
 We write \( \Lang(\PWA) \) for the set of \( \omega \)-words accepted by \( \PWA \).
\end{definition}
For technical convenience, we assume below that \(\delta(q,a)\neq \emptyset\) for every \(q\in Q\) and \(a\in\Sigma\);
this does not lose generality since if \(\delta(q,a)=\emptyset\), we can introduce a new ``dead'' state \(q_{\mathit{dead}}\)
(with priority 1) and change \(\delta(q,a)\) to \(\set{q_{\mathit{dead}}}\). Given a parity automaton \(\PWA\),
we refer to each component of \(\PWA\)
by \(Q_{\PWA}\), \(\Sigma_{\PWA}\), \(\delta_{\PWA}\), \(\qinitA{{\PWA}}\)  and \(\Pfun_{\PWA}\).

\begin{example}\label{ex:pwa}
Consider the automaton \(\PWA_{ab}=(\set{q_a,q_b}, \set{\Lab{a},\Lab{b}}, \delta, q_a, \Omega)\),
where \(\delta\) is as given in Figure~\ref{fig:abomega},
\(\Omega(q_a)=0\), and \(\Omega(q_b)=1\). Then, \(\Lang(\PWA_{ab})=
\overline{(\Lab{a}^*\Lab{b})^\omega} = (\Lab{a}^*\Lab{b})^*\Lab{a}^\omega\).
\end{example}
The goal of this subsection is, given a program \(\prog\) and a parity word automaton
\(\PWA\), to construct another program \(\prog'\) and a priority assignment \(\Pfun\) for \(\prog'\),
such that \(\InfTraces(P)\cap \Lang(\PWA)=\emptyset\) if and only if \(\models_{\CSA}(\prog',\Pfun)\).

Note that a necessary and sufficient condition for
\(\InfTraces(P)\cap\Lang(\PWA)=\emptyset\) is that no trace in \(\InfTraces(P)\)
has a run whose priority sequence satisfies the parity condition; in other words, for every sequence in
\(\InfTraces(P)\), and for every run for the sequence,
 the largest priority that occurs in the associated priority sequence is odd.
As explained at the beginning of this section,
we reduce this condition to a call sequence analysis problem by appropriately duplicating functions
in a given program. For example, recall the program \(P_0\):
\[ \left(\set{f = \ifexp{c}{(\evexp{\Lab{a}}f)}{(\evexp{\Lab{b}}f)}}, f\right).\]
It is translated to \(P_0'\):
\[\begin{array}{ll}
(\{f_b = \ifexp{c}{(\evexp{\Lab{a}}f_a)}{(\evexp{\Lab{b}}f_b)}, \\
 \quad f_a = \ifexp{c}{(\evexp{\Lab{a}}f_a)}{(\evexp{\Lab{b}}f_b)}\},& f_b),
\end{array}
\]
where \(c\) is some (closed) boolean expression.
Since the largest priorities encountered before calling \(f_a\) and \(f_b\) (since the last recursive
call) respectively are \(0\) and \(1\), we assign those priorities plus 1 (to flip odd/even-ness)
to \(f_a\) and \(f_b\) respectively.
Then, the problem of \(\InfTraces(P_0)\cap \Lang(\PWA) = \emptyset\) 
is reduced to
\(\models_{\CSA} (P'_0, \set{f_a\mapsto 1,f_b\mapsto 2})\).
Note here that the priorities of
\(f_a\) and \(f_b\) represent \emph{summaries} of the priorities (plus one) that occur in the run of the automaton until
\(f_a\) and \(f_b\) are respectively called since the last recursive call; thus, the largest priority
of states that occur infinitely often in the run for an infinite trace is equivalent to the largest priority that
occurs infinitely often in the sequence of summaries \((\Pfun(f_1)-1)(\Pfun(f_2)-1)(\Pfun(f_3)-1)\cdots\)
computed from a corresponding call sequence \(f_1f_2f_3\cdots\).

Due to the presence of higher-order functions, the general reduction is more complicated than the example above.
First, we need to replicate not only function symbols, but also arguments. For example, consider
the following variation \(P_1\) of \(P_0\) above:
\[ \left(\set{g\, k = \ifexp{c}{(\evexp{\Lab{a}}k)}{(\evexp{\Lab{b}}k)}, \quad f = g\,f}, \quad f\right).\]
Here, we have just made the calls to \(f\) indirect, by preparing the function \(g\).
Obviously, the two calls to \(k\) in the body of \(g\) must be distinguished from each other, since different priorities
are encountered before the calls. Thus, we duplicate the argument \(k\), and obtain the following
program \(P'_1\):
\[
\begin{array}{l}
 (\set{g\, k_a\,k_b = \ifexp{c}{(\evexp{\Lab{a}}k_a)}{(\evexp{\Lab{b}}k_b)}, f_a = g\,f_a\,f_b,
 f_b = g\,f_a\,f_b}, \\\ 
f_a).
\end{array}
\]
Then, for the priority assignment \(\Pfun = \set{f_a\mapsto 1, f_b\mapsto 2, g\mapsto 1}\),
\(\InfTraces(P_1)\cap \Lang(\PWA_{ab})=\emptyset\) if and only if \(\models_{\CSA} (P_1', \Pfun)\).
Secondly, we need to take into account not only the priorities of states visited by \(\PWA\), but also
the states themselves. For example, if we have a function definition \(f\,h = h(\evexp{\Lab{a}}f\,h)\),
the largest priority encountered before \(f\) is recursively called in the body of \(f\) depends on
 the priorities encountered inside \(h\), \emph{and also} the state of \(\PWA\) when \(h\) uses the argument
\(\evexp{\Lab{a}}f\) (because the state after the \(\Lab{a}\) event depends on the previous state in general).
We, therefore, use \emph{intersection types} (a la Kobayashi and Ong's intersection types
for HORS model checking~\cite{KO09LICS}) to represent summary information on
how each function traverses states of
the automaton, and replicate each function and its arguments for each type.
We thus formalize the translation as an intersection-type-based program transformation;
related transformation techniques are found in \cite{KMS13HOSC,Carayol12LICS,DBLP:conf/csl/TsukadaO14,DBLP:conf/fsttcs/Haddad13,DBLP:conf/csl/GrelloisM15}.

\begin{definition}
 Let \(\PWA = (Q, \Sigma, \delta, \qinit, \Omega)\) be a non-deterministic parity word automaton.
 Let \(q\) and \(m\) range over \(Q\) and the set \(\codom(\Omega)\)
of priorities respectively. 
 The set \(\Types{\PWA}\) of \emph{intersection types}, ranged over by \(\Atype\), is defined by:
 \[
 \begin{array}{l}
  \Atype ::= q \mid \Itype \to \Atype \qquad
  \qquad \Itype ::= \INT \mid \bigwedge_{1 \le i \le k} (\Atype_i, m_i)
 \end{array}
\]
We assume a certain total order \(<\) on \(\Types{\PWA}\times \Nat\), and require that
in \(\bigwedge_{1 \le i \le k} (\Atype_i, m_i)\),
\((\Atype_i,m_i)<(\Atype_j,m_j)\) holds for each \(i<j\).
\end{definition}
We often write \((\Atype_1,m_1)\land \cdots \land(\Atype_k,m_k)\) for
\(\bigwedge_{1 \le i \le k} (\Atype_i, m_i)\), and \(\top\) when \(k=0\).
Intuitively,
the type \(q\) describes expressions of simple type \(\Tunit\), which
may be evaluated when the automaton \(\PWA\) is in the state \(q\)
(here, we have in mind an execution of the \emph{product} of a program and the automaton,
where the latter takes events produced by the program and changes its states).
The type \((\bigwedge_{1 \le i \le k} (\Atype_i, m_i))\to \Atype\) describes functions
that take an argument, use it according to types \(\Atype_1,\ldots,\Atype_k\), and return
a value of type \(\Atype\). Furthermore, the part \(m_i\) describes that the argument may
be used as a value of type \(\Atype_i\) only when the largest priority visited since the function
is called is \(m_i\). For example, given the automaton in Example~\ref{ex:pwa},
the function \(\lambda x.(\evexp{\Lab{a}}x)\) may have types \((q_a,0)\to q_a\)
and \((q_a,0)\to q_b\), because the body may be executed from state \(q_a\) or \(q_b\)
(thus, the return type may be any of them), but \(x\) is used only when the automaton is in state \(q_a\)
and the largest priority visited is \(1\). In contrast,
\(\lambda x.(\evexp{\Lab{b}}x)\) have types \((q_b,1)\to q_a\) and \((q_b,1)\to q_b\).


Using the intersection types above, we shall define a type-based transformation relation of the form
\(\Gamma \pInter \term:\Atype \ESRel \term'\), where \(\term\) and \(\term'\)
are the source and target terms of the transformation, and \(\Gamma\),
called an \emph{intersection type environment}, is
a finite set of type bindings of the form \(x \COL \INT\) or \(x \COL (\Atype, m, m')\).
We allow multiple type bindings for a variable \( x \) except for \( x \COL \INT \) (i.e.~if \( x \COL \INT \in \ITE \), then this must be the unique type binding for \(x \) in \( \ITE \)).
The binding \(x \COL (\Atype, m, m')\) means that \(x\) should be used as a value of type \(\Atype\)
when the largest priority visited is \(m\); \(m'\) is auxiliary information used to record the largest priority
encountered so far.

The transformation relation \(\Gamma \pInter \term:\Atype \ESRel \term'\) is inductively
defined by the rules in Figure~\ref{fig:inter}. (For technical convenience, we have extended
terms with \(\lambda\)-abstractions; they may occur only at top-level function definitions.)
In the figure, \([k]\) denotes the set \(\set{i\in \Nat\mid 1\leq i\leq k}\).
The operation \(\ITE \RaiseP m\) used in the figure is defined by:
\[
 \ITE \RaiseP m \DEF \{x \COL \INT \mid x \COL \INT \in \ITE \} \cup \{x \COL (\Atype, m_1, \MAX(m_2, m)) \mid x \COL (\Atype, m_1, m_2) \in \ITE \}
\]
The operation is applied when the priority \(m\) is encountered,
in which case the largest priority encountered is updated accordingly.
The key rules are \rn{IT-Var}, \rn{IT-Event}, \rn{IT-App}, and \rn{IT-Abs}.
In \rn{IT-Var}, the variable \(x\) is replicated for each type; in the target of the translation,
\(x_{\Atype,m}\) and \(x_{\Atype',m'}\) are treated as different variables if \((\Atype,m)\neq (\Atype',m')\).
The rule \rn{IT-Event} reflects the state change caused by the event \(a\) to the type and the type environment.
Since the state change may be non-deterministic, we transform \(t\) for each of the next states \(q_1,\ldots,q_n\),
and combine the resulting terms with non-deterministic choice. 
The rule \rn{IT-App} and \rn{IT-Abs} replicates function arguments for each type. In addition,
in \rn{IT-App}, the operation \(\ITE\RaiseP m_i\) reflects the fact that \(t_2\) is used as a value of
type \(\Atype_i\) after the priority \(m_i\) is encountered.
The other rules just transform terms in a compositional manner.
If target terms are ignored, the entire rules
are close to those of Kobayashi and Ong's type system for HORS model checking~\cite{KO09LICS}.

\begin{figure}[t]
 \begin{multicols}{2}
\typicallabel{IT-Unit}
 \infrule[IT-Unit]{}{
  \ITE \pInter \unitexp \COL q \ESRel \unitexp
 }

 \infrule[IT-VarInt]{}{
  \ITE, x \COL \INT \pInter x \COL \INT \ESRel x_{\INT}
 }

 \infrule[IT-Var]{}{
  \ITE, x \COL (\Atype, m, m) \pInter x \COL \Atype \ESRel x_{\Atype, m}
 }


 \infrule[IT-Int]{}{
  \ITE \pInter n \COL \INT \ESRel n
 }

 \infrule[IT-Op]{
  \ITE \pInter \term_1 \COL \INT \ESRel \term'_1 \andalso
  \ITE \pInter \term_2 \COL \INT \ESRel \term'_2
 } {
  \ITE \pInter \term_1 \OP \term_2 \COL \INT \ESRel \term'_1 \OP \term'_2
 }
 \infrule[IT-If]{
  \ITE \pInter \term_i \COL \INT \ESRel \term_i' \quad\text{(for each \(i \in [k]\))} \\
  \ITE \pInter \term_{k+1} \COL q \ESRel \term'_{k+1} \\
  \ITE \pInter \term_{k+2} \COL q \ESRel \term'_{k+2}\\
 \seq{t} = t_1,\ldots,t_k\andalso \seq{\term}'=t_1',\ldots,t_k'
 } {
  \ITE \pInter \ifexp{p(\seq{\term})}{\term_{k+1}}{\term_{k+2}} \COL q \\
  \ESRel \ifexp{p(\seq{\term'})}{\term'_{k+1}}{\term'_{k+2}}
 }
 \infrule[IT-Event]{
  \delta_\A(q, \lab) = \{q_1, \dots, q_k\} \\
  \ITE \RaiseP \Pfun_\A(q_i) \pInter \term \COL q_i \ESRel \term'_i \quad\text{(for each \(i \in [k]\))}
 } {
  \ITE
  \pInter
  (\evexp{\lab}{\term}) \COL q  \ESRel (\evexp{\lab}{\term'_1 \nondet \cdots \nondet \term'_k})
 }

 \infrule[IT-NonDet]{
  \ITE \pInter \term_1 \COL q \ESRel \term'_1 \andalso
  \ITE \pInter \term_2 \COL q \ESRel \term'_2
 } {
  \ITE \pInter \term_1 \nondet \term_2 \COL q \ESRel \term'_1 \nondet \term'_2
 }

  \infrule[IT-AppInt]{
  \ITE \pInter \term_1 \COL \INT \to \Atype \ESRel \term'_1 \\
  \ITE \pInter \term_2 \COL \INT \ESRel \term'_2
  } {
  \ITE \pInter \App{\term_1}{\term_2} \COL \Atype \ESRel \App{\term'_1}{\term'_2}
  }

 \infrule[IT-App]{
 \ITE \pInter \term_1 \COL \bigwedge_{1 \le i \le k} (\Atype_i, m_i) \to \Atype \ESRel \term'_1\\
 \ITE \RaiseP m_i \pInter \term_2 \COL \Atype_i \ESRel \term'_{2,i}  \: \text{(for each \(i \in [k]\))}
 } {
 \ITE
 \pInter
 \App{\term_1}{\term_2} : \Atype \ESRel \term'_1 \; \term'_{2,1} \; \dots \; \term'_{2,k}
 }

 \infrule[IT-AbsInt]{
   \ITE, x : \Tint \pInter t : \Atype \Rightarrow t' \andalso
   x \notin \dom(\ITE)
 }{
  \ITE \pInter \lambda x. t : \Tint \to \Atype \Rightarrow \lambda x_{\Tint}. t'
 }

 \infrule[IT-Abs]{
   \ITE \cup\set{ x \COL (\Atype_i, m_i, 0) \mid i\in [k]} \pInter t : \Atype' \Rightarrow t'
   \\
   x \notin \dom(\ITE)
 }{
   \ITE \pInter \lambda x. t : \bigwedge_{1 \le i \le k} (\Atype_i, m_i) \to \Atype'
   \\ \Rightarrow \lambda x_{\Atype_1,m_1} \dots x_{\Atype_k,m_k}. t'
 }
 \end{multicols}
 \caption{Type-based Transformation Rules for Terms}
 \label{fig:inter}
\end{figure}

We now define the transformation for programs.
A \emph{top-level type environment} \( \TopEnv \) is a finite set of type bindings of the form \( x : (\Atype, m) \).
Like intersection type environments, \( \TopEnv \) may have more than one binding for each variable.
We write \( \TopEnv \pInter t : \Atype \) to mean
\( \{ x : (\theta, m, 0) \mid x : (\theta, m) \in \TopEnv \} \pInter t : \Atype \).
For a set \( D \) of function definitions, we write \( \TopEnv \pInter D \Rightarrow D' \) if
\( \dom(D') = \{\, f_{\Atype,m} \mid f : (\Atype, m) \in \TopEnv \,\} \) and \( \TopEnv \pInter D(f) : \Atype \Rightarrow D'(f_{\Atype,m}) \) for every \( f \COL (\Atype, m) \in \TopEnv \).
For a program \( P = (D, t) \), we write \( \TopEnv \pInter P \Rightarrow (P',\Pfun') \) if
\( P' = (D', t') \), \( \TopEnv \pInter D \Rightarrow D' \) and \( \TopEnv \pInter t : \qinit \Rightarrow t' \),
with \(\Pfun'(f_{\Atype,m})=m\changed{+1}\) for each \(f_{\Atype,m}\in \dom(D')\).
We just write \(\pInter P\Rightarrow (P',\Pfun')\) if
\( \TopEnv \pInter P \Rightarrow (P',\Pfun') \) holds for some \(\TopEnv\).

\begin{example}
\label{ex:tr}
Consider the automaton \(\PWA_{ab}\) in Example~\ref{ex:pwa},
and the program \(\prog_2 = (\progd_2, f\,5)\) where \(\progd_2\) consists of the following
function definitions:
\[
\begin{array}{l}
g\; k = (\evexp{\Lab{a}}k)\nondet (\evexp{\Lab{b}}k),\\
f\;x = \ifexp{x>0}{g\,(f(x-1))}{(\evexp{\Lab{b}}f\,5)}.
\end{array}
\]
Let \(\TopEnv\) be:
\iffull
\[
\begin{array}{l}
g\COL ((q_a,0)\land (q_b,1)\to q_a, 0), \quad
g\COL ((q_a,0)\land (q_b,1) \to q_b, 0),\\
f\COL (\INT\to q_a, 0), 
\quad f\COL (\INT\to q_b, 1)
\end{array}
\]
\else
\(\set{g\COL ((q_a,0)\land (q_b,1)\to q_a, 0), 
g\COL ((q_a,0)\land (q_b,1) \to q_b, 0),
f\COL (\INT\to q_a, 0), 
f\COL (\INT\to q_b, 1)}
\).
\fi
Then, \(\TopEnv\pInter \prog_1 \Rightarrow ((\progd'_2, f_{\INT\to q_a, 0}\,5),\Pfun)\) where:
\[
\begin{array}{l}
\progd'_2 = \{
g_{(q_a,0)\land (q_b,1)\to q_a, 0}\; k_{q_a,0}\; k_{q_b,1} = \term_g, \quad
g_{(q_a,0)\land (q_b,1)\to q_b, 0}\; k_{q_a,0}\; k_{q_b,1} = \term_g, \\\qquad\quad
f_{\INT\to q_a, 0}\;x_\INT = \term_{f,q_a}, \quad
f_{\INT\to q_b, 1}\;x_\INT = \term_{f,q_b}\}\\
\term_g = (\evexp{\Lab{a}}k_{q_a,0})\nondet (\evexp{\Lab{b}}k_{q_b,1}),\\
\term_{f,q} =
\ifexp{x_\INT>0}{\\\qquad\qquad g_{(q_a,0)\land (q_b,1)\to q, 0}\,(f_{\INT\to q_a, 0}(x_\INT-1))\,
(f_{\INT\to q_b, 1}(x_\INT-1))\\\qquad\ }
{(\evexp{\Lab{b}}f_{\INT\to q_b, 1}\,5)}, \hfill\mbox{ (for each \(q\in\set{q_a,q_b}\))}\\
\Pfun = \set{g_{(q_a,0)\land (q_b,1)\to q_a, 0}\mapsto 1,
g_{(q_a,0)\land (q_b,1)\to q_b, 0}\mapsto 1, f_{\INT\to q_a, 0}\mapsto 1, 
f_{\INT\to q_b, 1}\mapsto 2}.
\end{array}
\]
\iffull
Appendix~\ref{sec:ex-tr-derivation} shows how \(\term_g\) and \(\term_f\) are derived.
\fi
Notice that \(f\), \(g\), and the arguments of \(g\) have been duplicated. 
Furthermore, whenever \(f_{\Atype,m}\) is called, the largest priority that has been encountered
since the last recursive call is \(m\). For example, in the then-clause of
\(f_{\INT\to q_a, 0}\),
\(f_{\INT\to q_b, 1}(x-1)\) may be called through \(g_{(q_a,0)\land (q_b,1)\to q_a, 0}\).
Since \(g_{(q_a,0)\land (q_b,1)\to q_a, 0}\) uses the second argument only after an event \(\Lab{b}\),
the largest priority encountered is \(1\).
This property is important for the correctness of our reduction.
\end{example}

\iffull
The following theorem claims the soundness and completeness of our reduction.
See 
Appendix~\ref{sec:proof-liveness} 
for a proof.
\else
The following theorems below claim that our reduction is sound and complete, and that
there is an effective algorithm for the reduction: see \cite{ESOP2018full} for proofs.
\fi
\begin{theorem}\label{thm:linevess:eff-sel-sound-and-complete}
  Let \( P\) 
be a program and \( \PWA \) be a parity automaton.
  Suppose that \( \TopEnv \pInter P \Rightarrow (P',\Pfun) \).
Then \( \InfTraces(P) \cap \Lang(\PWA) = \emptyset \) if and only if
\(\models_\CSA (P',\Pfun)\).
\end{theorem}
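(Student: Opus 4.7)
The plan is to prove both directions by establishing a tight operational correspondence between reduction sequences of \(P\) annotated with runs of \(\PWA\) and reduction sequences of the transformed program \(P'\). Intuitively, the intersection-type-based transformation builds a kind of product of \(P\) with \(\PWA\): each typed copy \(f_{\Atype,m}\) represents ``the version of \(f\) that is called when \(\PWA\) has traversed states with maximum priority \(m\) since the last recursive call and will be used as a value of type \(\Atype\)'', and the priority assignment \(\Pfun(f_{\Atype,m}) = m+1\) records this summary information. Hence a call sequence of \(P'\) whose priority profile's maximum-infinitely-often equals \(m+1\) corresponds to a trace of \(P\) together with a run of \(\PWA\) whose maximum-infinitely-often priority equals \(m\), and the flip by \(+1\) converts the non-acceptance condition (``max is odd'') into the parity condition required by \(\models_{\CSA}\) (``max is even'').

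The first step is to prove a subject-reduction lemma tailored to the labeled transition semantics: if \(\ITE \pInter t \COL \Atype \ESRel t'\) and \(t \Pred{\ell}{\progd} u\), then there exists a reduction sequence \(t' \Preds{\epsilon}{\progd'} u'\) such that \(\ITE' \pInter u \COL \Atype \ESRel u'\) for an appropriately updated type environment \(\ITE'\); when \(\ell = a\) is an event, the corresponding \(\epsilon\)-reductions in \(P'\) resolve the non-deterministic branch introduced by \rn{IT-Event} by selecting a successor state \(q_i \in \delta_{\PWA}(q,a)\), and \(\ITE'\) is obtained by applying \(\RaiseP \Pfun_{\PWA}(q_i)\). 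Conversely, every reduction step of \(P'\) projects to either a reduction of \(P\) or an internal choice corresponding to a non-deterministic automaton transition. Together these give a bijective correspondence between (maximal reduction sequence of \(P\), run of \(\PWA\) on its trace) and maximal reduction sequences of \(P'\). A standard argument (as in Kobayashi--Ong's intersection-type system for HORS model checking) also ensures that every trace of \(P\) can be ``annotated'' by every run of \(\PWA\) using the universal top-level type environment \(\TopEnvMax\), and that soundness holds for arbitrary \(\TopEnv\) since \(P'\) only introduces logical conjunctions of subterms.

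The second step is to relate priorities across the correspondence. By induction on the typing derivation, one shows that whenever a function symbol \(f_{\Atype,m}\) appears as the head of a subterm during reduction, the accumulator \(m\) equals the maximum priority \(\MAX\{\Pfun_{\PWA}(q) \mid q \text{ visited since the previous recursive call}\}\) in the associated automaton run. The \(\RaiseP\) operator in rules \rn{IT-Event} and \rn{IT-App} is exactly what propagates this invariant: \rn{IT-Event} updates the accumulator when an event is consumed, while \rn{IT-App} resets and threads the accumulator to each replicated argument. Consequently, the priority sequence \(\Pfun(g_0)\Pfun(g_1)\cdots\) along an infinite call sequence of \(P'\) equals \((m_0+1)(m_1+1)\cdots\) where \(m_i\) is the maximum priority seen in the corresponding run-segment of \(\PWA\). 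Since the union of these segments is the whole run (modulo finitely many states outside calls), their maxima-infinitely-often coincide: \(\MAX\{\Pfun_{\PWA}(q) \mid q \in \INF(\rho)\} = \MAX\{m \mid m \text{ occurs infinitely often}\}\).

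Combining these two steps, \(\InfTraces(P) \cap \Lang(\PWA) \neq \emptyset\) iff there exist a trace \(w \in \InfTraces(P)\) and a run \(\rho\) of \(\PWA\) on \(w\) with \(\MAX\{\Pfun_{\PWA}(q) \mid q \in \INF(\rho)\}\) even, iff (by the correspondence) there exists an infinite call sequence of \(P'\) whose max \(m\)-infinitely-often is even, iff there exists an infinite call sequence of \(P'\) whose max \(\Pfun\)-infinitely-often is odd, iff \(\not\models_{\CSA}(P',\Pfun)\). The main obstacle will be the bookkeeping in the subject-reduction argument: because \(P'\) replicates arguments via \rn{IT-App} and \rn{IT-Abs} according to all typings in the intersection, the reduction of a beta-redex in \(P\) does not correspond to a single beta-redex in \(P'\) but to a simultaneous substitution of differently-typed copies, and one must verify carefully that each recursive call \(g\,\seq{t} \recall{\progd} \cdots\) in \(P\) is matched by a recursive call of exactly one \(g_{\Atype,m}\) in \(P'\) with the correct \(m\). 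The rest of the argument is then by a routine parity-condition calculation.
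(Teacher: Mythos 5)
Your proposal follows essentially the same route as the paper's proof: a forward/backward simulation between reductions of \(P\) paired with runs of \(\PWA\) and reductions of \(P'\) (the paper makes the "internal automaton choice" explicit via a modified translation), an invariant showing that the index \(m\) of \(f_{\Atype,m}\) at the head of a recursive call equals the maximum priority visited in the corresponding run segment, and the final \(+1\) parity flip turning "max-inf even for some run" into "some infinite call sequence violates the parity condition", i.e.\ \(\neg\models_\CSA\). Two small remarks: the appeal to \(\TopEnvMax\) is unnecessary (and would not match the theorem, which fixes an arbitrary \(\TopEnv\) with \(\TopEnv\pInter P\Rightarrow(P',\Pfun)\) — the derivability of the transformation judgment already forces enough typings to follow every run), and the "bookkeeping" you flag is resolved in the paper by proving existence and uniqueness of the infinite call-sequence determined by a choice sequence, which is where the real technical work lies.
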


\iffull
Furthermore, one can effectively find an appropriate transformation.
\fi
\begin{theorem}\label{thm:liveness:eff-sel-effective}
  For every \( P \) and \( \PWA \), one can effectively construct \( \TopEnv \), \( P' \) and \(\Pfun\)
such that \( \TopEnv \pInter P \Rightarrow (P',\Pfun) \).
\end{theorem}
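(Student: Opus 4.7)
The plan is to exploit finiteness of intersection-type refinements and compute the top-level environment as a greatest fixed point.

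First, I would observe that for each simple type \(\kappa\), the set \(\mathcal{I}(\kappa)\) of intersection types in \(\Types{\PWA}\) whose underlying simple type is \(\kappa\) is finite: by induction on \(\kappa\), using the finiteness of \(Q_{\PWA}\) and \(\codom(\Pfun_{\PWA})\) and the fact that intersection arguments are ordered sets of distinct pairs (hence subsets of a finite product set). Consequently, the set
\[\TopEnvMax \;:=\; \{\, f \COL (\Atype, m) \mid f \in \Funcs(P),\ \Atype \in \mathcal{I}(\mathrm{simpletype}(f)),\ m \in \codom(\Pfun_{\PWA})\,\}\]
of candidate top-level bindings is finite. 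Moreover, for fixed \(\ITE\), \(t\), and \(\Atype\), the rules of Figure~\ref{fig:inter} are syntax-directed except for the intersection chosen in IT-App, which ranges over a finite set, so derivability of \(\ITE \pInter t : \Atype \ESRel t'\) is decidable by finite search.

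Second, I would compute the maximal valid top-level environment as a greatest fixed point. Let
\[F(\TopEnv) \;:=\; \{\, f \COL (\Atype, m) \in \TopEnvMax \mid \TopEnv \pInter D(f) : \Atype \ESRel t'\ \text{for some}\ t'\,\}.\]
A standard weakening lemma (adding bindings for fresh variables preserves existing derivations) yields monotonicity of \(F\) on the finite lattice \(2^{\TopEnvMax}\). By Knaster--Tarski, its greatest fixed point \(\TopEnv^\star\) exists and is computable by iterating \(F\) from \(\TopEnvMax\) until stabilization. Since \(\TopEnv^\star = F(\TopEnv^\star)\), every binding is self-justified, so \(\TopEnv^\star \pInter D \ESRel D'\) holds for a suitable \(D'\) obtained by selecting, for each \(f \COL (\Atype,m) \in \TopEnv^\star\), a witnessing transformed body.

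The main obstacle is showing that \(\TopEnv^\star\) also supports the transformation of the main term at \(\qinitA{\PWA}\). For this I would prove a completeness lemma: for every well-typed term \(s\) of simple type \(\kappa\) and every \(\Atype \in \mathcal{I}(\kappa)\), there exists an intersection type environment \(\ITE\), assigning bindings only to the free variables of \(s\), such that \(\ITE \pInter s : \Atype \ESRel s'\). The induction is on the structure of \(s\); the interesting cases are IT-Event, where the standing assumption \(\delta_{\PWA}(q,\lab) \neq \emptyset\) ensures that the required conjunction of sub-derivations exists, and IT-App/IT-Abs, where the inductive hypothesis supplies compatible intersections. Applied to \(t\) at \(\qinitA{\PWA}\), this yields a finite set \(\TopEnv_{\mathrm{needed}}\) of function-symbol bindings that is closed under the same lemma (i.e., each binding in \(\TopEnv_{\mathrm{needed}}\) is itself derivable under \(\TopEnv_{\mathrm{needed}}\)); hence \(\TopEnv_{\mathrm{needed}} \subseteq F(\TopEnv_{\mathrm{needed}}) \subseteq \TopEnv^\star\) by maximality, and weakening gives \(\TopEnv^\star \pInter t : \qinitA{\PWA} \ESRel t'\). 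Setting \(\Pfun(f_{\Atype,m}) := m+1\) as prescribed yields the required \((\TopEnv, P', \Pfun)\).
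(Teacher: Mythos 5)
Your overall architecture (finite candidate set, monotone operator $F$, greatest fixed point, plus a completeness lemma to show the gfp is large enough) is sound in shape, but the completeness lemma it rests on is false as stated, and this is a genuine gap. You claim that for every well-typed $s$ of simple type $\kappa$ and \emph{every} $\Atype\in\mathcal{I}(\kappa)$ there is some $\ITE$ with $\ITE\pInter s:\Atype\ESRel s'$. Take $s=\lambda x.x$ of simple type $\Tunit\to\Tunit$ and $\Atype=\top\to q$ (the empty intersection is a legal $\Itype$ in $\Types{\PWA}$). Rule \rn{IT-Abs} with $k=0$ adds no binding for $x$ and requires $x\notin\dom(\ITE)$, so the premise $\ITE\pInter x\COL q\ESRel\cdots$ has no derivation via \rn{IT-Var}. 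More generally, no term is transformable at an arrow type whose argument intersection omits a (type, priority) pair at which the argument is actually used in the body. This failure propagates into your construction of $\TopEnv_{\mathrm{needed}}$: when you transform the main term and, in \rn{IT-App}, pick some $\Itype\to\Atype$ for a function symbol $f$, closure under $F$ requires $\progd(f)$ to be transformable at that very type, which fails whenever the chosen $\Itype$ is too small; your induction gives no way to guarantee the chosen intersections are large enough.

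The repair is to restrict attention to \emph{saturated} canonical types, in which every argument intersection is the full product of all canonical refinements of the argument's simple type with all priorities in $\{0,\dots,M\}$ (where $M$ is the largest priority of $\PWA$). Under the corresponding saturated environment every instance of \rn{IT-Var} and \rn{IT-App} succeeds, and a direct structural induction shows every term is transformable at every canonical type. This is exactly the paper's proof, and it also makes your fixed-point machinery unnecessary: the saturated top-level environment is itself a valid $\TopEnv$ (it is a post-fixed point of your $F$ by construction), so one takes it directly rather than computing a gfp. A side benefit your version would lose even if repaired: there are only $|Q_{\PWA}|$ canonical types per simple type, versus the exponentially many arbitrary refinements in $\mathcal{I}(\kappa)$, and this is what yields the polynomial-time bound the paper notes after the theorem.
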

\iffull
See Appendix~\ref{sec:proof-effectiveness} for a proof sketch.
A proof of the above theorem
is given in Appendix~\ref{sec:proof-effectiveness}.
The proof also implies that the reduction from temporal property verification to call-sequence analysis
can be performed in polynomial time.
\else
The proof of Theorem~\ref{thm:liveness:eff-sel-effective} above
 also implies that the reduction from temporal property verification to call-sequence analysis
can be performed in polynomial time.
\fi
Combined with the reduction from call-sequence analysis to HFL model checking, we have thus obtained
a polynomial-time reduction from the temporal verification problem \(\InfTraces(P)\stackrel{?}{\subseteq}
\Lang(\PWA)\) to HFL model checking.

\section{Related Work}
\label{sec:related}

As mentioned in Section~\ref{sec:intro}, our reduction from program verification problems to
HFL model checking problems has been partially inspired by the translation of Kobayashi et al.~\cite{Kobayashi17POPL}
from HORS model checking to HFL model checking. As in their translation (and unlike in
previous applications of HFL model checking~\cite{Viswanathan04,LangeLG14}), our translation 
switches the roles of properties and models (or programs) to be verified. Although a combination of
their translation with Kobayashi's reduction from program verification to HORS model checking~\cite{Kobayashi09POPL,Kobayashi13JACM}
yields an (indirect) translation from \emph{finite-data} programs to pure HFL model checking problems,
the combination does not work for infinite-data programs. In contrast, our translation is sound and complete even for
infinite-data programs. Among the translations in Sections~\ref{sec:reachability}--\ref{sec:liveness},
the translation in Section~\ref{sec:to-callsequence} shares some similarity to their translation, in that functions 
and their arguments are replicated for each priority. The actual translations are
however quite different; ours is type-directed and optimized for a given automaton, whereas
their translation is not. This difference comes from the difference of the goals: the goal of \citeN{Kobayashi17POPL} was to
clarify the relationship between HORS and HFL, hence their translation was designed to be
independent of an automaton.
The proof of the correctness of our translation in Section~\ref{sec:liveness} is much more involved 
\iffull(cf. Appendix~\ref{sec:game-characterization} and \ref{sec:proof-liveness}), 
\fi
due to the need
for dealing with integers. Whilst the proof of \cite{Kobayashi17POPL} could reuse the type-based characterization of 
HORS model checking~\cite{KO09LICS}, we had to generalize arguments in both \cite{KO09LICS} and \cite{Kobayashi17POPL}
to work on infinite-data programs.

Lange et al.~\citeN{LangeLG14} have shown that various process equivalence checking problems (such as
bisimulation and trace equivalence) can be reduced to (pure) HFL model checking problems.
The idea of their reduction is quite different from ours. They reduce processes to LTSs, 
whereas we reduce programs to HFL formulas. 

Major approaches to automated or semi-automated higher-order program verification have been
HORS model checking~\cite{Kobayashi09POPL,KSU11PLDI,Kobayashi13JACM,Ong11POPL,Kuwahara2014Termination,MTSUK16POPL,Watanabe16ICFP},
(refinement) type systems~\cite{Jhala08,Skalka08,UnnoTK13,Terauchi10POPL,Unno09PPDP,zhu_2015,Koskinen14,Hofmann14CSL}, 
Horn clause solving~\cite{Bjorner15,Ramsay17}, and their combinations. As already discussed in Section~\ref{sec:intro},
compared with the HORS model checking approach, our new approach provides more uniform, streamlined methods. 
Whilst the HORS model checking approach is for fully automated verification, our approach enables various degrees
of automation: after verification problems are automatically translated to \HFLZ{} formulas, one can prove them (i)~interactively
 using a proof assistant like 
 Coq (see \iffull Appendix~\ref{sec:coq}\else \cite{ESOP2018full}\fi), (ii)~semi-automatically, by letting users provide 
hints for induction/co-induction and discharging the rest of proof obligations by (some extension of) an SMT solver, or
(iii)~fully automatically by recasting the techniques used in the HORS-based approach; for example, to deal with
the \(\nu\)-only fragment of \HFLZ{}, we can reuse the technique of predicate abstraction~\cite{KSU11PLDI}.
For a more technical comparison between the HORS-based approach and our HFL-based approach, see 
\iffull Appendix~\ref{sec:hors-vs-hfl}.
\else \cite{ESOP2018full}. \fi

As for type-based approaches~\cite{Jhala08,Skalka08,UnnoTK13,Terauchi10POPL,Unno09PPDP,zhu_2015,Koskinen14,Hofmann14CSL}, 
most of the refinement type systems are (i) restricted to safety properties, and/or (ii) incomplete.
A notable exception is the recent work of Unno et al.~\cite{Unno18POPL}, which provides a relatively complete
type system for the classes of properties discussed in Section~\ref{sec:reachability}. Our approach deals with
a wider class of properties (cf. Sections~\ref{sec:path} and \ref{sec:liveness}). Their ``relative completeness'' property
relies on Godel coding of functions, which cannot be exploited in practice.

The reductions from program verification to Horn clause solving have recently been 
advocated~\cite{Bjorner12,Bjorner13,Bjorner15}
or used~\cite{Jhala08,Unno09PPDP} (via refinement type inference problems) by 
a number of researchers.
Since Horn clauses can be expressed in a fragment of HFL without modal operators, fixpoint alternations 
(between \(\nu\) and \(\mu\)), and higher-order predicates,
 our reductions to HFL model checking may be viewed as extensions of those approaches.
Higher-order predicates and fixpoints over them allowed us to provide sound and complete characterizations of
properties of higher-order programs for a wider class of properties. 
Bj{\o}rner et al.~\citeN{Bjorner13} proposed an alternative
approach to obtaining a complete characterization of safety properties, which defunctionalizes higher-order programs
by using algebraic data types and then reduces the problems to (first-order) Horn clauses. A disadvantage
of that approach is that control flow information of higher-order programs is also encoded into algebraic data
types; hence even for finite-data higher-order programs, the Horn clauses obtained by the reduction
belong to an undecidable fragment.  In contrast, our reductions yield pure HFL model checking problems for
finite-data programs.
Burn et al.~\citeN{Ramsay17} have recently advocated the use of \emph{higher-order} (constrained) Horn clauses
for verification of safety properties (i.e., which correspond to the negation of may-reachability properties
discussed in Section~\ref{sec:mayreach} of the present paper) of higher-order programs. They interpret recursion using the least fixpoint
semantics, so their higher-order Horn clauses roughly corresponds to a fragment of the \HFLZ{}
without modal operators and fixpoint alternations. They have not shown a general, concrete reduction from
safety property verification to higher-order Horn clause solving.

The characterization of the reachability problems in Section~\ref{sec:reachability} in terms of
formulas without modal operators is a reminiscent of predicate transformers~\cite{Dijkstra75,Hesselink89}
used for computing the weakest preconditions of imperative programs. In particular, 
\citeN{Blass87} and \citeN{Hesselink89} respectively used least fixpoints to express
weakest preconditions for while-loops and recursions.


\section{Conclusion}
\label{sec:conc}

We have shown that various verification problems for higher-order functional programs can
be naturally reduced to (extended) HFL model checking problems. In all the reductions,
a program is mapped to an HFL formula expressing the property that the behavior of the program is correct.
For developing verification tools for higher-order functional programs,
our reductions allow us to focus on the development of (automated or semi-automated) \HFLZ{} model checking tools
(or, even more simply, theorem provers for \HFLZ{} without modal operators, as the reductions of
Section~\ref{sec:reachability} and \ref{sec:liveness} yield HFL formulas without modal operators). To this end,
we have developed a prototype model checker for pure HFL (without integers), which will be reported in a separate paper.
Work is under way to develop \HFLZ{} model checkers by recasting the techniques~\cite{KSU11PLDI,Kuwahara2014Termination,Kuwahara2015Nonterm,Watanabe16ICFP} developed for the HORS-based approach, which, together with the reductions presented
in this paper, would yield fully automated verification tools.
We have also started building a Coq library for interactively proving \HFLZ{} formulas, 
as briefly discussed in \iffull Appendix~\ref{sec:coq}. \else \cite{ESOP2018full}. \fi
As a final remark, although one may fear that our reductions may map program verification problems to ``harder'' problems due to
the expressive power of \HFLZ{}, 
it is actually not the case at least for the classes of problems in Section~\ref{sec:reachability} and \ref{sec:path},
which use the only alternation-free fragment of \HFLZ{}. The model checking problems for \(\mu\)-only or
\(\nu\)-only \HFLZ{} are semi-decidable and co-semi-decidable respectively, like the source verification problems
of may/must-reachability and their negations of closed programs.

\subsubsection*{Acknowledgment}
We would like to thank anonymous referees for useful comments.
This work was supported by JSPS KAKENHI Grant Number JP15H05706 and JP16K16004.


\newpage
\appendix
\section*{Appendix}
\section{Typing Rules for Programs}
\label{sec:lang-typing}
The type judgments for expressions and programs are of the form
\(\STE\pST \term:\Pest\) and 
\(\STE \pST \prog\), where \(\STE\) is a finite map from variables to types.
The typing rules are shown in Figure~\ref{fig:simple-typing}.
We write \( \pST \prog\) if \(\STE \pST \prog\) for some \(\STE\).
\begin{figure}
\begin{minipage}[t]{0.4\textwidth}
\vspace*{-5ex}
\infrule[LT-Unit]{}{\STE\pST \unitexp:\Tunit}
\rulesp
\infrule[LT-Var]{}{\STE,x\COL\Pest\pST x:\Pest}
\rulesp
\infrule[LT-Int]{}{\STE\pST n:\Tint}
\rulesp
\infrule[LT-Op]{\STE\pST \term_1:\Tint \andalso \STE\pST\term_2:\Tint}{\STE\pST \term_1\OP\term_2:\Tint}
\rulesp
\infrule[LT-Ev]{\STE\pST \term:\Tunit}{\STE\pST \evexp{\lab}{\term}:\Tunit}
\end{minipage}
\begin{minipage}[t]{0.59\textwidth}
\infrule[LT-If]{\arity(p)=k\andalso
  \STE\pST \term'_i:\Tint\mbox{ for each $i\in\set{1,\ldots,k}$}\\
  \STE\pST \term_j:\Tunit\mbox{ for each $j\in\set{1,2}$}}
{\STE\pST \ifexp{p(\term'_1,\ldots,\term'_k)}{\term_1}{\term_2}:\Tunit}
\rulesp
\infrule[LT-App]{\STE\pST \term_1:\Pest\to\Pst\andalso \STE\pST \term_2:\Pest}{\STE\pST \term_1\term_2:\Pst}
\rulesp
\infrule[LT-NonDet]{\STE\pST \term_1:\Tunit\andalso \STE\pST \term_2:\Tunit}{\STE\pST \term_1\nondet\term_2:\Tunit}
\rulesp
\infrule[LT-Prog]{
  \STE = f_1\COL \Pst_1,\ldots, f_n\COL\Pst_n \andalso  \STE\pST \term:\Tunit\\
\STE, \seq{x}_i:\seq{\Pest}_i\pST \term_i:\Tunit \andalso
   \Pst_i = \seq{\Pest}_i\to \Tunit\mbox{ for each $i\in\set{1,\ldots,n}$}\\
}
{\STE \pST (\set{f_1\;\seq{x}_1 = \term_1,\ldots,f_n\;\seq{x}_n=\term_n},\term)}
\end{minipage}
\caption{Typing Rules for Expressions and Programs}
\label{fig:simple-typing}
\end{figure}

\section{Proofs for Section~\ref{sec:reachability}}
\label{sec:proofs-reachability}
\subsection{Proofs for Section~\ref{sec:mayreach}}
\label{sec:proofs-mayreach}

To prove the theorem, we define the reduction relation \(\term\redv{\progd}\term'\) as given in 
 Figure~\ref{fig:red-semantics}. It differs from the labeled transition semantics in that
\(\nondet\) and \(\evexp{\lab}{\cdots}\) are not eliminated; this semantics is more convenient for
establishing the relationship between a program and a corresponding \HFLZ{} formula.
It should be clear that
\(\lab\in \Traces(\progd,\term)\) if and only if \(\term\redvs{\progd} \EC[\evexp{\lab}{\term'}]\) for some \(\term'\).
\begin{figure}
\[
\EC\mbox{(evaluation contexts)} ::= \Hole \mid \EC\nondet \term \mid \term\nondet \EC \mid \evexp{\lab}\EC
\]
\begin{multicols}{2}
 \infrule[R-Fun]{
 f \seq{x} = \termaltu \in \progd \andalso
 |\seq{x}|=|\seq{\term}|
 } {
 \EC[f\;\seq{\term}] \redv{\progd} \EC[[\seq{\term}/\seq{x}]\termaltu]
 }
 \infrule[R-IfT]{
  (\sem{\term'_1},\ldots,\sem{\term'_k})\in\sem{p} 
 } {
 \EC[\ifexp{p(\term'_1,\dots, \term'_k)}{\term_1}\term_2] \redv{\progd} \EC[\term_1]
 }
 \infrule[R-IfF]{
  (\sem{\term'_1},\ldots,\sem{\term'_k})\not\in \sem{p} 
 } {
 \EC[\ifexp{p(\term'_1,\dots, \term'_k)}{\term_1}\term_2] \redv{\progd} \EC[\term_2]
 }
\end{multicols}
\caption{Reduction Semantics}
\label{fig:red-semantics}
\end{figure}

We shall first prove the theorem for recursion-free programs. 
Here, a program \(\prog=(\progd,\term)\) is \emph{recursion-free} if
the transitive closure of the relation \(\set{(f_i,f_j)\in \dom(\progd)\times \dom(\progd)\mid 
\mbox{\(f_j\) occurs in \(\progd(f_i)\)}}\) is irreflexive. To this end, we prepare a few lemmas.

The following lemma says that the semantics of \HFLZ{} formulas is preserved by reductions of
the corresponding programs.
\begin{lemma}
\label{lem:mayreachsem-preserved-by-reduction}
Let \((\progd,\term)\) be a program and \(\lts\) be an LTS. If \(\term\redv{\progd}\term'\), then
\(\Sem{\lts}{\trMay{(\progd,\term)}} = \Sem{\lts}{\trMay{(\progd,\term')}}\).
\end{lemma}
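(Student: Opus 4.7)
\begin{myproof}[Proof plan]
The HES component $\trMay{\progd}$ is left unchanged by any reduction $\term\redv{\progd}\term'$, since reduction only rewrites the main term. So, writing $\HESf$ for $\trMay{\progd}$ and $\HFLenv_\progd$ for the valuation of $\dom(\progd)$ induced by the $\mu$-fixpoints in $\HESf$, it suffices to show
\[
\Sem{\lts}{\HFLte \pHFL \trMay{\term}:\typProp}(\HFLenv_\progd) \;=\; \Sem{\lts}{\HFLte \pHFL \trMay{\term'}:\typProp}(\HFLenv_\progd),
\]
where $\HFLte$ contains type bindings for the function symbols in $\dom(\progd)$ together with any free variables of $\term$, $\term'$. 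I would in fact prove the stronger statement, by induction on the evaluation context $\EC$, that the above equality holds for \emph{every} valuation $\HFLenv$ of $\HFLte$ (not just $\HFLenv_\progd$): this is what makes the induction on contexts go through.

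The first key auxiliary fact is a syntactic substitution lemma for the translation: $\trMay{[\seq{\term}/\seq{x}]u} = [\trMay{\seq{\term}}/\seq{x}]\trMay{u}$. This is an immediate structural induction on $u$, since $(\cdot)^{\dagger_{\may}}$ is defined homomorphically. The second ingredient is the fixpoint identity: for $f\,\seq{x}=u\in\progd$, the value $\HFLenv_\progd(f)$ unfolds as $\lambda\seq{v}.\Sem{\lts}{\HFLte,\seq{x}\COL\seq{\etyp}\pHFL\trMay{u}:\typProp}(\HFLenv_\progd[\seq{x}\mapsto\seq{v}])$, by definition of $\mu$ and the semantics of the HES.

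The induction on $\EC$ is routine. For $\EC = \EC_1 \nondet \term_2$ and $\EC = \term_1 \nondet \EC_2$, the translation sends $\nondet$ to $\lor$ and the claim follows from the induction hypothesis together with the compositionality of $\cup$. For $\EC = \evexp{\lab}\EC_1$, both sides translate to $\TRUE$ regardless of what sits in the hole, so the equality is immediate. It only remains to handle the base case $\EC = \Hole$ by inspecting the three head-reduction rules:
\begin{itemize}
\item \textbf{R-Fun:} $f\,\seq{\term}\redv{\progd}[\seq{\term}/\seq{x}]u$. By the substitution lemma and the fixpoint unfolding of $\HFLenv_\progd(f)$, both $\trMay{f\,\seq{\term}} = f\,\trMay{\seq{\term}}$ and $\trMay{[\seq{\term}/\seq{x}]u} = [\trMay{\seq{\term}}/\seq{x}]\trMay{u}$ denote the same element of $\D_{\lts,\typProp}$.
\item \textbf{R-IfT / R-IfF:} The translation of an \textbf{if} is $(p(\seq{\trMay{\term'}})\land\trMay{\term_1})\lor(\neg p(\seq{\trMay{\term'}})\land\trMay{\term_2})$. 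Since $p$ is a concrete predicate on ground integer expressions, $p(\seq{\sem{\term'}})$ is either $\St$ or $\emptyset$; a direct lattice calculation then collapses the disjunction to $\trMay{\term_1}$ in the \textbf{R-IfT} case and to $\trMay{\term_2}$ in the \textbf{R-IfF} case.
\end{itemize}

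The only delicate step is checking that the induction really goes through with an arbitrary valuation $\HFLenv$ (so that the induction hypothesis applies uniformly under the evaluation-context formers), because the $\evexp{\lab}\EC_1$ case is degenerate. That degeneracy, however, is genuinely harmless here: since we are translating a may-reachability witness and the event set has been fixed to a singleton $\set{\lab}$, an occurrence of an event at the head already witnesses reachability, so ignoring its body is semantically correct. I expect this will be the only conceptual point to argue carefully; the rest is unfolding definitions.
\end{myproof}
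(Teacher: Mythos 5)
Your proposal follows essentially the same route as the paper's proof: reduce the claim to the equality of \(\Sem{\lts}{\trMay{\term}}\) and \(\Sem{\lts}{\trMay{\term'}}\) under the valuation \(\HFLenv_\progd\) given by the simultaneous least fixpoint of the equations in \(\trMay{\progd}\) (the paper justifies this reduction via the Beki\'c property), then dispose of \rn{R-Fun} by unfolding the fixpoint together with a substitution lemma, and of \rn{R-IfT}/\rn{R-IfF} by a direct lattice computation, with the evaluation context handled by congruence of the semantics.

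The one point you must repair is the claimed strengthening to \emph{arbitrary} valuations \(\HFLenv\). That statement is false: in the base case \rn{R-Fun}, the equality between \(\Sem{\lts}{f\,\trMay{\seq{\term}}}(\HFLenv)\) and \(\Sem{\lts}{[\trMay{\seq{\term}}/\seq{x}]\trMay{u}}(\HFLenv)\) holds only when \(\HFLenv(f)\) is a fixpoint of its defining equation --- which is exactly what you invoke when you appeal to ``the fixpoint unfolding of \(\HFLenv_\progd(f)\)'' --- and it fails for a generic \(\HFLenv\). Fortunately the strengthening is also unnecessary: the evaluation contexts \(\Hole \mid \EC\nondet\term \mid \term\nondet\EC \mid \evexp{\lab}{\EC}\) bind no variables, so the induction on \(\EC\) goes through with the single fixed valuation \(\HFLenv_\progd\), the induction hypothesis always being applied at that same valuation. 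With that correction your argument is sound and coincides with the paper's. Your closing worry about the singleton event set is a digression here: for this lemma the \(\evexp{\lab}{\EC}\) case is trivially fine because both sides translate to \(\TRUE\); the singleton assumption matters only for the soundness of the overall translation, i.e., for Lemma~\ref{lem:mayreach-base} and Theorem~\ref{th:mayreach}, not for preservation of the semantics under reduction.
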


\begin{proof} 
Let \(\progd = \set{f_1\,\seq{x}_1=\term_1,\ldots,f_n\,\seq{x}_n=\term_n}\),
and \((F_1,\ldots,F_n)\) be the least fixpoint of
\[\lambda (X_1,\ldots,X_n).(\sem{\lambda \seq{x}_1.\trMay{\term_1}}(\set{f_1\mapsto X_1,\ldots,f_n\mapsto X_n}),\ldots,
\sem{\lambda \seq{x}_n.\trMay{\term_n}}(\set{f_1\mapsto X_1,\ldots,f_n\mapsto X_n})).\]
By the Beki\'{c} property, \(\sem{\trMay{(\progd,\term)}}
= \sem{\trMay{\term}}(\set{f_1\mapsto F_1,\ldots,f_n\mapsto F_n})\).
Thus, it suffices to show that \(\term\redv{\progd}\term'\) implies
\(\sem{\trMay{\term}}(\HFLenv)=
\sem{\trMay{\term'}}(\HFLenv)\)
for \(\HFLenv=\set{f_1\mapsto F_1,\ldots,f_n\mapsto F_n}\).
We show it by case analysis on the rule used for deriving \(\term\redv{\progd}\term'\).
\begin{itemize}
\item Case \rn{R-Fun}: In this case, \(\term = E[f_i\,\seq{s}]\) and \(\term'=E[[\seq{s}/\seq{x}_i]t_i]\).
Since \((F_1,\ldots,F_n)\) is a fixpoint, we have:
\[
\begin{array}{l}
\sem{f_i\,\seq{s}}(\HFLenv)
 = F_i(\sem{\seq{s}}(\HFLenv))\\
 = \sem{\lambda \seq{x}_i.t_i}(\HFLenv) (\sem{\seq{s}}(\HFLenv))\\
 = \sem{[\seq{s}/\seq{x}_i]t_i}(\HFLenv)
\end{array}
\]
Thus, we have 
\(\sem{\trMay{\term}}(\HFLenv)=
\sem{\trMay{\term'}}(\HFLenv)\) as required.
\item Case \rn{R-IfT}:
In this case, \(\term = E[\ifexp{p(\termalt_1',\ldots,\termalt_k')}{\termalt_1}{\termalt_2}]\) and \(\term' = E[\termalt_1]\)
with \((\sem{\termalt_1'},\ldots,\sem{\termalt_k'})\in \sem{p}\). 
Thus, \(\trMay{\term}=\trMay{E}[(p(\termalt_1',\ldots,\termalt_k')\land \trMay{\termalt_1})\lor (\neg p(\termalt_1',\ldots,\termalt_k')\land \trMay{\termalt_2})]\).
Since \((\sem{\termalt_1'},\ldots,\sem{\termalt_k'})\in \sem{p}\), 
\((\sem{\termalt_1'},\ldots,\sem{\termalt_k'})\not \in \sem{\neg p}\). Thus,
\(\sem{\trMay{\term}}(\HFLenv) = 
\sem{\trMay{E}[(\TRUE\land \trMay{\termalt_1})\lor (\FALSE\land \trMay{\termalt_2})]}(\HFLenv)
= 
\sem{\trMay{E}[\trMay{\termalt_1}]}(\HFLenv)\).
We have thus 
\(\sem{\trMay{\term}}(\HFLenv)=
\sem{\trMay{\term'}}(\HFLenv)\) as required.
\item Case \rn{R-IfF}:
Similar to the above case.
\end{itemize}
\qed
\end{proof}

The following lemma says that Theorem~\ref{th:mayreach} holds for programs in normal form.
\begin{lemma}
\label{lem:mayreach-base}
Let \((\progd,\term)\) be a program and \(\term\mathbin{\nonredv{\progd}}\). Then,
\(\lts_0 \models \trMay{(\progd,\term)}\) if and only if \(\term=\EC[\evexp{\lab}{\term'}]\) for some
evaluation context \(E\) and \(\term'\).
\end{lemma}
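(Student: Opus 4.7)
The plan is to prove the lemma by structural induction on the normal form $\term$, after first characterizing the possible shapes of a well-typed closed term of type $\Tunit$ that is irreducible. The first step is to show that if $\term$ is a unit-typed term (whose only free variables are the function names of $\progd$) and $\term \nonredv{\progd}$, then $\term$ must be built solely from $\Vunit$, $\evexp{\lab}{\cdot}$, and $\nondet$. This follows because any other unit-typed construct at an evaluation-context position would enable a reduction: an if-expression triggers \rn{R-IfT} or \rn{R-IfF}; a function application $f\,\seq{s}$ triggers \rn{R-Fun}; and a bare function name $f$ of nullary unit type also triggers \rn{R-Fun}. Moreover, the subterm under $\evexp{\lab}{\cdot}$ and each side of $\nondet$ must themselves be in normal form, since evaluation contexts can descend into those positions.

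Next, I would observe that such a normal-form $\term$ contains no occurrence of any function name of $\progd$, so the fixpoint equations in $\trMay{\progd}$ are irrelevant: $\sem{\trMay{(\progd,\term)}}$ depends only on $\sem{\trMay{\term}}(\HFLenv)$ for any valuation $\HFLenv$ of the (unused) function variables. Over the trivial LTS $\lts_0$ with a single state $\stunique$, each element of $\D_{\lts_0, \typProp}$ is either $\emptyset$ or $\{\stunique\}$, so $\lts_0 \models \trMay{(\progd,\term)}$ is equivalent to $\sem{\trMay{\term}} = \{\stunique\}$.

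The main induction then proceeds in three cases. For $\term = \Vunit$, the translation gives $\FALSE$, which is unsatisfied by $\lts_0$, and no evaluation context $E$ can place $\Vunit$ in the form $E[\evexp{\lab}{\term'}]$ by inspection of the grammar of $E$. For $\term = \evexp{\lab}{\term'}$, the translation gives $\TRUE$, which is satisfied, and trivially $\term = [\,][\evexp{\lab}{\term'}]$. For $\term = \term_1 \nondet \term_2$, the translation gives $\trMay{\term_1} \lor \trMay{\term_2}$, and both $\term_1, \term_2$ are themselves in normal form by the characterization above, so the induction hypothesis applies to each; the $\lor$ on the HFL side matches exactly the ``exists an $i$ such that $\term_i$ splits as $E_i[\evexp{\lab}{\term'_i}]$'' on the term side, using the contexts $E_1 \nondet \term_2$ and $\term_1 \nondet E_2$ to lift the decomposition back up to $\term$.

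I do not expect serious obstacles; the lemma is essentially a direct reading of the translation against the grammar of normal forms. The one point requiring care is the characterization of normal forms, since one must verify both that no reduction rule applies at the head and that none applies inside any sub-evaluation-context, which is why the characterization needs to be proved by structural induction on $\term$ (or equivalently derived from a case analysis on the shape of the unique decomposition $\term = E[r]$ of a reducible term into an evaluation context and a redex).
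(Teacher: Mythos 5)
Your proposal is correct and follows essentially the same route as the paper's proof: the paper likewise observes that well-typedness plus irreducibility forces $\term$ into the grammar $\term ::= \Vunit \mid \evexp{\lab}\term' \mid \term_1\nondet\term_2$ and then does the same three-case structural induction, matching $\FALSE$, $\TRUE$, and $\lor$ against the possible decompositions $\EC[\evexp{\lab}{\term'}]$. The only difference is that you spell out the normal-form characterization and the irrelevance of the fixpoint equations, which the paper leaves implicit.
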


\begin{proof} 
The proof proceeds by induction on the structure of \(\term\).
By the condition \(\term\nonredv{\progd}\) and the (implicit) assumption that
\(\pST (\progd,\term)\), \(\term\) is generated by the following grammar:
\[
\term ::= \Vunit \mid \evexp{\lab}\term' \mid \term_1\nondet\term_2 .
\]
\begin{itemize}
\item Case \(\term=\Vunit\): The result follows immediately, 
as \(\term\) is not of the form \(\EC[\evexp{\lab}{\term'}]\), and 
\(\trMay{\term}=\FALSE\).
\item Case \(\term=\evexp{\lab}\term'\): 
The result follows immediately, as
\(\term\) is of the form \(\EC[\evexp{\lab}{\term'}]\), and 
\(\trMay{\term}=\TRUE\).
\item Case \(\term_1\nondet\term_2\):
Because \(\trMay{(\term_1\nondet\term_2)} = \trMay{\term_1}\lor \trMay{\term_2}\),
\(\lts_0 \models \trMay{(\progd,\term)}\) if and only if
\(\lts_0 \models \trMay{(\progd,\term_i)}\) for some \(i\in\set{1,2}\).
By the induction hypothesis, the latter is equivalent to 
the property that \(\term_i\) is of the form \(\EC[\evexp{\lab}{\term'}]\) for some \(i\in\set{1,2}\),
which is equivalent to the property that 
\(\term\) is of the form \(\EC'[\evexp{\lab}{\term'}]\).
\end{itemize}
\qed
\end{proof}

The following lemma says that Theorem~\ref{th:mayreach} holds for recursion-free programs;
this is an immediate corollary of Lemmas~\ref{lem:mayreachsem-preserved-by-reduction} and
\ref{lem:mayreach-base}, and the strong normalization property of the simply-typed \(\lambda\)-calculus.
\begin{lemma}
\label{lem:mayreach-for-recfree-program}
Let \(\prog\) be a recursion-free program. Then, \(\lab\in\Traces(\prog)\) if and only if
\(\lts_0 \models \HESf_{\prog,\may}\) for \(\lts_0 = (\set{\stunique},\emptyset,\emptyset,\stunique)\).
\end{lemma}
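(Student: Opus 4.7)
The plan is to combine the two preceding lemmas with strong normalization, and to reconcile the reduction relation $\redv{\progd}$ with the labeled transition relation $\Pred{\ell}{\progd}$. First I would establish that for a recursion-free program $\prog=(\progd,\term)$, the reduction relation $\redv{\progd}$ is strongly normalizing. Since $\progd$ is recursion-free, one can unfold all function definitions in $\term$ (iterating through the function dependency order, which is well-founded by the acyclicity assumption) to obtain an equivalent term $\widehat{\term}$ with no function symbols; reductions of $\widehat{\term}$ in the empty program correspond exactly to $\beta$-reductions, if-reductions, and congruences under the evaluation contexts $\EC$, so strong normalization of the simply-typed $\lambda$-calculus transfers.

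Next, pick any normal form $\term^*$ with $\term\redvs{\progd}\term^*$. Iterating Lemma~\ref{lem:mayreachsem-preserved-by-reduction} along this reduction sequence yields $\Sem{\lts_0}{\trMay{(\progd,\term)}}=\Sem{\lts_0}{\trMay{(\progd,\term^*)}}$, and then Lemma~\ref{lem:mayreach-base} tells us that $\lts_0\models\HESf_{\prog,\may}$ iff $\term^*=\EC[\evexp{\lab}{\term'}]$ for some $\EC$ and $\term'$. So the lemma reduces to showing that $\lab\in\Traces(\prog)$ iff $\term\redvs{\progd}\EC[\evexp{\lab}{\term'}]$ for some $\EC$ and $\term'$.

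The remaining step is a standard correspondence between the two operational semantics. In one direction, if $\term\Preds{\lab}{\progd}\term''$, one can decompose this trace and, after each labelled step $\evexp{\lab}{s}\Pred{\lab}{\progd}s$, convert the remaining labelled transitions back into $\redv{\progd}$-reductions under an evaluation context, eventually reaching a term of the form $\EC[\evexp{\lab}{\term'}]$ before the event is consumed. Conversely, given $\term\redvs{\progd}\EC[\evexp{\lab}{\term'}]$, every $\redv{\progd}$-step can be simulated by a sequence of $\epsilon$-labelled transitions (using the rules for $\nondet$ and the passage through $\evexp{\lab}{\cdot}$), after which a single $\Pred{\lab}{\progd}$ step fires the event. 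Both directions proceed by straightforward induction on the reduction/transition length together with a congruence lemma for evaluation contexts.

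The main obstacle is the last step: justifying the equivalence of the two operational semantics without getting lost in the congruence closure over $\EC$. In particular, one must verify that the $\epsilon$-transitions of $\Pred{\epsilon}{\progd}$, which fire only at the ``head'' in the sense of Figure~\ref{fig:os}, can be rearranged to simulate reductions performed deep inside $\nondet$-branches or under an $\evexp{\lab}{\cdot}$ prefix. This is where the explicit evaluation contexts $\EC ::= \Hole \mid \EC\nondet\term \mid \term\nondet\EC \mid \evexp{\lab}\EC$ earn their keep: the branches of a $\nondet$ can be chosen by $\Pred{\epsilon}{\progd}$, and events under $\evexp{\lab}\cdot$ can be exposed by consuming the outer label. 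Once this bookkeeping is settled, combining it with strong normalization, Lemma~\ref{lem:mayreachsem-preserved-by-reduction}, and Lemma~\ref{lem:mayreach-base} delivers the lemma.
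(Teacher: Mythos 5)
Your proof is correct and follows essentially the same route as the paper's: strong normalization of the recursion-free program, iterated application of Lemma~\ref{lem:mayreachsem-preserved-by-reduction} along a normalizing reduction sequence, Lemma~\ref{lem:mayreach-base} at the normal form, and the correspondence between the labelled transition semantics and the reduction semantics (which the paper simply asserts as clear, and which it packages as an induction on the length of the normalizing sequence rather than jumping directly to the normal form). The one point worth tightening is the passage from ``some reduct of \(\term\) has the form \(\EC[\evexp{\lab}{\term'}]\)'' to ``the chosen normal form \(\term^*\) has that form'', which needs either confluence of \(\redv{\progd}\) for recursion-free programs or the observation that an event occurrence on the evaluation-context spine persists under further \(\redv{\progd}\)-reduction.
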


\begin{proof} 
Since \(\prog=(\progd,\term_0)\) is recursion-free, there exists a finite, normalizing reduction sequence
\(\term_0 \redvs{\progd} \term \nonredv{\progd}\). We show the required property by induction on
the length \(n\) of this reduction sequence.
\begin{itemize}
\item Case \(n=0\): Since \(\term_0\nonredv{\progd}\),
\(\lab\in\Traces(\prog)\) if and only if \(\term_0=E[\evexp{\lab}{\term}]\) for some \(E\) and \(\term\).
Thus, the result follows immediately from Lemma~\ref{lem:mayreach-base}.
\item Case \(n>0\): In this case, 
\(\term_0 \redv{\progd} \term_1\redvs{\progd} \term \).
By the induction hypothesis, 
\(\lab\in\Traces(\progd,\term_1)\) if and only if 
\(\lts_0 \models \HESf_{(\progd,\term_1),\may}\).
Thus, by the definition of the reduction semantics and Lemma~\ref{lem:mayreachsem-preserved-by-reduction},
\(\lab\in\Traces(\progd,\term_0)\) if and only if 
\(\lab\in\Traces(\progd,\term_1)\), if and only if 
\(\lts_0 \models \HESf_{(\progd,\term_1),\may}\), if and only if 
\(\lts_0 \models \HESf_{(\progd,\term_0),\may}\).
\end{itemize}
\qed
\end{proof}

To prove Theorem~\ref{th:mayreach} for arbitrary programs, we use the fact that the semantics
of \(\trMay{P}\) may be approximated by \(\trMay{P^{(i)}}\), where \(P^{(i)}\) is the recursion-free
program obtained by unfolding recursion functions \(i\) times (a more formal definition will be given later).
To guarantee the correctness of this finite approximation, we need to introduce
 a slightly non-standard notion of (\(\omega\)-)continuous functions below.
\begin{definition}
\label{def:continuity}
For an LTS 
\(\lts = (\St{}, \Act{}, \mathbin{\TR}, \st_\init)\) and a type \(\etyp\),
the set of \emph{continuous} elements \(\Cont_{\lts,\etyp}\subseteq \D_{\lts,\etyp}\)
and the equivalence relation
\(\eqcont{\lts,\etyp}\subseteq
\Cont_{\lts,\etyp}\times \Cont_{\lts,\etyp}\)
are  defined by induction on \(\etyp\) as follows.
\[
\begin{array}{l}
\Cont_{\lts,\typProp} = \D_{\lts,\typProp}\qquad
\Cont_{\lts,\INT} = \D_{\lts,\INT}\\
\Cont_{\lts,\etyp\to\typ} = \set{f\in \D_{\lts,\etyp\to\typ}\mid 
{  \forall x_1,x_2\in \Cont_{\lts,\etyp}.(x_1\eqcont{\lts,\etyp}x_2\imply
  f(x_1)\eqcont{\lts,\typ}f(x_2))}\\\qquad\qquad\qquad\qquad
\land 
\forall \set{y_i}_{i\in \omega}\in \Incseq{\Cont_{\lts,\etyp}}. 
f(\bigsqcup_{i\in\omega} y_i)\eqcont{\lts,\typ}\bigsqcup_{i\in\omega}f(y_i)
}.\\
\eqcont{\lts,\typProp} \,=\, \set{(x,x)\mid x\in \Cont_{\lts,\typProp}}\qquad
  \eqcont{\lts,\INT} \,=\, \set{(x,x)\mid x\in \Cont_{\lts,\INT}}\\
    \eqcont{\lts,\etyp\to\typ} \,=\,
    \set{(f_1,f_2)\mid f_1,f_2\in \Cont_{\lts,\etyp\to\typ}\\\qquad\qquad\qquad\qquad \land
      \forall x_1, x_2\in \Cont_{\lts,\etyp}.(x_1\eqcont{\lts,\etyp}x_2\imply
      f_1(x_1)\eqcont{\lts,\typ}f_2(x_2)}.
\end{array}
\]
Here, \(\Incseq{\Cont_{\lts,\etyp}}\) denotes the set of increasing infinite sequences
\(a_0\sqsubseteq a_1\sqsubseteq a_2 \sqsubseteq \cdots\) consisting of elements of 
\(\Cont_{\lts,\etyp}\).
We just write \(\Eqcont{}\) for \(\eqcont{\lts,\etyp}\)  when \(\lts\) and \(\etyp\) are clear
from the context.
\end{definition}
\begin{remark}
Note that we require that a continuous function returns a continuous element only if 
its argument is. To see the need for this requirement, consider an LTS with a singleton state set
\(\set{\st}\), and the function:
\(f = \lambda x\COL((\INT\to\typProp)\to\typProp).\lambda y\COL (\INT\to\typProp).x\,y\).
One may expect that \(f\) is continuous in the usual sense (i.e., \(f\) preserves the limit), but for
the function \(g\) defined by \[g(p) = \left\{ 
\begin{array}{ll}
\set{\st} & \mbox{if $\st\in p(n)$ for every \(n\geq 0\)}\\
\emptyset & \mbox{otherwise}
\end{array}
\right.,\]
 \(f(g)\) is \emph{not} continuous. In fact, 
let \(p_i\) be \(\set{(n,\set{\st}) \mid 0\leq n\leq i}\cup
\set{(n,\emptyset)\mid n<0\lor n>i}\). Then \(f(g)(p_i) = \emptyset\) for every \(i\)
but \(f(g)(\sqcup_{i\in\omega} p_i)=\set{\st}\).
The (non-continuous) function \(g\) above can be expressed by
\((\nu X.\lambda n.\lambda p.(p(n) \land (X\, (n+1)\, p))) 0\).
\qed
\end{remark}

\begin{lemma}
  \label{lem:continuity-of-limit}
  If \(\set{x_i}_{i\in\omega}, \set{y_i}_{i\in\omega} \in \Incseq{\Cont_{\lts,\etyp}}\)
  and \(x_i\eqcont{\lts,\etyp} y_i\) for each \(i\in \omega\), then
  \(\LUB_{i\in\omega} x_i \eqcont{\lts,\etyp}\LUB_{i\in\omega} y_i\).
\end{lemma}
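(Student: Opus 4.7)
The plan is to prove the claim by induction on the extended type \(\etyp\). The base cases \(\etyp=\typProp\) and \(\etyp=\INT\) are immediate from the definitions: for \(\typProp\), the relation \(\Eqcont\) is literally equality on \(\D_{\lts,\typProp}\), so \(x_i=y_i\) for every \(i\) and hence \(\LUB_i x_i=\LUB_i y_i\); for \(\INT\), the order \(\sqleq_{\lts,\INT}\) is the identity, so every increasing sequence is constant and the claim again collapses to \(x_0=y_0\).

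For the inductive step at a function type \(\etyp'\to\typ\), I would establish, in parallel, two things: (a) \(\LUB_i x_i\) and \(\LUB_i y_i\) both belong to \(\Cont_{\lts,\etyp'\to\typ}\), and (b) for every \(a_1\eqcont{\lts,\etyp'} a_2\) in \(\Cont_{\lts,\etyp'}\), the equivalence \((\LUB_i x_i)(a_1)\eqcont{\lts,\typ}(\LUB_i y_i)(a_2)\) holds. Part (b) is the heart of the matter: suprema in function types are computed pointwise, so \((\LUB_i x_i)(a_1)=\LUB_i\, x_i(a_1)\) and \((\LUB_i y_i)(a_2)=\LUB_i\, y_i(a_2)\); each sequence \(\{x_i(a_1)\}_i\) and \(\{y_i(a_2)\}_i\) is increasing and lies in \(\Cont_{\lts,\typ}\) by continuity of the \(x_i\) and \(y_i\); and from \(x_i\eqcont{\lts,\etyp'\to\typ} y_i\) combined with \(a_1\eqcont{\lts,\etyp'} a_2\) one obtains \(x_i(a_1)\eqcont{\lts,\typ} y_i(a_2)\) for each \(i\). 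Applying the induction hypothesis at \(\typ\) to these two pointwise-equivalent increasing sequences then yields (b).

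Part (a) requires checking both clauses in the definition of \(\Cont_{\lts,\etyp'\to\typ}\) for \(\LUB_i x_i\) (and symmetrically for \(\LUB_i y_i\)). The first clause (respect for \(\Eqcont\) on the argument) is the specialisation of the argument for (b) obtained by taking \(y_i:=x_i\). The second clause demands the supremum-exchange equivalence \((\LUB_i x_i)(\LUB_j b_j)\eqcont{\lts,\typ}\LUB_j (\LUB_i x_i)(b_j)\) for every \(\{b_j\}\in\Incseq{\Cont_{\lts,\etyp'}}\); continuity of each individual \(x_i\) gives \(x_i(\LUB_j b_j)\eqcont{\lts,\typ}\LUB_j x_i(b_j)\), and the induction hypothesis at \(\typ\) transports this pointwise equivalence through the outer \(\LUB_i\), after which a standard double-supremum exchange in the complete lattice \(\D_{\lts,\typ}\) discharges the remaining step.

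The main obstacle is bookkeeping: the predicate \(\Cont\) and the relation \(\Eqcont\) are defined by mutual recursion through the function-type clause, and \(\Eqcont\) is strictly coarser than equality at higher types. Consequently one cannot invoke a slogan like ``the supremum of continuous functions is continuous'' as a black box, because continuity here also demands compatibility with a non-trivial equivalence relation on arguments rather than mere Scott continuity. Every appeal to the induction hypothesis must therefore be set up so that both sides supply elements of \(\Cont\) and the relevant sequences are genuinely increasing in \((\Cont_{\lts,\typ},\sqleq_{\lts,\typ})\); once this discipline is maintained, the induction proceeds without further subtlety.
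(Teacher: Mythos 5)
Your proposal is correct and follows essentially the same route as the paper's proof: induction on \(\etyp\), trivial base cases since \(\Eqcont\) is the identity there, and at function types the same three checks — membership of both suprema in \(\Cont_{\lts,\etyp'\to\typ}\) (with the argument-respecting clause obtained as the diagonal case and the supremum-exchange clause via pointwise continuity, the induction hypothesis, and a swap of the two suprema) followed by the pointwise application of the induction hypothesis to the sequences \(\{x_i(a_1)\}_i\) and \(\{y_i(a_2)\}_i\). Your closing remark about why one cannot black-box "suprema of continuous functions are continuous" correctly identifies the same bookkeeping the paper carries out explicitly.
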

\begin{proof}
  The proof proceeds by induction on \(\etyp\).
  The base case, where \(\etyp=\typProp\) or \(\etyp=\INT\), is trivial,
  as \(\Cont_{\lts,\etyp}=\D_{\lts,\etyp}\) and \( \eqcont{\lts,\etyp}\) is the identity relation.
  Let us consider the induction step, where \(\etyp = \etyp_1\to\typ\).
  We first check that \(\LUB_{i\in\omega} x_i\in \Cont_{\lts,\etyp}\).
  To this end, suppose \(z_1\eqcont{\lts,\etyp_1}z_2\).
  By the continuity of \(x_i\) and the assumption \(z_1\eqcont{\lts,\etyp_1}z_2\),
  we have \(x_i z_1 \eqcont{\lts,\typ}x_i z_2\) for each \(i\).
  By the induction hypothesis, we have
  \(\LUB_{i\in\omega} (x_i z_1) \eqcont{\lts,\typ}\LUB_{i\in\omega} (x_i z_2)\).
  Therefore, we have:
  \[(\LUB_{i\in\omega} x_i)z_1 =
  \LUB_{i\in\omega} (x_i z_1)\eqcont{\lts,\typ}\LUB_{i\in\omega} (x_i z_2)
  = (\LUB_{i\in\omega} x_i)z_2\]
  as required.
  To check the second condition for \(\LUB_{i\in\omega} x_i\in \Cont_{\lts,\etyp}\), suppose that
  \(\set{z_i}_{i\in\omega} \in \Incseq{\Cont_{\lts,\etyp_1}}\).
  We need to show \(
  (\LUB_{i\in\omega} x_i)(\LUB_{j\in\omega}z_j) \eqcont{\lts,\typ} \LUB_{j\in\omega} ((\LUB_{i\in\omega} x_i)z_j)\).
  By the continuity of \(x_i\) and the induction hypothesis, we have indeed:
  \[
  \begin{array}{l}
    (\LUB_{i\in\omega} x_i)(\LUB_{j\in\omega}z_j)
     = \LUB_{i\in\omega} (x_i(\LUB_{j\in\omega}z_j))\\
     \eqcont{\lts,\typ} \LUB_{i\in\omega} (\LUB_{j\in\omega}(x_i z_j))
     = \LUB_{j\in\omega} (\LUB_{i\in\omega} (x_i z_j))
     = \LUB_{j\in\omega} ((\LUB_{i\in\omega} x_i) z_j).
  \end{array}
  \]
  Thus, we have proved \(\LUB_{i\in\omega} x_i\in \Cont_{\lts,\etyp}\).
  The proof of \(\LUB_{i\in\omega} y_i\in \Cont_{\lts,\etyp}\) is the same.

  It remains to check that \(z\eqcont{\lts,\etyp_1}w\) implies
  \((\LUB_{i\in\omega}x_i)z \eqcont{\lts,\typ}(\LUB_{i\in\omega}y_i)w\).
  Suppose \(z\eqcont{\lts,\etyp_1}w\).
  Then we have:
  \[\textstyle (\LUB_{i\in\omega}x_i)z =
    \LUB_{i\in\omega} (x_i z) 
    \eqcont{\lts,\typ} \LUB_{i\in\omega} (y_i w)
    = (\LUB_{i\in\omega}y_i)w.\]
    as required.
    Note that \(x_i z\eqcont{\lts,\typ}y_i w\) follows from
    the assumptions \(x_i\eqcont{\lts,\etyp}y_i\) and \(z\eqcont{\lts,\etyp_1}w\),
     and then we have applied the induction hypothesis to obtain 
    \(\LUB_{i\in\omega} (x_i z) 
    \eqcont{\lts,\typ} \LUB_{i\in\omega} (y_i w)\).
    This completes the proof for the induction step. \qed
\end{proof}

The following lemma guarantees the continuity of the functions expressed by fixpoint-free \HFLZ{} formulas.
\begin{lemma}[continuity of fixpoint-free functions]
\label{lem:continuity}
Let \(\lts\) be an LTS.
If \(\form\) is a closed, fixpoint-free \HFLZ{} formula of type \(\typ\), then 
\(\sem{\form} \in \Cont_{\lts,\typ}\).
\end{lemma}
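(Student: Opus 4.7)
The plan is to prove a strengthened statement by induction on the derivation of $\HFLte \pHFL \form : \typ$ (for fixpoint-free $\form$, now allowed to be open) and then specialize to closed formulas $\HFLte = \emptyset$ to recover the lemma. Specifically, I would show that if $\HFLenv_1, \HFLenv_2$ are valuations of $\HFLte$ assigning to each $X \COL \etyp \in \HFLte$ an element of $\Cont_{\lts,\etyp}$, and if they are pointwise related by $\eqcont{\lts,\etyp}$, then $\sem{\HFLte \pHFL \form : \typ}(\HFLenv_1) \Eqcont \sem{\HFLte \pHFL \form : \typ}(\HFLenv_2)$; and moreover the induced map on continuous valuations commutes (up to $\Eqcont$) with pointwise $\omega$-limits of increasing continuous valuations. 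Since $\eqcont{\lts,\typ}$ restricted to $\Cont_{\lts,\typ}$ is reflexive, the first condition already yields $\sem{\HFLte \pHFL \form : \typ}(\HFLenv) \in \Cont_{\lts,\typ}$ for every continuous $\HFLenv$.

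The base cases (integer constants, $\TRUE$, $\FALSE$, primitive operations on integers, primitive predicates, and variables) are immediate: the semantics depends only on individual components of $\HFLenv$ which themselves are continuous by assumption, and for ground types $\typProp$ and $\INT$ the relation $\Eqcont$ coincides with equality on $\D_{\lts,\etyp}$. The cases for $\vee, \wedge, \Some{\lab}, \All{\lab}$ follow because these operations on $2^{\St}$ preserve directed joins and equalities, so the IH on the immediate subformulas lifts to the compound formula.

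The substantive inductive cases are application and $\lambda$-abstraction. For $\form_1\,\form_2$, the IH supplies $\sem{\form_1}(\HFLenv) \in \Cont_{\lts,\etyp\to\typ}$ and $\sem{\form_2}(\HFLenv) \in \Cont_{\lts,\etyp}$, so $\sem{\form_1\,\form_2}(\HFLenv) \in \Cont_{\lts,\typ}$ by the very definition of $\Cont_{\lts,\etyp\to\typ}$; preservation of $\Eqcont$ and of $\omega$-limits in $\HFLenv$ follows by combining the corresponding properties for $\sem{\form_1}$ and $\sem{\form_2}$ via Lemma~\ref{lem:continuity-of-limit}. For $\lambda X\COL\etyp.\form$, the two clauses defining membership in $\Cont_{\lts,\etyp\to\typ}$ are precisely instances of the IH for $\form$ under the extended environment $\HFLte, X\COL\etyp$ when $\HFLenv$ is held fixed and only the $X$-slot varies; continuity in the outer components of $\HFLenv$ is inherited analogously by varying those components while holding $X$ fixed.

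The main obstacle will be packaging the strengthened IH so that the $\lambda$-case goes through cleanly: the nonstandard definition of $\Cont$ requires a continuous function to produce continuous outputs \emph{and} to respect $\Eqcont$, and these two clauses interact in the $\lambda$-case because the semantics of $\lambda X.\form$ depends simultaneously on the outer environment and on the bound variable. Two small lemmas need to be verified alongside the main induction: that extending a continuous valuation $\HFLenv$ by $X \mapsto v$ with $v \in \Cont_{\lts,\etyp}$ yields a continuous valuation in $\HFLte, X\COL\etyp$, and that pointwise $\omega$-limits of increasing chains of continuous valuations remain continuous valuations. Both are straightforward from the componentwise definitions, but they are essential for the IH to be applicable under an extended environment, and once they are in place the induction is entirely routine.
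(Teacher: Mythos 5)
Your proposal is correct and follows essentially the same route as the paper's proof: a strengthened two-part induction hypothesis over open formulas and continuous valuations (preservation of \(\Eqcont\) between pointwise-related valuations, plus commutation with \(\omega\)-limits of increasing continuous valuations), with membership in \(\Cont_{\lts,\typ}\) recovered from reflexivity of \(\Eqcont\) on continuous elements, and with the application and abstraction cases discharged exactly as you describe via the logical-relation-style definition of \(\Cont\) and Lemma~\ref{lem:continuity-of-limit}. The only point you gloss over that the paper treats explicitly is the \(\All{\lab}\) case of the limit-commutation clause, where the inclusion \(\sem{\All{\lab}\form'}(\sqcup_i\HFLenv_i)\subseteq\sqcup_i\sem{\All{\lab}\form'}(\HFLenv_i)\) is not automatic for a universally quantified operation and genuinely relies on the finiteness of \(\St\) (each state has finitely many \(\lab\)-successors, so a common index in the increasing chain can be chosen); you should make that dependence explicit rather than subsuming it under ``preserves directed joins.''
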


\begin{proof} 
We write \(\csem{\HFLte}\) for the set of 
valuations: \(\set{\HFLenv\in \sem{\HFLte} \mid \HFLenv(f)\in \Cont_{\lts,\etyp}\mbox{ for each
    \(f\COL \etyp\in \HFLte\)}}\), and \(\eqcont{\lts,\HFLte}\) for:
\[\set{(\HFLenv_1,\HFLenv_2) \in \csem{\HFLte}
\times \csem{\HFLte}\mid
   \HFLenv_1(x)\eqcont{\lts,\etyp}\HFLenv_2(x)\mbox{ for every \(x\COL\etyp\in\HFLte\)}}.\]
We show the following property by induction on the derivation of
\(\HFLte \p \form:\etyp\):
\begin{quote}
If \(\form\) is fixpoint-free and
\(\HFLte \p \form:\etyp\), then 
\begin{itemize}
\item[(i)] \(\HFLenv_1\eqcont{\lts,\HFLte}\HFLenv_2\) imply
  \(\sem{\HFLte\p \form:\etyp}(\HFLenv_1)\eqcont{\lts,\etyp}
  \sem{\HFLte\p \form:\etyp}(\HFLenv_2)\).
\item[(ii)] For any increasing sequence of valuations \(\HFLenv_0 \sqsubseteq \HFLenv_1\sqsubseteq \HFLenv_2 \sqsubseteq 
\cdots \) such that \(\HFLenv_i\in \csem{\HFLte}\) for each \(i\in\omega\), 
\(\sem{\HFLte\p \form:\etyp}(\sqcup_{i\in\omega}\HFLenv_i) \eqcont{\lts,\etyp} \LUB_{i\in\omega}
\sem{\HFLte\p \form:\etyp}(\HFLenv_i)\).
\end{itemize}
\end{quote}
Then, the lemma would follow as a special case, where \(\HFLte = \emptyset\).
We perform case analysis on the last rule used for deriving \(\HFLte \p \form:\etyp\).
We discuss only the main cases;
the other cases are similar or straightforward.
\begin{itemize}
\item Case \rn{HT-Var}: In this case, \(\form=X\) and \(\HFLte=\HFLte',X\COL \etyp\). 
The condition (i) follows immediately by:
\(\sem{\HFLte\p \form:\etyp}(\HFLenv_1)= \HFLenv_1(X)\eqcont{\lts,\etyp}\HFLenv_2(X)
= \sem{\HFLte\p \form:\etyp}(\HFLenv_2)\).
To see (ii), suppose 
\(\HFLenv_0 \sqsubseteq \HFLenv_1\sqsubseteq \HFLenv_2 \sqsubseteq 
\cdots \), with \(\HFLenv_i\in \csem{\HFLte}\) for each \(i\in\omega\). 
Then,  we have
\[
\sem{\HFLte\p \form:\etyp}(\sqcup_{i\in\omega}\HFLenv_i) 
= (\sqcup_{i\in\omega}\HFLenv_i)(X)
= \sqcup_{i\in\omega}(\HFLenv_i(X))
= \sqcup_{i\in\omega}(\sem{\HFLte\p \form:\etyp}(\HFLenv_i)).
\]
By Lemma~\ref{lem:continuity-of-limit},
\(\sqcup_{i\in\omega}(\HFLenv_i(X))\in \Cont_{\lts,\etyp}\). We have thus
\(\sem{\HFLte\p \form:\etyp}(\sqcup_{i\in\omega}\HFLenv_i) 
\eqcont{\lts,\etyp}
\sqcup_{i\in\omega}(\sem{\HFLte\p \form:\etyp}(\HFLenv_i))\) as required.
\item Case \rn{HT-Some}:
In this case, \(\form=\Some{\lab}\form'\) with \(\HFLte\p \form:\typProp\) and \(\etyp=\typProp\).
The condition (i) is trivial since \(\Cont_{\lts,\typProp} = \D_{\lts,\typProp}\).
We also have the condition (ii) by:
\[
\begin{array}{l}
\sem{\HFLte\p \Some{\lab}{\form'}:\etyp}(\sqcup_{i\in\omega}\HFLenv_i) 
=\set{\st\mid \exists \st'\in \sem{\HFLte\pHFL\form':\typProp}(\sqcup_{i\in\omega}\HFLenv_i).\ \st\Ar{a}\st'}\\
= \set{\st\mid \exists \st'\in \sqcup_{i\in\omega}(\sem{\HFLte\pHFL\form':\typProp}(\HFLenv_i)).\ \st\Ar{a}\st'}
\qquad\hfill \mbox{ (by induction hypothesis)}\\
= \sqcup_{i\in\omega}
\set{\st\mid \exists \st'\in \sem{\HFLte\pHFL\form':\typProp}(\HFLenv_i).\ \st\Ar{a}\st'}
= \sqcup_{i\in\omega}\sem{\HFLte\pHFL\Some{\lab}\form':\typProp}(\HFLenv_i).
\end{array}
\]
\item Case \rn{HT-All}:
In this case, \(\form=\All{\lab}\form'\) with \(\HFLte\p \form:\typProp\) and \(\etyp=\typProp\).
The condition (i) is trivial.
The condition (ii) follows by:
\[
\begin{array}{l}
\sem{\HFLte\p \All{\lab}{\form'}:\etyp}(\sqcup_{i\in\omega}\HFLenv_i) 
=\set{\st\mid \forall \st'\in \St.\st\Ar{a}\st'\imply \st'\in \sem{\HFLte\pHFL\form':\typProp}(\sqcup_{i\in\omega}\HFLenv_i)}\\
=\set{\st\mid \forall \st'\in \St.\st\Ar{a}\st'\imply \st'\in 
\sqcup_{i\in\omega}(\sem{\HFLte\pHFL\form':\typProp}(\HFLenv_i))}\\
\qquad\hfill \mbox{ (by induction hypothesis)}\\
= \sqcup_{i\in\omega}
\set{\st\mid \forall \st'\in \St.\st\Ar{a}\st'\imply \st'\in 
\sem{\HFLte\pHFL\form':\typProp}(\HFLenv_i)}
\hfill \mbox{(*)}\\
= \sqcup_{i\in\omega}\sem{\HFLte\pHFL\All{\lab}\form':\typProp}(\HFLenv_i).
\end{array}
\]
To see the direction \(\subseteq\) in step (*), suppose 
\(\forall \st'\in \St.\st\Ar{a}\st'\imply \st'\in 
\sqcup_{i\in\omega}(\sem{\HFLte\pHFL\form':\typProp}(\HFLenv_i))\) holds. 
Since \(\St\) is finite, the set \(\set{\st'\mid \st\Ar{a}\st'}\) is a finite set
\(\set{\st_1,\ldots,\st_k}\).  
For each \(j\in\set{1,\ldots,k}\), there exists \(i_j\) such that
\(\st_j \in \sem{\HFLte\pHFL\form':\typProp}(\HFLenv_{i_j})\). Let \(i' = \max(i_1,\ldots,i_k)\).
Then we have \(\st'\in \sem{\HFLte\pHFL\form':\typProp}(\HFLenv_{i'})\)
for every 
\(\st'\in\set{\st_1,\ldots,\st_k}\). We have thus 
\(\forall s'\in \St.\st\Ar{a}\st'\imply s'\in 
\sem{\HFLte\pHFL\form':\typProp}(\HFLenv_{i'})\), which implies \(\st\) belongs to the
set in the righthand side of (*).

To see the converse (i.e., \(\supseteq\)), suppose 
\(\forall \st'\in \St.\st\Ar{a}\st'\imply \st'\in 
\sem{\HFLte\pHFL\form':\typProp}(\HFLenv_i)\) for some \(i\in\omega\).
Then, 
\(\forall \st'\in \St.\st\Ar{a}\st'\imply \st'\in 
\sqcup_{i\in\omega}(\sem{\HFLte\pHFL\form':\typProp}(\HFLenv_i))\) follows immediately.
\item Case \rn{HT-Abs}:
In this case, \(\form = \lambda X\COL\etyp_1.\form'\), with \(\HFLte,X\COL\etyp_1\p \form':\typ\)
and \(\etyp = \etyp_1\to \typ\). 
To prove the condition (i), suppose \(\HFLenv_1\eqcont{\lts,\HFLte}\HFLenv_2\). 
Let \(f_j=\sem{\HFLte\p\lambda X\COL\etyp_1.\form':\etyp_1\to\typ}(\HFLenv_j)\) for \(j\in\set{1,2}\).
We first check \(f_j\in \Cont_{\lts,\etyp_1\to\typ}\).
Suppose \(x_1\eqcont{\lts,\etyp_1}x_2\). 
Then, by the induction hypothesis,
we have
\[
\qquad f_j\,x_1 = \sem{\HFLte,X\COL\etyp_1\p \form':\typ}(\HFLenv_j\set{X\mapsto x_1})
\eqcont{\lts,\typ}\sem{\HFLte,X\COL\etyp_1\p \form':\typ}(\HFLenv_j\set{X\mapsto x_2})
= f_j\,x_2.\]
To check the second condition for \(f_j\in \Cont_{\lts,\etyp}\), 
let \(\set{y_i}_{i\in \omega}\in \Incseq{\Cont_{\lts,\etyp_1}}\). 
Then we have:
\[
\begin{array}{l}
f_j(\sqcup_{i\in\omega} y_i) = \sem{\HFLte,X\COL\etyp_1\p\form':\typ}(\HFLenv_j\set{X\mapsto \sqcup_{i\in\omega} y_i})\\
= \sem{\HFLte,X\COL\etyp_1\p\form':\typ}(\sqcup_{i\in\omega}(\HFLenv_j\set{X\mapsto y_i}))\\
\Eqcont{} \sqcup_{i\in\omega}\sem{\HFLte,X\COL\etyp_1\p\form':\typ}(\HFLenv_j\set{X\mapsto y_i})
\qquad \mbox{(by the induction hypothesis)}\\
= \sqcup_{i\in\omega}f_j(y_i)
\end{array}
\]
as required.

To show
\(f_1\eqcont{\lts,\etyp_1\to\typ}f_2\), assume again that
\(x_1\eqcont{\lts,\etyp_1}x_2\).
Then, \(\HFLenv_1\set{X\mapsto x_1}\eqcont{\HFLte,X\COL\etyp_1}\HFLenv_2\set{X\mapsto x_2}\).
Therefore, by the induction hypothesis, we have 
\[\qquad f_1(x) = \sem{\HFLte,X\COL\etyp_1\p \form':\typ}(\HFLenv_1\set{X\mapsto x_1})
\eqcont{\lts,\typ}
\sem{\HFLte,X\COL\etyp_1\p \form':\typ}(\HFLenv_2\set{X\mapsto x_2})
=f_2(x).\]
This completes the proof of the condition (i).

To prove the condition (ii), suppose 
\(\HFLenv_0 \sqsubseteq \HFLenv_1\sqsubseteq \HFLenv_2 \sqsubseteq 
\cdots \) with \(\HFLenv_i\in \csem{\HFLte}\) for each \(i\in\omega\).
We need to show
\( \sem{\HFLte\p \form:\etyp}(\sqcup_{i\in\omega} \HFLenv_i) \eqcont{\lts,\etyp} \sqcup_{i\in\omega} \sem{\HFLte\p\form:\etyp}(\HFLenv_i) \).
We first check that both sides of the equality belong to \(\Cont_{\lts,\etyp}\).
By Lemma~\ref{lem:continuity-of-limit},
\(\sqcup_{i\in\omega} \HFLenv_i \in \csem{\HFLte}\). Thus, by
the condition (i) (where we set both \(\HFLenv_1\) and \(\HFLenv_2\) to
\(\sqcup_{i\in\omega} \HFLenv_i\)), we have
\(\sem{\HFLte\p \form:\etyp}(\sqcup_{i\in\omega} \HFLenv_i) \eqcont{\lts,\etyp}
\sem{\HFLte\p \form:\etyp}(\sqcup_{i\in\omega} \HFLenv_i) \),
which implies
\(\sem{\HFLte\p \form:\etyp}(\sqcup_{i\in\omega} \HFLenv_i) \in \Cont_{\lts,\etyp}\).
For the righthand side, by the condition (i), we have
\(\sem{\HFLte\p\form:\etyp}(\HFLenv_i)  \in \Cont_{\lts,\etyp}\) for each \(i\).
By Lemma~\ref{lem:continuity-of-limit}, we have
\(\sqcup_{i\in\omega} \sem{\HFLte\p\form:\etyp}(\HFLenv_i) \) as required.

It remains to check that \(x\eqcont{\lts,\etyp_1} y\) implies
\( \sem{\HFLte\p \form:\etyp}(\sqcup_{i\in\omega} \HFLenv_i)(x) \eqcont{\lts,\typ}
\sqcup_{i\in\omega} \sem{\HFLte\p\form:\etyp}(\HFLenv_i)(y) \).
Suppose \(x\eqcont{\lts,\etyp_1} y\).
  Then we have
\[
\begin{array}{l}
\sem{\HFLte\p \form:\etyp}(\sqcup_{i\in\omega}\HFLenv_i)(x)
= 
\sem{\HFLte,X\COL\etyp_1\p \form':\etyp}((\sqcup_{i\in\omega}\HFLenv_i)\set{X\mapsto x})\\
= 
\sem{\HFLte,X\COL\etyp_1\p \form':\etyp}(\sqcup_{i\in\omega}(\HFLenv_i\set{X\mapsto x}))\\
\Eqcont{} 
\sem{\HFLte,X\COL\etyp_1\p \form':\etyp}(\sqcup_{i\in\omega}(\HFLenv_i\set{X\mapsto y}))\\
\hfill \mbox{(by the induction hypothesis, (i))}\\
\Eqcont{} 
\sqcup_{i\in\omega}\sem{\HFLte,X\COL\etyp_1\p \form':\etyp}(\HFLenv_i\set{X\mapsto y})\\
\hfill \mbox{(by the induction hypothesis, (ii))}\\
= 
\sqcup_{i\in\omega}\sem{\HFLte\p \form:\etyp}(\HFLenv_i)(y)
\end{array}
\]
as required.


\item Case \rn{HT-App}: In this case, 
\(\form = \form_1\form_2\) and \(\etyp=\typ\), with \(\HFLte\p \form_1:\etyp_2\to \typ\),
\(\HFLte\p \form_2:\etyp_2\). The condition (i) follows immediately from the induction hypothesis.
To prove the condition (ii), suppose 
\(\HFLenv_0 \sqsubseteq \HFLenv_1\sqsubseteq \HFLenv_2 \sqsubseteq 
\cdots \) with \(\HFLenv_i\in \csem{\HFLte}\) for each \(i\in\omega\). Then we have
\[
\begin{array}{l}
\sem{\HFLte\p \form_1\form_2:\typ}(\sqcup_{i\in\omega}\HFLenv_i) 
= 
\sem{\HFLte\p \form_1:\etyp_2\to\typ}(\sqcup_{i\in\omega}\HFLenv_i) 
\sem{\HFLte\p \form_2:\etyp_2}(\sqcup_{i\in\omega}\HFLenv_i) \\
\Eqcont{} 
(\sqcup_{i\in\omega}\sem{\HFLte\p \form_1:\etyp_2\to\typ}(\HFLenv_i))
(\sqcup_{j\in\omega}\sem{\HFLte\p \form_2:\etyp_2}(\HFLenv_j)) \\
\hfill \mbox{ (by the induction hypothesis)}\\
= 
\sqcup_{i\in\omega}
(\sem{\HFLte\p \form_1:\etyp_2\to\typ}(\HFLenv_i)
(\sqcup_{j\in\omega}\sem{\HFLte\p \form_2:\etyp_2}(\HFLenv_j))\\
\Eqcont{}
\sqcup_{i\in\omega}\sqcup_{j\in\omega}
(\sem{\HFLte\p \form_1:\etyp_2\to\typ}(\HFLenv_i)
(\sem{\HFLte\p \form_2:\etyp_2}(\HFLenv_j))\\
\hfill \mbox{ (by the continuity of \(\sem{\HFLte\p \form_1:\etyp_2\to\typ}(\HFLenv_i)\)
   and Lemma~\ref{lem:continuity-of-limit})}\\
= 
\sqcup_{i\in\omega}
(\sem{\HFLte\p \form_1:\etyp_2\to\typ}(\HFLenv_i)
(\sem{\HFLte\p \form_2:\etyp_2}(\HFLenv_i))\\
= 
\sqcup_{i\in\omega}
(\sem{\HFLte\p \form_1\form_2:\typ}(\HFLenv_i)
\end{array}
\]
as required. 
\end{itemize}
\qed
\end{proof}

The following is an immediate corollary of the lemma above (see, e.g., \cite{SangiorgiBookIndCoInd}).
\begin{lemma}[fixpoint of continuous functions]
\label{lem:fixpoint}
Let \(\lts\) be an LTS.
If \(f\in \Cont_{\lts,\typ\to\typ}\), then 
\(\LFP{\typ}(f) \eqcont{\lts,\typ} \bigsqcup_{i\in\omega} f^i(\bot_{\lts,\typ})\).
\end{lemma}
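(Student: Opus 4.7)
The plan is to adapt the classical Kleene fixed-point argument to the present setting, where continuity is measured up to the equivalence \(\eqcont{\lts,\typ}\) rather than literal equality. First I would verify that the Kleene chain \(\{f^i(\bot_{\lts,\typ})\}_{i\in\omega}\) is an \(\sqleq\)-increasing chain of continuous elements. Continuity of \(\bot_{\lts,\typ}\) follows by a short induction on \(\typ\): at \(\typProp\) the claim is trivial since \(\Cont_{\lts,\typProp} = \D_{\lts,\typProp}\), and at \(\etyp\to\typ'\) the bottom element is the constant function \(\lambda x.\bot_{\lts,\typ'}\), which clearly preserves \(\eqcont{\lts,\typ'}\) and commutes with suprema of chains. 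Since \(f\in \Cont_{\lts,\typ\to\typ}\) maps continuous elements to continuous elements (by the first continuity clause in Definition~\ref{def:continuity}), induction on \(i\) gives \(f^i(\bot_{\lts,\typ}) \in \Cont_{\lts,\typ}\); monotonicity of \(f\) together with \(\bot_{\lts,\typ} \sqleq f(\bot_{\lts,\typ})\) makes the chain \(\sqleq\)-increasing.

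Next, setting \(x_0 = \bigsqcup_{i\in\omega} f^i(\bot_{\lts,\typ})\), applying Lemma~\ref{lem:continuity-of-limit} to the chain paired with itself yields \(x_0 \in \Cont_{\lts,\typ}\) with \(x_0 \eqcont{\lts,\typ} x_0\). The second continuity clause of Definition~\ref{def:continuity}, applied to \(f\) on the chain \(\{f^i(\bot_{\lts,\typ})\}_{i\in\omega}\), then gives \(f(x_0) \eqcont{\lts,\typ} \bigsqcup_{i\in\omega} f^{i+1}(\bot_{\lts,\typ})\). Because \(\bot_{\lts,\typ} \sqleq f(\bot_{\lts,\typ})\), the tail supremum equals \(x_0\) in \(\D_{\lts,\typ}\), so \(x_0\) is a fixed point of \(f\) up to \(\eqcont{\lts,\typ}\).

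It remains to relate \(x_0\) to \(\LFP{\lts,\typ}(f) = \bigsqcap\{y \in \D_{\lts,\typ} \mid f(y)\sqleq y\}\). A routine induction on \(i\), using that \(\LFP{\lts,\typ}(f)\) is a pre-fixed point and \(f\) is monotonic, shows \(f^i(\bot_{\lts,\typ}) \sqleq \LFP{\lts,\typ}(f)\) for every \(i\), hence \(x_0 \sqleq \LFP{\lts,\typ}(f)\). The main obstacle will be the reverse direction: since \(f(x_0) \eqcont{\lts,\typ} x_0\) is a priori weaker than the literal pre-fixed point condition \(f(x_0) \sqleq x_0\), Knaster--Tarski does not immediately yield \(\LFP{\lts,\typ}(f) \sqleq x_0\). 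I would handle this by induction on \(\typ\): at base type \(\eqcont{\lts,\typProp}\) coincides with equality so \(x_0\) is a genuine pre-fixed point and the two fixed points literally agree, and at function types the \(\eqcont{}\)-equation between \(f(x_0)\) and \(x_0\) propagates through applications to continuous arguments via the inductive definition of \(\eqcont{}\), yielding \(\LFP{\lts,\typ}(f) \eqcont{\lts,\typ} x_0\) as required.
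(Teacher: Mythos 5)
Your preliminary steps are sound and match what the paper needs: the Kleene chain consists of continuous elements, \(x_0=\bigsqcup_{i\in\omega}f^i(\bot_{\lts,\typ})\) is continuous by Lemma~\ref{lem:continuity-of-limit}, and the second clause of continuity gives \(f(x_0)\Eqcont x_0\). The gap is in your final paragraph. From \(f(x_0)\Eqcont x_0\) you cannot recover \(\LFP{\typ}(f)\Eqcont x_0\) by ``induction on \(\typ\)'': for \(\typ=\etyp\to\typ'\) the map \(f\) acts on the whole space \(\D_{\lts,\etyp\to\typ'}\) and does not decompose into a family of maps on \(\D_{\lts,\typ'}\) indexed by arguments, so there is no smaller-type instance of the lemma to which an induction hypothesis could apply. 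Concretely, \(\LFP{\typ}(f)\) is the meet of the \emph{literal} pre-fixed points \(\set{y\mid f(y)\sqleq_{\lts,\typ}y}\), and \(x_0\) need not be one of them (your base-type observation works precisely because \(\eqcont{\lts,\typProp}\) is equality, but that does not lift); to bound \(\LFP{\typ}(f)\) from above by something \(\Eqcont\)-related to \(x_0\) you would have to exhibit a genuine pre-fixed point agreeing with \(x_0\) on continuous arguments, and nothing in your argument produces one. Your inequality \(x_0\sqleq\LFP{\typ}(f)\) is true but does not help, since \(\Eqcont\) at higher types is not the symmetrization of \(\sqleq\).

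The paper closes this gap by switching to the other characterization of the least fixed point: \(\LFP{\typ}(f)=f^{\beta}(\bot_{\lts,\typ})\) for some ordinal \(\beta\), where \(f^{\beta}\) denotes transfinite iteration with suprema at limit stages. It then shows \(f^{\beta}(\bot_{\lts,\typ})\Eqcont f^{\omega}(\bot_{\lts,\typ})\) for every \(\beta\ge\omega\) by transfinite induction. The successor case uses the first clause of continuity (preservation of \(\Eqcont\)) together with exactly your fact \(f(x_0)\Eqcont x_0\). The limit case uses that, after applying both sides to a full tuple of \(\Eqcont\)-related continuous arguments \(\seq{y}\), the relation \(\Eqcont\) at ground type is literal equality and suprema of functions are computed pointwise, so the supremum of a family of iterates that are all equal (after application) to \(f^{\omega}(\bot_{\lts,\typ})\,\seq{y}\) is again that same set. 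If you replace your final paragraph with this transfinite induction, the proof goes through; your partial results then serve as the base case \(\beta=\omega+1\).
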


  \begin{proof} 
    By the continuity of \(f\), we have
    \(    f(\bigsqcup_{i\in\omega} f^{i}(\bot_{\lts,\typ})) \eqcont{\lts,\typ}
    \bigsqcup_{i\in\omega} f^i(\bot_{\lts,\typ})\).
    Thus, by (transfinite) induction, we have \(f^\beta(\bot) \eqcont{\lts,\typ} \bigsqcup_{i\in\omega} f^i(\bot_{\lts,\typ})\).
    for every ordinal \(\beta\geq \omega\).
    Since \(\LFP{\typ}(f)=f^\beta(\bot)\) for some ordinal \(\beta\)~\cite{SangiorgiBookIndCoInd}, we have
    the required result.

    {The detail of the transfinite induction is given as follows.
  For a given ordinal number \( \beta \), let us define \( f^{\beta}(\bot_{\lts,\typ}) \) by transfinite induction.
  If \( \beta = \beta' + 1 \), then \( f^{\beta}(\bot_{\lts,\typ}) = f(f^{\beta'}(\bot_{\lts,\typ})) \).
  If \( \beta \) is a limit ordinal, then \( f^{\beta}(\bot_{\lts,\typ}) = \bigsqcup_{\beta' < \beta} f^{\beta'}(\bot_{\lts,\typ}) \).
  We prove that \( f^{\omega}(\bot_{\lts,\typ}) \Eqcont{} f^{\beta}(\bot_{\lts,\typ}) \) for every \( \beta > \omega \) by transfinite induction.
  We have proved the claim for \( \beta = \omega + 1 \), i.e.~\( f^{\omega}(\bot_{\lts,\typ}) \Eqcont f^{\omega+1}(\bot_{\lts,\typ}) \).
  \begin{itemize}
  \item If \( \beta = \beta' + 1 \), by the induction hypothesis, \( f^{\omega}(\bot_{\lts,\typ}) \Eqcont{} f^{\beta'} \).
    Since \( f \) is continuous,
    \[
    f^{\beta}(\bot_{\lts,\typ}) = f(f^{\beta'}(\bot_{\lts,\typ})) \Eqcont{} f(f^{\omega}(\bot_{\lts,\typ})) \Eqcont f^{\omega}(\bot_{\lts,\typ}).
    \]
  \item Assume that \( \beta \) is a limit ordinal and \( \typ = \etyp_1 \to \dots \to \etyp_k \to \typProp \).
   Assume also \(x_i\eqcont{\lts,\etyp_i} y_i\)  for each \( 1 \le i \le k \).
    By the induction hypothesis, for every \( \beta' < \beta \), we have
    \[
    f^{\beta'}(\bot_{\lts,\typ}) \Eqcont{} f^{\omega}(\bot_{\lts,\typ})
    \]
    and thus
    \[
    f^{\beta'}(\bot_{\lts,\typ})\,x_1\, \dots\, x_k \Eqcont{} f^{\omega}(\bot_{\lts,\typ})\,y_1\, \dots\, y_k.
    \]
    Since \( \Eqcont{} \) on \( \Cont_{\lts,\typProp} \) is the standard equivalence \( = \), for every \( \beta' < \beta \), we have
    \[
    f^{\beta'}(\bot_{\lts,\typ})\,x_1\,\dots\, x_k = f^{\omega}(\bot_{\lts,\typ})\,y_1\, \dots\ y_k.
    \]
    Since the limit of a function is defined pointwise,
    \begin{align*}
    \left(\bigsqcup_{\beta' < \beta} f^{\beta'}(\bot_{\lts,\typ})\right)\,x_1\, \dots\, x_k
    &= \bigsqcup_{\beta' < \beta} (f^{\beta'}(\bot_{\lts,\typ})\,x_1\, \dots\, x_k) \\
    &= \bigsqcup_{\beta' < \beta} (f^{\omega}(\bot_{\lts,\typ})\,y_1\, \dots\, y_k) \\
    &= f^{\omega}(\bot_{\lts,\typ})\,y_1\, \dots\, y_k.
    \end{align*}
    Thus, we have \( f^{\beta}(\bot_{\lts,\typ}) \Eqcont{} f^{\omega} \).
  \end{itemize}
  Since \( \LFP{\typ}(f) = f^{\beta}(\bot_{\lts,\typ}) \) for some ordinal \( \beta \), we have
  \[
  \bigsqcup_{i \in \omega} f^i(\bot_{\lts,\typ}) = f^{\omega}(\bot_{\lts,\typ}) \Eqcont{} f^{\beta}(\bot_{\lts,\typ}) = \LFP{\typ}(f)
  \]
  as required.
} \qed
\end{proof}

We are now ready to prove Theorem~\ref{th:mayreach}. Below we extend
continuous functions to those on tuples, and make use of
Beki\'c property between a simultaneous recursive definition of multiple functions
and a sequence of recursive definitions; see, e.g., \cite{WinskelTEXTBOOK}, Chapter~10.
\begin{proof}[of Theorem~\ref{th:mayreach}]
Given a program \(\prog = (\progd, \term)\) with \(\progd = \set{f_1\,\seq{x_1}=\term_1,\ldots,f_n\,\seq{x_n}=\term_n}\),
we write \(\prog^{(i)}\) for the recursion-free program \((\progd^{(i)}, [f_1^{(i)}/f_1,\ldots,f_n^{(i)}/f_n]\term)\)
where:
\iffull
\[
\begin{array}{l}
\progd^{(i)} =
  \set{f_k^{(j+1)}\,\seq{x}_k=[f_1^{(j)}/f_1,\ldots,f_n^{(j)}/f_n]\term_k \mid j\in\set{1,\ldots,i-1}, k\in\set{1,\ldots,n}}\\
\qquad \cup \set{f_k^{(0)}\, \seq{x}_k = \Vunit \mid k\in\set{1,\ldots,n}}.
\end{array}
\]
\else
\(\progd^{(i)} =
  \set{f_k^{(j+1)}\,\seq{x}_k=[f_1^{(j)}/f_1,\ldots,f_n^{(j)}/f_n]\term_k \mid j\in\set{1,\ldots,i-1}, k\in\set{1,\ldots,n}}
\cup \set{f_k^{(0)}\, \seq{x}_k = \Vunit \mid k\in\set{1,\ldots,n}}\).
\fi
Then, we obtain the required result as follows.
\[
\begin{array}{l}
\lts_0 \models \HESf_{\prog,\may}\\
\IFF  
\stunique \in \sem{\trMay{\term}}(\set{(f_1,\ldots,f_n)\mapsto \LFP{}(\lambda (f_1,\ldots,f_n).(\lambda \seq{x_1}.\sem{\trMay{\term_1}},\ldots,
\lambda \seq{x_n}.\sem{\trMay{\term_n}}))})\\
\hfill \mbox{(Beki\'{c} property)}\\
\IFF  
\stunique \in \sem{\trMay{\term}}(\set{(f_1,\ldots,f_n)\mapsto\\
\qquad\qquad\qquad \bigsqcup_{i\in\omega}
(\lambda (f_1,\ldots,f_n).(\lambda \seq{x_1}.\sem{\trMay{\term_1}},\ldots,\lambda \seq{x_n}.\sem{\trMay{\term_n}}))^i
 (\lambda \seq{x_1}.\emptyset,\ldots,\lambda \seq{x_n}.\emptyset)})\\
 \hfill \mbox{(Lemmas~\ref{lem:continuity} and \ref{lem:fixpoint})}\\
\IFF  
\stunique \in \sem{\trMay{\term}}(\set{(f_1,\ldots,f_n)\mapsto
\\\qquad\qquad\qquad
(\lambda (f_1,\ldots,f_n).(\lambda \seq{x_1}.\sem{\trMay{\term_1}},\ldots,\lambda \seq{x_n}.\sem{\trMay{\term_n}}))^i
 (\lambda \seq{x_1}.\emptyset,\ldots,\lambda \seq{x_n}.\emptyset)})\\
 \hfill \mbox{ for some $i$ }
\mbox{(Lemmas~\ref{lem:continuity})}\\
\IFF  
\lts_0\models  \HESf_{\prog^{(i)},\may} 
\mbox{ for some $i$}
 \hfill \mbox{(by the definition of \(\prog^{(i)}\))}\\
\IFF  \lab\in\Traces(\prog^{(i)}) 
\mbox{ for some $i$}\hfill \mbox{(Lemma~\ref{lem:mayreach-for-recfree-program})}\\
\IFF  \lab\in\Traces(\prog)
\hfill \qed
\end{array}
\]
\end{proof}

\subsection{Proofs for Section~\ref{sec:mustreach}}
\label{sec:proofs-mustreach}

We first prepare lemmas corresponding to Lemmas~\ref{lem:mayreachsem-preserved-by-reduction}--\ref{lem:mayreach-for-recfree-program}.
\begin{lemma}
\label{lem:mustreachsem-preserved-by-reduction}
Let \((\progd,\term)\) be a program. If \(\term\redv{\progd}\term'\), then
\(\sem{\trMust{(\progd,\term)}} = \sem{\trMust{(\progd,\term')}}\).
\end{lemma}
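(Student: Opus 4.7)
The plan is to mirror the proof of Lemma~\ref{lem:mayreachsem-preserved-by-reduction}, since the must-reachability translation \(\trMust{\cdot}\) differs from \(\trMay{\cdot}\) only in replacing disjunctions with conjunctions (for \(\nondet\)) and in replacing the then/else guard combination \((p\land\varphi_1)\lor(\neg p\land\varphi_2)\) with \((p\imply \varphi_1)\land(\neg p\imply\varphi_2)\); the reduction relation \(\redv{\progd}\) is unchanged, so the overall strategy survives verbatim. First I would let \(\progd=\{f_i\,\seq{x}_i=\term_i\}_{1\le i\le n}\) and apply the Beki\'c property to write \(\sem{\trMust{(\progd,\term)}}=\sem{\trMust{\term}}(\HFLenv)\), where \(\HFLenv=\{f_i\mapsto F_i\}_i\) is the (least) simultaneous fixpoint of the tuple of functionals \(\lambda(X_1,\ldots,X_n).(\sem{\lambda\seq{x}_i.\trMust{\term_i}}(\{f_j\mapsto X_j\}_j))_i\). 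This reduces the statement to showing \(\sem{\trMust{\term}}(\HFLenv)=\sem{\trMust{\term'}}(\HFLenv)\) whenever \(\term\redv{\progd}\term'\).

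Next I would observe that \(\trMust{\cdot}\) is compositional and that every production of evaluation contexts (\(\EC\nondet\term\), \(\term\nondet\EC\), \(\evexp{\lab}\EC\)) is translated to a context built from \(\land\), \(\land\), and \(\TRUE\land\cdot\), each of which is a monotone/semantic-preserving construct. Hence it suffices to prove the equality at the redex, i.e., to show that the left- and right-hand sides of each of the three reduction rules have the same semantics under \(\HFLenv\). Then I would do case analysis on the reduction rule used:
\begin{itemize}
\item \textbf{R-Fun:} Here the redex is \(f_i\,\seq{s}\) and the contractum is \([\seq{s}/\seq{x}_i]t_i\). Using the fixpoint equation satisfied by \((F_1,\ldots,F_n)\), we have \(\sem{f_i\,\seq{s}}(\HFLenv)=F_i(\sem{\seq{s}}(\HFLenv))=\sem{\lambda\seq{x}_i.\trMust{t_i}}(\HFLenv)(\sem{\seq{s}}(\HFLenv))=\sem{[\seq{s}/\seq{x}_i]\trMust{t_i}}(\HFLenv)\), which equals \(\sem{\trMust{[\seq{s}/\seq{x}_i]t_i}}(\HFLenv)\) by a straightforward substitution lemma for \(\trMust{\cdot}\).
\item \textbf{R-IfT:} The redex translates to \((p(\widetilde{\trMust{t'}})\imply\trMust{t_1})\land(\neg p(\widetilde{\trMust{t'}})\imply\trMust{t_2})\), which under the assumption \((\sem{t'_1},\ldots,\sem{t'_k})\in\sem{p}\) semantically becomes \((\TRUE\imply\sem{\trMust{t_1}})\land(\FALSE\imply\sem{\trMust{t_2}})=\sem{\trMust{t_1}}\cap\St=\sem{\trMust{t_1}}\), matching the contractum.
\item \textbf{R-IfF:} Symmetric to R-IfT, with \(\FALSE\imply\sem{\trMust{t_1}}=\St\) and \(\TRUE\imply\sem{\trMust{t_2}}=\sem{\trMust{t_2}}\).
\end{itemize}

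I do not anticipate any real obstacle: the needed substitution lemma is a routine induction on \(t\), and the compositionality argument for evaluation contexts is purely syntactic. The only place where one might be tempted to worry is whether replacing \(\lor\) by \(\land\) (and the dual guard encoding) spoils the Beki\'c step, but Beki\'c's property is insensitive to which fixpoint operator (\(\mu\) or \(\nu\)) is used as long as the same operator is applied uniformly, and \(\trMust{\progd}\) is again a \(\mu\)-only HES by definition. Thus the proof is essentially a mechanical transcription of the \(\may\) version, with the two boolean identities \(\TRUE\imply\varphi\equiv\varphi\) and \(\FALSE\imply\varphi\equiv\TRUE\) playing the role played by \(\TRUE\land\varphi\equiv\varphi\) and \(\FALSE\land\varphi\equiv\FALSE\) in the earlier proof.
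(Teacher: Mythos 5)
Your proposal is correct and follows essentially the same route as the paper: the paper's proof of this lemma simply says it is ``almost the same as the proof of Lemma~\ref{lem:mayreachsem-preserved-by-reduction}'', i.e., the Beki\'c decomposition followed by case analysis on \rn{R-Fun}, \rn{R-IfT}, and \rn{R-IfF}, with the conjunction/implication identities replacing the disjunction/conjunction ones. Your explicit identification of the two boolean identities that change, and your note that the Beki\'c step is unaffected because \(\trMust{\progd}\) remains a \(\mu\)-only HES, are exactly the observations the paper leaves implicit.
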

\begin{proof}
Almost the same as the proof of Lemma~\ref{lem:mayreachsem-preserved-by-reduction}. \qed
\end{proof}

\begin{lemma}
\label{lem:mustreach-base}
Let \((\progd,\term)\) be a program and \(\term\nonredv{\progd}\). Then,
\(\lts_0 \models \trMust{(\progd,\term)}\) if and only if \(\term\) is of the form
\(C[\evexp{\lab}{\term_1},\ldots,\evexp{\lab}{\term_k}]\), where \(C\) is a (multi-hole) context
generated by the syntax: \(C ::= \Hole_i \mid C_1\nondet C_2\).
\end{lemma}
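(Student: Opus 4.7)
The plan is to mimic the proof of Lemma~\ref{lem:mayreach-base}, proceeding by structural induction on \(\term\). Since \((\progd,\term)\) is well-typed (so \(\term:\Tunit\)) and \(\term\nonredv{\progd}\), a normal-form analysis shows that \(\term\) must be generated by the grammar
\(\term ::= \Vunit \mid \evexp{\lab}\term' \mid \term_1\nondet\term_2\)
(function applications, if-expressions, and operators on integers all give rise to a reduction, hence cannot appear at the top of a normal form of unit type). So it suffices to treat these three cases.

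For \(\term = \Vunit\), we have \(\trMust{\Vunit}=\FALSE\), so \(\lts_0\not\models \trMust{(\progd,\Vunit)}\); on the other hand, since the grammar for \(C\) has \(\Hole_i\) as its only base case, every context \(C\) contains at least one hole, and hence \(\Vunit\) cannot be written in the form \(C[\evexp{\lab}{\term_1},\ldots,\evexp{\lab}{\term_k}]\). Both sides of the biconditional fail. For \(\term = \evexp{\lab}{\term'}\), we have \(\trMust{(\evexp{\lab}{\term'})} = \TRUE\), so \(\lts_0\models\); and \(\term\) is of the required form with \(C=\Hole_1\) and \(k=1\). Both sides hold.

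For \(\term = \term_1\nondet\term_2\), the translation gives \(\trMust{\term}=\trMust{\term_1}\land\trMust{\term_2}\), so \(\lts_0\models\trMust{(\progd,\term)}\) iff \(\lts_0\models\trMust{(\progd,\term_i)}\) for both \(i\in\{1,2\}\). Applying the induction hypothesis to each of \(\term_1\) and \(\term_2\) (both of which are normal forms, since \(\term\) itself is), this is equivalent to the existence of contexts \(C_1,C_2\) and events \(\evexp{\lab}{\termalt_{i,j}}\) with \(\term_i = C_i[\evexp{\lab}{\termalt_{i,1}},\ldots,\evexp{\lab}{\termalt_{i,k_i}}]\). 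Combining these by the production \(C_1\nondet C_2\) yields a context \(C\) with \(\term = C[\evexp{\lab}{\termalt_{1,1}},\ldots,\evexp{\lab}{\termalt_{2,k_2}}]\), as required; conversely, any such decomposition of \(\term\) must split along the top-level \(\nondet\), giving the two sub-decompositions on which we apply the induction hypothesis in the reverse direction.

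The argument is essentially routine; the only mild subtlety is making sure that the normal-form grammar is correctly identified (so that no other shape for \(\term\) needs to be considered) and that the contexts \(C\) have at least one hole (so that \(\Vunit\) is genuinely excluded from the right-hand side, matching the fact that \(\trMust{\Vunit}=\FALSE\)). This will be handled by inspecting the grammar of \(C\) directly.
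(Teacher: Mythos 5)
Your proposal is correct and follows essentially the same route as the paper's proof: the same normal-form grammar for \(\term\), the same structural induction, and the same case analysis (with the \(\Vunit\) case failing on both sides, the event case succeeding with \(C=\Hole_1\), and the \(\nondet\) case combining the two induction hypotheses via \(C_1\nondet C_2\)). The only difference is that you spell out why \(\Vunit\) cannot be written as \(C[\cdots]\) (every context has at least one hole), a point the paper leaves implicit.
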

\begin{proof}
The proof proceeds by induction on the structure of \(\term\).
By the condition \(\term\nonredv{\progd}\), \(\term\) is generated by the following grammar:
\[
\term ::= \Vunit \mid \evexp{\lab}\term' \mid \term_1\nondet\term_2.
\]
\begin{itemize}
\item Case \(\term=\Vunit\): The result follows immediately, 
as \(\term\) is not of the form \(C[\evexp{\lab}{\term_1},\ldots,\evexp{\lab}{\term_k}]\) and
\(\trMust{\term}=\FALSE\).
\item Case \(\term=\evexp{\lab}\term'\): 
The result follows immediately, as
\(\term\) is of the form 
\(C[\evexp{\lab}{\term_1},\ldots,\evexp{\lab}{\term_k}]\) (where \(C=\Hole_1\) and \(k=1\)) and
\(\trMust{\term}=\TRUE\).
\item Case \(\term_1\nondet\term_2\):
Because \(\trMust{(\term_1\nondet\term_2)} = \trMust{\term_1}\land \trMust{\term_2}\),
\(\lts_0 \models \trMust{(\progd,\term)}\) if and only if
\(\lts_0 \models \trMust{(\progd,\term_i)}\) for each \(i\in\set{1,2}\).
By the induction hypothesis, the latter is equivalent to 
the property that \(\term_i\) is of the form 
\(C_i[\evexp{\lab}{\term_{i,1}},\ldots,\evexp{\lab}{\term_{i,k_i}}]\) for each \(i\in\set{1,2}\),
which is equivalent to the property that 
\(\term\) is of the form 
\(C[\evexp{\lab}{\term_1},\ldots,\evexp{\lab}{\term_k}]\)
(where \(C=C_1\nondet C_2\)).
\end{itemize}
\qed
\end{proof}

\begin{lemma}
\label{lem:mustreach-for-recfree-program}
Let \(\prog\) be a recursion-free program. Then, \(\Must_\lab(\prog)\) if and only if
\(\lts_0 \models \HESf_{\prog,\must}\) for \(\lts_0 = (\set{\st_\init},\emptyset,\emptyset,\st_\init)\).
\end{lemma}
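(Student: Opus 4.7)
The argument follows the same skeleton as the proof of Lemma~\ref{lem:mayreach-for-recfree-program}. Since \(\prog\) is recursion-free, the relation \(\redv{\progd}\) is strongly normalising (essentially because simply-typed \(\lambda\)-calculus is, plus the well-founded dependency between function definitions). Thus from \(\prog = (\progd,\term_0)\) there is a finite reduction sequence \(\term_0 \redvs{\progd} \term \nonredv{\progd}\), and I would induct on its length \(n\).

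For the base case \(n=0\), the term \(\term_0\) is already in normal form, so Lemma~\ref{lem:mustreach-base} gives \(\lts_0 \models \trMust{(\progd,\term_0)}\) iff \(\term_0\) has the shape \(C[\evexp{\lab}{\term_1},\ldots,\evexp{\lab}{\term_k}]\) for a non-det multi-hole context \(C\). It then remains to show that this shape characterises \(\Must_\lab(\progd,\term_0)\). A short structural induction on normal forms (which range only over \(\Vunit\), events, and \(\nondet\)) suffices: either every \(\Vunit\)-leaf of the non-det tree is sheltered under some \(\evexp{\lab}{\cdot}\), in which case every full trace begins with \(\lab\); or some \(\Vunit\)-leaf is reachable by resolving the \(\nondet\)-choices without passing through an \(\lab\)-event, in which case the corresponding empty or \(\lab\)-free full trace witnesses the failure of \(\Must_\lab\).

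For the induction step \(n>0\), factor the reduction as \(\term_0 \redv{\progd} \term_1 \redvs{\progd} \term\). The induction hypothesis yields \(\Must_\lab(\progd,\term_1) \iff \lts_0 \models \HESf_{(\progd,\term_1),\must}\), and Lemma~\ref{lem:mustreachsem-preserved-by-reduction} transfers the right-hand side to \(\lts_0 \models \HESf_{(\progd,\term_0),\must}\). What remains is the operational counterpart: \(\term_0 \redv{\progd} \term_1\) implies \(\FullTraces(\progd,\term_0) = \FullTraces(\progd,\term_1)\), and in particular \(\Must_\lab(\progd,\term_0) \iff \Must_\lab(\progd,\term_1)\). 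I would prove this auxiliary claim by induction on the evaluation context \(E\) in the reduction \(E[r] \redv{\progd} E[r']\): the base case \(E = \Hole\) uses that \(f\,\seq{s}\) (resp.\ an if-redex) has a unique outgoing \(\epsilon\)-transition in the LTS landing exactly on \(r'\); the inductive cases \(E' \nondet t\), \(t \nondet E'\), and \(\evexp{\lab'}{E'}\) propagate the equality via the compositional identities \(\FullTraces(s_1 \nondet s_2) = \FullTraces(s_1) \cup \FullTraces(s_2)\) and \(\FullTraces(\evexp{\lab'}{s}) = \lab'\cdot\FullTraces(s)\).

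The main obstacle is precisely this operational lemma, because \(\redv{\progd}\) reduces under \(\nondet\) and event-prefix contexts whereas the labelled transitions fire only at the head; one must match a context-reduction step with a sequence of LTS \(\epsilon\)-transitions performed on the outer context before the redex. The situation is simpler than it may look because recursion-freeness ensures that all full traces are finite and because the LTS transitions at each form are deterministic in structure, so the case analysis is bounded and routine.
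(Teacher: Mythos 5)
Your proof is correct and follows essentially the same route as the paper's: induction on the length of the normalizing reduction sequence, with the base case discharged by Lemma~\ref{lem:mustreach-base} and the inductive step by Lemma~\ref{lem:mustreachsem-preserved-by-reduction} together with invariance of \(\Must_\lab\) under \(\redv{\progd}\). The extra detail you supply (the case analysis on normal forms in the base case, and the \(\FullTraces\)-preservation argument by induction on the evaluation context) merely makes explicit what the paper leaves as "immediate" and "by the definition of the reduction semantics", and both arguments are sound.
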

\begin{proof}
Since \(\prog=(\progd,\term_0)\) is recursion-free, there exists a finite, normalizing reduction sequence
\(\term_0 \redvs{\progd} \term \nonredv{\progd}\).
We show the required property by induction on 
the length \(n\) of this reduction sequence.
\begin{itemize}
\item Case \(n=0\): Since \(\term_0\nonredv{\progd}\),
  the result follows immediately from Lemma~\ref{lem:mustreach-base}.
\item Case \(n>0\): In this case, 
\(\term_0 \redv{\progd} \term_1\redvs{\prog} \term \).
Thus, by the induction hypothesis,
the definition of the reduction semantics and Lemma~\ref{lem:mustreachsem-preserved-by-reduction},
\(\Must_\lab(\progd,\term_0)\) if and only if 
\(\Must_\lab(\progd,\term_1)\), if and only if 
\(\lts_0 \models \HESf_{(\progd,\term_1),\must}\), if and only if 
\(\lts_0 \models \HESf_{(\progd,\term_0),\must}\).
\end{itemize}
\end{proof}
\begin{proof}[Theorem~\ref{th:must-reachability}]
Let \(\prog^{(i)}\) be the the recursion-free program  defined in the proof of Theorem~\ref{th:mayreach}.
Then, we obtain the required result as follows.
\[
\begin{array}{l}
\lts_0 \models \HESf_{\prog,\must}\\
\IFF  
\stunique \in \sem{\trMust{\term}}(\set{(f_1,\ldots,f_n)\mapsto \LFP{}(\lambda (f_1,\ldots,f_n).(\lambda \seq{x_1}.\sem{\trMust{\term_1}},\ldots,
\lambda \seq{x_n}.\sem{\trMust{\term_n}}))})\\
\hfill \mbox{(Beki\'{c} property)}\\
\IFF  
\stunique \in \sem{\trMust{\term}}(\set{(f_1,\ldots,f_n)\mapsto\\
\qquad\qquad\qquad \bigsqcup_{i\in\omega}
(\lambda (f_1,\ldots,f_n).(\lambda \seq{x_1}.\sem{\trMust{\term_1}},\ldots,\lambda \seq{x_n}.\sem{\trMust{\term_n}}))^i
 (\lambda \seq{x_1}.\emptyset,\ldots,\lambda \seq{x_n}.\emptyset)})\\
 \hfill \mbox{((Lemmas~\ref{lem:continuity} and \ref{lem:fixpoint})}\\
\IFF  
\stunique \in \sem{\trMust{\term}}(\set{(f_1,\ldots,f_n)\mapsto
\\\qquad\qquad\qquad
(\lambda (f_1,\ldots,f_n).(\lambda \seq{x_1}.\sem{\trMust{\term_1}},\ldots,\lambda \seq{x_n}.\sem{\trMust{\term_n}}))^i
 (\lambda \seq{x_1}.\emptyset,\ldots,\lambda \seq{x_n}.\emptyset)})
\mbox{ for some $i$}\\
\IFF  
\lts_0\models  \HESf_{\prog^{(i)},\must} 
\mbox{ for some $i$}\\
 \hfill \mbox{(by the definition of \(\prog^{(i)}\))}\\
\IFF  \Must_\lab(\prog^{(i)}) 
\mbox{ for some $i$}\hfill \mbox{(Lemma~\ref{lem:mustreach-for-recfree-program})}\\
\IFF  \Must_\lab(\prog).
\end{array}
\]
\qed
\end{proof}

\section{Proofs for Section~\ref{sec:path}}
\label{sec:proofs-trace}

We first modify the reduction semantics in Figure~\ref{fig:red-semantics}
by adding the following rule for distributing events with respect to \(\nondet\):
\infrule[R-Dist]{}{E[\evexp{\lab}{(t_1\nondet t_2)}] \redv{\progd} E[(\evexp{\lab}t_1)\nondet (\evexp{\lab}t_2)]}
We write \(\redvd{\progd}\) for this modified version of the reduction relation.
We call an evaluation context \emph{event-free} if it is generated only by the following syntax:
\[ E ::= \Hole \mid  \EC\nondet \term \mid \term\nondet \EC. \]
\begin{lemma}
\label{lem:path-base}
Let \(\prog=(\progd,\term)\) be a program such that \(\term\negredvd{\progd}\). Then,
\(\FinTraces(\prog)\subseteq L\) if and only if 
\(\lts_L\models \HESf_{\prog,\Path}\).
\end{lemma}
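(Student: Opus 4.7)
The plan is to prove this base case by a careful structural induction on the shape of the normal form $\term$.

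First I would characterize the normal forms under $\redvd{\progd}$. Since $\term$ is closed, well-typed of type $\Tunit$, and admits no reduction---including by the distribution rule added to $\redvd{\progd}$---it must be generated by the grammar
\[
\term ::= \Vunit \mid \term_1 \nondet \term_2 \mid \evexp{\lab}{u},
\]
where $u$ is itself a normal form that is \emph{not} of the form $s_1 \nondet s_2$ (otherwise the distribution rule would fire). Function applications and if-expressions are ruled out because their defining rules in Figure~\ref{fig:red-semantics} would fire. In particular, no function symbol from $\dom(\progd)$ appears anywhere in $\term$, so the fixpoint equations of $\trPath{\progd}$ play no role and it suffices to reason about the semantics of $\trPath{\term}$ directly.

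Next I would strengthen the statement to a state-parameterized form suitable for induction: for every normal form $\term$ and every state $q$ of $\lts_L$,
\[
\FinTraces(\progd,\term) \subseteq L^{q} \iff q \in \Sem{\lts_L}{\trPath{\term}},
\]
where $L^{q} := \{w \in \Sigma^{*} \mid \hat{\delta}(q,w)\text{ is defined}\}$ is the prefix-closed language accepted starting from $q$. This generalization is necessary because $\Some{\lab}$ moves to a new state, so the induction hypothesis must range over all states. The lemma follows by instantiating $q = q_0$, using $L^{q_0} = L$ (since $\lts_L$ is minimal, deterministic, dead-state free, and $L$ is prefix-closed).

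The structural induction then runs as follows. For $\term = \Vunit$, $\FinTraces = \{\epsilon\} \subseteq L^{q}$ holds trivially (no dead states), and $\trPath{\Vunit} = \TRUE$ is satisfied by every $q$. For $\term = \term_1 \nondet \term_2$, one has $\FinTraces(\term) = \FinTraces(\term_1) \cup \FinTraces(\term_2)$, so containment in $L^{q}$ splits conjunctively and matches $\trPath{\term_1} \land \trPath{\term_2}$ via the induction hypothesis. For $\term = \evexp{\lab}{u}$, we have $\FinTraces(\term) = \{\epsilon\} \cup \{\lab w \mid w \in \FinTraces(u)\}$; this is contained in $L^{q}$ exactly when $\delta(q,\lab)$ is defined and $\FinTraces(u) \subseteq L^{\delta(q,\lab)}$; by the induction hypothesis this in turn is equivalent to $\delta(q,\lab)$ being defined with $\delta(q,\lab) \in \Sem{\lts_L}{\trPath{u}}$, i.e., $q \in \Sem{\lts_L}{\Some{\lab}\trPath{u}}$.

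The main obstacle is pinning down the normal-form characterization---specifically, recognizing that the added distribution rule is precisely what forbids $\evexp{\lab}{(s_1 \nondet s_2)}$ and thereby makes the syntactic case analysis line up with the three translation cases---and formulating the state-parameterized strengthening needed for the $\Some{\lab}$ case. Once these are in place, each inductive step is a short algebraic calculation.
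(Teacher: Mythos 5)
Your proof is correct and rests on the same two pillars as the paper's: the normal-form characterization forced by the distribution rule (normal forms are $\nondet$-combinations of pure event chains ending in $\Vunit$), and the fact that, by determinism and prefix-closedness of $\lts_L$, a word lies in $L$ exactly when the corresponding $\Some{\lab_1}\cdots\Some{\lab_n}\TRUE$ holds. The only difference is presentational: the paper flattens the normal form and checks each chain directly from $q_0$ via $\hat{\delta}$, whereas you run a structural induction with the statement generalized over states $q$ and languages $L^{q}$, which amounts to the same calculation.
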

\begin{proof}
Since \(\term\negredvd{\progd}\), \(\term\) must be of the form:
\(\nondet_{i\in\set{1,\ldots,m}} (\evexp{\lab_{i,1}}\cdots \evexp{\lab_{i,n_i}}\Vunit)\)
(where \(\nondet_{i\in\set{1,\ldots,m}} \term_i\) denotes a combination of \(t_1,\ldots,t_m\) with \(\nondet\)).
Thus,
\[
\begin{array}{l}
\FinTraces(\prog)\subseteq L\\
\IFF \lab_{i,1}\cdots \lab_{i,n_i}\in L \mbox{ for every }i\in\set{1,\ldots,m} \\
\IFF \lts_L \models \Some{\lab_{i,1}}\cdots \Some{\lab_{i,n_i}}\TRUE\mbox{ for every }i\in\set{1,\ldots,m}
\qquad\hfill \mbox{(by the construction of \(\lts_L\))}\\
\IFF \lts_L \models \HESf_{\prog,\Path}
\hfill \mbox{(by the definition of \(\HESf_{\prog,\Path}\))}.
\end{array}
\]
\qed
\end{proof}

\begin{lemma}
\label{lem:path-preserved-by-reduction}
Let \((\progd,\term)\) be a program, and \(L\) be a prefix-closed regular language. 
If \(\term\redvd{\progd}\term'\), then
\(\Sem{\lts_L}{\trPath{(\progd,\term)}} = \Sem{\lts_L}{\trPath{(\progd,\term')}}\).
\end{lemma}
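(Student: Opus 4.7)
The plan is to follow the structure of the proof of Lemma~\ref{lem:mayreachsem-preserved-by-reduction}. First, invoke the Beki\'c property to rewrite \(\Sem{\lts_L}{\trPath{(\progd,\term)}}\) as \(\Sem{\lts_L}{\trPath{\term}}(\HFLenv)\), where \(\HFLenv\) sends each function symbol \(f_i \in \dom(\progd)\) to the \(i\)-th component of the simultaneous \emph{greatest} fixpoint of the system induced by \(\trPath{\progd}\) (note that, unlike in Section~\ref{sec:mayreach}, every equation in \(\trPath{\progd}\) is bound by \(\nu\)). It therefore suffices to show \(\sem{\trPath{\term}}(\HFLenv) = \sem{\trPath{\term'}}(\HFLenv)\), which I would prove by case analysis on the rule used to derive \(\term \redvd{\progd} \term'\).

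For the cases R-Fun, R-IfT, and R-IfF, essentially the same calculations as in the proof of Lemma~\ref{lem:mayreachsem-preserved-by-reduction} go through, modulo the minor bookkeeping difference that branches are translated using implication \((p(\seq{s}) \imply \varphi)\) rather than conjunction, so the simplification in the R-IfT/R-IfF cases uses \(\TRUE \imply \varphi \equiv \varphi\) and \(\FALSE \imply \varphi \equiv \TRUE\). In the R-Fun case, the fact that \(\HFLenv\) is a fixpoint (whether least or greatest is irrelevant) still gives \(\sem{f_i\,\seq{s}}(\HFLenv) = \sem{[\seq{s}/\seq{x}_i]t_i}(\HFLenv)\).

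The genuinely new case is R-Dist, where \(\term = E[\evexp{\lab}{(t_1 \nondet t_2)}]\) and \(\term' = E[(\evexp{\lab}{t_1}) \nondet (\evexp{\lab}{t_2})]\). Because \(\trPath{\cdot}\) is compositional and every operator appearing in the translation of an evaluation context (namely \(\land\) and \(\Some{\lab}\)) is monotone, a standard congruence argument reduces the goal to proving the following equivalence at the subterm level:
\[
\Sem{\lts_L}{\Some{\lab}\,(\trPath{t_1} \wedge \trPath{t_2})}(\HFLenv) = \Sem{\lts_L}{(\Some{\lab}\trPath{t_1}) \wedge (\Some{\lab}\trPath{t_2})}(\HFLenv).
\]
The key observation is that \(\lts_L\) is \emph{deterministic}: it is built directly from the minimal deterministic automaton \(\A_L\) for the prefix-closed language \(L\), so each state has at most one \(\lab\)-successor. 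Consequently, on \(\lts_L\) the modality \(\Some{\lab}\) coincides with \(\All{\lab}\) and in particular distributes over arbitrary conjunctions, which yields the required equality.

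The main obstacle is precisely the R-Dist case. Unlike Lemma~\ref{lem:mayreachsem-preserved-by-reduction}, the reduction relation here is enriched with a rule that is \emph{not} a semantic tautology of HFL on arbitrary LTSs -- distributivity of \(\Some{\lab}\) over \(\land\) fails on nondeterministic models. The proof therefore leans essentially on the determinism of \(\lts_L\), and this is the very reason one must work with \(\redvd{\progd}\) rather than \(\redv{\progd}\) in this section when establishing the downstream result that \(\FinTraces(\prog) \subseteq L\) iff \(\lts_L \models \HESf_{\prog,\Path}\) for arbitrary programs.
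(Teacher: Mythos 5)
Your proof is correct and follows essentially the same route as the paper's: Beki\'c property with the greatest-fixpoint valuation, case analysis on the reduction rule with the R-Fun/R-IfT/R-IfF cases handled as in Lemma~\ref{lem:mayreachsem-preserved-by-reduction}, and the R-Dist case resolved by observing that \(\lts_L\) has at most one \(\lab\)-successor per state, so \(\Some{\lab}\) distributes over the (binary) conjunction. (One tiny aside: on a deterministic LTS \(\Some{\lab}\) does not literally coincide with \(\All{\lab}\) at states with no \(\lab\)-successor, and distributivity fails for the empty conjunction, but neither point affects the binary instance you actually need.)
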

\begin{proof}
Let \(\progd = \set{f_1\,\seq{x}_1=\term_1,\ldots,f_n\,\seq{x}_n=\term_n}\),
and \((F_1,\ldots,F_n)\) be the greatest fixpoint of
\[\lambda (X_1,\ldots,X_n).(\Sem{\lts_L}{\lambda \seq{x}_1.\trPath{\term_1}}(\set{f_1\mapsto X_1,\ldots,f_n\mapsto X_n}),\ldots,
\Sem{\lts_L}{\lambda \seq{x}_n.\trPath{\term_n}}(\set{f_1\mapsto X_1,\ldots,f_n\mapsto X_n})).\]
By the Beki\'{c} property, \(\Sem{\lts_L}{\trPath{(\progd,\term)}}
= \Sem{\lts_L}{\trPath{\term}}(\set{f_1\mapsto F_1,\ldots,f_n\mapsto F_n})\).
Thus, it suffices to show that \(\term\redv{\progd}\term'\) implies
\(\Sem{\lts_L}{\trPath{\term}}(\HFLenv)=
\Sem{\lts_L}{\trPath{\term'}}(\HFLenv)\)
for \(\HFLenv=\set{f_1\mapsto F_1,\ldots,f_n\mapsto F_n}\).
We show it by case analysis on the rule used for deriving \(\term\redvd{\progd}\term'\).
\begin{itemize}
\item Case \rn{R-Fun}: In this case, \(\term = E[f_i\,\seq{t}]\) and \(\term'=E[[\seq{t}/\seq{x}_i]s]\)
with \(\progd(f_i)=\lambda \seq{x}_i.s\). Since \((F_1,\ldots,F_n)\) is a fixpoint, we have:
\[
\begin{array}{l}
\Sem{\lts_L}{f_i\,\seq{t}}(\HFLenv)
 = F_i(\Sem{\lts_L}{\seq{t}}(\HFLenv)\\
 = \Sem{\lts_L}{\lambda \seq{x}_i.s}(\HFLenv) (\Sem{\lts_L}{\seq{t}}(\HFLenv))\\
 = \Sem{\lts_L}{[\seq{t}/\seq{x}_i]s}(\HFLenv)
\end{array}
\]
Thus, we have 
\(\Sem{\lts_L}{\trPath{\term}}(\HFLenv)=
\Sem{\lts_L}{\trPath{\term'}}(\HFLenv)\) as required.
\item Case \rn{R-IfT}:
In this case, \(\term = E[\ifexp{p(\term_1',\ldots,\term_k')}{\term_1}{\term_2}]\) and \(\term' = E[\term_1]\)
with \((\Sem{\lts_L}{\term_1'},\ldots,\Sem{\lts_L}{\term_k'})\in \sem{p}\). 
Thus,\(\trPath{\term}=(p(\term_1',\ldots,\term_k')\imply \trPath{\term_1})\land (\neg p(\term_1',\ldots,\term_k')\imply \trPath{\term_2})\).
Since \((\Sem{\lts_L}{\term_1'},\ldots,\Sem{\lts_L}{\term_k'})\in \sem{p}\), 
\((\Sem{\lts_L}{\term_1'},\ldots,\Sem{\lts_L}{\term_k'})\not \in \sem{\neg p}\). Thus,
\(\Sem{\lts_L}{\trPath{(\ifexp{p(\term_1',\ldots,\term_k')}{\term_1}{\term_2})}}(\HFLenv) = 
\Sem{\lts_L}{(\TRUE\imply\trPath{\term_1})\land (\FALSE\imply \trPath{\term_2})}(\HFLenv)
= 
\Sem{\lts_L}{\trPath{\term_1}}(\HFLenv)\).
We have, therefore,
\(\Sem{\lts_L}{\trPath{\term}}(\HFLenv)=
\Sem{\lts_L}{\trPath{\term'}}(\HFLenv)\) as required.
\item Case \rn{R-IfF}:
Similar to the above case.
\item Case \rn{R-Dist}:
In this case, \(\term=\EC[\evexp{\lab}{(t_1\nondet t_2)}]\) and \(\term'=\EC[(\evexp{\lab}t_1)\nondet (\evexp{\lab}t_2)]\).
Thus, it suffices to show that \(\Sem{\lts_L}{\trPath{(\evexp{\lab}{(t_1\nondet t_2)})}}(\HFLenv)
= \Sem{\lts_L}{\trPath{((\evexp{\lab}t_1)\nondet (\evexp{\lab}t_2))}}(\HFLenv)\).
We have:
\[
\begin{array}{l}
\trPath{(\evexp{\lab}{(t_1\nondet t_2)})} = \Some{\lab}(\trPath{t_1}\land \trPath{t_2})\\
\trPath{((\evexp{\lab}t_1)\nondet (\evexp{\lab}t_2))} = 
(\Some{\lab}\trPath{t_1})\land (\Some{\lab}\trPath{t_2})
\end{array}
\]
Since \(\lts_L\) has at most one \(\lab\)-transition from each state,
both formulas are equivalent, i.e., 
\(\Sem{\lts_L}{\trPath{(\evexp{\lab}{(t_1\nondet t_2)})}}(\HFLenv)
= \Sem{\lts_L}{\trPath{((\evexp{\lab}t_1)\nondet (\evexp{\lab}t_2))}}(\HFLenv)\).
\end{itemize}
\qed
\end{proof}
\begin{lemma}
\label{lem:path-for-recfree-program}
Let \(\prog\) be a recursion-free program and \(L\) be a regular, prefix-closed language. 
Then, \(\FinTraces(\prog)\subseteq L\) if and only if \(\lts_L\models \HESf_{\prog,\Path}\).
\end{lemma}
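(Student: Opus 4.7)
The plan is to mirror the proof of Lemma~\ref{lem:mayreach-for-recfree-program}, combining the two supporting results already established in this setting: Lemma~\ref{lem:path-base} handles programs that are in $\redvd{\progd}$-normal form, while Lemma~\ref{lem:path-preserved-by-reduction} shows that the semantics of the translation is preserved along $\redvd{\progd}$. First I would observe that, because $\prog=(\progd,\term_0)$ is recursion-free, $\term_0$ strongly normalizes under $\redvd{\progd}$, so there is a finite reduction sequence $\term_0\redvd{\progd}\term_1\redvd{\progd}\cdots\redvd{\progd}\term_n\negredvd{\progd}$. The argument then proceeds by induction on $n$; the base case $n=0$ is exactly Lemma~\ref{lem:path-base}.

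For the inductive step I would transfer the biconditional $\FinTraces(\progd,\term_0)\subseteq L \iff \lts_L\models\HESf_{(\progd,\term_0),\Path}$ across the first step $\term_0\redvd{\progd}\term_1$. The HFL side transfers directly by Lemma~\ref{lem:path-preserved-by-reduction}. For the trace side I need one auxiliary lemma: if $\term\redvd{\progd}\term'$, then $\FinTraces(\progd,\term)=\FinTraces(\progd,\term')$. With these two ingredients in hand, applying the induction hypothesis to $(\progd,\term_1)$ closes the argument.

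The auxiliary trace-preservation lemma is proved by structural induction on the evaluation context $E$ appearing in the rule used for $\term\redvd{\progd}\term'$. In the hole case, each of R-Fun, R-IfT, R-IfF reduces a redex $r$ whose \emph{only} labeled transition is $r\Pred{\epsilon}{\progd}r'$, so $\FinTraces(\progd,r)=\{\epsilon\}\cup\FinTraces(\progd,r')=\FinTraces(\progd,r')$. For R-Dist, a direct computation shows that $\evexp{a}{(t_1\nondet t_2)}$ and $(\evexp{a}{t_1})\nondet(\evexp{a}{t_2})$ both induce the same set of finite traces, namely $\{\epsilon\}\cup\{a\cdot\sigma\mid\sigma\in\FinTraces(\progd,t_1)\cup\FinTraces(\progd,t_2)\}$. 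The inductive cases on $E$, namely $E'\nondet s$, $s\nondet E'$, and $\evexp{a}{E'}$, follow routinely because the outer constructor contributes only a fixed set of top-level transitions whose tails are unchanged by the induction hypothesis.

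The main delicate point, though largely bookkeeping, is R-Dist: unlike the other rules it is not a single $\epsilon$-transition, and the trace-set equality must be verified explicitly. Conceptually, its validity on the trace side mirrors the determinism of $\lts_L$ exploited on the HFL side in Lemma~\ref{lem:path-preserved-by-reduction}, where the equivalence $\Some{a}(\varphi_1\land\varphi_2)\equiv(\Some{a}\varphi_1)\land(\Some{a}\varphi_2)$ (valid because each state of $\lts_L$ has at most one $a$-successor) is precisely what makes the translation invariant under R-Dist.
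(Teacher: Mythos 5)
Your proof is correct and follows essentially the same route as the paper: a finite normalizing $\redvd{\progd}$-sequence from strong normalization, induction on its length with Lemma~\ref{lem:path-base} as the base case, and Lemma~\ref{lem:path-preserved-by-reduction} to transfer the HFL side across each step. The only difference is that you make explicit (and correctly verify, including the \textsc{R-Dist} case) the trace-preservation fact $\FinTraces(\progd,\term)=\FinTraces(\progd,\term')$ for $\term\redvd{\progd}\term'$, which the paper dispatches with the phrase ``by the definition of the reduction semantics.''
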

\begin{proof}
Since \(\prog=(\progd,\term_0)\) is recursion-free, there exists a finite, normalizing reduction sequence
\(\term_0 \redvds{\progd} \term \negredvd{\progd}\). We show the required property by induction on 
the length \(n\) of this reduction sequence. The base case follows immediately from Lemma~\ref{lem:path-base}. For the induction step (where \(n>0\)),
we have:
\(\term_0 \redv{\progd} \term_1\redvs{\progd} \term \).
By the induction hypothesis, 
\(\FinTraces(\progd,\term_1)\subseteq L\) if and only if 
\(\lts_L \models \HESf_{(\progd,\term_1),\Path}\).
Thus, by the definition of the reduction semantics and 
Lemma~\ref{lem:path-preserved-by-reduction},
\(\FinTraces(\progd,\term_0)\subseteq L\), if and only if 
\(\FinTraces(\progd,\term_1)\subseteq L\), if and only if 
\(\lts_L \models \HESf_{(\progd,\term_1),\Path}\), if and only if
\(\lts_L \models \HESf_{(\progd,\term_0),\Path}\).
\qed
\end{proof}
To prove Theorem~\ref{th:path}, we introduce the (slightly non-standard) notion of co-continuity,
which is dual of the continuity in Definition~\ref{def:continuity}.
\begin{definition}
\label{def:cocontinuity}
For an LTS 
\(\lts = (\St{}, \Act{}, \mathbin{\TR}, \st_\init)\) and a type \(\etyp\),
the set of \emph{co-continuous} elements \(\Cocont_{\lts,\etyp}\subseteq \D_{\lts,\etyp}\)
and the equivalence relation
\(\eqcocont{\lts,\etyp}\subseteq
\Cocont_{\lts,\etyp}\times \Cocont_{\lts,\etyp}\) 
are defined by induction on \(\etyp\) as follows.
\[
\begin{array}{l}
\Cocont_{\lts,\typProp} = \D_{\lts,\typProp}\\
\Cocont_{\lts,\INT} = \D_{\lts,\INT}\\
\Cocont_{\lts,\etyp\to\typ} = \set{f\in \D_{\lts,\etyp\to\typ}\mid 
  \forall x_1,x_2\in \Cocont_{\lts,\etyp}.(x_1\eqcocont{\lts,\etyp}x_2\imply
  f(x_1)\eqcocont{\lts,\typ} f(x_2))\\\qquad\qquad\qquad\qquad
\land 
\forall \set{y_i}_{i\in \omega}\in \Decseq{\Cocont_{\lts,\etyp}}. 
f(\Glb_{i\in\omega} y_i)\eqcocont{\lts,\typ}\Glb_{i\in\omega}f(y_i)
}.\\
\eqcocont{\lts,\typProp} \,=\, \set{(x,x)\mid x\in \Cocont_{\lts,\typProp}}\\
  \eqcocont{\lts,\INT} \,=\, \set{(x,x)\mid x\in \Cocont_{\lts,\INT}}\\
    \eqcocont{\lts,\etyp\to\typ} \,=\,\\\qquad
    \set{(f_1,f_2)\mid f_1,f_2\in \Cocont_{\lts,\etyp\to\typ}\land
      \forall x_1,x_2\in \Cocont_{\lts,\etyp}.x_1\eqcocont{\lts,\etyp}x_2
      \imply f_1(x_1)\eqcocont{\lts,\typ}f_2(x_2)}.
\end{array}
\]
Here, \(\Decseq{\Cocont_{\lts,\etyp}}\) denotes the set of decreasing infinite sequences
\(a_0\sqsupseteq a_1\sqsupseteq a_2 \sqsupseteq \cdots\) consisting of elements of 
\(\Cocont_{\lts,\etyp}\).
We just write \(\Eqcocont{}\) for \(\eqcocont{\lts,\etyp}\)  when \(\lts\) and \(\etyp\) are
clear from the context.
\end{definition}

The following lemma is analogous to Lemma~\ref{lem:continuity}.
\begin{lemma}[cocontinuity of fixpoint-free functions]
\label{lem:cocontinuity}
Let \(\lts\) be an LTS.
If \(\form\) is a closed, fixpoint-free HFL formula of type \(\typ\), then 
\(\Sem{\lts}{\form} \in \Cocont_{\lts,\typ}\).
\end{lemma}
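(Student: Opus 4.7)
The plan is to mirror the proof of Lemma~\ref{lem:continuity} almost verbatim, with suprema replaced by infima and increasing sequences by decreasing ones. Concretely, I would first introduce the environment-level notions $\cocsem{\HFLte} := \set{\HFLenv \in \sem{\HFLte} \mid \HFLenv(x) \in \Cocont_{\lts,\etyp}\text{ for every }x\COL\etyp\in \HFLte}$ and the relation $\eqcocont{\lts,\HFLte}$ on $\cocsem{\HFLte}\times\cocsem{\HFLte}$ defined pointwise. Then I would strengthen the lemma to the inductive statement: if $\form$ is fixpoint-free with $\HFLte \pHFL \form:\etyp$, then (i) $\HFLenv_1 \eqcocont{\lts,\HFLte} \HFLenv_2$ implies $\sem{\HFLte\pHFL\form:\etyp}(\HFLenv_1) \eqcocont{\lts,\etyp} \sem{\HFLte\pHFL\form:\etyp}(\HFLenv_2)$, and (ii) for every decreasing sequence $\HFLenv_0 \sqsupseteq \HFLenv_1 \sqsupseteq \cdots$ in $\cocsem{\HFLte}$, $\sem{\HFLte\pHFL\form:\etyp}(\Glb_{i\in\omega}\HFLenv_i) \eqcocont{\lts,\etyp} \Glb_{i\in\omega}\sem{\HFLte\pHFL\form:\etyp}(\HFLenv_i)$. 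The lemma is then the special case $\HFLte=\emptyset$. I would also record the dual of Lemma~\ref{lem:continuity-of-limit}, namely that $\Glb_{i\in\omega}x_i \eqcocont{\lts,\etyp}\Glb_{i\in\omega}y_i$ when $\set{x_i}$, $\set{y_i}$ are decreasing sequences in $\Cocont_{\lts,\etyp}$ with $x_i \eqcocont{\lts,\etyp} y_i$; the proof is identical to Lemma~\ref{lem:continuity-of-limit} with meets in place of joins.

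Next I would perform the induction on the derivation of $\HFLte\pHFL\form:\etyp$. The cases for \textsc{HT-Int}, \textsc{HT-Op}, \textsc{HT-True}, \textsc{HT-False}, \textsc{HT-Pred}, \textsc{HT-Or}, \textsc{HT-And}, \textsc{HT-NonDet-like} and \textsc{HT-Var} are immediate or by inspection, since at ground types $\typProp$ and $\INT$ the set $\Cocont_{\lts,\etyp}$ coincides with $\D_{\lts,\etyp}$ and $\eqcocont{\lts,\etyp}$ is the identity. The \textsc{HT-Abs} and \textsc{HT-App} cases go through by dualising the bookkeeping in the proof of Lemma~\ref{lem:continuity}: one verifies that the denoted function is monotone and preserves infima of decreasing chains of cocontinuous inputs, and then uses that cocontinuous functions commute with infima when composed pointwise.

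The genuinely non-routine cases are the modal ones, and by duality the roles of \textsc{HT-Some} and \textsc{HT-All} are swapped compared with Lemma~\ref{lem:continuity}. For \textsc{HT-All}, the clause
\[ \sem{\HFLte\pHFL\All{a}\form':\typProp}(\HFLenv) = \set{\st \mid \forall\st'.\,\st\Ar{a}\st'\Ra\st'\in\sem{\HFLte\pHFL\form':\typProp}(\HFLenv)} \]
commutes with arbitrary intersections directly, so (ii) follows without appeal to finiteness. The hard case is \textsc{HT-Some}: to show
\[ \set{\st\mid\exists\st'\in \Glb_{i\in\omega}\sem{\form'}(\HFLenv_i).\,\st\Ar{a}\st'} = \Glb_{i\in\omega}\set{\st\mid\exists\st'\in\sem{\form'}(\HFLenv_i).\,\st\Ar{a}\st'} \]
we need to swap $\exists\st'$ with an infinite descending intersection, and this is where the argument subtly leverages the finiteness of $\St$: the set $\set{\st'\mid\st\Ar{a}\st'}$ is finite, so if every $\sem{\form'}(\HFLenv_i)$ contains some such $\st'$, then by the pigeonhole principle (applied to the decreasing chain $\sem{\form'}(\HFLenv_0)\supseteq\sem{\form'}(\HFLenv_1)\supseteq\cdots$, which follows from monotonicity and the inductive hypothesis), some fixed $\st'$ lies in all of them, hence also in the intersection. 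This mirrors the use of finiteness in the \textsc{HT-All} case of Lemma~\ref{lem:continuity} but in the opposite direction.

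The main obstacle I anticipate is exactly this \textsc{HT-Some} case: the intersection argument requires care because the decreasing chain lives at a higher type in general, so one must first apply the inductive hypothesis to reduce to decreasing chains of subsets of $\St$ and only then use finiteness of $\St$ and monotonicity to extract a common witness. Apart from this, the proof is a mechanical dualisation of Lemma~\ref{lem:continuity}.
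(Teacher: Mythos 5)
Your proposal is correct and follows essentially the same route as the paper's proof: a strengthened induction dual to that of Lemma~\ref{lem:continuity}, with the only cases needing genuine attention being the modal ones, where \textsc{HT-All} now commutes with arbitrary intersections for free and \textsc{HT-Some} is the case requiring finiteness of \(\St\) — the paper likewise argues that the decreasing chain of non-empty sets \(S_i=\set{\st'\in\sem{\form'}(\HFLenv_i)\mid \st\Ar{a}\st'}\) inside the finite successor set must have non-empty intersection, which is exactly your common-witness argument.
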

\begin{proof}
The proof is almost the same as that of Lemma~\ref{lem:continuity}.
We write \(\cocsem{\HFLte}\) for the set of 
valuations: \(\set{\HFLenv\in \Sem{\lts}{\HFLte} \mid \HFLenv(f)\in \Cocont_{\lts,\etyp}\mbox{ for each
\(f\COL \etyp\in \HFLte\)}}\).
We show the following property by induction on
the derivation of \(\HFLte \p \form:\etyp\):
\begin{quote}
If \(\form\) is fixpoint-free and
\(\HFLte \p \form:\etyp\), then 
\begin{itemize}
\item[(i)] \(\Sem{\lts}{\HFLte\p \form:\etyp}(\HFLenv)\in \Cocont_{\lts,\etyp}\) for
every \(\HFLenv\in \cocsem{\HFLte}\); and
\item[(ii)] For any decreasing sequence of interpretations \(\HFLenv_0 \sqsupseteq \HFLenv_1\sqsupseteq \HFLenv_2 \sqsupseteq 
\cdots \) such that \(\HFLenv_i\in \csem{\HFLte}\) for each \(i\in\omega\), 
\(\Sem{\lts}{\HFLte\p \form:\etyp}(\Glb_{i\in\omega}\HFLenv_i) \Eqcocont{} \Glb_{i\in\omega}
\Sem{\lts}{\HFLte\p \form:\etyp}(\HFLenv_i)\).
\end{itemize}
\end{quote}
Then, the lemma will follow as a special case, where \(\HFLte = \emptyset\).
We perform case analysis on the last rule used for deriving \(\HFLte \p \form:\etyp\).
We discuss only two cases below, as the proof is almost the same as the corresponding proof for 
Lemma~\ref{lem:continuity}.

\begin{itemize}
\item Case \rn{HT-Some}:
  In this case, \(\form=\Some{\lab}\form'\) with \(\HFLte\p \form:\typProp\) and
  \(\etyp=\typProp\).
The condition (i) is trivial since \(\Cocont_{\lts,\typProp} = \D_{\lts,\typProp}\).
We also have the condition (ii) by:
\[
\begin{array}{l}
\Sem{\lts}{\HFLte\p \Some{\lab}{\form'}:\etyp}(\Glb_{i\in\omega}\HFLenv_i) 
=\set{\st\mid \exists \st'\in \Sem{\lts}{\HFLte\pHFL\form':\typProp}(\Glb_{i\in\omega}\HFLenv_i).\ \st\Ar{a}\st'}\\
= \set{\st\mid \exists \st'\in \Glb_{i\in\omega}(\Sem{\lts}{\HFLte\pHFL\form':\typProp}(\HFLenv_i)).\ \st\Ar{a}\st'}
\qquad\hfill \mbox{ (by induction hypothesis)}\\
= \Glb_{i\in\omega}
\set{\st\mid \exists \st'\in \Sem{\lts}{\HFLte\pHFL\form':\typProp}(\HFLenv_i).\ \st\Ar{a}\st'}
\hfill \mbox{ (*)}\\
= \Glb_{i\in\omega}\Sem{\lts}{\HFLte\pHFL\Some{\lab}\form':\typProp}(\HFLenv_i).
\end{array}
\]
The step (*) is obtained as follows. Suppose
\(\exists \st'\in \Glb_{i\in\omega}.\Sem{\lts}{\HFLte\pHFL\form':\typProp}(\HFLenv_i).\ \st\Ar{a}\st'\).
Since \(\Glb_{i\in\omega}.\Sem{\lts}{\HFLte\pHFL\form':\typProp}(\HFLenv_i)\subseteq
\Sem{\lts}{\HFLte\pHFL\form':\typProp}(\HFLenv_i)\) for every \(i\), 
we have \(\exists \st'\in \Sem{\lts}{\HFLte\pHFL\form':\typProp}(\HFLenv_i).\ \st\Ar{a}\st'\) for every
\(i\); hence we have 
\[\st\in 
\textstyle \GLB_{i\in\omega}
\set{\st\mid \exists \st'\in \Sem{\lts}{\HFLte\pHFL\form':\typProp}(\HFLenv_i).\ \st\Ar{a}\st'}.\]
Conversely, suppose 
\[\st\in 
\textstyle \GLB_{i\in\omega}
\set{\st\mid \exists \st'\in \Sem{\lts}{\HFLte\pHFL\form':\typProp}(\HFLenv_i).\ \st\Ar{a}\st'},\]
i.e.,  for every \(i\), \(S_i = \set{\st'\in \Sem{\lts}{\HFLte\pHFL\form':\typProp}(\HFLenv_i)\mid \st\Ar{a}\st'}\) 
must be non-empty.
Since the set \(\set{\st'\mid \st\Ar{a}\st'}\) is finite, 
and \(S_i\) decreases monotonically, \(\bigcap_{i\in\omega} S_i\) must be non-empty.
Thus, we have 
\[\exists \st'\in \textstyle\GLB_{i\in\omega}\Sem{\lts}{\HFLte\pHFL\form':\typProp}(\HFLenv_i).\ \st\Ar{a}\st',\]
as required.
\item Case \rn{HT-All}:
In this case, \(\form=\All{\lab}\form'\) with \(\HFLte\p \form:\typProp\) and \(\etyp=\typProp\).
The condition (i) is trivial.
The condition (ii) follows by:
\[
\begin{array}{l}
\Sem{\lts}{\HFLte\p \All{\lab}{\form'}:\etyp}(\textstyle\GLB_{i\in\omega}\HFLenv_i) 
=\set{\st\mid \forall \st'\in \St.\st\Ar{a}\st'\imply \st'\in \Sem{\lts}{\HFLte\pHFL\form':\typProp}(\Glb_{i\in\omega}\HFLenv_i)}\\
=\set{\st\mid \forall \st'\in \St.\st\Ar{a}\st'\imply \st'\in 
\Glb_{i\in\omega}(\Sem{\lts}{\HFLte\pHFL\form':\typProp}(\HFLenv_i))}\\
\qquad\hfill \mbox{ (by induction hypothesis)}\\
= \Glb_{i\in\omega}
\set{\st\mid \forall \st'\in \St.\st\Ar{a}\st'\imply \st'\in 
\Sem{\lts}{\HFLte\pHFL\form':\typProp}(\HFLenv_i)}\\
= \Glb_{i\in\omega}\Sem{\lts}{\HFLte\pHFL\All{\lab}\form':\typProp}(\HFLenv_i).
\end{array}
\]
\end{itemize}
\qed
\end{proof}

The following lemma states a standard property of
cocontinuous functions~\cite{SangiorgiBookIndCoInd}, which can be proved in the same manner as Lemma~\ref{lem:fixpoint}.
\begin{lemma}[fixpoint of cocontinuous functions]
\label{lem:gfp}
Let \(\lts\) be an LTS.
If \(f\in \Cocont_{\lts,\typ\to\typ}\), then 
\(\GFP{\typ}(f) \eqcocont{\lts,\typ} \bigsqcap_{i\in\omega} f^i(\top_{\lts,\typ})\).
\end{lemma}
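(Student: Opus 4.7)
The plan is to mirror the proof of Lemma~\ref{lem:fixpoint} by dualising every occurrence of \(\bot\), \(\sqcup\), continuity, and \(\Eqcont\) into \(\top\), \(\sqcap\), co-continuity, and \(\Eqcocont\), respectively. First I would observe that \(\top_{\lts,\typ}\) is co-continuous (immediate at \(\typProp\) and \(\INT\); at higher types the constant function \(\lambda\seq{x}.\St\) is trivially co-continuous) and that \(f\in\Cocont_{\lts,\typ\to\typ}\) maps \(\Cocont_{\lts,\typ}\) into itself (take \(x_1=x_2\) in the defining condition of \(\Cocont_{\lts,\typ\to\typ}\)), so each \(f^i(\top_{\lts,\typ})\) lies in \(\Cocont_{\lts,\typ}\). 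Monotonicity of \(f\) (built into \(\D_{\lts,\etyp\to\typ}\)) together with \(f(\top)\sqleq\top\) ensures that \(\set{f^i(\top_{\lts,\typ})}_{i\in\omega}\) is a decreasing chain in \(\Decseq{\Cocont_{\lts,\typ}}\).

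Next I would apply the co-continuity of \(f\) to this decreasing chain to obtain
\[
f\bigl(\Glb_{i\in\omega} f^i(\top_{\lts,\typ})\bigr)\;\Eqcocont\;\Glb_{i\in\omega} f^{i+1}(\top_{\lts,\typ})\;\Eqcocont\;\Glb_{i\in\omega} f^i(\top_{\lts,\typ}),
\]
where the second step uses that dropping the \(i=0\) term (which equals \(\top_{\lts,\typ}\), the greatest element) does not change the GLB. Thus \(\Glb_{i\in\omega} f^i(\top_{\lts,\typ})\) is a pre-fixpoint of \(f\) up to \(\Eqcocont\), and it is also the component-wise limit of the Kleene descent from \(\top\). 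The goal then is to lift this to the actual greatest fixpoint.

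The main step is a transfinite induction on \(\beta\geq\omega\) proving \(f^\beta(\top_{\lts,\typ})\;\Eqcocont\;\Glb_{i\in\omega} f^i(\top_{\lts,\typ})\), with \(f^\beta(\top_{\lts,\typ})\) defined in the standard way (\(f^{\beta'+1}(\top)=f(f^{\beta'}(\top))\) and \(f^\beta(\top)=\Glb_{\beta'<\beta}f^{\beta'}(\top)\) for limit \(\beta\)). The base case \(\beta=\omega\) is the definition; the successor case uses the induction hypothesis plus the displayed equation above applied through the co-continuity condition (\(f^{\beta'}(\top)\Eqcocont \Glb_{i}f^i(\top)\) entails \(f(f^{\beta'}(\top))\Eqcocont f(\Glb_i f^i(\top))\Eqcocont\Glb_i f^i(\top)\)). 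Since \(\GFP{\typ}(f)=f^\beta(\top_{\lts,\typ})\) for some ordinal \(\beta\) by classical fixpoint theory, the claim will follow.

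The hard part will be the limit ordinal case: given that \(f^{\beta'}(\top)\Eqcocont\Glb_i f^i(\top)\) for all \(\beta'<\beta\) with \(\beta'\geq\omega\), I need to conclude \(\Glb_{\beta'<\beta}f^{\beta'}(\top)\Eqcocont\Glb_i f^i(\top)\). This is the dual of the argument at the end of the proof of Lemma~\ref{lem:fixpoint}, and it requires the dual of Lemma~\ref{lem:continuity-of-limit}, namely that \(\Eqcocont\) on \(\Cocont_{\lts,\etyp}\) is preserved under \(\omega\)-indexed \(\Glb\)s of decreasing chains (which extends pointwise to arbitrary indexing because \(\Eqcocont\) at \(\typProp\) is plain equality). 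I would either prove this auxiliary fact by the same induction on \(\etyp\) used for Lemma~\ref{lem:continuity-of-limit} (swapping \(\sqcup\) for \(\sqcap\) and \(\Incseq{\cdot}\) for \(\Decseq{\cdot}\) throughout) or, more economically, perform the induction pointwise at \(\etyp=\etyp_1\to\cdots\to\etyp_k\to\typProp\) as at the end of the proof of Lemma~\ref{lem:fixpoint}, reducing to equality at \(\typProp\) where the limit argument is trivial.
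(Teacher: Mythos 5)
Your proposal is correct and matches the paper's approach exactly: the paper simply states that Lemma~\ref{lem:gfp} ``can be proved in the same manner as Lemma~\ref{lem:fixpoint},'' and you have carried out precisely that dualization (swapping \(\bot\), \(\sqcup\), \(\Eqcont\) for \(\top\), \(\sqcap\), \(\Eqcocont\)), including the transfinite induction and the pointwise treatment of the limit-ordinal case that the paper uses in its proof of Lemma~\ref{lem:fixpoint}.
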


\begin{proof}[Proof of Theorem~\ref{th:path}]
The result follows by:
\[
\begin{array}{l}
\lts_L \not\models \HESf_{\prog,\Path}\\
\IFF  
q_0 \not\in \Sem{\lts_L}{\trPath{\term}}(\set{(f_1,\ldots,f_n)\mapsto\GFP{}(\lambda (f_1,\ldots,f_n).(\lambda \seq{x_1}.\Sem{\lts_L}{\trPath{\term_1}},\ldots,
\lambda \seq{x_n}.\Sem{\lts_L}{\trPath{\term_n}}))})\\
\hfill \mbox{(Beki\'{c} property)}\\
\IFF  
q_0 \not\in \Sem{\lts_L}{\trPath{\term}}(\set{(f_1,\ldots,f_n)\mapsto\\
\qquad\bigsqcap_{i\in\omega}
(\lambda (f_1,\ldots,f_n).(\lambda \seq{x_1}.\Sem{\lts_L}{\trPath{\term_1}},\ldots,\lambda \seq{x_n}.\Sem{\lts_L}{\trPath{\term_n}}))^i
 (\lambda \seq{x_1}.Q,\ldots,\lambda \seq{x_n}.Q)})\\
 \hfill \mbox{(Lemmas~\ref{lem:cocontinuity} and \ref{lem:gfp})}\\
\IFF  
q_0 \not\in \Sem{\lts_L}{\trPath{\term}}(\set{(f_1,\ldots,f_n)\mapsto\\\qquad
(\lambda (f_1,\ldots,f_n).(\lambda \seq{x_1}.\Sem{\lts_L}{\trPath{\term_1}},\ldots,\lambda \seq{x_n}.\Sem{\lts_L}{\trPath{\term_n}}))^i
 (\lambda \seq{x_1}.Q,\ldots,\lambda \seq{x_n}.Q)})\\\qquad\qquad
\mbox{ for some $i$}\hfill \mbox{(Lemma~\ref{lem:cocontinuity})}\\
\IFF  
\lts_L\not\models  \HESf_{\prog^{(i)},\Path} 
\mbox{ for some $i$}\\
\IFF  \Traces(\prog^{(i)}) \not\subseteq L
\mbox{ for some $i$}\hfill \mbox{(Lemma~\ref{lem:path-for-recfree-program})}\\
\IFF  \Traces(\prog) \not\subseteq L
\end{array}
\]
\qed
\end{proof}


\section{Game-based characterization of HES}
\label{sec:game-characterization}
Fix an LTS \( \lts \) and let \( \HES = (X_1^{\typ_1} =_{\munu_1} \form_1;  \cdots; X_n^{\typ_n} =_{\munu_n} \form_n) \).
The goal of this section is to construct a parity game \( \SGame_{\lts, \HES} \) characterizing the semantic interpretation of the HES \( \HES \) (or equivalently \( (\HES, X_1) \)) over the LTS \( \lts \).
The game-based characterization will be used to prove some results in Section~\ref{sec:liveness}.

\subsection{Preliminary: Parity Game}
A \emph{parity game} \( \mathcal{G} \) is a tuple \( (\Nodes_P, \Nodes_O, \Edges, \Omega) \) where
\begin{itemize}
\item \( \Nodes_P \) and \( \Nodes_O \) are disjoint sets of \emph{Proponent and Opponent nodes}, respectively,
\item \( \Edges \subseteq (\Nodes_P \cup \Nodes_O) \times (\Nodes_P \cup \Nodes_O) \) is the set of edges, and
\item \( \Omega : (\Nodes_P \cup \Nodes_O) \to \Nat \) is a \emph{priority function} whose image is bounded.
\end{itemize}
We write \( \Nodes \) for \( \Nodes_P \cup \Nodes_O \).

A \emph{play} of a parity game is a (finite or infinite) sequence \( v_1 v_2 \dots \) of nodes in \( V \) such that \( (v_i, v_{i+1}) \in E \) for every \( i \).
We write \( \cdot \) for the concatenation operation.
An infinite play \( v_1 v_2 \dots \) is said to satisfy the \emph{parity condition} if \( \max \INF (\Omega(v_1) \Omega(v_2) \dots) \) is even,
where \(\INF(m_1m_2\cdots)\) is the set of numbers that occur infinitely often in \(m_1m_2\cdots\).
A play is \emph{maximal} if either
\begin{itemize}
\item it is a finite sequence \(v_1\cdots v_n\) and the last node \(v_n\) has no successor (i.e., \(\set{v\mid (v_n,v)\in E}=\emptyset\)), or
\item it is infinite.
\end{itemize}
A maximal play is \emph{P-winning} (or simply \emph{winning}) if either
\begin{itemize}
\item it is finite and the last node is of Opponent, or
\item it is infinite and satisfies the parity condition.
\end{itemize}

Let \( \Strategy : \Nodes^* \Nodes_P \rightharpoonup V \) be a partial function that respects \( \Edges \) in the sense that \( (v_k, \Strategy(v_1 \dots v_k)) \in \Edges \) (if \(\Strategy(v_1 \dots v_k)\) is defined).
A play \( v_1 v_2 \dots v_k \) is said to \emph{conform with \( \Strategy \)} if, for every \( 1 \le i < k \) with \( v_i \in \Nodes_P \), \( \Strategy(v_1 \dots v_i) \) is defined and \( \Strategy(v_1 \dots v_i) = v_{i+1} \).
An infinite play conforms with \( \Strategy \) if so does every finite prefix of the play.
The partial function \( \Strategy \) is a \emph{P-strategy on \( \Nodes_0 \subseteq \Nodes \)} (or simply a \emph{strategy on \( \Nodes_0 \)}) if it is defined on every play that conforms with \( \Strategy \), starts from a node in \( \Nodes_0 \) and ends with a node in \( \Nodes_P \).
A strategy is \emph{P-winning on \( \Nodes_0 \subseteq \Nodes \)} (or simply \emph{winning on \( \Nodes_0 \)}) if every maximal play that conforms with the strategy and starts from \( \Nodes_0 \) is P-winning.
We say that \emph{Proponent wins the game \( \mathcal{G} \) on \( \Nodes_0 \subseteq \Nodes \)} if there exists a P-winning strategy of \( \mathcal{G} \) on \( \Nodes_0 \).
An \emph{O-strategy} and an \emph{O-winning strategy} is defined similarly.

A strategy \( \Strategy \) is \emph{memoryless} if \( \Strategy(v_1 \dots v_k) = f(v_k) \) for some \( f : \Nodes_P \to \Nodes \).

We shall consider only games of limited shape, which we call \emph{bipartite games}.
A game is \emph{bipartite} if \( \Edges \subseteq (\Nodes_P \times \Nodes_O) \cup (\Nodes_O \times \Nodes_P) \) and \( \Omega(v_O) = 0 \) for every \( v_O \in \Nodes_O \).

\emph{Parity progress measure}~\cite{Jurdzinski00} is a useful tool to show that a strategy is winning.
We give a modified version of the definition, applicable only to bipartite games.
\begin{definition}[Parity progress measure]
  Let \( \gamma \) be an ordinal number.
  Given \( (\beta_1, \dots, \beta_n), (\beta'_1, \dots, \beta'_n) \in \gamma^n \), we write \( (\beta_1, \dots, \beta_n) \ge_j (\beta'_1, \dots, \beta'_n) \) if \( (\beta_1, \dots, \beta_{n-j+1}) \ge (\beta_1', \dots, \beta'_{n-j+1}) \) by the lexicographic ordering.
  The strict inequality \( >_j \) (\( j = 1, \dots, n \)) is defined by an analogous way.

  Let \( \mathcal{G} \) be a bipartite parity game and \( \Strategy \) be a strategy of \( \mathcal{G} \) on \( \Nodes_0 \).
  Let \( n \) be the maximum priority in \( \mathcal{G} \).
  A (partial) mapping \( \varpi : \Nodes_O \rightharpoonup \gamma^n \) is a \emph{parity progress measure of \( \Strategy \) on \( \Nodes_0 \)} if it satisfies the following condition:
  \begin{quote}
    For every finite play \( \tilde{v}\cdot v_O \cdot v_P \cdot v_O' \) (\( v_O, v_O' \in \Nodes_O \) and \( v_P \in \Nodes_P \)) that starts from \( V_0 \) and conforms with the strategy \( \Strategy \), both \( \varpi(v_O) \) and \( \varpi(v_O') \) are defined and \( \varpi(v_O) \ge_{\Omega(v_P)} \varpi(v_O') \).
    Furthermore if \( \Omega(v_P) \) is odd, then \( \varpi(v_O) >_{\Omega(v_P)} \varpi(v_O') \).
  \end{quote}
\end{definition}

\begin{lemma}[\citeN{Jurdzinski00}]
  Let \( \mathcal{G} \) be a bipartite parity game and \( \Strategy \) be a strategy of \( \mathcal{G} \) on \( \Nodes_0 \subseteq \Nodes \).
  If there exists a parity progress measure of \( \Strategy \) on \( \Nodes_0 \), then \( \Strategy \) is a winning strategy on \( \Nodes_0 \).
\end{lemma}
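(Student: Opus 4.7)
The plan is to show that every maximal play that conforms with $\Strategy$ and starts from a node in $\Nodes_0$ is P-winning, which requires handling both the finite and infinite case. The finite case is immediate from the definition of a strategy on $\Nodes_0$: since $\Strategy$ is defined on every finite play that conforms with it, starts from $\Nodes_0$, and ends at a Proponent node, such a play can always be extended by $\Strategy$. Hence a maximal conforming play, if finite, cannot end at a Proponent node, so its last node must be an Opponent node with no outgoing edge, which is exactly the winning condition for finite plays.

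For the infinite case, I would argue by contradiction: suppose some infinite play $v_1 v_2 \cdots$ conforming with $\Strategy$ has an odd $m$ as the maximum priority occurring infinitely often. Since $\mathcal{G}$ is bipartite, Opponent and Proponent nodes alternate, and all positive priorities live on Proponent nodes (as $\Omega(v_O)=0$). Thus $m$ occurs infinitely often as the priority of a Proponent node, and from some position $N$ onward every priority is at most $m$. Let $v_{i_1}, v_{i_2}, \ldots$ (with $i_1 \geq N$) be the subsequence of Opponent nodes; by the progress measure hypothesis applied to each fragment $v_{i_k} \cdot v_{i_k+1} \cdot v_{i_{k+1}}$, we have $\varpi(v_{i_k}) \ge_{\Omega(v_{i_k+1})} \varpi(v_{i_{k+1}})$, with strict inequality whenever $\Omega(v_{i_k+1})$ is odd.

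The key observation is then a monotonicity property of the truncation: for any $j \leq m$, if $\varpi(v_{i_k}) \ge_j \varpi(v_{i_{k+1}})$, then comparing only the first $n-m+1$ components (a prefix of the first $n-j+1$ components), the lexicographic inequality is preserved. Similarly, $>_m$ gives strict lex-inequality on the first $n-m+1$ components. Hence the sequence of truncations to $\gamma^{n-m+1}$ of $\varpi(v_{i_k})$ is non-increasing in the lexicographic order, and strictly decreasing at the infinitely many indices $k$ for which $\Omega(v_{i_k+1}) = m$. This yields an infinite strictly decreasing sequence in the well-order $\gamma^{n-m+1}$, a contradiction.

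The main obstacle I expect is keeping the indexing and the truncation property correct: the progress measure is only defined on Opponent nodes, so I need to make sure each step of the argument bridges two consecutive Opponent positions, and the lemma that $\ge_j$ for $j\leq m$ restricts to $\ge$ on the first $n-m+1$ components should be isolated as an auxiliary combinatorial fact before the main contradiction is deployed. Once this is cleanly stated, the rest is a straightforward unfolding of the definitions.
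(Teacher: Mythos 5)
Your proof is correct and follows essentially the same route as the paper's: a contradiction argument that extracts, from an infinite conforming play violating the parity condition, an infinite strictly decreasing chain in the well-ordered truncation of the progress measure, using the fact that $j_1 \le j_2$ implies $({\ge_{j_1}}) \subseteq ({\ge_{j_2}})$. You additionally spell out the finite-maximal-play case, which the paper's proof leaves implicit; that is a harmless (indeed welcome) addition.
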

\begin{myproof}
  Let \( \varpi \) be a parity progress measure of \( \Strategy \) on \( \Nodes_0 \).
  We prove by contradiction.

  Assume an infinite play \( v_1 v_2 \dots \) that conforms \( \Strategy \), starts from \( \Nodes_0 \) and violates the parity condition.
  Assume that \( v_1 \in \Nodes_O \); the other case can be proved similarly.
  Since \( \mathcal{G} \) is bipartite, \( v_{i} \in \Nodes_O \) if and only if \( i \) is odd.
  Furthermore \( \Omega(v_i) = 0 \) for every odd \( i \).
  
  Let \( \ell = \max \INF (\Omega(v_1) \Omega(v_2) \dots) \).
  Then there exists an even number \( k \) such that, for every even number \( i \ge k \), we have \( \Omega(v_i) \le \ell \).
  By definition, \( \varpi(v_{i-1}) \ge_{\Omega(v_{i})} \varpi(v_{i+1}) \) for every even \( i \).
  Since \( j_1 \le j_2 \) implies \( ({\ge_{j_1}}) \subseteq ({\ge_{j_2}}) \), we have \( \varpi(v_{i-1}) \ge_{\ell} \varpi(v_{i+1}) \) for every even \( i \ge k \).
  Hence we have an infinite decreasing chain \( \varpi(v_{k-1}) \ge_{\ell} \varpi(v_{k+1}) \ge_{\ell} \varpi(v_{k+3}) \ge_{\ell} \cdots \).
  Furthermore this inequality is strict if \( \Omega(v_i) = \ell \) because \( \ell \) is odd.
  Since \( \{ i \mid i \ge k, \Omega(v_i) = \ell \} \) is infinite, we have an infinite strictly decreasing chain, a contradiction.
\end{myproof}

\subsection{Preliminary: Complete-Prime Algebraic Lattice}

\begin{definition}
  Let \( (A, \le) \) be a complete lattice.
  For \( U \subseteq A \), we write \( \bigvee U \) for the least upper bound of \( U \) in \( A \).
  An element \( p \in A \) is a \emph{complete prime} if (1) \( p \neq \bot \), and (2) \( p \le \bigvee U \) implies \( p \le x \) for some \( x \in U \) (for every \( U \subseteq A \)).
  A complete lattice is \emph{complete-prime algebraic}~\cite{DBLP:journals/tcs/Winskel09} if \( x = \bigvee \{\, p \le x \mid \textrm{\( p \) : complete prime} \,\} \) for every \( x \).
\end{definition}

The following is a basic property about complete-prime algebraic lattices, which will be used later.
\begin{lemma}
\label{lem:complete-prime-leq}
Let \((A,\leq)\) be a complete prime algebraic lattice, and \(x,y\in A\). Then the followings are equivalent.
\begin{enumerate}[(i)]
\item \(x\leq y\).
\item \(p\leq x\) implies \(p\leq y\) for every complete prime \(p\).
\item \(p\nleq y\) implies \(p\nleq x\) for every complete prime \(p\).
\end{enumerate}
\end{lemma}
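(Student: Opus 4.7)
The plan is to establish the equivalences by showing $(i) \Rightarrow (ii) \Leftrightarrow (iii)$ and then $(ii) \Rightarrow (i)$, where only the last implication uses the complete-prime algebraicity hypothesis.

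First, $(i) \Rightarrow (ii)$ is immediate from transitivity of $\leq$: if $x \leq y$ and $p \leq x$ then $p \leq y$. The equivalence $(ii) \Leftrightarrow (iii)$ is just contrapositive reasoning applied to each complete prime $p$, so nothing substantive is needed there.

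The one genuinely nontrivial step is $(ii) \Rightarrow (i)$, and this is exactly where complete-prime algebraicity is used. The approach is to invoke the representation $x = \bigvee \{\, p \leq x \mid p \text{ complete prime} \,\}$. Under hypothesis $(ii)$, every complete prime $p$ with $p \leq x$ also satisfies $p \leq y$, so $y$ is an upper bound of the set $\{\, p \leq x \mid p \text{ complete prime} \,\}$. Since $x$ is the least upper bound of this set, we obtain $x \leq y$.

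I do not anticipate any real obstacle here: the lemma is essentially the observation that in a complete-prime algebraic lattice, complete primes are \emph{order-separating}, and the proof is a direct unpacking of the definition. The only care needed is in the direction $(ii) \Rightarrow (i)$, to ensure that the representation of $x$ as a join of complete primes below it is invoked on the correct side of the inequality; the other direction $(i) \Rightarrow (ii)$ would not even need the algebraicity assumption.
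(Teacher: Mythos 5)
Your proof is correct and follows essentially the same route as the paper's: transitivity for (i)$\Rightarrow$(ii), contraposition for (ii)$\Leftrightarrow$(iii), and the complete-prime representation of $x$ for (ii)$\Rightarrow$(i). The only cosmetic difference is that the paper compares the joins of the two prime sets via set inclusion, whereas you observe directly that $y$ is an upper bound of the primes below $x$; both are the same argument.
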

\begin{myproof}
(i) implies (ii): Trivial from the fact \(p\leq x\leq y\) implies \(p\leq y\).
(ii) implies (i): Assume (ii). Then, \(
\set{p\mid p\leq x, p\mbox{: complete prime}} \subseteq 
\set{p\mid p\leq y, p\mbox{: complete prime}}\). Since
\(A\) is complete-prime algebraic, \(x = \pLub 
\set{p\mid p\leq x, p\mbox{: complete prime}} \leq \pLub\set{p\mid p\leq y, p\mbox{: complete prime}}=y\).
(iii) is just a contraposition of (ii).
\end{myproof}
Let \( (A, \le) \) be a complete lattice and \( f : A \to A \) be a monotone function. We write \(\bot_A\) for the least element of \(A\).
Then \( A \) has the least fixed point of \( f \), which can be computed by iterative application of \( f \) to \( \bot_A \) as follows.
Let \( \gamma \) be an ordinal number greater than the cardinality of \( A \).
We define a family \( \{ a_i \}_{i < \gamma} \) of elements in \( A \) by:
\begin{align*}
  a_0 &:= \bot_A &
  a_{\beta + 1} &:= f(a_{\beta}) &
  a_{\beta} &:= \bigvee_{\beta' < \beta} a_{\beta'} \quad\mbox{(if \( \beta \) is a limit ordinal).}
\end{align*}
Then \( a_{\gamma} \) is the least fixed point of \( f \).
We shall write \( a_{\beta} \) as \( f^{\beta}(\bot_A) \).
\begin{lemma}\label{lem:appx:game-semantics:complete-prime-successor}
  Let \( (A, \le) \) be a complete lattice and \( f : A \to A \) be a monotone function.
  For every complete prime \( p \in A \), the minimum ordinal \( \beta \) such that \( p \le f^{\beta}(\bot) \) is a successor ordinal.
\end{lemma}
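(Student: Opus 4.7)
The plan is to rule out the two other possibilities for $\beta$ (namely, $\beta = 0$ and $\beta$ a limit ordinal) and conclude by elimination that $\beta$ must be a successor ordinal. The argument is short and uses only the definition of complete prime together with the construction of the transfinite iteration.

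First I would observe that $\beta \neq 0$. Indeed, $f^{0}(\bot) = \bot$, and if $p \le \bot$ then $p = \bot$, contradicting clause (1) of the definition of complete prime.

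Next I would suppose for contradiction that $\beta$ is a limit ordinal. By definition of the transfinite iteration,
\[
f^{\beta}(\bot) \;=\; \bigvee_{\beta' < \beta} f^{\beta'}(\bot).
\]
Since $p$ is a complete prime and $p \le \bigvee_{\beta' < \beta} f^{\beta'}(\bot)$, clause (2) of the definition of complete prime (applied to the set $U = \{\, f^{\beta'}(\bot) \mid \beta' < \beta \,\}$) yields some $\beta' < \beta$ with $p \le f^{\beta'}(\bot)$. This contradicts the minimality of $\beta$.

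Since $\beta$ is neither zero nor a limit ordinal, it must be a successor ordinal, which is the desired conclusion. I do not anticipate any real obstacle here: the whole proof is essentially a direct unpacking of the definition of a complete prime against the construction of $f^{\beta}(\bot)$, so both steps are routine once phrased correctly.
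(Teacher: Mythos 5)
Your proof is correct and takes essentially the same route as the paper: assume $\beta$ is a limit ordinal, note $f^{\beta}(\bot)=\bigvee_{\beta'<\beta}f^{\beta'}(\bot)$, and apply the complete-prime property to contradict minimality. The only difference is that you explicitly rule out $\beta=0$ (via clause (1) of the definition), a case the paper leaves implicit; this is a harmless and slightly more careful addition.
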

\begin{myproof}
  Assume that the minimum ordinal \( \beta \) is a limit ordinal.
  By definition, \( p \le \bigvee_{\beta' < \beta} f^{\beta'}(\bot) \).
  Since \( p \) is a complete prime, there exists \( \beta' < \beta \) such that \( p \le f^{\beta'}(\bot) \).
  Hence \( \beta \) is not the minimum, a contradiction.
\end{myproof}

\tk{Cite something.  Perhaps a paper by Winskel studying complete-prime algebraic lattices or Terui RTA 2012 or Salvati ICALP 2012.}
\begin{lemma}
\label{lem:complete-prime}
  Let \( \lts \) be an LTS.
  Then \( (\D_{\lts, \typ}, \sqleq_{\lts,\typ}) \) is complete-prime algebraic for every \( \typ \).
\end{lemma}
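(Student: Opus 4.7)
The proof proceeds by induction on the type $\typ$.

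For the base case $\typ = \typProp$, the lattice is the powerset $(2^\St, \subseteq)$. I would verify that the complete primes are exactly the singletons $\{\st\}$ for $\st \in \St$: if $\{\st\} \subseteq \bigcup_{X \in \mathcal{F}} X$ then $\st \in X$ for some $X \in \mathcal{F}$, hence $\{\st\} \subseteq X$. Since every $Y \in 2^\St$ equals $\bigcup_{\st \in Y} \{\st\}$, the lattice is complete-prime algebraic.

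For the inductive case $\typ = \etyp \to \typ'$, the plan is to describe the complete primes explicitly as \emph{step functions}. For each $a \in \D_{\lts,\etyp}$ and each complete prime $p$ of $\D_{\lts,\typ'}$ (which exists by the induction hypothesis, or is an integer in the anomalous case), define
\[
  \mathrm{step}_{a,p}(x) \;=\; \begin{cases} p & \text{if } a \sqleq_{\lts,\etyp} x, \\ \bot_{\lts,\typ'} & \text{otherwise.} \end{cases}
\]
This is monotone (hence lies in $\D_{\lts,\etyp \to \typ'}$), and the uniform definition works whether $\etyp$ is an integer type (where $\sqleq_{\lts,\INT}$ is the identity) or a proper type. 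I would then show that each $\mathrm{step}_{a,p}$ is a complete prime: if $\mathrm{step}_{a,p} \sqleq \bigsqcup \mathcal{F}$, evaluating at $a$ yields $p \sqleq \bigsqcup_{f \in \mathcal{F}} f(a)$; since $p$ is a complete prime in $\D_{\lts,\typ'}$, some $f \in \mathcal{F}$ satisfies $p \sqleq f(a)$, and monotonicity of $f$ immediately gives $\mathrm{step}_{a,p} \sqleq f$.

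It remains to show that every $f \in \D_{\lts,\etyp \to \typ'}$ equals the least upper bound of the step functions below it, i.e.\ $f = \bigsqcup \{\, \mathrm{step}_{a,p} \mid p \sqleq_{\lts,\typ'} f(a) \,\}$. The inequality $\sqsupseteq$ follows because $\mathrm{step}_{a,p} \sqleq f$ whenever $p \sqleq f(a)$ (as $\mathrm{step}_{a,p}(x) = p \sqleq f(a) \sqleq f(x)$ for $a \sqleq x$ by monotonicity, and $\mathrm{step}_{a,p}(x) = \bot$ otherwise). For the converse, I evaluate the right-hand side at any $x \in \D_{\lts,\etyp}$: choosing $a = x$ and letting $p$ range over the complete primes of $\D_{\lts,\typ'}$ with $p \sqleq f(x)$ gives a lower bound, and the induction hypothesis tells us these complete primes already cover $f(x)$. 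The main subtlety (and the step I would treat most carefully) is the second half of this last argument, since monotone functions need not preserve joins, so the naive attempt to decompose $f$ using only complete primes of $\D_{\lts,\etyp}$ would fail; allowing $a$ to range over all of $\D_{\lts,\etyp}$ circumvents this difficulty and also yields a uniform treatment of the $\INT$ case.
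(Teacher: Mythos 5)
Your proof is correct and follows essentially the same route as the paper: the paper's (much terser) argument uses exactly the same step functions $f_{d,p}$ indexed by an arbitrary $d\in\D_{\lts,\etyp}$ and a complete prime $p$ of the codomain, and merely asserts the two facts you verify in detail. The only cosmetic quibble is your parenthetical about $p$ possibly ``being an integer'': since the grammar forces the codomain of $\etyp\to\typ'$ to be a proper type (never $\INT$), that case never arises for $p$ — only the domain $\etyp$ can be $\INT$, which you handle correctly.
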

\begin{myproof}
  If \( \typ = \typProp \), then \( \D_{\lts, \typ} = 2^{\St} \) is ordered by the set inclusion and the complete primes are singleton sets \( \{\, q \,\} \) (\( q \in \St \)).
  For function types, \( \D_{\lts, \etyp \to \typ} \) is the set of monotone functions ordered by the pointwise ordering.
  Given an element \( d \in \D_{\lts, \etyp} \) and a complete prime \( p \in \D_{\lts, \typ} \), consider the function \( f_{d, p} \) defined by
  \[
  f_{d, p}(x) = \begin{cases} p & \mbox{(if \( d \le x \))} \\ \bot & \mbox{(otherwise).} \end{cases}
  \]
  These functions are complete primes of \( \D_{\lts, \etyp \to \typ} \).
  It is not difficult to see that every element is the least upper bound of a subset of complete primes.
\end{myproof}

\subsection{Parity Game for HES}
Let \( \HES = (X_1^{\typ_1} =_{\munu_1} \form_1;  \cdots; X_n^{\typ_n} =_{\munu_n} \form_n) \) be an HES (with ``the main formula'' \( X_1 \)) and \( \lts \) be an LTS.
\begin{definition}[Parity game for HES]
  The parity game \( \SGame_{\lts, \HES} \) is defined by the following data:
  \begin{align*}
    \Nodes_{P} :=\;& \{\, (p, X_i) \mid p \in \D_{\lts, \typ_i}, \;\textrm{\( p \): complete prime} \,\}
    \\
    \Nodes_{O} :=\;& \{\, (x_1, \dots, x_n) \mid x_1 \in \D_{\lts, \typ_1}, \dots, x_n \in \D_{\lts, \typ_n} \,\}
    \\
    \Edges
    :=\;& \{\, ((p, X_i), (x_1, \dots, x_n)) \mid p \sqleq \sem{\form_i}([X_1 \mapsto x_1, \dots, X_n \mapsto x_n]) \,\}
    \\
    \cup\;& \{\, ((x_1, \dots, x_n), (p, X_i)) \mid p \sqleq x_i, \;\textrm{\( p \): complete prime} \,\}.
  \end{align*}
  The priority of the opponent node is \( 0 \); the priority of node \( (p, X_i) \) is \( 2(n-i) \) if \( \munu_i = \nu \) and \( 2(n-i)+1 \) if \( \munu_i = \mu \).
\end{definition}

The parity game defined above is analogous to the typability game for HES defined by \citeN{Kobayashi17POPL}.
A position \((p,X_i)\in V_P\) represents the state where Proponent tries to show that
\(p \sqleq \sem{(\HES,X_i)}\). To show \(p \sqleq \sem{(\HES,X_i)}\), Proponent picks a valuation
\(\rho = [X_1\mapsto x_1,\ldots,X_n\mapsto x_n]\) such that \(p \sqleq \sem{\form_i}(\rho)\) indeed holds.
A position \((x_1,\ldots,x_n)\in V_O\) represents such a valuation, and Opponent challenges Proponent's assumption
that \(\rho = [X_1\mapsto x_1,\ldots,X_n\mapsto x_n]\) is a valid valuation, i.e.,
\(x_i\sqleq \sem{(\HES,X_i)}\) holds for each \(i\). To this end, Opponent chooses \(i\), picks
a complete prime \(p\) such that \(p\sqleq x_i\), and asks why \(p\sqleq \sem{(\HES,X_i)}\) holds,
as represented by the edge \(((x_1, \dots, x_n), (p, X_i)) \). (Note that Lemma~\ref{lem:complete-prime}
implies that \(x_i\sqleq \sem{(\HES,X_i)}\) if and only if 
\(p\sqleq \sem{(\HES,X_i)}\) for every complete prime such that \(p\sqleq x_i\); therefore, it is sufficient
for Opponent to consider only complete primes.)
As in the typability game for HES defined by \citeN{Kobayashi17POPL}, a play may continue indefinitely,
in which case, the winner is determined by the largest priority of \((p,X_i)\) visited infinitely often.

The goal of this section is to show that \( p \sqleq \sem{\HES}_{\lts} \) if and only if Proponent wins \( \SGame_{\lts,\HES} \) on \( (p, X_1) \) (Theorem~\ref{thm:appx:game-semantics:game-and-interpretation} given at the end of
this subsection).

We start from an alternative description of the interpretation of an HES.
Let us define the family \( \{ \vartheta_i \}_{i = 1, \dots, n} \) of substitutions by induction on \( n - i \) as follows:
\begin{align*}
  \vartheta_n &:= \mbox{(the identity substitution)} \\
  \vartheta_{i} &:= [X_{i+1} \mapsto \munu_{i+1} X_{i+1}. (\vartheta_{i+1}\form_{i+1} )] \circ \vartheta_{i+1}
  &\mbox{(if \( i < n \)).}
\end{align*}
The family \( \{ \psi_i \}_{i = 1, \dots, n} \) of formulas is defined by
\[
\psi_i := \munu_i X_i. ( \vartheta_{i}\form_i).
\]
Then \( \vartheta_i = [X_{i+1} \mapsto \psi_{i+1}] \circ \vartheta_{i+1} \).
The substitution \( \vartheta_i \) maps \( X_{i+1}, \dots, X_n \) to closed formulas and thus \( \tkchanged{\vartheta_i \form_j} \) (\( j \le i \)) is a formula with free variables \( X_1, \dots, X_{i} \).
The formula \( \psi_i \) has free variables \( X_1, \dots, X_{i-1} \).
In particular, \( \psi_1 = \toHFL(\HES, X_1) \).

Given \( 1 \le i \le j \le n \) and \( d_k \in \D_{\lts,\typ_k} \) (\(1 \le k \le n\)), let us define
\begin{align*}
f_{i,j}(d_1, \dots, d_{j}) &:= \sem{\vartheta_j\form_i }([X_1 \mapsto d_1, \dots, X_{j} \mapsto d_{j}]) \\
g_i(d_1, \dots, d_{i-1}) &:= \sem{\psi_i}([X_1 \mapsto d_1, \dots, X_{i-1} \mapsto d_{i-1}]).
\end{align*}
\begin{lemma}\label{lem:appx:game-semantics:alg-interpret}
  We have
  \begin{align*}
    f_{i,n}(d_1, \dots, d_n) &= \sem{\form_i}([X_1 \mapsto d_1, \dots, X_n \mapsto d_n]) \\
    f_{i,j}(d_1, \dots, d_j) &= f_{i,j+1}(d_1, \dots, d_{j}, g_{j+1}(d_1, \dots, d_{j})) & \mbox{(if \( j < n \))\hphantom{.}} \\
    g_{i}(d_1, \dots, d_{i-1}) &= \munu_i y. f_{i,i}(d_1, \dots, d_{i-1}, y).\\
g_1(\,) &= \sem{\HES}.
  \end{align*}
\end{lemma}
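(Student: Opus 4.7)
The plan is to verify the four equations in sequence by direct calculation from the definitions, with the standard substitution lemma for HFL semantics, $\sem{[\form'/X]\form}(\rho) = \sem{\form}(\rho[X \mapsto \sem{\form'}(\rho)])$, as the main technical tool. A preliminary observation, to be established by a side induction on $n-i$, is the free-variable invariant: $\vartheta_i$ is a substitution for $X_{i+1},\ldots,X_n$ whose range consists of formulas with free variables in $\{X_1,\ldots,X_i\}$, and consequently $\psi_i = \munu_i X_i.\vartheta_i\form_i$ has free variables in $\{X_1,\ldots,X_{i-1}\}$. In particular $\psi_1$ is closed.

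Equation (1) is immediate, since $\vartheta_n$ is the identity substitution and so $\vartheta_n\form_i = \form_i$. For equation (2), I would first rewrite $\vartheta_j\form_i = [X_{j+1}\mapsto\psi_{j+1}](\vartheta_{j+1}\form_i)$ using the recursive definition of $\vartheta_j$. Applying the substitution lemma with $\rho = [X_1\mapsto d_1,\ldots,X_j\mapsto d_j]$ yields
$$\sem{\vartheta_j\form_i}(\rho) \;=\; \sem{\vartheta_{j+1}\form_i}\bigl(\rho[X_{j+1}\mapsto\sem{\psi_{j+1}}(\rho)]\bigr),$$
and the free-variable invariant lets me identify $\sem{\psi_{j+1}}(\rho)$ with $g_{j+1}(d_1,\ldots,d_j)$, so the right-hand side is exactly $f_{i,j+1}(d_1,\ldots,d_j,g_{j+1}(d_1,\ldots,d_j))$. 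Equation (3) follows by directly unfolding $\psi_i$ and the HFL semantics of the fixpoint operator: $g_i(d_1,\ldots,d_{i-1}) = \sem{\munu_i X_i.\vartheta_i\form_i}(\rho) = \munu_i y.\sem{\vartheta_i\form_i}(\rho[X_i\mapsto y]) = \munu_i y.f_{i,i}(d_1,\ldots,d_{i-1},y)$.

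The step requiring the most care is equation (4), which reduces to showing $\psi_1 = \toHFL(\HES,X_1)$; the conclusion then follows since $g_1() = \sem{\psi_1}(\emptyset) = \sem{\toHFL(\HES,X_1)} = \sem{\HES}$. I would prove this identity by induction on $k = 0,1,\ldots,n$, establishing that after processing the last $k$ equations via the $\toHFL$ definition the intermediate pair is
$$\bigl(X_1=_{\munu_1}\vartheta_{n-k}\form_1;\,\ldots;\,X_{n-k}=_{\munu_{n-k}}\vartheta_{n-k}\form_{n-k},\;\vartheta_{n-k}X_1\bigr),$$
where $\vartheta_0$ is interpreted as the extension $[X_1\mapsto\psi_1]\circ\vartheta_1$. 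The inductive step uses that the last equation being processed has the form $X_{n-k+1}=_{\munu_{n-k+1}}\vartheta_{n-k+1}\form_{n-k+1}$, so the $\toHFL$ rule substitutes $\munu_{n-k+1}X_{n-k+1}.\vartheta_{n-k+1}\form_{n-k+1}=\psi_{n-k+1}$ for $X_{n-k+1}$, thus promoting $\vartheta_{n-k+1}$ to $\vartheta_{n-k}$. Taking $k=n$ gives $\toHFL(\HES,X_1)=\psi_1$. The main obstacle is the bookkeeping in this induction, in particular verifying that the substitutions introduced at each step commute correctly with the fixpoint binders of the remaining equations; the free-variable invariant noted at the outset is precisely what ensures this capture-avoidance and makes the identification $\munu_{n-k+1}X_{n-k+1}.\vartheta_{n-k+1}\form_{n-k+1}=\psi_{n-k+1}$ legitimate.
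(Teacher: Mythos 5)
Your proof is correct and follows essentially the same route as the paper's, which is only a two-line sketch: induction on $n-i$ together with structural induction on $\form_i$, the latter being precisely the substitution lemma you invoke. The one place you go beyond the paper is the explicit induction establishing $\psi_1 = \toHFL(\HES, X_1)$ (the paper merely asserts this identity in the preamble to the lemma), and your bookkeeping there, including the capture-avoidance check via the free-variable invariant, is sound.
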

\begin{myproof}
  By induction on \( n - i \).
  For each \( i \), the claim is proved by induction on the structure of formulas \( \form_i \).
The last claim follows from the fact that \(\psi_1 = \toHFL(\HES,X_1)\) and: 
\[g_1(\,) = \munu_1 y.f_{1,1}(y) = \munu_1 y.\sem{\vartheta_1\form_1}([X_1\mapsto y]) = 
\sem{\munu_1 X_1.\vartheta_1\form_1} = \sem{\psi_1}.\]
\end{myproof}

We first prove the following lemma, which implies that 
the game-based characterization is complete, i.e.,
that \( p \sqleq \sem{\HES}_{\lts} \) only if Proponent wins \( \SGame_{\lts,\HES} \) on \( (p, X_1) \).
\begin{lemma}\label{lem:appx:game-semantics:completeness}
  Let \( (d_1, \dots, d_n) \) be an opponent node.
  If \( d_j \sqleq g_{j}(d_1, \dots, d_{j-1}) \) for every \( j \in\set{1, \dots, n} \), then Proponent wins \( \SGame_{\lts,\HES} \) on \( (d_1, \dots, d_n) \).
\end{lemma}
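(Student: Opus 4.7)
The plan is to construct an explicit Proponent strategy on $(d_1,\ldots,d_n)$ and verify it is winning via a parity progress measure built from the ordinal approximants of the $\mu$-fixpoints occurring in $\HES$.

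For the strategy, Proponent maintains the invariant that every opponent node $v=(x_1,\ldots,x_n)$ reached along any conforming play satisfies $x_j\sqleq g_j(x_1,\ldots,x_{j-1})$ for all $j$; this holds initially by hypothesis. When Opponent moves from $v$ to $(p,X_i)$, write $f_{v,i}(y):=f_{i,i}(x_1,\ldots,x_{i-1},y)$, so that by Lemma~\ref{lem:appx:game-semantics:alg-interpret} we have $p\sqleq x_i\sqleq g_i(x_1,\ldots,x_{i-1})=\munu_i y.\,f_{v,i}(y)$. Choose $y$ thus: if $\munu_i=\nu$, take $y:=g_i(x_1,\ldots,x_{i-1})$, a fixpoint of $f_{v,i}$; if $\munu_i=\mu$, use Lemma~\ref{lem:appx:game-semantics:complete-prime-successor} to pick the least successor ordinal $\beta_p+1$ with $p\sqleq f_{v,i}^{\beta_p+1}(\bot)$ and set $y:=f_{v,i}^{\beta_p}(\bot)$; in both cases $p\sqleq f_{v,i}(y)$. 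Then Proponent plays $v':=(x_1',\ldots,x_n')$ with $x_j':=x_j$ for $j<i$, $x_i':=y$, and $x_j':=g_j(x_1',\ldots,x_{j-1}')$ for $j>i$. Iterating the identity $f_{i,j}(d_1,\ldots,d_j)=f_{i,j+1}(d_1,\ldots,d_j,g_{j+1}(d_1,\ldots,d_j))$ from Lemma~\ref{lem:appx:game-semantics:alg-interpret} yields $f_{v,i}(y)=\sem{\form_i}([X_k\mapsto x_k']_{k=1}^n)$, so the edge $((p,X_i),v')$ exists; the invariant at $v'$ is immediate since for $j<i$ nothing changed, at $j=i$ we have $y\sqleq g_i(x_1',\ldots,x_{i-1}')$ (trivially for $\nu$, and because each ordinal approximant lies below the $\mu$-fixpoint), and for $j>i$ this is equality. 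Hence the strategy never stalls and any finite maximal conforming play ends at an opponent node, which is an Opponent loss.

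For the progress measure, for each $i$ with $\munu_i=\mu$ define $\alpha_i(v)$ to be the least ordinal with $x_i\sqleq f_{v,i}^{\alpha_i(v)}(\bot)$ (well-defined by the invariant). Let $\varpi(v)$ be the $M$-tuple (with $M$ the maximum game priority) whose component at priority $2(n-i)+1$ is $\alpha_i(v)$ and whose other components are $0$, arranged so that larger priorities occupy the more significant lexicographic positions. On a fragment $v\cdot(p,X_i)\cdot v'$ with $k:=\Omega((p,X_i))$, for every $j<i$ we have $x_j'=x_j$ and $f_{v',j}=f_{v,j}$ (both depend only on $x_1,\ldots,x_{j-1}$), so $\alpha_j(v')=\alpha_j(v)$; these, together with the zero entry at priority $2(n-i)+1$ when $\munu_i=\nu$, exhaust the components at priorities strictly greater than $k$. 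If $\munu_i=\mu$ (odd $k$), then $p\sqleq x_i\sqleq f_{v,i}^{\alpha_i(v)}(\bot)$ forces $\beta_p+1\le\alpha_i(v)$, and $x_i'=f_{v,i}^{\beta_p}(\bot)$ gives $\alpha_i(v')\le\beta_p<\alpha_i(v)$. If $\munu_i=\nu$ (even $k$), the priority-$k$ component is $0$ in both tuples. Thus $\varpi(v)\geq_k\varpi(v')$ with strict inequality at odd $k$, so $\varpi$ is a parity progress measure and every infinite conforming play is winning.

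The main technical point is the $\mu$-case: Lemma~\ref{lem:appx:game-semantics:complete-prime-successor} is essential because it guarantees that the least ordinal approximant dominating a complete prime $p$ is a successor $\beta_p+1$, which is precisely what allows Proponent to step down to $f_{v,i}^{\beta_p}(\bot)$ and strictly decrease the progress measure at odd priorities. The restriction of Opponent's choices to complete primes in the definition of $\SGame_{\lts,\HES}$ is exactly what makes this descent possible; without it, limit ordinals could appear at odd priorities and the progress-measure argument would collapse.
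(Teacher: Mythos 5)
Your proof is correct and takes essentially the same route as the paper's: the identical two-case strategy (for $\nu$, reset the $i$-th component to the fixpoint $g_i$; for $\mu$, use Lemma~\ref{lem:appx:game-semantics:complete-prime-successor} to step down to the ordinal approximant $f_{v,i}^{\beta_p}(\bot)$; recompute the later components as $g_j$), verified by the same parity progress measure given by the least ordinal approximants at the $\mu$-indices. Your writeup merely spells out the invariant preservation and the measure verification in more detail than the paper does.
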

\begin{myproof}
  Let \( (\star) \) be the following condition on O-nodes \( (d_1, \dots, d_n) \):
  \begin{quote}
    \( d_j \sqleq g_{j}(d_1, \dots, d_{j-1}) \) for every \( j \in\set{1, \dots, n} \).
  \end{quote}
  Let \( \Nodes_0 = \{ (d_1, \dots, d_n) \in \Nodes_O \mid \mbox{\((d_1, \dots, d_n)\) satisfies \((\star)\)} \} \).  

  Let us first define a strategy \( \Strategy \) of \( \SGame_{\lts, \HES} \) on \( \Nodes_0 \).
  It is defined for plays of the form \( \seq{v} \cdot (d_1, \dots, d_n)  \cdot (p, X_i) \) where \( (d_1, \dots, d_n) \in \Nodes_0 \).
  The next node is determined by the last two nodes as follows.
  \begin{itemize}
  \item Case \( \munu_i = \nu \):  Then the next node is
    \[
    (d_1, \dots, d_{i-1}, c_i, c_{i+1}, \dots, c_n)
    \]
    where \( c_{j} = g_{j}(d_1, \dots, d_{i-1}, c_i, c_{i+1}, \dots, c_{j-1}) \) for \( j \ge i \).
    By the assumption \( d_i \sqleq g_{i}(d_1, \dots, d_{i-1}) \) and Lemma~\ref{lem:appx:game-semantics:alg-interpret},
    \begin{align*}
      p&\sqleq d_i \\
      &\sqleq g_i(d_1, \dots, d_{i-1}) \\
      &= \nu y. f_{i,i}(d_1, \dots, d_{i-1}, y) \\
      &= f_{i,i}(d_1, \dots, d_{i-1}, \nu y. f_{i,i}(d_1, \dots, d_{i-1}, y)) \\
      &= f_{i,i}(d_1, \dots, d_{i-1}, g_i(d_1, \dots, d_{i-1})) \\
      &= f_{i,i}(d_1, \dots, d_{i-1}, c_i) \\
      &= f_{i,i+1}(d_1, \dots, d_{i-1}, c_i, g_{i+1}(d_1, \dots, d_{i-1}, c_i)) \\
      &= f_{i,i+1}(d_1, \dots, d_{i-1}, c_i, c_{i+1}) \\
      &\;\;\vdots \\
      &= f_{i,n}(d_1, \dots, d_{i-1}, c_i, c_{i+1}, \dots, c_n) \\
      &= \sem{\form_i}([X_1 \mapsto d_1, \dots, X_{i-1} \mapsto d_{i-1}, X_{i} \mapsto c_i, \dots, X_{n} \mapsto c_n]).
    \end{align*}
  \item Case \( \munu_i = \mu \):  Then the next node is
    \[
    (d_1, \dots, d_{i-1}, e, c_{i+1}, \dots, c_n)
    \]
    where \( c_{j} = g_{j}(d_1, \dots, d_{i-1}, e, c_{i+1}, \dots, c_{j-1}) \) for \( j > i \) and \( e \) is defined as follows.
    By the condition \( (\star) \) and Lemma~\ref{lem:appx:game-semantics:alg-interpret},
    \[
    d_i \sqleq g_{i}(d_1, \dots, d_{i-1}) = \mu y. f_{i,i}(d_1, \dots, d_{i-1}, y)
    \]
    and thus
    \[
    d_i \sqleq  (\lambda y. f_{i,i}(d_1, \dots, d_{i-1}, y))^{\beta}(\bot)
    \]
    for some ordinal \( \beta \).
    By the definition of edges of \( \SGame_{\lts,\HES} \), we have \( p \sqleq d_i \) and thus
    \[
    p \sqleq (\lambda y. f_{i,i}(d_1, \dots, d_{i-1}, y))^{\beta}(\bot)
    \]
    for some \( \beta \).
    Consider the minimum ordinal among those satisfying this condition; let \( \beta_0 \) be this ordinal.
    Since \( p \) is complete prime, \( \beta_0 \) is a successor ordinal by Lemma~\ref{lem:appx:game-semantics:complete-prime-successor}, i.e.~\( \beta_0 = \beta_0' + 1 \) for some \( \beta_0' \).
    Then we define \( e =  (\lambda y. f_{i,i}(x_1, \dots, x_{i-1}, y))^{\beta'}(\bot) \).
    Now we have
    \begin{align*}
      p
      &\sqleq (\lambda y. f_{i,i}(d_1, \dots, d_{i-1}, y))(e) \\
      &= f_{i,i}(d_1, \dots, d_{i-1}, e) \\
      &= f_{i,i+1}(d_1, \dots, d_{i-1}, e, g_{i+1}(d_1, \dots, d_{i-1}, e)) \\
      &= f_{i,i+1}(d_1, \dots, d_{i-1}, e, c_{i+1}) \\
      &\;\;\vdots \\
      &= f_{i,n}(d_1, \dots, d_{i-1}, e, c_{i+1}, \dots, c_n) \\
      &= \sem{\form_i}([X_1 \mapsto d_1, \dots, X_{i-1} \mapsto d_{i-1}, X_i \mapsto e, X_{i+1} \mapsto c_{i+1}, \dots, X_{n} \mapsto c_n]).
    \end{align*}
  \end{itemize}
  It is not difficult to see that \( \Strategy \) is indeed a strategy on \( \Nodes_0 \).

  We prove that \( \Strategy \) is winning by giving a parity progress measure of \( \Strategy \) on \( \Nodes_0 \).
  Let \( \gamma \) be an ordinal greater than the cardinality of \( \D_{\lts, \typ_\ell} \) for every 
\( \ell \in\set{1, \dots, n} \).
  The (partial) mapping \( \Nodes_O \ni (d_1, \dots, d_n) \mapsto (\beta_1, \dots, \beta_n) \in \gamma^n \) is defined by:
  \begin{equation*}
    \beta_i :=
    \begin{cases}
      \min \{ \beta < \gamma \mid d_i \le (\lambda y. f_{i,i}(d_1, \dots, d_{i-1}, y))^{\beta}(\bot) \} & \mbox{(if \( \munu_i = \mu \))} \\
      0 & \mbox{(if \( \munu_i = \nu \)).}
    \end{cases}
  \end{equation*}
  This is well-defined on \( \Nodes_0 \).
  It is not difficult to see that this mapping is a parity progress measure of the strategy \( \Strategy \).
\end{myproof}

Next, we prepare lemmas used for proving 
that the game-based characterization is sound, i.e.,
that 
 existence of a winning strategy of \( \SGame_{\lts,\HES} \) on \( (p, X_1) \) implies \( p \sqleq \sem{\HES}_{\lts} \).
We first prove this result for a specific class of strategies, which we call \emph{stable strategies}.
Then we show that every winning strategy can be transformed to a stable one.
\begin{definition}[Stable strategy]
  A strategy \( \Strategy \) of \( \SGame_{\lts, \HES} \) is \emph{stable} if for every play
  \begin{equation*}
    \seq{v} \cdot (d_1, \dots, d_n) \cdot (p, X_i) \cdot (c_1, \dots, c_n)
  \end{equation*}
  that conforms with \( \Strategy \), one has \( d_1 = c_1, \dots, d_{i-1} = c_{i-1} \) and \( d_i \sqsupseteq c_i \).
\end{definition}
An important property of a stable winning strategy is as follows.
\begin{lemma}
Given \( d_1 \in \D_{\lts,\typ_1}, \dots, d_{\ell} \in \D_{\lts,\typ_\ell} \), we write \( \SGame_{\lts,\HES}^{d_1,\dots,d_{\ell}} \) for the subgame of \( \SGame_{\lts,\HES} \) consisting of nodes
\[
\{\, (p, X_i)\in V_P \mid \ell < i \,\} \cup \{\, (d_1, \dots, d_{\ell}, c_{\ell + 1}, \dots, c_n) \mid c_{\ell + 1} \in \D_{\lts,\typ_{\ell+1}}, \dots, c_{n} = \D_{\lts,\typ_n} \,\}.
\]
If a stable strategy \( \Strategy \) wins on an opponent node \( (d_1, \dots, d_n) \), then its restriction is a winning strategy of the subgame \( \SGame_{\lts,\HES}^{d_1,\dots,d_{\ell}} \) on \( (d_1, \dots, d_n) \).
\end{lemma}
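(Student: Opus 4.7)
The plan is to show two facts: (i) the restriction of $\Strategy$ to the subgame $\SGame_{\lts,\HES}^{d_1,\dots,d_{\ell}}$ is a well-defined strategy on $(d_1,\dots,d_n)$, i.e., whenever the strategy is called upon to respond at a P-node appearing in a play in the subgame, its response is itself an O-node of the subgame; and (ii) every maximal play in the subgame conforming with this restriction is P-winning.

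For (i), the key point is the stability hypothesis. Consider a play in the subgame of the form $\seq{v}\cdot(c_1,\dots,c_n)\cdot(p,X_i)$, where $(c_1,\dots,c_n)$ is an O-node of the subgame (so $c_j=d_j$ for $j\le\ell$) and $(p,X_i)$ is a P-node of the subgame (so $i>\ell$). This is also a play in the full game conforming with $\Strategy$. Let $(e_1,\dots,e_n):=\Strategy(\seq{v}\cdot(c_1,\dots,c_n)\cdot(p,X_i))$. Stability gives $e_j=c_j$ for all $j<i$; since $i>\ell$, in particular $e_j=c_j=d_j$ for every $j\le\ell$. Hence $(e_1,\dots,e_n)$ is an O-node of the subgame, so the restriction stays in the subgame. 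Since $\Strategy$ is defined at every P-node reachable in the full game (including those in the subgame), the restriction is defined on every play conforming with it that ends at a P-node of the subgame.

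For (ii), we split by whether the maximal play is infinite or finite. Every infinite play in the subgame conforming with the restriction is also an infinite play in the full game conforming with $\Strategy$; since $\Strategy$ wins on $(d_1,\dots,d_n)$, such a play satisfies the parity condition and is thus P-winning. For a finite maximal play in the subgame, the last node cannot be a P-node (by (i), the strategy would provide a successor lying in the subgame, contradicting maximality in the subgame), so it must be an O-node, which is P-winning by definition.

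The argument is essentially bookkeeping, so I do not anticipate a serious technical obstacle; the only subtlety to check carefully is that in step (i) the P-node picked by Opponent really must have index $i>\ell$, which follows immediately from the description of $\SGame_{\lts,\HES}^{d_1,\dots,d_{\ell}}$ restricting P-nodes to $(p,X_i)$ with $\ell<i$. The initial node $(d_1,\dots,d_n)$ is trivially in the subgame since its first $\ell$ coordinates are the fixed values $d_1,\dots,d_{\ell}$, giving a coherent starting point for the restricted play.
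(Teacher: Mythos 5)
Your proof is correct and follows essentially the same route as the paper's (much terser) argument: the whole content is that stability keeps the first $\ell$ components fixed at P-nodes $(p,X_i)$ with $i>\ell$, so the strategy never gets stuck in the subgame, and the winning condition then transfers directly from the full game. Your additional case analysis on finite versus infinite maximal plays is just an explicit spelling-out of what the paper leaves implicit.
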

\begin{myproof}
  It suffices to show that \( \Strategy \) does not get stuck in the subgame \( \SGame_{\lts,\HES}^{d_1,\dots,d_{\ell}} \).
  Since proponent nodes are restricted to \( (p, X_i) \) with \( \ell < i \), by the definition of stability, \( \Strategy \) does not change the first \( \ell \) components.
\end{myproof}

\begin{lemma}\label{lem:appx:game-semantics:stably-winning-is-sound}
  Let \( \Strategy \) be a stable winning strategy of \( \SGame_{\lts,\HES} \) on \( \Nodes_0 \subseteq \Nodes \).
  For every \( 1 \le j \le n \), we have the following propositions:
  \begin{itemize}
  \item \( (P_{j}) \): For every \( 1 \le i \le j \), if \( \tilde{v} \cdot (p, X_i) \cdot (d_1, \dots, d_n) \) conforms with \( \Strategy \) and starts from \( \Nodes_0 \), then \( p \sqleq f_{i,j}(d_1, \dots, d_{j}) \).
  \item \( (Q_{j}) \): If \( \seq{v} \cdot (p, X_j) \cdot (d_1, \dots, d_n) \) conforms with \( \Strategy \) and starts from \( \Nodes_0 \), then \( p \sqleq g_j(d_1, \dots, d_{j-1}) \).
  \end{itemize}
\end{lemma}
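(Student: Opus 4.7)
The plan is a downward induction on $j$, interleaving $(P_j)$ and $(Q_j)$ in the order $(P_n), (Q_n), (P_{n-1}), (Q_{n-1}), \dots, (P_1), (Q_1)$; throughout I exploit the edge definition of $\SGame_{\lts,\HES}$, stability of $\Strategy$, the complete-prime algebraicity of $\D_{\lts,\typ}$ (Lemmas~\ref{lem:complete-prime} and~\ref{lem:complete-prime-leq}), and the identities of Lemma~\ref{lem:appx:game-semantics:alg-interpret}. The base $(P_n)$ is immediate: the edge $(p, X_i) \to (d_1, \dots, d_n)$ exists exactly when $p \sqleq \sem{\form_i}([X_k \mapsto d_k]) = f_{i,n}(d_1, \dots, d_n)$.

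For the step $(P_j) \wedge (Q_j) \Rightarrow (P_{j-1})$, take $\seq{v} \cdot (p, X_i) \cdot (d_1, \dots, d_n)$ conforming with $\Strategy$ and $i \le j-1$; $(P_j)$ gives $p \sqleq f_{i,j}(d_1, \dots, d_j)$. For each complete prime $q \sqleq d_j$, Opponent may extend the play by $(q, X_j)$, yielding a new conforming play to which $(Q_j)$ applies: by stability the first $j-1$ coordinates of $\Strategy$'s next opponent node agree with $d_1, \dots, d_{j-1}$, so $q \sqleq g_j(d_1, \dots, d_{j-1})$. Since this holds for every complete prime $q \sqleq d_j$, Lemma~\ref{lem:complete-prime-leq} gives $d_j \sqleq g_j(d_1, \dots, d_{j-1})$, and monotonicity of $f_{i,j}$ combined with $f_{i,j-1}(d_1, \dots, d_{j-1}) = f_{i,j}(d_1, \dots, d_{j-1}, g_j(d_1, \dots, d_{j-1}))$ closes this step.

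The heart of the argument is $(P_j) \Rightarrow (Q_j)$. Fix $(d_1, \dots, d_{j-1})$ and let $W$ be the join of all complete primes $q$ such that some conforming play from $\Nodes_0$ ends $(q, X_j) \cdot (d_1, \dots, d_{j-1}, e_j, \dots, e_n)$. Stability guarantees that after any visit to $X_j$ the first $j-1$ coordinates stay pinned to $d_1, \dots, d_{j-1}$, so $W$ is coherent; the hypothesis $p$ of $(Q_j)$ lies in $W$, so it suffices to prove $W \sqleq g_j(d_1, \dots, d_{j-1})$. Writing $\phi(y) := f_{j,j}(d_1, \dots, d_{j-1}, y)$, the key intermediate inequality is $W \sqleq \phi(W)$: by $(P_j)$ each witness $q$ satisfies $q \sqleq \phi(e_j)$; every complete prime $q' \sqleq e_j$ is a legal Opponent move to $(q', X_j)$, so $q' \sqleq W$; Lemma~\ref{lem:complete-prime-leq} gives $e_j \sqleq W$, and monotonicity gives $q \sqleq \phi(W)$. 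When $\munu_j = \nu$, Knaster--Tarski (coinduction) immediately yields $W \sqleq \nu y.\phi(y) = g_j(d_1, \dots, d_{j-1})$.

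The main obstacle is the $\munu_j = \mu$ case, where the post-fixpoint bound $W \sqleq \phi(W)$ is too weak and the winning (rather than merely strategy) property of $\Strategy$ must be activated. The plan is to fix a parity progress measure $\varpi$ of $\Strategy$ on $\Nodes_0$ (which exists by Jurdziński's theorem) and stratify $W$ by ordinals: for each $\beta$, let $W_\beta$ be the join of witnesses whose terminal opponent node carries $\varpi$-value bounded by $\beta$ in the components of priority at least $2(n-j)+1$. The strict decrease of $\varpi$ forced at every Proponent visit $(q, X_j)$ of odd priority $2(n-j)+1$ then supports a transfinite induction establishing $W_\beta \sqleq \phi^{\beta+1}(\bot)$, hence $W \sqleq \mu y.\phi(y)$. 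The delicate point is that detours through $(\cdot, X_i)$ with $i < j$ may reset the coordinates $d_i, \dots, d_{j-1}$; the progress measure, being lex-decreasing on the strictly higher priorities carried by such detours, controls them, but keeping the bookkeeping consistent across these resets is where most of the technical effort will lie.
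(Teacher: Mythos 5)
Your overall architecture matches the paper's: the same interleaved downward induction $(P_n), (Q_n), \dots, (P_1), (Q_1)$, the same base case, the same use of stability plus complete-prime algebraicity to propagate $d_{j}\sqleq g_{j}(d_1,\dots,d_{j-1})$ in the $(P)$-steps, and the same post-fixpoint/Knaster--Tarski argument for $\munu_j=\nu$ (your aggregate $W$ in place of the single element $d_j$ is a harmless variation). The problem is the $\munu_j=\mu$ case, which is the heart of the lemma and which you do not actually prove. Your plan rests on extracting an ordinal-valued parity progress measure from the given winning strategy. But the paper's Jurdzi\'nski lemma establishes only the direction ``progress measure $\Rightarrow$ winning''; the converse, for a game whose arena is the (generally infinite, indeed proper-class-sized-feeling) product of the semantic domains $\D_{\lts,\typ_i}$, is itself a nontrivial theorem requiring ordinal-valued measures and is nowhere available in the paper. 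On top of that, the stratification argument ($W_\beta \sqleq \phi^{\beta+1}(\bot)$) is only sketched, and you explicitly defer ``most of the technical effort'' to bookkeeping you have not done. An outline that postpones the decisive step is a gap, not a proof.

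The paper's own $\mu$-case shows this machinery is unnecessary. Define $O_1 \succ O_2$ when some $\Strategy$-conforming play passes from $O_1$ to $O_2$ visiting only Proponent nodes of the form $(q,X_j)$ in between. An infinite $\succ$-descending chain would be an infinite conforming play in which the highest priority occurring infinitely often is the odd number assigned to $X_j$, contradicting that $\Strategy$ is winning; hence $\succ$ is well-founded, and one proves $c_j \sqleq \mu y.\, f_{j,j}(d_1,\dots,d_{j-1},y)$ by well-founded induction on $\succ$, using $(P_j)$ for the one-step bound and stability to keep the first $j-1$ coordinates pinned. Note also that your flagged ``delicate point'' about detours through $X_i$ with $i<j$ is a red herring: in the induction one only ever extends a play by an Opponent move into $X_j$ (choosing a complete prime below the $j$-th coordinate), so such detours never have to be followed; they merely enlarge the set of witnesses, to each of which $(P_j)$ applies regardless of history. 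If you replace the progress-measure plan with this well-founded induction, the rest of your argument goes through.
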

\begin{myproof}
  By induction on the order \( P_{n} < Q_n < P_{n-1} < Q_{n-1} < \dots < P_{1} < Q_{1} \).
  Let \( \Strategy \) be a stable winning strategy of \( \SGame_{\lts,\HES} \) on \( \Nodes_0 \subseteq \Nodes \).

  (\( P_n \))
  By the definition of the game \( \SGame_{\lts,\HES} \) and Lemma~\ref{lem:appx:game-semantics:alg-interpret}, we have
  \[
  p \sqleq \sem{\form_i}([X_1 \mapsto d_1, \dots, X_n \mapsto d_n]) = f_{i,n}(d_1, \dots, d_n).
  \]

  (\( P_j \) with \( j < n \))
  We first show that \( d_{j+1} \sqleq g_{j+1}(d_1, \dots, d_j) \).
  Since \( \D_{\lts,\typ_{j+1}} \) is complete-prime algebraic, it suffices to show \( q \sqleq g_{j+1}(d_1, \dots, d_j) \) for every complete prime \( q \sqleq d_{j+1} \).
  Given a complete prime \( q \sqleq d_{j+1} \),
  \[
  \tilde{v} \cdot (p, X_i) \cdot (d_1, \dots, d_n) \cdot (q, X_{j+1})
  \]
  is a valid play that conforms with \( \Strategy \).
  Let
  \[
  (c_1, \dots, c_n) = \Strategy(\,  \tilde{v} \cdot (p, X_i) \cdot (d_1, \dots, d_n) \cdot (q, X_{j+1})\,).
  \]
  Since \( \Strategy \) is stable, \( d_1 = c_1, \dots, d_j = c_j \).
  By the induction hypothesis, \( q \sqleq g_{i+1}(c_1, \dots, c_j) = g_{i+1}(d_1, \dots, d_j) \).
  Since \( q \sqleq d_{j+1} \) is an arbitrary complete prime, we conclude that \( d_{j+1} \sqleq g_{j+1}(d_1, \dots, d_j) \).

  Then by Lemma~\ref{lem:appx:game-semantics:alg-interpret}, we have
  \[
  p \sqleq f_{i,j+1}(d_1, \dots, d_j, d_{j+1}) \sqleq f_{i,j+1}(d_1, \dots, d_j, g_{j+1}(d_1, \dots, d_j)) = f_{i,j}(d_1, \dots, d_j),
  \]
  where the first inequality follows from the induction hypothesis and the second from \( d_{j+1} \sqleq g_{j+1}(d_1, \dots, d_j) \) and monotonicity of \( f_{i,j+1} \).

  (\( Q_j \) with \( \munu_j = \nu \))
  For every complete prime \( q \sqleq d_j \), we have a play
  \[
  \tilde{v} \cdot (p, X_i) \cdot (d_1, \dots, d_n) \cdot (q, X_j)
  \]
  that conforms with \( \Strategy \).
  Let
  \[
  (c_{q,1}, \dots, c_{q,n}) = \Strategy(\, \tilde{v} \cdot (p, X_i) \cdot (d_1, \dots, d_n) \cdot (q, X_j) \,).
  \]
  Since \( \Strategy \) is stable, \( d_1 = c_{q,1}, \dots, d_{j-1} = c_{q,j-1} \) and \( d_{j} \sqsupseteq c_{q,j} \).
  By the induction hypothesis and monotonicity of \( f_{j,j} \), we have
  \[
  q \sqleq f_{j,j}(c_{q,1}, \dots, c_{q,j-1}, c_{q,j}) \sqleq f_{j,j}(d_1, \dots, d_{j-1}, d_j).
  \]
  Since the complete prime \( q \sqleq d_j \) is arbitrary and \( \D_{\lts,\HES} \) is complete-prime algebraic, we have
  \[
  d_j = \bigsqcup_{q \sqleq d_j} q \sqleq f_{j,j}(d_1, \dots, d_{j-1}, d_j).
  \]
  Hence \( d_j \) is a prefixed point of \( \lambda y. f_{j,j}(d_1, \dots, d_{j-1}, y) \).
  So \( d_j \sqleq \nu y. f_{j,j}(d_1, \dots, d_{j-1}, y) \).
  By the induction hypothesis and Lemma~\ref{lem:appx:game-semantics:alg-interpret}, we have
  \begin{align*}
    p
    &\sqleq f_{j,j}(d_1, \dots, d_{j-1}, d_j) \\
    &\sqleq f_{j,j}(d_1, \dots, d_{j-1}, \nu y. f_{j,j}(d_1, \dots, d_{j-1}, y)) \\
    &= \nu y. f_{j,j}(d_1, \dots, d_{j-1}, y) \\
    &= g_{j}(d_1, \dots, d_j).
  \end{align*}

  (\( Q_j \) with \( \munu_j = \mu \))
  Let \( \tilde{v} \cdot (p, X_j) \cdot (d_1, \dots, d_n) \) be a play that conforms with \( \Strategy \) and starts from \( \Nodes_0 \).
  Given opponent nodes \( O_1, O_2 \in \Nodes_{O} \), we write \( O_1 \succ O_2 \) if there exists a play of the form
  \[
  \seq{v} \cdot (p, X_j) \cdot \seq{v}_1 \cdot O_1 \cdot \seq{v}_2 \cdot O_2
  \]
  that conforms with \( \Strategy \) such that every proponent node in \( \seq{v}_1 \) or \( \seq{v}_2 \) is of the form \( (q, X_j) \).
  Every play of the form
  \[
  \tilde{v} \cdot (p, X_j) \cdot (d_1, \dots, d_n) \cdot (p_1, X_j) \cdot (c_{1,1}, \dots, c_{1,n}) \cdot (p_2, X_j) \cdot (c_{2,1}, \dots, c_{2,n}) \dots
  \]
  that conforms with \( \Strategy \) eventually terminates because \( \Strategy \) is a winning strategy and the priority of \( (p_i, X_j) \) is odd (recall that opponent node \( O_i \) has priority \( 0 \)).
  Hence the relation \( \succ \) defined above is well-founded.
  We prove the following claim by induction on the well-founded relation \( \succ \):
  \begin{quote}
    Let \( \tilde{v} \cdot (p, X_j) \cdot (d_1, \dots, d_n) \cdot \tilde{v}_1 \cdot (c_1, \dots, c_n) \) be a play that conforms with \( \Strategy \) and starts from \( \Nodes_0 \).
    Suppose that every proponent node in \( \tilde{v}_1 \) is of the form \( (q, X_j) \).
    Then \( c_j \sqleq \mu y. f_{j,j}(d_1, \dots, d_{j-1}, y) \).
  \end{quote}
  It suffices to show that \( q \sqleq \mu y. f_{j,j}(d_1, \dots, d_{j-1}, y) \) for every complete prime \( q \sqleq c_j \).
  Consider the play
  \[
  \tilde{v} \cdot (p, X_j) \cdot (d_1, \dots, d_n) \cdot \tilde{v}_1 \cdot (c_1, \dots, c_n) \cdot (q, X_j) \cdot (c'_1, \dots, c'_n)
  \]
  that conforms with \( \Strategy \) and starts from \( \Nodes_0 \).
  Because \( \Strategy \) is stable and \( \tilde{v}_1 \) does not contain a node of the form \( (r, X_k) \) with \( k < j \), we have \( d_1 = c_1 = c'_1, \dots, d_{j-1} = c_{j-1} = c'_{j-1} \).
  By the induction hypothesis of the above claim, \( c'_j \sqleq \mu y. f_{j,j}(d_1, \dots, d_{j-1}, y) \).
  By the induction hypothesis of the lemma, \( q \sqleq f_{j,j}(c'_1, \dots, c'_{j-1}, c'_{j}) \).
  Hence
  \begin{align*}
    q
    &\sqleq f_{j,j}(c'_1, \dots, c'_{j-1}, c'_{j}) \\
    &\sqleq f_{j,j}(c'_1, \dots, c'_{j-1}, \mu y. f_{j,j}(d_1, \dots, d_{j-1}, y)) \\
    &\sqleq f_{j,j}(d_1, \dots, d_{j-1}, \mu y. f_{j,j}(d_1, \dots, d_{j-1}, y)) \\
    &= \mu y. f_{j,j}(d_1, \dots, d_{j-1}, y).
  \end{align*}
  So \( q \sqleq \mu y. f_{j,j}(d_1, \dots, d_{j-1}, y) \) for every complete prime \( q \sqleq c_{j} \) and thus \( c_j \sqleq \mu y. f_{j,j}(d_1, \dots, d_{j-1}, y)) \).
  By the same argument, we have \( d_j \sqleq \mu y. f_{j,j}(d_1, \dots, d_{j-1}, y) \).
  This completes the proof of the above claim.

  By the induction hypothesis, the above claim and Lemma~\ref{lem:appx:game-semantics:alg-interpret}, we have
  \begin{align*}
    p
    &\sqleq f_{j,j}(d_1, \dots, d_{j-1}, d_j) & \mbox{(induction hypothesis)} \\
    &\sqleq f_{j,j}(d_1, \dots, d_{j-1}, \mu y. f_{j,j}(d_1, \dots, d_{j-1}, y)) &\mbox{(the above claim)}\\
    &\sqleq \mu y. f_{j,j}(d_1, \dots, d_{j-1}, y) \\
    &= g_{j}(d_1, \dots, d_{j-1}) &\mbox{(Lemma~\ref{lem:appx:game-semantics:alg-interpret}).}
  \end{align*}
\end{myproof}

\begin{lemma}\label{lem:appx:game-semantics-stable-wlog}
  If there exists a winning strategy of \( \SGame_{\lts,\HES} \) on an Opponent node \( (d_1, \dots, d_n) \), there exists a stable winning strategy on a node \( (d'_1, \dots, d'_n) \) with \( d_1 \sqsubseteq d'_1, \dots, d_n \sqsubseteq d'_n \).
\end{lemma}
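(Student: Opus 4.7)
The plan is to transform a winning strategy $\Strategy$ of $\SGame_{\lts,\HES}$ on $(d_1,\ldots,d_n)$ into a stable winning strategy $\Strategy'$ on an enlarged opponent node $(d'_1,\ldots,d'_n)$. First I would invoke positional determinacy of parity games: since $\SGame_{\lts,\HES}$ has only finitely many priorities, Proponent's winning region admits a memoryless winning strategy, so without loss of generality $\Strategy$ is given by a function $f \COL V_P \to V_O$. I would then define
\[
d'_j \;:=\; \bigsqcup \{\, O_j \mid O \in V_O \text{ is reachable from } (d_1,\ldots,d_n) \text{ via } f \,\}
\]
for each $j \in \{1,\ldots,n\}$. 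Since $(d_1,\ldots,d_n)$ itself is reachable, $d_j \sqleq d'_j$. The key property is that for every reachable P-node $(p, X_i)$, writing $f(p,X_i) = (c_1,\ldots,c_n)$, we have $c_j \sqleq d'_j$ for every $j$, because $(c_1,\ldots,c_n)$ is again a reachable O-node.

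Next, I would construct $\Strategy'$ on $(d'_1,\ldots,d'_n)$ as a history-dependent strategy that maintains a ``shadow play'' in the $\Strategy$-game as auxiliary memory. When Opponent challenges $(p,X_i)$ from the current O-node $O'$, complete-primality of $p$ together with $p \sqleq O'_i \sqleq d'_i$ ensures the existence of a reachable O-node $O^\star$ in the $\Strategy$-game with $p \sqleq O^\star_i$. Taking a $\Strategy$-play ending at $O^\star$, extending it by $(p,X_i)$, and reading off $f(p,X_i) = (c_1,\ldots,c_n)$, I would define the stabilized response as
\[
(O'_1,\ldots,O'_{i-1},\, c_i,\, c_{i+1},\ldots,c_n).
\]
Validity follows from monotonicity of $\sem{\form_i}$ together with the invariant $O'_j \sqsupseteq c_j$ for $j < i$, maintained inductively along the play. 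The first $i-1$ components of the response are directly inherited from $O'$, which immediately gives the first stability condition.

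The main obstacle is to simultaneously ensure (i) the second stability condition $c_i \sqleq O'_i$ and (ii) the inductive invariant above, over plays with non-monotone sequences of challenge indices: when an earlier Proponent response locks in $O'_j$ to some specific ``small'' value for some index $j$, a later challenge at an index lower than $j$ may require a choice of $c_j$ incompatible with $O'_j$. This is resolved by carefully selecting the shadow $\Strategy$-play at each step --- possibly resetting parts of the shadow when lower indices are challenged --- and by exploiting monotonicity of $\sem{\form_i}$ to ``inflate'' responses as needed. Finally, to verify that $\Strategy'$ is winning, I would exploit the crucial observation that priorities in $\SGame_{\lts,\HES}$ depend only on the variable index of the visited P-node (namely $2(n-i)$ or $2(n-i)+1$); hence the parity condition of a $\Strategy'$-play is determined solely by its sequence of challenge indices. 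By coupling shadow plays together (with a K\"onig-style argument for infinite plays), the winning parity condition transfers from $\Strategy$ to $\Strategy'$.
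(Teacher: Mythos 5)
Your setup matches the paper's: memoryless determinacy, the definition of \( d'_j \) as a supremum over \(\Strategy\)-reachable O-nodes, and a simulation of \(\Strategy'\)-plays by \(\Strategy\)-plays. But the step you flag as ``the main obstacle'' is precisely where the content of the proof lives, and your proposed resolution (``carefully selecting the shadow play \ldots possibly resetting parts of the shadow \ldots inflate responses as needed'') does not actually resolve it. With a single shadow response \( f(p,X_i) = (c_1,\dots,c_n) \), the stability requirement \( c_i \sqleq O'_i \) fails in general: \( O'_i \) was pinned down by an \emph{earlier} response of \(\Strategy'\) (say the \( i \)-th component of \( f(p^{\mathrm{prev}}, X_j) \) for some \( j \le i \)), and there is no reason the memoryless strategy's answer to the new challenge has a smaller \( i \)-th component than its answer to the old one. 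Resetting the shadow does not help, because stability forces you to keep the components \( O'_1,\dots,O'_{i-1} \) that were committed to earlier, while validity forces \( p \sqleq \sem{\form_i} \) at the new valuation; these two constraints pull in opposite directions unless the committed components were chosen large enough \emph{in advance}.

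The paper's device for choosing them large enough is \( j \)-reachability: the response to a challenge \( (q, X_j) \) sets each component \( k \ge j \) to \( \bigsqcup \{\, e_k \mid (e_1,\dots,e_n) \text{ is } j\text{-reachable from } (q,X_j) \text{ following } \Strategy \,\} \), where \( j \)-reachable means reachable via a \(\Strategy\)-play whose intermediate P-nodes all have index \( > j \). One then proves by induction on the play that every O-node visited by \(\Strategy'\) dominates, in each component \( k \), the supremum over the \( k \)-reachable set from the preceding P-node; this single invariant simultaneously yields validity (by monotonicity of \(\sem{\form_i}\), since the inflated node dominates the genuine \(\Strategy\)-response), both stability conditions, and --- via complete-primality of the challenged \( p \), which extracts a concrete \( j \)-reachable witness from the supremum --- the concatenation of shadow segments into a genuine infinite \(\Strategy\)-play for the parity-transfer argument. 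Your sketch gestures at inflation and at stitching shadow plays, but without identifying the \( j \)-indexed supremum (or an equivalent quantity) the strategy \(\Strategy'\) is not well-defined and the invariants cannot be checked, so the proof as proposed does not go through.
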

\begin{myproof}
  Let \( \Strategy \) be a winning strategy on the node \( (d_1, \dots, d_n) \).
  Since \( \SGame_{\lts,\HES} \) is a parity game, we can assume without loss of generality that \( \Strategy \) is memoryless.
  Given nodes \( v \) and \( v' \) of the game and \( j \in \{\, 1, \dots, n \,\} \), we say \( v' \) is \emph{reachable from \( v \) following \( \Strategy \)} if there exists a play \( v v_1 \dots v_\ell v' \) that conforms with \( \Strategy \).
  We say that \( v' \) is \emph{\( j \)-reachable from \( v \) following \( \Strategy \)} if furthermore, for every proponent node \( (p, X_i) \) in \( v_1 \dots v_k \), we have \( j < i \).

  For every \( k = 1, \dots, n \), let
  \[
  d'_k = d_k \sqcup \bigsqcup \{\, e_k \mid \mbox{\( (e_1, \dots, e_n) \) is reachable from \( (d_1, \dots, d_n) \) following \( \Strategy \)} \,\}.
  \]
  Obviously \( d_i \sqsubseteq d'_i \).
  
  Let \( \Strategy' \) be a partial function \( \Nodes^* \Nodes_P \rightharpoonup \Nodes \) defined by
  \[
  \seq{v} \cdot (c_1, \dots, c_n) \cdot (q, X_j) \mapsto (c_1, \dots, c_{j-1}, c'_j, \dots, c'_n)
  \]
  where
  \[
  c'_k = \bigsqcup \{\, e_k \mid \mbox{\( (e_1, \dots, e_n) \) is \( j \)-reachable from \( (q, X_j) \) following \( \Strategy \)} \,\}.
  \]

  We show that \( \Strategy' \) is indeed a strategy on \( (d_1', \dots, d_n') \).
  Let \( v_0 = (d'_1, \dots, d'_n) \) and assume that \( v_0 v_1 \dots v_{\ell} \) (\( \ell > 0 \)) is a play that conforms with \( \Strategy' \) and ends with an opponent node \( v_{\ell} \in \Nodes_{O} \) (hence \( \ell > 0 \) is even).
  Let \( v_{\ell} = (c_1, \dots, c_n) \).
  We first prove that for every \( k = 1, \dots n \),
  \[
  c_k \sqsupseteq \bigsqcup \{\, e_k \mid \mbox{\( (e_1, \dots, e_n) \) is \( k \)-reachable from \( v_{\ell-1} \) following \( \Strategy \)} \,\}
  \]
  by induction on \( \ell > 0 \).
  Let \( v_{\ell - 1} = (p, X_j) \).
  \begin{itemize}
  \item
    If \( k \ge j \), then by the definition of \( \Strategy' \),
    \begin{align*}
      c_k &= \bigsqcup \{\, e_k \mid \mbox{\( (e_1, \dots, e_n) \) is \( j \)-reachable from \( (p, X_j) \) following \( \Strategy \)} \,\}
      \\
      &\sqsupseteq \bigsqcup \{\, e_k \mid \mbox{\( (e_1, \dots, e_n) \) is \( k \)-reachable from \( (p, X_j) \) following \( \Strategy \)} \,\}
    \end{align*}
    because the \( k \)-reachable set is a subset of the \( j \)-reachable set.
  \item
    If \( k < j \), then \( c_k \) is the \( k \)-th component of \( v_{\ell - 2} \) by the definition of \( \Strategy' \).
    \begin{itemize}
    \item
      If \( \ell = 2 \), then
      \begin{align*}
        c_k = d'_k &= d_k \sqcup \bigsqcup \{\, e_k \mid \mbox{\( (e_1, \dots, e_n) \) is reachable from \( (d_1, \dots, d_n) \) following \( \Strategy \)} \,\}
        \\
        &\sqsupseteq \bigsqcup \{\, e_k \mid \mbox{\( (e_1, \dots, e_n) \) is \( k \)-reachable from \( v_1 \) following \( \Strategy \)} \,\}
      \end{align*}
      since \( v_1 \) is reachable from \( (d_1, \dots, d_n) \) following \( \Strategy \).
    \item
      Assume that \( \ell > 2 \).
      By the induction hypothesis for the subsequence \( v_0 v_1 \dots v_{\ell-2} \), we have
      \begin{align*}
        c_k
        &\sqsupseteq \bigsqcup \{\, e_k \mid \mbox{\( (e_1, \dots, e_n) \) is \( k \)-reachable from \( v_{\ell - 3} \) following \( \Strategy \)} \,\}
        \\
        &\sqsupseteq \bigsqcup \{\, e_k \mid \mbox{\( (e_1, \dots, e_n) \) is \( k \)-reachable from \( v_{\ell - 1} \) following \( \Strategy \)} \,\}
      \end{align*}
      since \( v_{\ell - 1} \) is \( k \)-reachable from \( v_{\ell - 3} \) following \( \Strategy \).
    \end{itemize}
  \end{itemize}
  Let \( v_0 \cdot v_1 \cdot \dots \cdot v_\ell \cdot (p, X_i) \) be a play that conforms with \( \Strategy' \) and starts from \( v_0 = (d_1', \dots, d_n') \).
  Let \( (c_1, \dots, c_n) = \Strategy'(v_0 \cdot v_1 \cdot \dots \cdot v_\ell \cdot (p, X_i)) \) and \( (e_1, \dots, e_n) = \Strategy(\,(p,X_i)\,) \).
  Since \( (e_1, \dots, e_n) \) is reachable from \( (p,X_i) \) following \( \Strategy \), the above claim shows that \( e_m \sqleq c_m \) for every \( 1 \le m \le n \).
  Since \( p \sqleq \sem{\form_i}([X_1 \mapsto e_1, \dots, X_n \mapsto e_n]) \), we have \( p \sqleq \sem{\form_i}([X_1 \mapsto c_1, \dots, X_n \mapsto c_n]) \).
  Hence \( \Strategy' \) is indeed a strategy.

  By definition and the above claim, it is easy to see that \( \Strategy' \) is a stable strategy.
  
  We show that the strategy \( \Strategy' \) is winning on \( (d_1', \dots, d_n') \).
  Assume that \( v_0 v_1 \dots \) be an infinite play that conforms with \( \Strategy' \) and starts from \( (d_1', \dots, d_n') \).
  Let \( j \) be the minimum index of variables appearing infinitely often in the play.
  Then the play can be split into
  \[
  \seq{w}_0 \cdot (p_1, X_j) \cdot \seq{w}_1 \cdot (c_{1,1}, \dots, c_{1,n}) \cdot (p_2, X_j) \cdot \seq{w}_2 \cdot (c_{2,1}, \dots, c_{2,n}) \cdot (p_3, X_j) \cdot \dots
  \]
  where \( \tilde{w}_k \) is a (possible empty) sequence of nodes, which furthermore consists of \( \Nodes_O \cup \{\, (q, X_{\ell}) \mid j < \ell \,\} \) if \( k \ge 1 \).
  For every \( m = 1, 2, \dots \), we have
  \[
  c_{m, j} = \bigsqcup \{\, e_j \mid \mbox{\( (e_1, \dots, e_n) \) is \( j \)-reachable from \( (p_m, X_j) \) following \( \Strategy \)} \,\}
  \]
  since \( j \)-th component of opponent nodes is unchanged during \( \seq{w}_{m} \).
  By the definition of edges of the game, \( p_{m+1} \sqsubseteq c_{m,j} \).
  Since \( p_{m+1} \) is a complete prime, there exists \( (e_1, \dots, e_n) \) that is \( j \)-reachable from \( (p_m, X_j) \) following \( \Strategy \) and such that \( p_{m+1} \sqsubseteq e_j \).
  By definition of \( j \)-reachability, there exists a sequence of nodes \( \seq{w}_m' \) such that \( (p_m, X_j) \cdot \seq{w}_m' \cdot (p_{m+1}, X_j) \) conforms with \( \Strategy \).
  Furthermore \( \seq{w}_m' \) consists of \( \Nodes_O \cup \{\, (q, X_{\ell}) \mid j < \ell \,\} \) if \( k \ge 1 \).
  It is easy to see that \( (p_1, X_j) \) is reachable from \( v'_0 = (d_1, \dots, d_n) \) following \( \Strategy \).
  Hence we have a sequence
  \[
  v'_0 \cdot \seq{w}_0' \cdot (p_1, X_j) \cdot \seq{w}_1' \cdot (p_2, X_j) \cdot \seq{w}_2' \cdot \dots
  \]
  that conforms with \( \Strategy \) and starts from \( v_0' = (d_1, \dots, d_n) \).
  Since \( \Strategy \) is winning on \( v_0' \), this sequence satisfies the parity condition, i.e.~the priority of \( X_j \) is even.
  Hence the original play, which conforms with \( \Strategy' \), also satisfies the parity condition.
\end{myproof}

We are now ready to prove soundness and completeness of the type-based characterization.
\begin{theorem}\label{thm:appx:game-semantics:game-and-interpretation}
  \( p \sqleq \sem{\HES}_{\lts} \) if and only if Proponent wins \( \SGame_{\lts,\HES} \) on \( (p, X_1) \).
\end{theorem}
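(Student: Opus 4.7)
The plan is to close the ``iff'' by plugging the three preparatory lemmas into each other at the distinguished starting node $(p,X_1)$. Concretely, I will use Lemma~\ref{lem:appx:game-semantics:completeness} to manufacture a winning strategy from a ``canonical'' Opponent node whose components are built by iterating the $g_i$'s, Lemma~\ref{lem:appx:game-semantics-stable-wlog} to upgrade any winning strategy into a stable one, and Lemma~\ref{lem:appx:game-semantics:stably-winning-is-sound} to read off the desired semantic inequality from a stable winning strategy. The bridge between the $g_i$'s and $\sem{\HES}$ is the identity $g_1(\,) = \sem{\HES}$ together with the unfolding equations of Lemma~\ref{lem:appx:game-semantics:alg-interpret}.

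For the direction ``$p \sqleq \sem{\HES}\Rightarrow$ Proponent wins on $(p,X_1)$'', I will define the Opponent node $\seq{d}=(d_1,\dots,d_n)$ inductively by $d_j := g_j(d_1,\dots,d_{j-1})$. Then $d_j \sqleq g_j(d_1,\dots,d_{j-1})$ holds trivially, so Lemma~\ref{lem:appx:game-semantics:completeness} furnishes a winning strategy $\Strategy_0$ from $\seq{d}$. Iterated application of the second clause of Lemma~\ref{lem:appx:game-semantics:alg-interpret} yields $f_{1,1}(d_1) = f_{1,n}(d_1,\dots,d_n) = \sem{\form_1}([X_i\mapsto d_i]_i)$, while the fixpoint equation $g_1(\,) = f_{1,1}(g_1(\,)) = f_{1,1}(d_1)$ (valid for both $\munu_1 = \mu$ and $\munu_1 = \nu$) shows that $g_1(\,)$ equals $\sem{\form_1}([X_i\mapsto d_i]_i)$. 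Hence $p \sqleq \sem{\HES} = \sem{\form_1}([X_i\mapsto d_i]_i)$, which is exactly the edge condition entitling Proponent to move from $(p,X_1)$ to $\seq{d}$. Prepending this opening move to $\Strategy_0$ yields a winning strategy from $(p,X_1)$.

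For the converse, assume Proponent wins from $(p,X_1)$. Proponent's first move picks some Opponent node $\seq{d}$ with $p \sqleq \sem{\form_1}([X_i\mapsto d_i]_i)$ and a winning continuation from $\seq{d}$. Applying Lemma~\ref{lem:appx:game-semantics-stable-wlog} to that continuation produces a stable winning strategy $\Strategy^\star$ on some $\seq{d}' \sqsupseteq \seq{d}$. Monotonicity of $\sem{\form_1}$ preserves the edge $(p,X_1)\to\seq{d}'$, so I can prepend this opening move to $\Strategy^\star$ to obtain a strategy on $\Nodes_0 = \{(p,X_1)\}$. Stability is inherited for free, because the stability condition compares two Opponent nodes flanking a Proponent node, and the new opening has no preceding Opponent node. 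Clause $(Q_1)$ of Lemma~\ref{lem:appx:game-semantics:stably-winning-is-sound}, applied to the play $(p,X_1)\cdot\seq{d}'$ starting from $\Nodes_0$, then gives $p \sqleq g_1(\,) = \sem{\HES}$, which is the conclusion.

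I do not expect a serious obstacle: the heavy lifting — constructing progress measures for $(\mu,\nu)$-alternations, and the bookkeeping that turns arbitrary winning strategies into stable ones — has already been absorbed into the three lemmas. The only delicate point is the asymmetry between the theorem (which is stated at the Proponent node $(p,X_1)$) and Lemmas~\ref{lem:appx:game-semantics:completeness} and \ref{lem:appx:game-semantics-stable-wlog} (which are stated at Opponent nodes); but both gaps are crossed cleanly in one step by using the edge condition from $(p,X_1)$ together with monotonicity of $\sem{\form_1}$, so the remaining work is essentially the fixpoint unfolding $g_1(\,) = f_{1,n}(d_1,\dots,d_n)$ and the routine check that prepending an opening Proponent move does not break the ``starts from $\Nodes_0$'' hypothesis of $(Q_1)$.
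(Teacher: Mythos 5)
Your proof is correct and follows essentially the same route as the paper: both directions are assembled from Lemmas~\ref{lem:appx:game-semantics:completeness}, \ref{lem:appx:game-semantics-stable-wlog} and \ref{lem:appx:game-semantics:stably-winning-is-sound}, glued by the identity \(g_1(\,)=\sem{\HES}_\lts\) from Lemma~\ref{lem:appx:game-semantics:alg-interpret}. The only (harmless) differences are in the bookkeeping around the distinguished node: the paper routes both directions through the Opponent node \((p,\bot,\dots,\bot)\), which makes the hypothesis of Lemma~\ref{lem:appx:game-semantics:completeness} and the adjacency with \((p,X_1)\) immediate and avoids your fixpoint-unfolding computation \(f_{1,1}(d_1)=f_{1,n}(d_1,\dots,d_n)\), whereas you use \((g_1(\,),g_2(g_1(\,)),\dots)\) for the forward direction and invoke clause \((Q_1)\) with \(\Nodes_0=\{(p,X_1)\}\) (a Proponent node, which the lemma's statement and proof do permit) rather than at the Opponent node.
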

\begin{myproof}
  (\(\Rightarrow\))
  By Lemma~\ref{lem:appx:game-semantics:completeness}, Proponent wins \( \SGame_{\lts,\HES} \) on \( (p, \bot, \dots, \bot) \).
  Since \( (p, \bot, \dots, \bot) \cdot (p, X_1) \) is a valid play, Proponent also wins on \( (p, X_1) \).
  
  (\(\Leftarrow\))
  If Proponent wins on \( (p, X_1) \), then Proponent also wins on \( (q, X_1) \) for every \( q \sqleq p \).
  Hence Proponent wins on \( (p, \bot, \dots, \bot) \).
  By Lemma~\ref{lem:appx:game-semantics-stable-wlog}, we have a stable winning strategy \( \Strategy \) of \( \SGame_{\lts,\HES} \) on \( (d_1, \dots, d_n) \) with \( p \sqleq d_1 \).
  Then \( (d_1, \dots, d_n) \cdot (p, X_1) \) is a play that conforms with \( \Strategy \) and starts from \( (d_1, \dots, d_n) \).
  Hence, by Lemma~\ref{lem:appx:game-semantics:stably-winning-is-sound}, \( p \sqleq g_1() \).
By Lemma~\ref{lem:appx:game-semantics:alg-interpret}, we have \( g_1() = \sem{\HES}_{\lts} \).
Therefore, we have \( p \sqleq \sem{\HES}_{\lts} \) as required.
\end{myproof}

\subsection{The Opposite Parity Game}
This subsection defines a game in which Proponent tries to disprove \( p \sqleq \sem{\HES}_{\lts} \) by giving an upper-bound \( \sem{\HES}_{\lts} \sqleq q \).
A similar construction can be found in Salvati and Walukiewicz~\cite{SALVATI2014340}; our construction can be seen as an infinite variant of them.

We first define the notion of complete coprimes, which is the dual of complete primes.
\begin{definition}[Complete coprime]
  Let \( (A, \le) \) be a complete lattice.
  An element \( p \in A \) is a \emph{complete coprime} if (1) \( p \neq \top \) and (2) for every \( U \subseteq A \), if \( (\bigwedge_{x \in U} x) \le p \), then \( x \le p \) for some \( x \in U \).
\end{definition}

\begin{lemma}\label{lem:appx:game-semantics:complete-prime-alg-implies-complete-coprime-alg}
  If \( (A, \le) \) is complete-prime algebraic, then for every \( x \in A \),
  \[
  x = \pGlb \{\, p \mid x \le p, \;\mbox{\( p \): complete coprime} \,\}.
  \]
\end{lemma}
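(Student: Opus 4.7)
The plan is to prove the two inequalities separately. One direction, namely $x \leq \bigwedge\{p \mid x \leq p,\ p \text{ complete coprime}\}$, is immediate from the fact that $x$ is a lower bound for the set on the right. The real content is the reverse inequality, so I would focus my effort there.

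For the reverse direction, I would use Lemma~\ref{lem:complete-prime-leq}: since $(A,\le)$ is complete-prime algebraic, it suffices to show that every complete prime $q$ lying below $\bigwedge\{p \mid x \leq p,\ p \text{ complete coprime}\}$ also lies below $x$. Contrapositively, given a complete prime $q$ with $q \not\leq x$, my goal is to construct a specific complete coprime $p_q$ satisfying $x \leq p_q$ but $q \not\leq p_q$; this would witness the failure of $q \leq \bigwedge\{\ldots\}$ and complete the argument.

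The natural candidate is $p_q := \pLub S_q$ where $S_q := \{z \in A \mid q \not\leq z\}$. The key observation, which exploits complete-primeness of $q$, is that $q \not\leq p_q$: if instead $q \leq \pLub S_q$, then by the defining property of complete primes there would be some $z \in S_q$ with $q \leq z$, contradicting $z \in S_q$. Hence $p_q \in S_q$ is actually the maximum of $S_q$, and since $q \not\leq x$ gives $x \in S_q$, we get $x \leq p_q$ for free. It also follows that $p_q \neq \top$, since $q \leq \top$ but $q \not\leq p_q$.

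The main obstacle is verifying that $p_q$ is indeed a complete coprime, i.e., that $\pGlb_{z \in U} z \leq p_q$ implies $z \leq p_q$ for some $z \in U$. Here is where I expect to have to think carefully, but the argument is clean: if every $z \in U$ satisfies $z \not\leq p_q$, then by maximality each such $z$ lies outside $S_q$, so $q \leq z$ for all $z \in U$; taking the meet yields $q \leq \pGlb_{z \in U} z \leq p_q$, contradicting $q \not\leq p_q$. This furnishes the required complete coprime $p_q$ with $x \leq p_q$ and $q \not\leq p_q$, establishing the nontrivial direction and hence the lemma.
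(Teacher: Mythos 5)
Your proof is correct and follows essentially the same route as the paper's: reduce the nontrivial inequality via Lemma~\ref{lem:complete-prime-leq} to exhibiting, for each complete prime \(q \nleq x\), a complete coprime above \(x\) but not above \(q\). The only (cosmetic) difference is that you take the witness to be \(\pLub\{z \mid q \nleq z\}\) and exploit its maximality in that set, whereas the paper takes \(\pLub\{q' \mid q \nleq q',\ q'\text{ complete prime}\}\) (the same element) and re-invokes complete-prime algebraicity in the coprimality check.
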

\begin{myproof}
  Obviously \( x \le \pGlb \{\, p \mid x \le p, \;\mbox{\( p \): complete coprime} \,\} \).
  We show that \( x \ge \pGlb \{\, p \mid x \le p, \;\mbox{\( p \): complete coprime} \,\} \).
  Since \( (A, \le) \) is complete-prime algebraic, 
by Lemma~\ref{lem:complete-prime-leq},
it suffices to show that, for every complete prime \( q \) such that \( q \nleq x \), 
we have \(q\nleq \pGlb \{\, p \mid x \le p, \;\mbox{\( p \): complete coprime} \,\}\),
i.e., there exists a complete coprime \( p \ge x \) such that \( q \nleq p \).
  Let \( q \) be a complete prime such that \( q \nleq x \). We show that
  \[
  p = \pLub \{\, q' \mid q \nleq q', \;\mbox{\( q' \): complete prime} \,\}
  \]
is indeed a complete coprime \( p \ge x \) such that \( q \nleq p \).
  We first check \( q \nleq p \).
If it were the case that \(q\leq p\), then \( q \le p = \pLub \{\, q' \mid q \nleq q', \;\mbox{\( q' \): complete prime} \,\} \), which would imply \( q \le q' \) for some \( q' \) with \( q \nleq q' \) (as \( q \) is a complete prime), a contradiction.
  It remains to check that \( p \) is a complete coprime.
  Since \( q \nleq p \), we have \( p \neq \top \).
  Suppose that \( \pGlb_{y \in U} y \le p \).
  If \( q \le y \) for every \( y \in U \), then \( q \le p \), a contradiction.
  Hence \( q \nleq y \) for some \( y \in U \).
  Now
  \begin{align*}
    y
    &= \pLub \{\, q'' \mid q'' \le y, \;\mbox{\( q'' \): complete prime} \,\} \\
    &\le \pLub \{\, q'' \mid q \nleq q'' , \;\mbox{\( q'' \): complete prime} \,\} \\
    &= p.
  \end{align*}
  Here the second inequality comes from the 
fact that \(q''\le y\) implies \(q\nleq q''\); in fact, 
if \( q'' \le y \) and \( q \le q'' \), then \( q \le q'' \le y \), hence a contradiction.
\end{myproof}

\begin{definition}[Opposite parity game]
  The parity game \( \CounterGame_{\lts, \HES} \) is defined by the following data:
  \begin{align*}
    \Nodes_{P} :=\;& \{\, (p, X_i) \mid p \in \D_{\lts, \typ_i}, \;\textrm{\( p \): complete coprime} \,\}
    \\
    \Nodes_{O} :=\;& \{\, (d_1, \dots, d_n) \mid d_1 \in \D_{\lts, \typ_1}, \dots, d_n \in \D_{\lts, \typ_n} \,\}
    \\
    \Edges
    :=\;& \{\, ((p, X_i), (d_1, \dots, d_n)) \mid p \sqsupseteq \sem{\form_i}([X_1 \mapsto d_1, \dots, X_n \mapsto d_n]) \,\}
    \\
    \cup\;& \{\, ((d_1, \dots, d_n), (p, X_i)) \mid p \sqsupseteq d_i, \;\textrm{\( p \): complete coprime} \,\}.
  \end{align*}
  The priority of the opponent node is \( 0 \); the priority of node \( (p, X_i) \) is \( 2(n-i)+1 \) if \( \munu_i = \nu \) and \( 2(n-i)+2 \) if \( \munu_i = \mu \).
\end{definition}
The game \( \CounterGame_{\lts,\HES} \) is obtained by replacing complete primes with complete coprimes and \( \sqleq \) with \( \sqsupseteq \).
The position \((p,X_i)\) is the state where Proponent ties to show that
\(\sem{(\HES,X_i)} \sqleq p\); to this end, Proponent picks a valuation
\([X_1\mapsto d_1,\ldots,X_n\mapsto d_n]\) such that 
\((p \sqsupseteq \sem{\form_i}([X_1 \mapsto d_1, \dots, X_n \mapsto d_n])\). Opponent challenges 
the validity of the valuation
\([X_1 \mapsto d_1, \dots, X_n \mapsto d_n]\), by picking \(i\) and a complete coprime \(p'\) such that 
\(p' \sqsupseteq d_i\), and asking why \(\sem{(\HES,X_i)}\sqsubseteq p'\) holds.

\begin{lemma}\label{lem:appx:game-semantics:opposite-game}
  If Proponent wins \( \CounterGame_{\lts,\HES} \) on \( (p, X_i) \) (where \( p \) is a complete coprime), then for every complete prime \( p' \in \D_{\lts,\typ_i} \) such that \( p' \nsqsubseteq p \), Opponent wins \( \SGame_{\lts,\HES} \) on \( (p', X_i) \).
\end{lemma}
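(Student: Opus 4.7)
The intuition is that $\CounterGame_{\lts,\HES}$ is the ``dual'' game in which Proponent argues upper bounds, so a P-win in $\CounterGame$ on $(p,X_i)$ amounts to a proof that $\sem{(\HES,X_i)}\sqleq p$; if additionally $p'\nsqleq p$, then $p'\nsqleq \sem{(\HES,X_i)}$, so by Theorem~\ref{thm:appx:game-semantics:game-and-interpretation} Opponent must win $\SGame$ on $(p',X_i)$. My plan is to give this implication a direct game-theoretic proof, since such a proof also yields the concrete O-strategy that will be useful in downstream arguments (and does not rely on Theorem~\ref{thm:appx:game-semantics:game-and-interpretation}). The plan is to simulate the two games in lock-step, maintaining the invariant that, whenever the $\SGame$-play is at a P-node $(q,X_j)$ and the $\CounterGame$-play is at a P-node $(r,X_j)$ on the \emph{same} variable, we have $q\nsqleq r$ (with $q$ a complete prime and $r$ a complete coprime).

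\textbf{Construction of the O-strategy.} Fix a P-winning strategy $\CounterStrategy$ of $\CounterGame_{\lts,\HES}$ on $(p,X_i)$, which we may take memoryless. I will define an O-strategy $\Strategy^{*}$ of $\SGame_{\lts,\HES}$ on $(p',X_i)$ by induction along a play, together with a simulated play in $\CounterGame$ conforming with $\CounterStrategy$. Suppose the current $\SGame$-node is a P-node $(q,X_j)$ and the simulated $\CounterGame$-node is $(r,X_j)$ with $q\nsqleq r$. First, Proponent moves in $\SGame$ to some O-node $(d_1,\dots,d_n)$ with $q\sqleq \sem{\form_j}([X_1\mapsto d_1,\dots,X_n\mapsto d_n])$. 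In parallel, let $\CounterStrategy$ move at $(r,X_j)$ to $(\bar d_1,\dots,\bar d_n)$, which satisfies $\sem{\form_j}([X_1\mapsto \bar d_1,\dots,X_n\mapsto \bar d_n])\sqleq r$. By monotonicity of $\sem{\form_j}$, we cannot have $d_k\sqleq \bar d_k$ for all $k$; otherwise $q\sqleq \sem{\form_j}(\dots d_k\dots)\sqleq \sem{\form_j}(\dots \bar d_k\dots)\sqleq r$, contradicting $q\nsqleq r$. Pick any $k$ with $d_k\nsqleq \bar d_k$. By Lemma~\ref{lem:complete-prime} and Lemma~\ref{lem:complete-prime-leq} there is a complete prime $q'\sqleq d_k$ with $q'\nsqleq \bar d_k$; dually, by Lemma~\ref{lem:appx:game-semantics:complete-prime-alg-implies-complete-coprime-alg}, there is a complete coprime $r'\sqsupseteq \bar d_k$ with $q'\nsqleq r'$ (otherwise $q'$ would be below the glb of all complete coprimes above $\bar d_k$, which is $\bar d_k$). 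Let $\Strategy^{*}$ pick the edge from $(d_1,\dots,d_n)$ to $(q',X_k)$ in $\SGame$, and let the simulation move in $\CounterGame$ from $(\bar d_1,\dots,\bar d_n)$ to $(r',X_k)$. The invariant $q'\nsqleq r'$ is restored on the same variable index $k$, and the simulated $\CounterGame$-play conforms with $\CounterStrategy$, starting at $(p,X_i)$ with $p'\nsqleq p$ as the base case.

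\textbf{Verification that $\Strategy^{*}$ wins.} A maximal play in $\SGame$ following $\Strategy^{*}$ is either finite or infinite. If it is finite, the last node must be a P-node (otherwise $\Strategy^{*}$ would still have a move, since the witness $(q',X_k)$ above is always available); hence Opponent wins. For infinite plays, note that by construction the $\SGame$- and simulated $\CounterGame$-plays traverse the same sequence of variable indices $X_{j_1}X_{j_2}\cdots$. Comparing the priority assignments, for each visited P-node with variable $X_\ell$ the $\CounterGame$-priority is exactly the $\SGame$-priority $+1$, since $\mu_\ell=\nu$ gives priorities $2(n-\ell)$ vs.\ $2(n-\ell)+1$, and $\mu_\ell=\mu$ gives $2(n-\ell)+1$ vs.\ $2(n-\ell)+2$; O-nodes have priority $0$ in both games and do not affect the maximum. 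Consequently the maximum priority occurring infinitely often in the simulated $\CounterGame$-play is one more than that in the $\SGame$-play. Since $\CounterStrategy$ is winning, the former is even, so the latter is odd, and the $\SGame$-play violates the parity condition, i.e.\ Opponent wins.

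\textbf{Main obstacle.} The only nontrivial step is the separation argument: from $d_k\nsqleq \bar d_k$ to the existence of a complete prime $q'\sqleq d_k$ and a complete coprime $r'\sqsupseteq \bar d_k$ with $q'\nsqleq r'$, since it is here that the complete-prime algebraic structure of $\D_{\lts,\typ}$ (Lemma~\ref{lem:complete-prime}) and its coprime counterpart (Lemma~\ref{lem:appx:game-semantics:complete-prime-alg-implies-complete-coprime-alg}) are genuinely needed to bridge the two games; once that separation is in hand, the rest is bookkeeping about parity offsets and a straightforward inductive simulation.
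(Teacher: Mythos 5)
Your proof is correct and takes essentially the same approach as the paper: the same inconsistency invariant ($q\nsqsubseteq r$ on matching variables), the same prime/coprime separation step from $d_k\nsqsubseteq \bar d_k$, and the same priority-offset argument for infinite plays. The only difference is presentational — your lock-step simulation builds the simulated $\CounterGame$-play incrementally, which sidesteps the paper's explicit ``admissible play'' machinery and its canonical choice of witnessing plays.
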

\begin{myproof}
  Let \( \Strategy \) be a P-winning strategy of \( \CounterGame_{\lts,\HES} \) on \( (p, X_i) \) (where \( p \) is a complete coprime).
  We can assume without loss of generality that \( \Strategy \) is memoryless.
  Let \( p' \in \D_{\lts,\typ_i} \) be a complete prime such that \( p' \nsqsubseteq p \).

  Let \( (q, X_j) \) be a Proponent node in \( \CounterGame_{\lts,\HES} \) and \( (q', X_{j'}) \) be a Proponent node in \( \SGame_{\lts,\HES} \).
  We say that \( (q, X_j) \) is \emph{inconsistent with \( (q', X_{j'}) \)} if \( j = j' \) and \( q' \nsqsubseteq q \).
  Similarly, given Opponent nodes \( (d_1, \dots, d_n) \) in \( \CounterGame_{\lts,\HES} \) and \( (d'_1, \dots, d'_n) \) in \( \SGame_{\lts,\HES} \), those nodes are \emph{inconsistent} if \( d'_j \nsqsubseteq d_j \) for some \( 1 \le j \le n \).
  
  Let \( (p', X_i) \cdot v'_1 \cdot v'_2 \cdot \dots \cdot v'_k \) be a play of \( \SGame_{\lts,\HES} \).
  This play is \emph{admissible} (with respect to \( \Strategy \)) if there exists a play \( (p, X_i) \cdot v_1 \cdot v_2 \cdot \dots \cdot v_k \) of \( \CounterGame_{\lts,\HES} \) that conforms with \( \Strategy \) such that \( v_j \) is inconsistent with \( v'_j \) for every \( j \in\set{1, \dots, k} \).
  For each admissible play \( (p', X_i) \cdot v'_1 \cdot v'_2 \cdot \dots \cdot v'_k \) of \( \SGame_{\lts,\HES} \), we choose such a play \( (p, X_i) \cdot v_1 \cdot v_2 \cdot \dots \cdot v_k \) of \( \CounterGame_{\lts,\HES} \).
  Suppose that our choice satisfies the following property:
  \begin{quote}
    If \( (p, X_i) \cdot v_1 \cdot v_2 \cdot \dots \cdot v_k \) is the chosen play corresponding to \( (p', X_i) \cdot v'_1 \cdot v'_2 \cdot \dots \cdot v'_k \), then \( (p, X_i) \cdot v_1 \cdot v_2 \cdot \dots \cdot v_{k-1} \) is the chosen play for \( (p', X_i) \cdot v'_1 \cdot v'_2 \cdot \dots \cdot v'_{k-1} \).
  \end{quote}
  (A way to achieve this is to introduce a well-ordering on nodes of \( \CounterGame_{\lts,\HES} \) and to choose the minimum play w.r.t.~the lexicographic ordering from those that satisfy the requirement.)

  We define an Opponent strategy \( \CounterStrategy \) for \( \SGame_{\lts,\HES} \) defined on admissible plays starting from \( (p', X_i) \).
  Let
  \[
  (p', X_i) \cdot v'_1 \cdot v'_2 \cdot \dots \cdot v'_k \cdot (d'_1, \dots, d'_n)
  \]
  be an admissible play of \( \SGame_{\lts,\HES} \).
  Let
  \[
  (p, X_i) \cdot v_1 \cdot v_2 \cdot \dots \cdot v_k \cdot (d_1, \dots, d_n)
  \]
  be the chosen play of \( \CounterGame_{\lts,\HES} \) corresponding to the above play.
  So \( v_j \) and \( v'_j \) are inconsistent for every \( j \in\set{1, \dots, k} \) and \( (d_1, \dots, d_n) \) is inconsistent with \( (d'_1, \dots, d'_n) \).
  By definition, \( d'_{\ell} \nsqsubseteq d_{\ell} \) for some \( 1 \le \ell \le n \).
  Since \( \D_{\lts,\typ_{\ell}} \) is complete-prime algebraic, by Lemma~\ref{lem:appx:game-semantics:complete-prime-alg-implies-complete-coprime-alg},
  \[
  d'_{\ell} = \Lub \{\, q' \in \D_{\lts,\typ_{\ell}} \mid q' \sqleq d'_{\ell},\; \mbox{\( q' \): complete prime} \,\}
  \]
  and
  \[
  d_{\ell} = \Glb \{\, q \in \D_{\lts,\typ_{\ell}} \mid d_{\ell} \sqleq q,\; \mbox{\( q \): complete coprime} \,\}.
  \]
  Hence \( d'_{\ell} \nsqsubseteq d_{\ell} \) implies that there exist a complete prime \( q' \sqleq d'_{\ell} \) and a complete coprime \( d_{\ell} \sqleq q \) such that \( q' \nsqsubseteq q \).
  The Opponent strategy \( \CounterStrategy \) chooses \( (q', X_{\ell}) \) as the next node in \( \SGame_{\lts,\HES} \).

  We show that \( \CounterStrategy \) is indeed an O-strategy of \( \SGame_{\lts,\HES} \) on \( (p', X_i) \).
  It suffices to show that every play \( (p', X_i) \cdot v'_1 \cdot \dots \cdot v'_k \) that conforms with the Opponent strategy \( \CounterStrategy \) is admissible.
  If \( k = 0 \), the play is obviously admissible.
  Let \( (p', X_i) \cdot v'_1 \cdot \dots \cdot v'_k \) be a play that conforms with \( \CounterStrategy \) and assume that it is admissible.
  If \( v'_k \) is an Opponent node, the next node \( v'_{k+1} \) is determined by \( \CounterStrategy \) and then \( (p', X_i) \cdot v'_1 \cdot \dots \cdot v'_k \cdot v'_{k+1} \) is admissible by the definition of \( \CounterStrategy \).
  Consider the case that \( v'_k = (p'_k, X_{i_k}) \) is an Proponent node.
  Since this play is admissible, we have the chosen play \( (p, X_i) \cdot v_1 \cdot v_2 \cdot \dots \cdot v_k \) of \( \CounterGame_{\lts,\HES} \) corresponding to the above play.
  Then \( v_j \) and \( v'_j \) are inconsistent for every \( j = 1, \dots, k \).
  Since \( v_k \) is inconsistent with \( v'_k \), \( v_{k} = (p_k, X_{i_k}) \) with \( p'_k \nsqsubseteq p_k \).
  Let \( v'_{k+1} = (d'_1, \dots, d'_n) \) be an arbitrary Opponent node in \( \SGame_{\lts,\HES} \) such that \( (v_k, v_{k+1}) \) is valid in \( \SGame_{\lts,\HES} \).
  Let \( v_{k+1} = (d_1, \dots, d_n) \) be the Opponent node in \( \CounterGame_{\lts,\HES} \) determined by \( \Strategy \).
  By definition of edges, \( p'_k \sqleq \sem{\form_i}([X_1 \mapsto d'_1, \dots, X_n \mapsto d'_n]) \) and \( \sem{\form_i}([X_1 \mapsto d_1, \dots, X_n \mapsto d_n]) \sqleq p_k \).
  Since \( \sem{\form_i} \) is monotone, \( d'_j \sqleq d_j \) for every \( j = 1, \dots, n \) would imply \( p'_k \sqleq p_k \), a contradiction.
  Hence \( d'_j \nsqsubseteq d_j \) for some \( 1 \le j \le n \), that means, \( (d'_1, \dots, d'_n) \) is inconsistent with \( (d_1, \dots, d_n) \).
  Therefore the play \( (p, X_i) \cdot v_1 \cdot v_2 \cdot \dots \cdot v_k \cdot (d'_1, \dots, d'_n) \) is admissible for every choice of \( (d'_1, \dots, d'_n) \).
  So \( \CounterStrategy \) is defined for every play that conforms with \( \CounterStrategy \) and starts from \( (p', X_i) \).

  We show that \( \CounterStrategy \) is O-winning.
  As \( \CounterStrategy \) does not get stuck, it suffices to show that every infinite play following \( \CounterStrategy \) does not satisfy the parity condition.
  Assume an infinite play
  \[
  (p', X_i) \cdot v'_1 \cdot (p'_1, X_{i'_1}) \cdot v'_2 \cdot (p'_2, X_{i'_2}) \cdot \dots \cdot v'_\ell \cdot (p'_{\ell}, X_{i'_{\ell}}) \cdot \dots
  \]
  following \( \CounterStrategy \).
  By the definition of \( \CounterStrategy \), every finite prefix of the play is admissible.
  Let \( (p', X_i) \cdot v_{k,1} \cdot \dots \cdot v_{k,k} \) be the chosen play for the prefix of the above play of length \( k \).
  Then, by the assumption on the choice, the choice of each node does not depend on the length of the prefix, i.e.~\( v_{k,i} = v_{k',i} \) for every \( i \ge 1 \) and \( k, k' \ge i \).
  Hence the above infinite play is ``admissible'' in the sense that there exists an infinite play in \( \CounterGame_{\lts,\HES} \)
  \[
  (p, X_i) \cdot v_1 \cdot (p_1, X_{i_1}) \cdot v_2 \cdot (p_2, X_{i_2}) \cdot \dots \cdot v_{\ell} \cdot (p_{\ell}, X_{i_{\ell}}) \cdot \dots
  \]
  that conforms with \( \Strategy \) and such that \( (p_{\ell}, X_{i_{\ell}}) \) and \( (p'_{\ell}, X_{i'_{\ell}}) \) are inconsistent for every \( \ell \).
  Hence \( i_{\ell} = i'_{\ell} \) for every \( \ell \).
  Because
  \[
  \mbox{(priority of \( (p_{\ell}, X_{i_{\ell}}) \) in \( \CounterGame_{\lts,\HES} \))} = \mbox{(priority of \( (p'_{\ell}, X_{i_{\ell}}) \) in \( \SGame_{\lts,\HES} \))} + 1,
  \]
  the latter play satisfies the parity condition if and only if the former play does not satisfy the parity condition.
  Since \( \Strategy \) is a winning strategy, we conclude that the former play does not satisfy the parity condition.
\end{myproof}

\section{Proof for Section~\ref{sec:liveness}}
\label{sec:proof-liveness}

\subsection{Some definitions}
Here we introduce some notions and notations, which are useful in proofs of results of Section~\ref{sec:liveness}.

\begin{definition}[Choice sequences]
  Let \( \RTyPath \) be an infinite sequence consisting of \( \PLeft \) and \( \PRight \), called a \emph{choice sequence}.
  We define a reduction relation for pairs of terms and choice sequences by
  \[
  (\term_1 \nondet \term_2;\, \PLeft \RTyPath) \stackrel{\epsilon}{\red}_D (\term_1; \RTyPath)
  \qquad
  (\term_1 \nondet \term_2;\, \PRight \RTyPath) \stackrel{\epsilon}{\red}_D (\term_2; \RTyPath)
  \]
  and \( (u; \RTyPath) \stackrel{\ell}{\red}_D (u'; \RTyPath) \) if \( u \neq u_1 \nondet u_2 \) and \( u \stackrel{\ell}{\red}_D u' \). We sometimes omit labels and just write
\( (u; \RTyPath) {\red}_D (u'; \RTyPath) \)
  for   \( (u; \RTyPath) \stackrel{\ell}{\red}_D (u'; \RTyPath) \).
  A choice sequence resolves nondeterminism of a program \( P \); a reduction sequence is completely determined by its initial term, its length and a choice sequence.
  We write \( \red_D^{k} \) for the \( k \)-step reduction relation.
  If \( (t_0, \RTyPath_0) \red_D (t_1, \RTyPath_1) \red_D \cdots \red_D (t_k, \RTyPath_k) \), then we write \( \RTyPath_0 \Vdash t_0 \red_D t_1 \red_D \cdots \red_D t_k \) and say that the reduction sequence \( t_0 \red_D t_1 \red_D \cdots \red_D t_k \) follows \( \RTyPath_0 \).
\end{definition}

We introduce a slightly more elaborate notion of the recursive call relation below.
Note that the relation \(f\,\seq{t}\recall{\progd}g\,\seq{u}\) in Definition~\ref{def:recall}
coincides with
\(f\,\seq{t}\recall{\progd}g\,\seq{u}\) redefined below.
\nk{Confirm this. I made the notation the same.}
\begin{definition}[Recursive call relation, call sequences]
  Let \( \prog = (\progd, \mainfun) \) be a program, with \( \progd = \{ f_i\,\seq{x}_i = u_i \}_{1 \le i \le n} \).
  We define \( D^{\Mark} := D \cup \{ f^{\Mark}_i\,\tilde{x} = u_i \}_{1 \le i \le n} \) where \( f^{\Mark}_1, \dots, f^{\Mark}_n \) are fresh symbols.
  So \( D^{\Mark} \) has two copies of each function symbol, one of which is marked.
  For the terms \(\seq{t}_i\) and \(\seq{t}_j\) that do not contain marked symbols,
  we write \( (f_i\,\seq{t}_i; \RTyPath) \CallSeqN{k+1}_{\progd} (f_j\,\seq{t}_j; \RTyPath') \) if
  \[
  ([\seq{t}_i/\seq{x}_i][f_1^\Mark/f_1,\ldots,f_n^\Mark/f_n]u_i;\, \RTyPath) \red^{k}_{\progd^\Mark} (f_j^\Mark\,\seq{t}'_j;\, \RTyPath')
  \]
  \tkchanged{(then \(\seq{t}_j\) is obtained by erasing all the marks in \(\seq{t}'_j\)).}
  If there exists a (finite or infinite) sequence \( (f\,\tilde{s}; \RTyPath) \CallSeqN{k_1}_D (g_1\,\tilde{u}_1; \RTyPath_1) \CallSeqN{k_2}_D (g_2\,\tilde{u}_2; \RTyPath_2) \CallSeqN{k_3}_D \cdots \), then we write \( \RTyPath \Vdash f\,\tilde{s} \CallSeqN{k_1}_D g_1\,\tilde{u}_1 \CallSeqN{k_2}_D g_2\,\tilde{u}_2 \CallSeqN{k_3}_D \cdots \) and call it a \emph{call sequence of \( f\,\tilde{s} \) following \( \RTyPath \)}.
  We often omit the number \( k_j \) of steps and the choice sequence \( \RTyPath \).
  Given a program \( P = (D, \mainfun) \), the \emph{set of call sequences of \( P \)} is a subset of (finite or infinite) sequences of function symbols defined by
  \( \Callseq(P) := \{\, \mainfun\, g_1\, g_2 \dots \mid \mainfun \CallSeq_D g_1\,\seq{u}_1 \CallSeq_D g_2\,\seq{u}_2 \CallSeq_D \cdots \,\} \).
\end{definition}
Note that by the definition above,
\( (f_i\,\seq{t}_i; \RTyPath) \CallSeqN{k}_{\progd} (f_j\,\seq{t}_j; \RTyPath') \)
implies
\( (f_i\,\seq{t}_i; \RTyPath) \red^{k}_{\progd} (f_j\,\seq{t}_j; \RTyPath') \).
We write \( \term^{\Mark} \) for the term obtained by marking all function symbols in \( \term \), i.e.~\( [f_1^{\Mark}/f_1, \ldots, f_n^{\Mark}/f_n]\term \).

\subsection{Existence of a unique infinite call-sequence}
Let \( P = (D, \mainfun) \) be a program and
\[
\RTyPath \Vdash \mainfun \stackrel{\ell_1}{\red}_D t_1 \stackrel{\ell_2}{\red}_D \cdots
\]
be the infinite reduction sequence following \( \RTyPath \).
This subsection proves that there exists a unique infinite call-sequence
\[
\RTyPath \Vdash f\,\tilde{u} \CallSeq_D g_1\,\tilde{u}_1 \CallSeq_D g_2\,\tilde{u}_2 \CallSeq_D \cdots
\]
of this reduction sequence.

\begin{lemma}
  Let \( P = (D, \mainfun) \) be a program and \( \RTyPath \in \{ \PLeft, \PRight \}^{\omega} \) be a choice sequence.
  Suppose we have an infinite reduction sequence
  \[
  \RTyPath \Vdash \mainfun \red_D t_1 \red_D t_2 \red_D \dots.
  \]
  Then there exists an infinite call-sequence
  \[
  \RTyPath \Vdash \mainfun \CallSeq_D g_1\,\tilde{u}_1 \CallSeq_D g_2\,\tilde{u}_2 \CallSeq_D \cdots.
  \]
\end{lemma}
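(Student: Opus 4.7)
The plan is to construct the infinite call sequence inductively, maintaining the invariant that after the first \( i \) CallSeq steps the remaining suffix of \( \RTyPath \) still drives an infinite reduction from \( g_i\,\tilde{u}_i \) in \( \progd \). The base case \( i = 0 \) with \( g_0 := \mainfun \) and \( \tilde{u}_0 := () \) is precisely the hypothesis. The crucial enabling observation is that function-unfolding steps must occur at the head infinitely often: all other top-level reductions, namely event reductions \( \evexp{\lab}{t'} \red_D t' \), conditionals \( \ifexp{p(\seq{s})}{t_1}{t_2} \red_D t_j \), and nondeterministic choices \( t_1 \nondet t_2 \red_D t_j \), replace the term by a strict subterm and hence strictly decrease the term size. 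Consequently, starting from any term whose head is not a function application, at most \( |t| \) consecutive non-function reductions can occur before either the head becomes a function application or the term reaches \( \Vunit \) (which is irreducible); the latter possibility is excluded by the infinite reduction hypothesis.

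For the inductive extension from \( g_i\,\tilde{u}_i \), I would consider the marked reduction in \( \progd^\Mark \) starting from \( [\tilde{u}_i/\seq{x}_i][f_1^\Mark/f_1, \ldots, f_n^\Mark/f_n]u_{g_i} \). This marked reduction simulates the unmarked continuation \( [\tilde{u}_i/\seq{x}_i]u_{g_i} \red_D \cdots \) step by step, since the reduction rules depend only on the syntactic shape of terms and are indifferent to marks; in particular, the same choice sequence \( \RTyPath \) is consumed in both. Applying the observation above to this (infinite) marked reduction, a function application appears at the head after some number \( k \) of steps; reading off the head function symbol together with its arguments and erasing the marks from the arguments yields the candidate for the next call-sequence entry \( g_{i+1}\,\tilde{u}_{i+1} \), and since the remaining suffix \( \RTyPath_{i+1} \) continues to drive the original infinite reduction from \( g_{i+1}\,\tilde{u}_{i+1} \), the inductive invariant is preserved and the construction continues forever.

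The hardest part is verifying that the head function symbol emerging in the terminal state of the marked reduction is actually \emph{marked}, as required by the definition of \( \CallSeqN{k+1}_D \), rather than an unmarked symbol inherited from a subterm originating in the arguments \( \tilde{u}_i \). This requires careful bookkeeping of how marks propagate through the simulating marked reduction: non-function reductions merely project onto subterms and so preserve the marks at their positions, but unfoldings---whether of marked or unmarked symbols at the head---pass (possibly marked) arguments into the body, so one must track how these substitutions interact with the markings introduced by \( [f_j^\Mark/f_j] \) at the start of each CallSeq step. Reconciling this propagation with the convention that argument terms between successive CallSeq steps have their marks erased, and showing that among the infinitely many head function applications produced by the second step a \emph{marked} one must eventually appear, is the central technical point of the proof.
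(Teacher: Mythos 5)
There is a genuine gap, and it sits exactly where you located ``the central technical point'': the claim that, in the marked reduction from \([\tilde{u}_i/\tilde{x}_i][f_1^\Mark/f_1,\ldots,f_n^\Mark/f_n]u_{g_i}\), a \emph{marked} head application must eventually appear is false, and so is the implicit assumption that taking the first available head application yields an extendable call sequence. Consider \(\progd=\{\mainfun = f\,(\mathit{loop}\,\unitexp),\ f\,x = x,\ \mathit{loop}\,y = \evexp{a}{\mathit{loop}\,y}\}\) with \(x,y\COL\Tunit\). The reduction from \(\mainfun\) is infinite, and \(\mainfun \recall{\progd} f\,(\mathit{loop}\,\unitexp)\). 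But from \(f\,(\mathit{loop}\,\unitexp)\) the marked body is \([\mathit{loop}\,\unitexp/x]\,x = \mathit{loop}\,\unitexp\) with the \emph{unmarked} \(\mathit{loop}\) (it came from the argument, not from \(f\)'s body), and its reduction in \(\progd^\Mark\) only ever unfolds unmarked symbols; no marked head occurrence ever appears, so \(f\,(\mathit{loop}\,\unitexp)\) has no \(\recall{\progd}\)-successor at all, even though your inductive invariant (the remaining suffix of \(\RTyPath\) drives an infinite reduction from it) holds. The lemma is still true for this program, because \(\mainfun \recall{\progd} \mathit{loop}\,\unitexp \recall{\progd} \mathit{loop}\,\unitexp \recall{\progd}\cdots\): the correct infinite call sequence must \emph{skip} \(f\), so the choice of \(g_{i+1}\) cannot be made greedily from local information. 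Your correct preliminary observations (term size strictly decreases on non-unfolding steps, hence infinitely many unfoldings; marks do not affect reducibility) do not bridge this.

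Repairing the argument requires a global construction: organize the head redexes of the infinite reduction into a tree via the unique-ancestor (``originates from'') relation and extract an infinite branch, using the simple-type/order structure to bound how often the chain can descend into argument positions --- this is essentially Kobayashi and Ong's argument, which the paper invokes rather than reproves: it erases integers (replacing conditionals by \(\nondet\) and integer expressions by \(\unitexp\)), applies their existence result for the integer-free case, and transports the resulting call sequence back. So in addition to the gap, your route differs structurally from the paper's citation-based reduction.
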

\begin{myproof}
  Kobayashi and Ong~\cite[Appendix~B]{KO09LICSfull} have proved a similar result for simply-typed programs without integers.
  We remove integers from \( D \) as follows:
  \begin{itemize}
  \item First, replace each conditional branching \( \ifexp{p(\term'_1,\ldots,\term'_k)}{\term_1}{\term_2} \) with the nondeterministic branching \( \term_1 \nondet \term_2 \).
  \item Then replace each integer expression with the unit value.
  \end{itemize}
  Let us write \( D' \) for the resulting function definitions.
  Then there exists a choice \( \RTyPath' \) such that \( \RTyPath' \Vdash \mainfun \red_{D'} t_1' \red_{D'} t_2' \red_{D'} \cdots \) and, for every \( i \ge 1 \), \( t_i' \) is obtained from \( t_i \) by the above translation.
  By the result of Kobayashi and Ong, this sequence has an infinite call-sequence.
  This call-sequence can be transformed into a call-sequence of the original reduction sequence.
\end{myproof}

We prove uniqueness.

Suppose that \( (f\,\seq{u}; \RTyPath) \red^k_{\progd} (g\,\seq{v}; \RTyPath') \).
Given an occurrence of \( g \) in \( f\,\seq{u} \), we write \( f'\,\seq{u}' \) for the term obtained by replacing the occurrence of \( g \) with \( g^{\Mark} \).
We say that the head-occurrence \( g \) in \( g\,\seq{v} \) is a \emph{copy of the occurrence of \( g \) in \( f\,\seq{u} \)} if \( (f'\,\seq{u}'; \RTyPath) \red^k_{\progd^{\Mark}} (g^{\Mark}\,\seq{v}'; \RTyPath) \).
An occurrence of \( h \) in \( f\,\seq{u} \) is an \emph{ancestor of the head-occurrence of \( g \) in \( g\,\seq{v} \)} if
\[
(f\,\seq{u}; \RTyPath) \red^{k_0}_{\progd} (h_1\,\seq{w}_1; \RTyPath_1) \CallSeqN{k_1}_{\progd} (h_2\,\seq{w}_2; \RTyPath_2) \CallSeqN{k_2}_{\progd} \dots \CallSeqN{k_\ell}_{\progd} (h_\ell\,\seq{w}_{\ell}; \RTyPath_{\ell}) = (g\,\seq{v}; \RTyPath')
\]
where \( \ell \ge 1 \), \( k = \sum_{j = 0}^{\ell} k_{j} \) and \( h_1 \) is a copy of the occurrence of \( h \) in \( f\,\seq{u} \) (hence \( h_1 = h \) as function symbols).
It is not difficult to see that, given a reduction sequence \( (f\,\seq{u}; \RTyPath) \red^k_{\progd} (g\,\seq{v}; \RTyPath') \), there exists a unique ancestor of \( g \) in \( f\,\seq{u} \).

\begin{lemma}
  For every term \( t \) and choice sequence \( \RTyPath \in \{ \PLeft, \PRight \}^{\omega} \), there exists at most one infinite call-sequence of \( t \) following \( \RTyPath \).
\end{lemma}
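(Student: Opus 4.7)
The plan is to prove uniqueness by contradiction, exploiting the determinism of the reduction relation once the choice sequence is fixed. The key observation is that, given \(\RTyPath\), the relation \(\red_{\progd}\) (and hence \(\red_{\progd^\Mark}\)) is deterministic: the transition rules of Figure~\ref{fig:os} apply at the top level of a term, and the sole nondeterministic rule (the one for \(\nondet\)) is resolved by consulting \(\RTyPath\) via the lifted reduction on pairs defined in this appendix. Consequently, the marked reduction sequence from any fixed marked term is unique, and the (possibly infinite) list of steps at which a marked function reaches the head is canonically determined.

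Suppose, for contradiction, that \((t;\RTyPath)\) admits two distinct infinite call-sequences, and let \(j\) be the first index at which they diverge. Focusing on the common prefix, it suffices to treat divergence at the first step, i.e.\ \((h\,\seq{s};\RTyPath)\CallSeqN{k_1}(g\,\seq{u};\RTyPath_1)\) and \((h\,\seq{s};\RTyPath)\CallSeqN{k_2}(g'\,\seq{u}';\RTyPath_2)\), both extendable to infinite call-sequences, with \((g\,\seq{u},\RTyPath_1)\neq(g'\,\seq{u}',\RTyPath_2)\). Determinism forces \(k_1\neq k_2\); assume \(k_1 < k_2\) without loss of generality. In the unique marked reduction of \([\seq{s}/\seq{x}]u_h^\Mark\), step \(k_1\) exhibits \(g^\Mark\) at the head, and the next step reduces this occurrence by the rule \(g^\Mark\,\seq{x}'=u_g\) whose body is unmarked, producing \([\seq{v}/\seq{x}']u_g\) in which all surviving marks lie within the passed arguments \(\seq{v}\). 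The later marked head \(g'^\Mark\) at step \(k_2\) must therefore have bubbled up out of marks that were already in \(\seq{v}\); in contrast, the infinite continuation from \((g\,\seq{u};\RTyPath_1)\) starts by freshly remarking the function symbols of \(u_g\) (since \(\seq{u}=\seq{v}\) has had its marks erased), so its infinite chain descends through an entirely different marking of the same underlying unmarked reduction.

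The main obstacle is showing that these two continuations cannot both be infinite, since naively they track disjoint sets of marks and nothing about either one alone forces a contradiction. I plan to address this by first reducing to the pure simply-typed setting via the integer-elimination translation already used in the existence proof earlier in this subsection (replace conditionals by \(\nondet\) and integer subterms by \(\Vunit\), which preserves reductions and call-sequences up to a bijection on choice sequences). In the simply-typed setting one can then invoke a ``unique principal infinite head chain'' property in the style of Kobayashi and Ong~\cite{KO09LICSfull}: in a simply-typed higher-order program, the ancestry of an infinite chain of head function calls is unique. Both hypothetical continuations would witness distinct such ancestries inside the same deterministic unmarked reduction of \([\seq{s}/\seq{x}]u_h\), contradicting uniqueness of the principal chain and establishing the lemma. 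A technical subtlety I expect to manage carefully is the mark bookkeeping: the cleanest formalisation is to introduce a two-colour marking in a single combined reduction, distinguishing body-marks (from the freshly remarked \(u_g^\Mark\)) from argument-marks (inherited from \(\seq{v}\)), and to observe that the colour-erased reduction is one and the same as the original unmarked reduction from \(h\,\seq{s}\).
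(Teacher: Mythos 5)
Your setup is sound up to the point where you yourself flag ``the main obstacle'': determinism of $\red_{\progd}$ under a fixed $\RTyPath$, the observation that two distinct successors in a call-sequence must occur at different step counts, and the observation that after the first chain's call is unfolded all surviving marks sit inside the passed arguments, so the second chain's next head must ``bubble up'' out of an argument position. All of this matches the paper. The problem is that you then discharge the hard part by invoking a ``unique principal infinite head chain'' property attributed to Kobayashi and Ong. That property \emph{is} the lemma being proved (modulo the integer-erasure translation), and it is not something you can cite: what Kobayashi and Ong establish, and what this paper reuses, is the \emph{existence} of an infinite call-sequence and the uniqueness of the \emph{ancestor of a single head occurrence} --- neither of which implies that two interleaved infinite chains of head calls cannot coexist. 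As written, the argument is circular.

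The missing idea, which is the entire content of the paper's proof, is an infinite descent on type orders. If two infinite call-sequences share a prefix and then diverge, the unique reduction sequence must alternate between heads of the two chains infinitely often. Each time the reduction switches from chain $A$ to chain $B$, the ancestor of the next $B$-call cannot be the current $A$-head (that head is an ancestor only of the next $A$-call), so it must occur inside one of the $A$-head's arguments; for that $B$-call to ever reach the head, that argument subterm must itself later appear in head position, applied to further arguments in which, by the same reasoning, the ancestor of the next $A$-call must hide. Iterating across the infinitely many switches produces a sequence of terms $s_1, s_2, s_3, \dots$ in which each $s_{m+1}$ is an argument of $s_m$, hence of strictly smaller order --- an infinite strictly decreasing sequence of orders, contradicting simple typing. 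Note also that your detour through integer elimination is unnecessary for uniqueness (it is needed only for existence); the order-descent argument applies directly to the original program.
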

\begin{myproof}
  Let \( \RTyPath \Vdash t_0 \red_D t_1 \red_D t_2 \red_D \cdots \) be the infinite reduction sequence following \( \RTyPath \).
  In this proof, we shall consider only sequences following \( \RTyPath \) and hence we shall omit \( \RTyPath \Vdash \).

  Assume that there exist difference infinite call-sequences, say,
  \[
  t_0 \CallSeq_D g_{1}\,\tilde{u}_{1} \CallSeq_D \cdots \CallSeq_D g_{k}\,\tilde{u}_{k} \CallSeq_D h_{1}\,\tilde{v}_{1} \CallSeq_D h_{2}\,\tilde{v}_{2} \CallSeq_D \cdots
  \]
  and
  \[
  t_0 \CallSeq_D g_{1}\,\tilde{u}_{1} \CallSeq_D \cdots \CallSeq_D g_{k}\,\tilde{u}_{k} \CallSeq_D k_{1}\,\tilde{w}_{1} \CallSeq_D k_{2}\,\tilde{w}_{2} \CallSeq_D  \cdots
  \]
  where \( h_{1} \) and \( k_{1} \) are different.
  Since both call sequences are infinite, the reduction sequence has infinite switching of the sequence, i.e.~the reduction sequence must be of the following form:
  \[
  t_0 \red^*_D h_{i_1}\,\tilde{v}_{i_1} \red^*_D k_{j_1}\,\tilde{w}_{j_1} \red^*_D h_{i_2}\,\tilde{v}_{i_2} \red^*_D k_{j_2}\,\tilde{w}_{j_2} \red^*_D \cdots.
  \]
  Since \( h_{i_1} \) is not an ancestor of \( k_{j_1} \), an ancestor of \( k_{j_1} \) must appear in \( \tilde{v}_{i_1} \), say, in the \( \ell_1 \)-th argument \( v_{i_1, \ell_1} \).
  Since \( k_{j_1} \) appears in the reduction sequence, \( v_{i_1, \ell_1} \) must have a head occurrence and the reduction sequence is of the form
  \[
  t_0 \red_D^* h_{i_1}\,\tilde{v}_{i_1} \red_D^* v_{i_1, \ell_1}\,\tilde{s}_{1} \red_D^* k_{j_1}\,\tilde{w}_{j_1} \red_D^* h_{i_2}\,\tilde{v}_{i_2} \red_D^* \cdots.
  \]
  Since \( h_{i_1} \) is an ancestor of \( h_{i_2} \), there is no ancestor of \( h_{i_2} \) in \( \tilde{v}_{i_1} \), in particular, in \( v_{i_1, \ell_1} \).
  So an ancestor of \( h_{i_2} \) must appear in \( \tilde{s}_1 \), say, \( s_{1, \ell_2} \), and the reduction sequence is of the form
  \[
  t_0 \red_D^* h_{i_1}\,\tilde{v}_{i_1} \red_D^* v_{i_1, \ell_1}\,\tilde{s}_{1} \red_D^* s_{1, \ell_2}\,\tilde{s}_2 \red_D^* h_{i_2}\,\tilde{v}_{i_2} \red_D^* k_{j_2}\,\tilde{w}_{j_2} \red_D^* \cdots.
  \]
  Since an ancestor of \( k_{j_2} \) appears in \( v_{i_1, \ell_1} \), there is no ancestor in \( \tilde{s}_1 \), in particular, in \( s_{1, \ell_2} \).
  Hence there is an ancestor of \( k_{j_2} \) in \( \tilde{s}_{2} \), say, \( s_{2, \ell_3} \), and the reduction sequence is
  \[
  t_0 \red_D^* h_{i_1}\,\tilde{v}_{i_1} \red_D^* v_{i_1, \ell_1}\,\tilde{s}_{1} \red_D^* s_{1, \ell_2}\,\tilde{s}_2 \red_D^* s_{2,\ell_3}\,\tilde{s}_3 \red_D^* k_{j_2}\,\tilde{w}_{j_2} \red_D^* h_{i_3}\,\tilde{v}_{i_3} \red_D^* \cdots.
  \]
  By repeatedly applying the above argument, we have the following decomposition of the reduction sequence:
  \[
  t_0 \red_D^* h_{i_1}\,\tilde{v}_{i_1} \red_D^* v_{i_1, \ell_1}\,\tilde{s}_1 \red_D^* s_{1, \ell_2}\,\tilde{s}_2 \red_D^* s_{2,\ell_3}\,\tilde{s}_3 \red_D^* s_{3,\ell_4}\,\tilde{s}_4 \red_D^* s_{4,\ell_5}\,\tilde{s}_5 \red_D^* \cdots.
  \]
  Then the order of \( s_{m, \ell_m} \) is greater than \( s_{m+1, \ell_{m+1}} \) for every \( m \) since \( s_{m+1, \ell_{m+1}} \) is an argument of \( s_{m, \ell_m} \).
  This contradicts the assumption that the program \( D \) is simply typed.
\end{myproof}


\begin{corollary}\label{cor:liveness:unique-existence-of-call-seq}
  Let \( P = (D, \mainfun) \) be a program and
  \[
  \RTyPath \Vdash \mainfun \red_D t_1 \red_D \cdots
  \]
  be the infinite reduction sequence following \( \RTyPath \).
  Then there exists a unique infinite call-sequence of this reduction sequence, i.e.~
  \[
  \RTyPath \Vdash \mainfun \CallSeq_D g_1\,\tilde{u}_1 \CallSeq_D g_2\,\tilde{u}_2 \CallSeq_D \cdots.
  \]
\end{corollary}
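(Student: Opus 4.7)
The plan is to obtain the corollary as an immediate combination of the two lemmas that immediately precede it in the subsection. The first lemma (existence) guarantees that the infinite reduction sequence
\(\RTyPath \Vdash \mainfun \red_D t_1 \red_D \cdots\) admits at least one infinite call-sequence
\(\RTyPath \Vdash \mainfun \CallSeq_D g_1\,\seq{u}_1 \CallSeq_D g_2\,\seq{u}_2 \CallSeq_D \cdots\),
and the second lemma (uniqueness) guarantees that any term \(t\) together with a fixed choice sequence \(\RTyPath\) has at most one such infinite call-sequence. Instantiating the second lemma with \(t = \mainfun\), the two conclusions together yield exactly the statement of the corollary.

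Concretely, I would first invoke the existence lemma to extract some infinite call-sequence
\(\mathcal{S} : \mainfun \CallSeq_D g_1\,\seq{u}_1 \CallSeq_D g_2\,\seq{u}_2 \CallSeq_D \cdots\)
following \(\RTyPath\). I would then note that, by definition of \(\CallSeq_D\), this \(\mathcal{S}\) is an infinite call-sequence of \(\mainfun\) following \(\RTyPath\) in the sense of the uniqueness lemma, so that the uniqueness lemma applies with \(t = \mainfun\) and rules out any second such sequence. This gives both existence and uniqueness, which is exactly the content of the corollary.

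Since the corollary is really a packaging step, I do not anticipate any genuine obstacle: no new reduction-theoretic argument is needed, and no additional bookkeeping about the ancestor structure used in the uniqueness proof is required here. The only point that deserves brief attention is the observation that a single choice sequence \(\RTyPath\) determines the entire reduction sequence from \(\mainfun\), so that the infinite call-sequence produced by the existence lemma is necessarily a call-sequence of that very reduction sequence, and hence uniqueness applies to the same object. With that observation noted, the proof reduces to a one-line appeal to the two preceding lemmas.
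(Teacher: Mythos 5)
Your proposal is correct and matches the paper exactly: the corollary is stated immediately after the existence lemma and the uniqueness lemma with no separate proof, precisely because it is their immediate combination (instantiating the uniqueness lemma at \(t=\mainfun\)). Nothing further is needed.
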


\subsection{Proof of Theorem~\ref{theorem:callseq}}
\label{sec:proof-csa}

Let \( \lts_0 = (\{\stunique\}, \emptyset, \emptyset,\stunique) \) be the transition system with one state and no transition.
This section shows that given a program \( \prog = (\progd, \mainfun) \) with priority \( \Omega \), 
\(\models_\CSA (P,\Pfun)\) holds if and only if \( \lts_0 \models \HESf_{(P,\Pfun),\CSA} \).
Without loss of generality, we assume below that \(\prog\) is of the form \((\progd, \mainfun)\);
note that, if \(\prog=(\progd,t)\), then we can replace it with \((\progd\cup\set{\mainfun = t},\mainfun)\).

Fix a program \( \prog = (\progd, \mainfun) \) with priority \( \Pfun \).
Suppose that \( D = \{ f_1\,\tilde{x}_1 = \term_1, \dots, f_n\,\tilde{x}_n = \term_n \} \) and let \( \Pst_i \) be the simple type of \( f_i \).
We can assume without loss of generality that \( \mainfun = f_1 \) and that \( \Pfun(f_i) \ge \Pfun(f_j) \) if \( i \le j \).
Let
\begin{align*}
  \Omega'(f_i) =
  \begin{cases}
    2(n-i) & \mbox{(if \( \Omega(f_i) \) is even)} \\
    2(n-i)+1 & \mbox{(if \( \Omega(f_i) \) is odd).}
  \end{cases}
\end{align*}
The priority assignments \( \Pfun \) and \( \Pfun' \) are equivalent in the sense that an infinite sequence \( g_1 g_2 \dots \) of function symbols satisfies the parity condition with respect to \( \Pfun \) if and only if so does to \( \Pfun' \).
Hence we can assume without loss of generality that \( 2(n-i) \le \Pfun(f_i) \le 2(n-i) + 1 \).

Given a type \( \Pst \) (resp.~\( \Pest \)) of the target language, we define a type \( \livetrans{\Pst} \) (resp.~\( \livetrans{\Pest} \)) for HFLs as follows:
\begin{equation*}
  \livetrans{\Tunit} := \typProp
  \qquad\qquad
  \livetrans{(\Pest \to \Pst)} := \livetrans{\Pest} \to \livetrans{\Pst}
  \qquad\qquad
  \livetrans{\Tint} := \typInt.
\end{equation*}
Simply \( \livetrans{(-)} \) replaces \( \Tunit \) with \( \typProp \).
The operation \( \livetrans{(-)} \)  is pointwise extended for
type environments by:
\( \livetrans{(x_1\COL\Pest_1,\ldots,x_\ell\COL\Pest_\ell)}=
   x_1\COL \livetrans{\Pest_1},\ldots,x_\ell\COL \livetrans{\Pest_\ell}\)
Given an LTS \( \lts \), let us define \( \D_{\lts, \Pst} := \D_{\lts, \livetrans{\Pst}} \).

\begin{definition}[Semantic interpretation of a program]
Let \( \STE \vdash \term \COL \Pst \).
A \emph{valuation of \( \STE \)} is a mapping \(\HFLenv\)
such that \(\dom(\HFLenv)=\dom(\STE)\)
and \(\HFLenv(x)\in 
\D_{\lts,\STE(x)} \) for each \(x\in\dom(\STE)\).
The set of valuations of \( \STE \) is ordered by the point-wise ordering.
The \emph{interpretation} of a type judgment \(\STE\pST \term :\Pest\)
is a (monotonic) function from the valuations of \( \STE \) to \( \D_{\lts,\Pst} \) inductively defined
by:
\begin{align*}
  \sem{\STE \pST \unitexp : \Tunit}(\HFLenv) &:= \{ \stunique \} \\
  \sem{\STE \pST x : \Pst}(\HFLenv) &:= \HFLenv(x) \\
  \sem{\STE \pST n : \Tint}(\HFLenv) &:= n \\
  \sem{\STE \pST \term_1 \OP \term_2 : \typInt}(\HFLenv) &:= (\sem{\STE \pST \term_1 : \typInt}(\HFLenv)) \sem{\OP} (\sem{\STE \pST \term_2 : \typInt}(\HFLenv)) \\
  \sem{\STE \pST \ifexp{p(\term'_1,\ldots,\term'_k)}{\term_1}{\term_2} : \Tunit}(\HFLenv) &:=
  \begin{cases}
    \sem{\STE \pST \term_1 : \Tunit}(\HFLenv) & \mbox{(if \( (\sem{\term'_1}(\HFLenv), \dots, \sem{\term'_k}(\HFLenv)) \in \sem{p} \))} \\
    \sem{\STE \pST \term_2 : \Tunit}(\HFLenv) & \mbox{(if \( (\sem{\term'_1}(\HFLenv), \dots, \sem{\term'_k}(\HFLenv)) \notin \sem{p} \))} \\
  \end{cases}
  \\
  \sem{\STE \pST \evexp{\lab}{\term} : \Tunit}(\HFLenv) &:= \sem{\STE \pST \term : \Tunit}(\HFLenv) \\
  \sem{\STE \pST \term_1\,\term_2 : \Pst}(\HFLenv) &:= (\sem{\STE \pST \term_1 : \Pest \to \Pst}(\HFLenv))(\sem{\STE \pST \term_2 : \Pest}(\HFLenv)) \\
  \sem{\STE \pST \term_1 \nondet \term_2 : \Tunit}(\HFLenv) &:= \sem{\STE \pST \term_1 : \Tunit}(\HFLenv) \sqcap \sem{\STE \pST \term_2 : \Tunit}(\HFLenv) \\
  \sem{\STE \pST \lambda x. \term : \Pest \to \Pst}(\HFLenv) &:=
  \{ (d, \sem{\STE, x \COL \Pest \pST \term : \Pst}(\HFLenv [x \mapsto d])) \mid d\in
     \D_{\lts,\Pest}\}.
\end{align*}
\end{definition}
We often write just \(\sem{t}\) for \(\sem{\STE \pST t: \Pest}\) (on the assumption that
the simple type environment for \(t\) is implicitly determined).

\begin{lemma}\label{lem:appx:liveness:csa:term-interpretation}
  \( \sem{\STE\pST \term:\Pest}_{\lts_0}([x_1 \mapsto d_1, \dots, x_n \mapsto d_n]) =
  \sem{\livetrans{\STE}\pST \livetrans{\term}:\livetrans{\Pest}}_{\lts_0}([x_1 \mapsto d_1, \dots, x_n \mapsto d_n]) \).
\end{lemma}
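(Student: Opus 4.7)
The proof will proceed by structural induction on the derivation of $\STE \pST \term : \Pest$. Since the LTS $\lts_0$ has a unique state $\stunique$ and no transitions, the semantic domain $\D_{\lts_0,\typProp}$ equals $2^{\set{\stunique}} = \set{\emptyset, \set{\stunique}}$, which we may identify with the two-element Boolean lattice. The translation $\livetrans{(-)}$ on types merely replaces $\Tunit$ with $\typProp$, so $\D_{\lts_0,\Pest} = \D_{\lts_0,\livetrans{\Pest}}$ by definition, and the two interpretations live in the same domain. The base cases $\unitexp$, $x$, $n$, and arithmetic operations $\term_1 \OP \term_2$ are immediate, because $\sem{\unitexp} = \set{\stunique} = \St = \sem{\TRUE}$, variable bindings are unchanged, and integer operations are interpreted identically.

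The interesting cases are those where the translation is non-trivial. For the nondeterministic choice $\term_1 \nondet \term_2$, the program semantics gives $\sem{\term_1}(\HFLenv) \sqcap \sem{\term_2}(\HFLenv)$, while the HFL semantics of $\livetrans{\term_1} \land \livetrans{\term_2}$ gives $\sem{\livetrans{\term_1}}(\HFLenv) \cap \sem{\livetrans{\term_2}}(\HFLenv)$; since $\sqcap$ on $\D_{\lts_0,\Tunit}$ is exactly set intersection, the two coincide by the induction hypothesis. For $\evexp{\lab}{\term}$, both sides equal $\sem{\term}(\HFLenv)$ by the induction hypothesis, since $\livetrans{(\evexp{\lab}{\term})} = \livetrans{\term}$ and the program semantics ignores the event annotation. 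For the conditional $\ifexp{p(\seq{\term}')}{\term_1}{\term_2}$, I would do a case split on whether $(\sem{\term'_1}(\HFLenv), \dots, \sem{\term'_k}(\HFLenv)) \in \sem{p}$: in the true case, the program semantics reduces to $\sem{\term_1}(\HFLenv)$, while the HFL semantics of $(p(\seq{\term}') \imply \livetrans{\term_1}) \land (\neg p(\seq{\term}') \imply \livetrans{\term_2})$ evaluates to $\St \cap \sem{\livetrans{\term_1}}(\HFLenv) = \sem{\livetrans{\term_1}}(\HFLenv)$, and similarly in the false case. For application $\term_1\,\term_2$ and abstraction $\lambda x.\term$, we simply appeal to the induction hypothesis, using that application and the construction of function values are defined identically on both sides.

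The main (very mild) obstacle is making sure that the equivalence is stated carefully enough to carry through the induction: we need that for every type $\Pest$ and every valuation $\HFLenv$ of $\STE$, the two interpretations agree as elements of the same domain $\D_{\lts_0,\Pest} = \D_{\lts_0,\livetrans{\Pest}}$. Once the generalized statement is in place, each case reduces to a straightforward computation in the two-element lattice $\D_{\lts_0,\typProp}$, and no continuity or fixpoint arguments are needed since $\term$ contains no fixpoint operators.
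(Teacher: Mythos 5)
Your proof is correct and follows the same route as the paper, which simply states ``by induction on the structure of \(\term\)''; your case analysis fills in exactly the computations one would expect (in particular, the observation that \(\D_{\lts_0,\Pst}\) is \emph{defined} as \(\D_{\lts_0,\livetrans{\Pst}}\), so both interpretations inhabit the same domain, and that \(\sqcap\) and \(\land\) both denote intersection over \(\lts_0\)). Nothing further is needed.
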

\begin{myproof}
  By induction on the structure of \( \term \).
\end{myproof}

\subsubsection{Completeness}
We show that \(\models_\CSA (P,\Pfun)\) implies \( \lts_0 \models \HESf_{(P,\Pfun),\CSA}\).
We use game-based characterization of \( \lts_0 \models \HESf_{(P,\Pfun),\CSA}\).

\begin{definition}[Parity game for a program]
The parity game \( \SGame_{\lts, (P,\Pfun)} \) is defined as follows:
\begin{align*}
  \Nodes_{P} :=\;& \{\, (p, f_i) \mid p \in \D_{\lts, \Pst_i}, \;\textrm{\( p \): complete prime} \,\}
  \\
  \Nodes_{O} :=\;& \{\, (d_1, \dots, d_n) \mid d_1 \in \D_{\lts, \Pst_1}, \dots, d_n \in \D_{\lts, \Pst_n} \,\}
  \\
  \Edges
  :=\;& \{\, ((p, f_i), (d_1, \dots, d_n)) \mid p \sqleq \sem{\lambda \tilde{x}_i. \term_i}([f_1 \mapsto d_1, \dots, f_n \mapsto d_n]) \,\}
  \\
  \cup\;& \{\, ((d_1, \dots, d_n), (p, f_i)) \mid p \sqleq d_i, \;\textrm{\( p \): complete prime} \,\}.
\end{align*}
The priority of \( (d_1, \dots, d_n) \) is \( 0 \) and that of \( (p, f_i) \) is the priority \( \Omega(f_i) \) of \( f_i \).
\end{definition}

\begin{corollary}\label{cor:appx:liveness-from-call-seq:game-equivalence}
  \( \SGame_{\lts_0, (\prog,\Pfun)} \) is isomorphic to \( \SGame_{\lts_0, \HESf_{(\prog,\Pfun),\CSA}} \), i.e.~there exists a bijection of nodes that preserves the owner and the priority of each node.
\end{corollary}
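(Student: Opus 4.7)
\medskip
\noindent\textbf{Proof proposal.} The plan is to exhibit the isomorphism as essentially the identity map on both Proponent and Opponent nodes, and then check componentwise that the definitions of the two games coincide. Concretely, for each \(i\in\{1,\dots,n\}\), identify the Proponent node \((p,f_i)\) of \(\SGame_{\lts_0,(\prog,\Pfun)}\) with the Proponent node \((p,X_i)\) of \(\SGame_{\lts_0,\HESf_{(\prog,\Pfun),\CSA}}\), where we use the fact that the HES \(\HESf_{(\prog,\Pfun),\CSA}\) is precisely \((f_1^{\typ_1}=_{\munu_1}\lambda\seq{x}_1.\livetrans{\term_1};\cdots; f_n^{\typ_n}=_{\munu_n}\lambda\seq{x}_n.\livetrans{\term_n},\, f_1)\) with \(\typ_i=\livetrans{\Pst_i}\). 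By the very definition \(\D_{\lts_0,\Pst_i}:=\D_{\lts_0,\livetrans{\Pst_i}}\), the carriers of Proponent nodes for a given \(f_i\) (respectively \(X_i\)) coincide, and the Opponent node sets coincide similarly.

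Next I would verify that the edge relations match. For Opponent-to-Proponent edges this is immediate, as both games define an edge \(((d_1,\dots,d_n),(p,f_i))\) (resp.\ \(((d_1,\dots,d_n),(p,X_i))\)) exactly when \(p\) is a complete prime with \(p\sqleq d_i\). For Proponent-to-Opponent edges, the game for the program uses the condition \(p\sqleq\sem{\lambda\seq{x}_i.\term_i}([f_1\mapsto d_1,\dots])\), while the game for the HES uses \(p\sqleq\sem{\lambda\seq{x}_i.\livetrans{\term_i}}([f_1\mapsto d_1,\dots])\). These are equal by Lemma~\ref{lem:appx:liveness:csa:term-interpretation}, which says that \(\livetrans{(-)}\) preserves the semantic interpretation (since it only renames \(\Tunit\) to \(\typProp\) and the ambient LTS \(\lts_0\) has a single state).

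Finally, I would check priorities. The game for the HES assigns priority \(2(n-i)\) when \(\munu_i=\nu\) and \(2(n-i)+1\) when \(\munu_i=\mu\), where by construction of \(\HESf_{(\prog,\Pfun),\CSA}\) we have \(\munu_i=\nu\) iff \(\Pfun(f_i)\) is even. This matches exactly the priority \(\Pfun(f_i)\) in \(\SGame_{\lts_0,(\prog,\Pfun)}\) because of the WLOG normalisation fixed at the beginning of the section, namely \(f_i\)'s are sorted so that \(\Pfun(f_i)\ge\Pfun(f_{i+1})\) and \(2(n-i)\le\Pfun(f_i)\le 2(n-i)+1\); as argued there, replacing the original \(\Pfun\) by this normalised version preserves the call-sequence analysis since the parity of the maximal priority occurring infinitely often is unchanged. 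Opponent nodes have priority \(0\) in both games, so all priorities agree. Assembling these observations, the identity bijection is the required isomorphism.

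The only mildly delicate point is the priority bookkeeping justified by the WLOG assumption; everything else is a routine unfolding of definitions together with one invocation of Lemma~\ref{lem:appx:liveness:csa:term-interpretation}.
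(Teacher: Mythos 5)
Your proof is correct and follows essentially the same route as the paper: the identity bijection on nodes, with edge preservation reduced to Lemma~\ref{lem:appx:liveness:csa:term-interpretation}. The only addition is that you spell out the priority bookkeeping via the normalisation \(2(n-i)\le\Pfun(f_i)\le 2(n-i)+1\), which the paper sets up earlier in the section and leaves implicit in the proof itself.
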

\begin{myproof}
  The bijection on nodes is given by \( (d_1, \dots, d_n) \leftrightarrow (d_1, \dots, d_n) \) and \( (p, f_i) \leftrightarrow (p, f_i) \).
  This bijection preserves edges because of Lemma~\ref{lem:appx:liveness:csa:term-interpretation}.
\end{myproof}

We define a strategy of \( \SGame_{\lts_0,(P,\Pfun)} \) on \( (\{\stunique\}, f_1) \) by inspecting the reduction sequences of \( P \).
The technique used here is the same as \cite{KO09LICS}.
In this construction, we need to track occurrences of a function symbol or a term in a reduction sequence.
This can be achieved by marking symbols and terms if needed as in Definition~\ref{def:recall}.
In the sequel, we shall not introduce explicit marking and use the convention that, if the same metavariable appears in a reduction sequence (e.g.~\( f\,u \red^* u\,\seq{v} \)), then the first one is the origin of all others.

The occurrence of \( t \) in \( t\,u_1\,\dots\,u_k \) (\( k \ge 0 \)) is called a \emph{head occurrence}.
Given a head occurrence of \( t \) of a term reachable from \( \mainfun \) (i.e.~\( \mainfun \red_\progd^* t\,\tilde{u} \) for some \( \tilde{u} \)), we assign an element \( d_{t,\tilde{u}} \in \D_{\lts_0,\Pst} \) where \( \Pst \) is the simple type of \( t \) by induction on the order of \( \Pst \).
\begin{itemize}
\item
  Case \( \seq{u} \) is empty:
  Then \( t : \Tunit \) and we define \( d_{t,\epsilon} = \{\stunique\} \).
\item
  Case \( t\,\tilde{u} = t\,u_1\,\dots\,u_k \):
  Note that the order of \( u_j \) (\( 1 \le j \le k \)) is less than that of \( t \).
  For every \( 1 \le j \le k \), we define \( e_j \) as follows.
  \begin{itemize}
  \item Case \( u_j : \Tint \):
    Then \( e_j = \sem{u_j} \).
    (Note that \( u_j \) has no free variable and no function symbol.)
  \item Case \( u_j : \Pst \):
    Let \( e_j \) be the element defined by
    \[
    e_j = \LUB \{\, d_{u_j, \tilde{v}} \mid t\,\tilde{u} \red^*_\progd u_j\,\tilde{v} \,\}.
    \]
  \end{itemize}
  Then \( d_{t,\tilde{u}} \) is the function defined by
  \[
  d_{t,\tilde{u}}(x_1, \dots, x_k) :=
  \begin{cases}
    \{ \stunique \} & \mbox{(if \( e_j \sqleq x_j \) for all \( j \in \{ 1, \dots, k \} \))} \\
    \{\,\} & \mbox{(otherwise).}
  \end{cases}
  \]
  Note that \( d_{t,\tilde{u}} \) is the minimum function such that \( d_{t,\tilde{u}}(e_1, \dots, e_k) = \{\stunique\} \).
\end{itemize}
  
Let \( \theta = [s_1/x_1, \dots, s_k/x_k] \) and assume \( \mainfun \red_\progd^* (\theta t_0)\,\tilde{u} \).
We define a mapping from function symbols and variables
to elements of their semantic domains as follows.
\begin{itemize}
\item
  For a function symbol \( f_i \):
  \[
  \HFLenv_{t_0, \tilde{u}, \theta}(f_i) = \LUB \{\, d_{f_i, \tilde{v}} \mid (\theta t_0) \seq{u} \red^*_\progd f_i \tilde{v}, \,\mbox{\( f_i \) originates from \( t_0 \)} \,\}
  \]
\item
  For a variable \( x_i : \Tint \):
  \[
  \HFLenv_{t_0,\tilde{u},\theta}(x_i) = \sem{s_i}
  \]
\item
  For a variable \( x_i : \Pst \):
  \[
  \HFLenv_{t_0, \tilde{u}, \theta}(x_i) = \LUB \{\, d_{s_i, \tilde{v}} \mid (\theta t_0) \seq{u}\red^*_\progd s_i \tilde{v}, \,\mbox{\( s_i \) originates from \( \theta(x_i) \)} \,\}.
  \]
\end{itemize}

\begin{lemma}\label{lem:appx:call-seq-to-hes:completeness:arithmetic-expressions}
  Let \( \theta = [s_1/x_1, \dots, s_k/x_k] \) and assume \( \mainfun \red_\progd^* (\theta t_0)\,\tilde{u} \).
  Let \( v \) be an arithmetic expression (i.e.~a term of type \( \Tint \)) appearing in \( t_0 \).
  Then \( \sem{\theta v} = \sem{v}(\HFLenv_{t_0, \tilde{u}, \theta}) \).
\end{lemma}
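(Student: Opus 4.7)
The plan is to prove this lemma by a routine structural induction on the arithmetic expression $v$. Because $v$ has type $\Tint$ and the target language restricts $\Tint$ to argument positions, $v$ is generated by the grammar $v ::= n \mid x \mid v_1 \OP v_2$, where each variable $x$ occurring in $v$ has type $\Tint$. Using the well-typedness of $t_0$ together with the fact that $(\theta t_0)\,\tilde u$ is reachable from the closed term $\mainfun$, every such $x$ occurring free in $v$ must lie in $\dom(\theta) = \{x_1,\dots,x_k\}$, and each corresponding $s_i$ is itself a closed arithmetic expression (since $x_i : \Tint$ can only be instantiated with closed integer-typed subterms during reduction).

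For the base cases, I will first handle $v = n$, where both $\sem{\theta v}$ and $\sem{v}(\HFLenv_{t_0,\tilde u,\theta})$ evaluate to the integer $n$. Then for $v = x_i$ with $x_i : \Tint$, the left hand side is $\sem{\theta x_i} = \sem{s_i}$ by the definition of substitution, while the right hand side is $\sem{x_i}(\HFLenv_{t_0,\tilde u,\theta}) = \HFLenv_{t_0,\tilde u,\theta}(x_i) = \sem{s_i}$ by the clause of the environment that handles integer-typed variables. For the inductive step $v = v_1 \OP v_2$, the two interpretations both apply the same binary function $\sem{\OP}$ to the interpretations of $v_1$ and $v_2$, so the identity follows directly from the induction hypothesis applied to $v_1$ and $v_2$.

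There is no real obstacle; the whole content of the lemma is the definitional agreement between $\theta$ and $\HFLenv_{t_0,\tilde u,\theta}$ on integer variables, which was built into the construction of $\HFLenv_{t_0,\tilde u,\theta}(x_i) := \sem{s_i}$. The only point worth double-checking is that the fragment of arithmetic expressions really is closed under the substitution $\theta$ and that the induction is exhaustive, which both reduce to confirming that $\theta$ substitutes only closed arithmetic expressions for integer variables --- a consequence of preservation of well-typedness under reduction from the closed term $\mainfun$.
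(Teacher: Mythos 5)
Your proof is correct and matches the paper's own argument: a structural induction on the arithmetic expression $v$, where the variable case reduces to the defining clause $\HFLenv_{t_0,\tilde u,\theta}(x_i) = \sem{s_i}$ for integer-typed variables and the operator case follows from the induction hypothesis. The extra remarks about closedness of the $s_i$ and exhaustiveness of the grammar are harmless padding around the same core argument.
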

\begin{myproof}
  By induction on \( v \).
  \begin{itemize}
  \item Case \( v = x_i \):
    Then \( \sem{\theta x_i} = \sem{s_i} = \HFLenv_{t_0, \tilde{u}, \theta}(x_i) = \sem{x_i}(\HFLenv_{t_0, \tilde{u}, \theta}) \).
  \item Case \( v = v_1 \OP v_2 \):
    By the induction hypothesis, \( \sem{\theta v_j} = \sem{v_j}(\HFLenv_{t_0, \tilde{u}, \theta}) \) for \( j = 1, 2 \).
    Then \( \sem{\theta (v_1 \OP v_2)} = \sem{\theta v_1} \sem{\OP} \sem{\theta v_2} = \sem{v_1}(\HFLenv_{t_0, \tilde{u}, \theta}) \sem{\OP} \sem{v_2}(\HFLenv_{t_0, \tilde{u}, \theta}) = \sem{v_1 \OP v_2}(\HFLenv_{t_0, \tilde{u}, \theta}) \).
  \end{itemize}
\end{myproof}

\begin{lemma}\label{lem:appx:call-seq-to-hes:infinite-subject-expansion}
  Let \( \theta = [s_1/x_1, \dots, s_k/x_k] \) and assume \( \mainfun \red_\progd^* (\theta t_0)\,\tilde{u} \).
  Then \( d_{\theta t_0,\tilde{u}} \sqleq \sem{t_0}(\HFLenv_{t_0, \tilde{u}, \theta}) \).
\end{lemma}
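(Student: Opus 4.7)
The proof is by structural induction on $t_0$, relying on the monotonicity of $\sem{-}$ together with the following simple monotonicity property of the environment assignment $\rho_{-,-,-}$ (which I would state and prove as a preliminary sub-lemma): whenever the reductions starting from $(\theta t)\tilde{u}$ arise as suffixes of reductions starting from $(\theta' t')\tilde{v}$, we have $\HFLenv_{t,\tilde{u},\theta} \sqleq \HFLenv_{t',\tilde{v},\theta'}$ on the free symbols common to both terms — this is immediate from the $\LUB$ definition of $\rho$.

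Base cases. For $t_0 = \unitexp$, both sides equal $\{\stunique\}$. For $t_0 = f_i$ (resp.~$t_0 = x_i$), the zero-step reduction $(\theta t_0)\tilde{u} \red^0_\progd (\theta t_0)\tilde{u}$ places $d_{f_i,\tilde{u}}$ (resp.~$d_{s_i,\tilde{u}}$) in the set whose $\LUB$ defines $\HFLenv_{t_0,\tilde{u},\theta}(f_i)$ (resp.~$\HFLenv_{t_0,\tilde{u},\theta}(x_i)$), giving the inequality immediately. Integers and pure arithmetic expressions are handled via Lemma~\ref{lem:appx:call-seq-to-hes:completeness:arithmetic-expressions}.

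Propositional cases (those with $t_0 : \Tunit$, where $\tilde{u} = \epsilon$ and hence $d_{\theta t_0,\epsilon} = \{\stunique\}$). For $t_0 = \evexp{a}{t'}$, the reduction $(\theta t_0) \red_\progd (\theta t')$ lets me apply the IH to $t'$, obtaining $\sem{t'}(\HFLenv_{t',\epsilon,\theta}) = \{\stunique\}$; environment monotonicity then yields $\sem{t'}(\HFLenv_{t_0,\epsilon,\theta}) = \{\stunique\} = \sem{t_0}(\HFLenv)$. The case $t_0 = t_1 \nondet t_2$ is similar, applying the IH to both branches and combining via $\sem{t_1 \nondet t_2}(\HFLenv) = \sem{t_1}(\HFLenv) \sqcap \sem{t_2}(\HFLenv)$. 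The conditional $t_0 = \ifexp{p(\tilde{v})}{t_1}{t_2}$ uses Lemma~\ref{lem:appx:call-seq-to-hes:completeness:arithmetic-expressions} to match the runtime branch with the semantic branch, then invokes the IH on the selected subterm.

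Application (main case). Writing $s := \theta t_1$ and $s' := \theta t_2$, the key algebraic identity is
\[
d_{s,\,s'\tilde{u}}(e_1) \;=\; d_{ss',\,\tilde{u}},
\]
where $e_1 = \LUB\{\, d_{s',\tilde{v}} \mid (ss')\tilde{u} \red^*_\progd s'\tilde{v}\,\}$. This is verified by unfolding both definitions and noting that the ``argument slots'' $2,\dots,k{+}1$ of $d_{s,s'\tilde{u}}$ and $1,\dots,k$ of $d_{ss',\tilde{u}}$ are computed from the same reducts of $(ss')\tilde{u}$. The IH on $t_1$ (taking $s$ as head with tail $s'\tilde{u}$) yields $d_{s,\,s'\tilde{u}} \sqleq \sem{t_1}(\HFLenv_{t_1,s'\tilde{u},\theta})$. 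The IH on $t_2$ applied at each reduct $s'\tilde{v}$ yields $d_{s',\tilde{v}} \sqleq \sem{t_2}(\HFLenv_{t_2,\tilde{v},\theta})$; taking the $\LUB$ over $\tilde{v}$ and invoking environment monotonicity gives $e_1 \sqleq \sem{t_2}(\HFLenv_{t_0,\tilde{u},\theta})$. Combining both IH's via the monotonicity of $\sem{t_1}(\HFLenv_{t_0,\tilde{u},\theta})$ in its argument and one more application of environment monotonicity gives
\[
d_{ss',\tilde{u}} \;=\; d_{s,\,s'\tilde{u}}(e_1) \;\sqleq\; \sem{t_1}(\HFLenv_{t_0,\tilde{u},\theta})\bigl(\sem{t_2}(\HFLenv_{t_0,\tilde{u},\theta})\bigr) \;=\; \sem{t_1\,t_2}(\HFLenv_{t_0,\tilde{u},\theta}),
\]
as required.

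The hard part will be the bookkeeping in the application case: verifying the identity $d_{s,s'\tilde{u}}(e_1) = d_{ss',\tilde{u}}$ by a careful unfolding, and making sure the origin-tracking in the definition of $\HFLenv$ is correctly threaded when moving between a subterm's own context and the outer context $(t_0,\tilde{u},\theta)$. Once the environment-monotonicity sub-lemma is in place, the other cases are essentially routine.
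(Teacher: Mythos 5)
Your proposal is correct and follows essentially the same route as the paper's proof: structural induction on \(t_0\), the zero-step-reduction argument for the variable and function-symbol base cases, the arithmetic lemma for conditionals, and in the application case the same key identity \(d_{\theta t_1,(\theta t_2)\tilde{u}}(e_1)=d_{(\theta t_1)(\theta t_2),\tilde{u}}\) combined with environment monotonicity and the two induction hypotheses. The only organizational difference is that you factor the environment-monotonicity observation into a standalone sub-lemma, whereas the paper verifies the needed instances inline within the application case.
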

\begin{myproof}
  By induction on the structure of \( t_0 \).
  \begin{itemize}
  \item Case \( t_0 = \unitexp \):
    Then \( \sem{t_0} = \{\stunique\} = d_{\unitexp,\epsilon} \).
  \item Case \( t_0 = x_i \) for some \( 1 \le i \le k \):
    Then \( (\theta t_0) \tilde{u} = s_i\,\tilde{u} \).
    By definition, \( d_{s_i,\tilde{u}} = 
    \HFLenv_{t_0,\tilde{u},\theta}(x_i) \).
  \item Case \( t_0 = f_i \):
    Then \( (\theta t_0) \tilde{u} = f_i\,\tilde{u} \).
    Since this occurrence of \( f_i \) originates from \( t_0 \), we have
    \( d_{f_i,\tilde{u}} = 
    \HFLenv_{t_0,\tilde{u},\theta}(f_i) \).
  \item Case \( t_0 = n \) or \(t_0= \term_1 \OP \term_2 \):
    Never occurs because the type of \(t_0\) must be of the form
    \(\Pest_1\to\cdots\to\Pest_\ell\to \Tunit\).
  \item Case \( t_0 = \ifexp{p(\term'_1,\ldots,\term'_k)}{\term_1}{\term_2} \):
    Then \( \tilde{u} \) is an empty sequence.
    Assume that \( (\sem{\theta \term'_1}, \dots, \sem{\theta \term'_k}) \in \sem{p} \).
    The other case can be proved in a similar manner.

    Then we have \( \theta t_0 \red_\progd \theta \term_1 \).
    By definition, \( d_{\theta t_0, \epsilon} = d_{\theta \term_1, \epsilon} = \{\stunique\} \).
    
    Here, \(\red_\progd^+\) is the transitive closure of \(\red_\progd\).
    Since every reduction sequence \( \theta t_0 \red_\progd^+ s\,\tilde{v} \) can be factored into \( \theta t_0 \red_\progd \theta \term_1 \red_\progd^* s\,\tilde{v} \), we have \( \HFLenv_{t_0, \epsilon, \theta} = \HFLenv_{\term_1, \epsilon, \theta} \).
    By the induction hypothesis, \( \{\stunique\} \sqleq \sem{\term_1}(\HFLenv_{\term_1,\epsilon,\theta}) \).
    Hence \( \{\stunique\} \sqleq \sem{\term_1}(\HFLenv_{t_0,\epsilon,\theta}) \).
    
    By Lemma~\ref{lem:appx:call-seq-to-hes:completeness:arithmetic-expressions}, \( \sem{\theta \term'_j} = \sem{\term'_j}(\HFLenv_{t_0,\tilde{u},\theta}) \) for every \( j = 1, \dots, k \).
    Hence \( (\sem{\theta \term'_1}, \dots, \sem{\theta \term'_k}) \in \sem{p} \) implies \( (\sem{\term'_1}(\HFLenv_{t_0,\tilde{u},\theta}), \dots, \sem{\term'_k}(\HFLenv_{t_0,\tilde{u},\theta})) \in \sem{p} \).
    By definition, \( \sem{t_0}(\HFLenv_{t_0,\tilde{u},\theta}) = \sem{\term_1}(\HFLenv_{t_0,\tilde{u},\theta}) \sqsupseteq \{\stunique\} \).
  \item Case \( t_0 = \evexp{\lab}{\term} \):
    Then \( t_0 \) has type \( \Tunit \) and \( \tilde{u} \) is the empty sequence.
    We have \( \theta t_0 = \evexp{\lab}{(\theta \term)} \).
    Since every reduction sequence \( \theta t_0 \red_\progd^+ s\,\tilde{v} \) can be factored into \( \theta t_0 \red_\progd \theta \term \red_\progd^* s\,\tilde{v} \), we have \( \HFLenv_{t_0, \epsilon, \theta} = \HFLenv_{\term, \epsilon, \theta} \).
    By definition, \( d_{\theta t_0, \epsilon} = d_{\theta \term, \epsilon} = \{ \stunique \} \) and \( \sem{t_0}(\HFLenv_{t_0, \epsilon, \theta}) = \sem{\term}(\HFLenv_{t_0, \epsilon, \theta}) = \sem{\term}(\HFLenv_{\term, \epsilon,\theta}) \).
    By the induction hypothesis, \( d_{\theta \term, \epsilon} \sqleq \sem{\term}(\HFLenv_{\term, \epsilon, \theta}) \).
    Hence \( d_{\theta t_0, \epsilon} \sqleq \sem{t_0}(\HFLenv_{t_0, \epsilon, \theta}) \).
  \item Case \( t_0 = t_1\,t_2 \):
    Suppose that \( (\theta t_1) (\theta t_2) \tilde{u} \red_\progd^* (\theta t_2)\,\tilde{v} \).
    We first show that \( \HFLenv_{t_2, \tilde{v}, \theta}(x) \sqleq \HFLenv_{t_1 t_2, \tilde{u}, \theta}(x) \) for every \( x \in \{ f_1, \dots, f_n \} \cup \{ x_1, \dots,x_k \} \).
    \begin{itemize}
    \item Case \( x = x_i : \Tint \):
      Then \( \HFLenv_{t_2, \tilde{v}, \theta}(x_i) = \HFLenv_{t_1t_2, \tilde{u}, \theta}(x_i) = \sem{s_i} \).
    \item Case \( x = x_i : \Pst \):
      Then
      \begin{align*}
        \HFLenv_{t_2, \tilde{v}, \theta}(x_i)
        &= \LUB \{\, d_{s_i, \tilde{w}} \mid (\theta t_2)\,\tilde{v} \red^*_\progd s_i \tilde{w}, \,\mbox{\( s_i \) originates from \( \theta(x_i) \)} \,\} \\
        &\sqleq \LUB \{\, d_{s_i, \tilde{w}} \mid (\theta (t_1\,t_2))\,\tilde{u} \red^*_\progd s_i \tilde{w}, \,\mbox{\( s_i \) originates from \( \theta(x_i) \)} \,\} \\
        &= \HFLenv_{t_1 t_2, \tilde{u}, \theta}(x_i)
      \end{align*}
      because \( (\theta t_2)\,\tilde{v} \red^*_\progd s_i \tilde{w} \) (where \( s_i \) originates from \( \theta x_i \)) implies \( (\theta (t_1\,t_2))\,\tilde{u} \red^*_\progd (\theta t_2) \tilde{v} \red^*_\progd s_i \tilde{w} \) (where \( s_i \) originates from \( \theta x_i \)).
    \item Case \( x = f_i \):
      \begin{align*}
        \HFLenv_{t_2, \tilde{v}, \theta}(f_i)
        &= \LUB \{\, d_{f_i, \tilde{w}} \mid (\theta t_2)\,\tilde{v} \red^*_\progd f_i \tilde{w}, \,\mbox{\( f_i \) originates from \( t_2 \)} \,\} \\
        &\sqleq \LUB \{\, d_{f_i, \tilde{w}} \mid (\theta (t_1\,t_2))\,\tilde{u} \red^*_\progd f_i \tilde{w}, \,\mbox{\( f_i \) originates from \( t_1 t_2 \)} \,\} \\
        &= \HFLenv_{t_1 t_2, \tilde{u}, \theta}(f_i)
      \end{align*}
      because \( (\theta t_2)\,\tilde{v} \red^*_\progd f_i \tilde{w} \) (where \( f_i \) originates from \( t_2 \)) implies \( (\theta (t_1\,t_2))\,\tilde{u} \red^*_\progd (\theta t_2) \tilde{v} \red^*_\progd f_i \tilde{w} \) (where \( f_i \) originates from \( t_1 t_2 \)).      
    \end{itemize}

    It is easy to see that \( \HFLenv_{t_1, (\theta t_2)\,\tilde{u}, \theta} \sqleq \HFLenv_{t_1t_2, \tilde{u}, \theta} \).
    By the induction hypothesis,
    \[
    d_{\theta t_1, (\theta t_2)\,\tilde{u}} \sqleq \sem{t_1}(\HFLenv_{t_1, (\theta t_2)\,\tilde{u}, \theta}) \sqleq \sem{t_1}(\HFLenv_{t_1t_2, \tilde{u}, \theta}).
    \]
    By the definition of \( d_{\theta t_1, (\theta t_2)\,\tilde{u}} \),
    \[
    d_{\theta t_1, (\theta t_2)\,\tilde{u}} \big(\LUB \{ d_{\theta t_2, \tilde{v}} \mid (\theta t_1)\,(\theta t_2)\,\tilde{u} \red_\progd^* (\theta t_2) \tilde{v} \}\big) = d_{(\theta t_1)\,(\theta t_2),\,\tilde{u}}.
    \]
    By the induction hypothesis, for every reduction sequence \( (\theta t_1)\,(\theta t_2)\,\tilde{u} \red^*_\progd (\theta t_2) \tilde{v} \), we have \( d_{\theta t_2, \tilde{v}} \sqleq \sem{t_2}(\HFLenv_{t_2, \tilde{v}, \theta}) \).
    Since \( \HFLenv_{t_2, \tilde{v}, \theta}(x) \sqleq \HFLenv_{t_1t_2, \tilde{u}, \theta}(x) \) for every \( x \), we have
    \[
    d_{\theta t_2, \tilde{v}} \sqleq \sem{t_2}(\HFLenv_{t_2, \tilde{v}, \theta}) \sqleq \sem{t_2}(\HFLenv_{t_1t_2, \tilde{u}, \theta}).
    \]
    Because the reduction sequence \( (\theta t_1)\,(\theta t_2)\,\tilde{u} \red^*_\progd (\theta t_2) \tilde{v} \) is arbitrary,
    \[
    \big(\LUB \{ d_{\theta t_2, \tilde{v}} \mid (\theta t_1)\,(\theta t_2)\,\tilde{u} \red_\progd^* (\theta t_2) \tilde{v} \}\big) \sqleq \sem{t_2}(\HFLenv_{t_1t_2, \tilde{u}, \theta}).
    \]
    Therefore, by monotonicity,
    \begin{align*}
      d_{(\theta t_1)\,(\theta t_2),\,\tilde{u}}
      &= d_{\theta t_1, (\theta t_2)\,\tilde{u}} \big(\LUB \{ d_{\theta t_2, \tilde{v}} \mid (\theta t_1)\,(\theta t_2)\,\tilde{u} \red_\progd^* (\theta t_2) \tilde{v} \}\big) \\
      &\sqleq \sem{t_1}(\HFLenv_{t_1t_2, \tilde{u}, \theta}) \big( \sem{t_2}(\HFLenv_{t_1t_2, \tilde{u}, \theta}) \big) \\
      &= \sem{t_1\,t_2}(\HFLenv_{t_1t_2,\tilde{u},\theta}).
    \end{align*}
  \item Case \( t_0 = \term_1 \nondet \term_2 \):
    Then \( \tilde{u} = \epsilon \).
    We have \( \theta t_0 \red_\progd \theta \term_1 \) and \( \theta t_0 \red_\progd \theta \term_2 \).
    By the definition of \( d_{\theta \term_1, \epsilon} \) and the induction hypothesis,
    \[
    \{\stunique\} = d_{\theta \term_1,\epsilon} \sqleq \sem{\term_1}(\HFLenv_{\term_1, \epsilon, \theta}).
    \]
    Similarly
    \[
    \{\stunique\} = d_{\theta \term_2, \epsilon} \sqleq \sem{\term_2}(\HFLenv_{\term_2, \epsilon, \theta}).
    \]
    Because \( \theta t_0 \red_\progd \theta \term_1 \), we have \( \HFLenv_{\term_1, \epsilon, \theta} \sqleq \HFLenv_{t_0, \epsilon, \theta} \).
    Similarly \( \HFLenv_{\term_2, \epsilon, \theta} \sqleq \HFLenv_{t_0, \epsilon, \theta} \).
    By monotonicity of interpretation,
    \[
    \{\stunique\} = d_{\theta \term_1,\epsilon} \sqleq \sem{\term_1}(\HFLenv_{\term_1, \epsilon, \theta}) \sqleq \sem{\term_1}(\HFLenv_{t_0, \epsilon, \theta})
    \]
    and
    \[
    \{\stunique\} = d_{\theta \term_2,\epsilon} \sqleq \sem{\term_2}(\HFLenv_{\term_2, \epsilon, \theta}) \sqleq \sem{\term_2}(\HFLenv_{t_0, \epsilon, \theta}).
    \]
    Hence
    \[
    \{\stunique\} \sqleq \sem{\term_1}(\HFLenv_{t_0, \epsilon, \theta}) \sqcap \sem{\term_2}(\HFLenv_{t_0, \epsilon, \theta}) = \sem{t_0}(\HFLenv_{t_0, \epsilon, \theta}).
    \]
\end{itemize}
\end{myproof}

The strategy \( \Strategy_{\prog} \) of \( \SGame_{\lts_0,\prog} \) on \( (\{\stunique\}, \mainfun) \) is defined as follows.
Each play in the domain of \( \Strategy_{\prog} \)
\[
(\{\stunique\}, \mainfun) \cdot O_1 \cdot (p_1, g_1) \cdot O_2 \cdot \dots \cdot O_k \cdot (p_k, g_k)
\]
is associated with a call-sequence
\begin{align*}
  &
  \mainfun \CallSeqN{m_1}_\progd g_1\,\tilde{u}_1 \CallSeqN{m_2}_\progd g_2\,\tilde{u}_2 \CallSeqN{m_3}_\progd \dots \CallSeqN{m_{k-1}}_\progd g_{k-1}\,\tilde{u}_{k-1} \CallSeqN{m_k}_\progd g_k\,\tilde{u}_k
\end{align*}
such that \( p_j \sqleq d_{g_j, \tilde{u}_j} \) for every \( j = 1, 2, \dots, k \).
Let \( \xi_i \) be the finite sequence over \( \{\PLeft,\PRight\} \) describing the choice made during \( g_{i-1}\,\tilde{u}_{i-1} \CallSeqN{m_i}_{\progd} g_{i}\,\tilde{u}_{i} \) (where \( g_0\,\tilde{u}_0 = \mainfun \)).
The \emph{canonical associated call-sequence} of the play is the minimum one ordered by the lexicographic ordering on \( (m_1, \xi_1, m_2, \xi_2, \dots, m_k, \xi_k) \).

Assume that the above call-sequence is canonical.
The next step of this reduction sequence is \( [\tilde{u}_k/\tilde{x}_k]t \) if \( (g_k\,\seq{x}_k = t) \in \progd \).
Let \( \vartheta = [\tilde{u}_k/\tilde{x}_k] \).
In this situation, the strategy \( \Strategy_{\prog} \)
chooses \( (\HFLenv_{t, \epsilon, \vartheta}(f_1), \dots, \HFLenv_{t, \epsilon, \vartheta}(f_n)) \) as the next node.
This is a valid choice, i.e.:
\begin{lemma}
  \( p_k \sqleq \sem{D(g_k)}([f_1 \mapsto \HFLenv_{t, \epsilon, \vartheta}(f_1), \dots, f_n \mapsto \HFLenv_{t, \epsilon, \vartheta}(f_n)]) \).
\end{lemma}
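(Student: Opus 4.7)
The plan is to derive the inequality $p_k \sqleq \sem{\lambda \tilde{x}_k. t}([f_i \mapsto \HFLenv_{t,\epsilon,\vartheta}(f_i)])$ by combining the ``infinite subject expansion'' lemma (Lemma~\ref{lem:appx:call-seq-to-hes:infinite-subject-expansion}) with the structural characterisation of complete primes in the semantic domain $\D_{\lts_0, \Pst_k}$.

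First, I would apply Lemma~\ref{lem:appx:call-seq-to-hes:infinite-subject-expansion} with $t_0 = t$ and $\theta = \vartheta$. Since $\vartheta t$ has type $\Tunit$, we have $d_{\vartheta t, \epsilon} = \{\stunique\}$, so the lemma yields
\[
\{\stunique\} \;\sqleq\; \sem{t}(\HFLenv_{t, \epsilon, \vartheta}).
\]
Next, using the fact (from the proof of Lemma~\ref{lem:complete-prime}) that every complete prime of $\D_{\lts_0, \Pst_k}$ at a first-order-argument-free position has the form $f_{\vec{d}^{(k)}, \{\stunique\}}$, where $\vec{d}^{(k)} = (d^{(k)}_1, \dots, d^{(k)}_\ell)$ is a ``threshold vector'' indexed by the simple types of the arguments of $g_k$, and $f_{\vec{d}^{(k)}, \{\stunique\}}(\vec{x}) = \{\stunique\}$ iff $d^{(k)}_j \sqleq x_j$ (with equality used in place of $\sqleq$ on integer components), one can write $p_k$ in this canonical form. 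To show $p_k \sqleq \sem{\lambda \tilde{x}_k. t}([f_i \mapsto \HFLenv_{t,\epsilon,\vartheta}(f_i)])$ it then suffices to prove that the right-hand side, evaluated at $\vec{d}^{(k)}$, contains $\stunique$; by the definition of the semantics of $\lambda$-abstraction, this is equivalent to
\[
\{\stunique\} \;\sqleq\; \sem{t}\bigl([f_i \mapsto \HFLenv_{t,\epsilon,\vartheta}(f_i),\; x_{k,j} \mapsto d^{(k)}_j]\bigr).
\]

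The remaining work is to compare this to what the subject expansion lemma already gives. I would unfold the definition of $d_{g_k, \tilde{u}_k}$: it is the minimum function whose value at its canonical threshold vector $\vec{e} = (e_1, \dots, e_\ell)$ is $\{\stunique\}$, where
\[
e_j = \LUB\{\, d_{u_{k,j}, \tilde{v}} \mid g_k\,\tilde{u}_k \red^*_{\progd} u_{k,j}\,\tilde{v} \,\}
\]
(and $e_j = \sem{u_{k,j}}$ if $u_{k,j} \COL \Tint$). The assumption $p_k \sqleq d_{g_k, \tilde{u}_k}$ forces $e_j \sqleq d^{(k)}_j$ for each $j$. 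The key identity to verify is that $e_j = \HFLenv_{t, \epsilon, \vartheta}(x_{k,j})$: every reduction $g_k\,\tilde{u}_k \red^*_{\progd} u_{k,j}\,\tilde{v}$ must first take the unique $\beta$-step $g_k\,\tilde{u}_k \red_{\progd} \vartheta t$, after which every head occurrence of $u_{k,j}$ necessarily originates from $\vartheta(x_{k,j})$, and conversely every reduction $\vartheta t \red^*_{\progd} u_{k,j}\,\tilde{v}$ with $u_{k,j}$ originating from $\vartheta(x_{k,j})$ extends to one from $g_k\,\tilde{u}_k$. Hence the two sets of reductions defining $e_j$ and $\HFLenv_{t,\epsilon,\vartheta}(x_{k,j})$ coincide.

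Combining these ingredients, we have $\HFLenv_{t,\epsilon,\vartheta}(x_{k,j}) = e_j \sqleq d^{(k)}_j$ for every $j$, and $\HFLenv_{t, \epsilon, \vartheta}$ agrees with the new valuation on the $f_i$. By monotonicity of $\sem{t}$ and the subject-expansion inequality above,
\[
\{\stunique\} \sqleq \sem{t}(\HFLenv_{t, \epsilon, \vartheta}) \sqleq \sem{t}\bigl([f_i \mapsto \HFLenv_{t,\epsilon,\vartheta}(f_i),\; x_{k,j} \mapsto d^{(k)}_j]\bigr),
\]
which is exactly what was needed. The main obstacle is the bookkeeping around the residual-tracking convention of the ``originates from'' relation needed to establish $e_j = \HFLenv_{t,\epsilon,\vartheta}(x_{k,j})$ (including the minor variant for integer-typed arguments, where the analogous equality is $e_j = \sem{u_{k,j}} = \HFLenv_{t,\epsilon,\vartheta}(x_{k,j})$); once this is nailed down, everything else is monotonicity and the canonical form of complete primes.
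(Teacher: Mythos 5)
Your argument is correct and is essentially the paper's own proof: both rest on the same application of Lemma~\ref{lem:appx:call-seq-to-hes:infinite-subject-expansion} to \(t\) under \(\vartheta\), and on the identification \(e_j = \HFLenv_{t,\epsilon,\vartheta}(x_{k,j})\) of the thresholds defining \(d_{g_k,\tilde{u}_k}\) (which the paper asserts without comment and you justify via the unique \(\beta\)-step \(g_k\,\tilde{u}_k \red_{\progd} \vartheta t\)). The only difference is cosmetic: the paper derives \(d_{g_k,\tilde{u}_k} \sqleq \sem{\lambda\tilde{x}_k.t}(\cdots)\) from the minimality of \(d_{g_k,\tilde{u}_k}\) among monotone functions sending the thresholds to \(\set{\stunique}\) and then composes with \(p_k \sqleq d_{g_k,\tilde{u}_k}\), whereas you unfold \(p_k\) into its canonical complete-prime form and argue pointwise; the two routes are interchangeable.
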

\begin{myproof}
  By Lemma~\ref{lem:appx:call-seq-to-hes:infinite-subject-expansion}, we have:
  \[ d_{\vartheta t, \epsilon} = \{\stunique\} \sqleq \sem{t}(\HFLenv_{t, \epsilon,\vartheta})
  = (\sem{\lambda\seq{x}_k.t}([f_1 \mapsto \HFLenv_{t, \epsilon, \vartheta}(f_1), \dots, f_n \mapsto \HFLenv_{t, \epsilon, \vartheta}(f_n)]))
(\HFLenv_{t, \epsilon, \vartheta}(\seq{x}_k)).
  \]
  By definition, \(d_{g_k, \tilde{u}_k}\) is the least element such that
  \(  d_{g_k, \tilde{u}_k}(\HFLenv_{t, \epsilon, \vartheta}(\seq{x}_k)) = \set{\stunique}\).
  Therefore
  \[
  d_{g_k, \tilde{u}_k} \sqleq \sem{\lambda \seq{x}_k. t}([f_1 \mapsto \HFLenv_{t, \epsilon, \vartheta}(f_1), \dots, f_n \mapsto \HFLenv_{t, \epsilon, \vartheta}(f_n)]).
  \]
  Since \( p_k \sqleq d_{g_k, \tilde{u}_k} \), we have
  \[
  p_k \sqleq d_{g_k, \tilde{u}_k} \sqleq \sem{\lambda \tilde{x}_k. t}([f_1 \mapsto \HFLenv_{t, \epsilon, \vartheta}(f_1), \dots, f_n \mapsto \HFLenv_{t, \epsilon, \vartheta}(f_n)]).
  \]
\end{myproof}

Now we have a play
\[
(\{\stunique\}, \mainfun) \cdot O_1 \cdot (p_1, g_{1}) \cdot O_2 \cdot \dots \cdot O_k \cdot (p_k, g_{k}) \cdot (\HFLenv_{t, \epsilon, \theta}(f_1), \dots, \HFLenv_{t, \epsilon, \theta}(f_n))
\]
associated with the call-sequence
\begin{align*}
  &
  \mainfun \CallSeqN{m_1}_\progd g_1\,\tilde{u}_1 \CallSeqN{m_2}_\progd g_2\,\tilde{u}_2 \CallSeqN{m_3}_\progd \dots \CallSeqN{m_{k-1}}_\progd g_{k-1}\,\tilde{u}_{k-1} \CallSeqN{m_k}_\progd g_k\,\tilde{u}_k
\end{align*}
Let \( (p_{k+1}, g_{k+1}) \) be the next opponent move.
By definition of the game,
\[
p_{k+1} \sqleq \HFLenv_{t, \epsilon, \theta}(g_{k+1}) = \LUB \{\, d_{g_{k+1}, \tilde{v}} \mid \theta t \red^*_\progd g_{k+1}\,\tilde{v}, \,\mbox{\( g_{k+1} \) originates from \( t \)} \,\}.
\]
This can be more simply written as
\[
p_{k+1} \sqleq \HFLenv_{t, \epsilon, \theta}(g_{k+1}) = \LUB \{\, d_{g_{k+1}, \tilde{v}} \mid g_{k}\,\seq{u}_k \CallSeq g_{k+1}\,\seq{v} \,\}.
\]
Since \( p_{k+1} \) is a complete prime, \( p_{k+1} \sqleq d_{g_{k+1}, \tilde{v}} \) for some \( \tilde{v} \) with \( g_{k}\,\seq{u}_k \CallSeq g_{k+1}\,\seq{v} \).
Then we have an associated call-sequence
\begin{align*}
  &
  \mainfun \CallSeqN{m_1}_\progd g_1\,\tilde{u}_1 \CallSeqN{m_2}_\progd g_2\,\tilde{u}_2 \CallSeqN{m_3}_\progd \dots \CallSeqN{m_{k-1}}_\progd g_{k-1}\,\tilde{u}_{k-1} \CallSeqN{m_k}_\progd g_k\,\tilde{u}_k \CallSeqN{m_{k+1}}_{\progd} g_{k+1}\,\tilde{u}_{k+1}
\end{align*}
of
\[
(\{\stunique\}, \mainfun) \cdot O_1 \cdot (p_1, g_{1}) \cdot O_2 \cdot \dots \cdot O_k \cdot (p_k, g_{k}) \cdot (\HFLenv_{t, \epsilon, \theta}(f_1), \dots, \HFLenv_{t, \epsilon, \theta}(f_n)) \cdot (p_{k+1}, g_{k+1}).
\]

\begin{lemma}\label{lem:appx:liveness-from-call-seq:parity-call-seq-to-winning}
  \( \Strategy_{\prog} \) is a winning strategy of \( \SGame_{\lts_0,(\prog,\Pfun)} \) on \( (\{\stunique\}, \mainfun) \) if every infinite call-sequence of \( \prog \) satisfies the parity condition.
\end{lemma}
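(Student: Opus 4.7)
The plan is to exploit the tight correspondence, already built into the construction of $\Strategy_{\prog}$, between plays in $\SGame_{\lts_0,(\prog,\Pfun)}$ that conform with $\Strategy_{\prog}$ and call-sequences of $\prog$. Specifically, each finite play $(\{\stunique\},\mainfun)\cdot O_1\cdot (p_1,g_1)\cdots O_k\cdot (p_k,g_k)$ conforming with $\Strategy_{\prog}$ comes equipped with its canonical associated call-sequence $\mainfun \CallSeqN{m_1}_{\progd} g_1\,\tilde{u}_1\CallSeqN{m_2}_{\progd}\cdots\CallSeqN{m_k}_{\progd} g_k\,\tilde{u}_k$, and the construction of $\Strategy_{\prog}$ ensures that this association is monotone in the length of the play: extending the play by one round (an Opponent move followed by Proponent's response, or vice versa) extends the associated canonical call-sequence by exactly one step $g_k\,\tilde{u}_k \CallSeqN{m_{k+1}}_{\progd} g_{k+1}\,\tilde{u}_{k+1}$. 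I would first state this as an explicit invariant and verify it by induction on the length of the play, using the argument already spelled out before the lemma.

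Next, I would handle finite maximal plays. The only way such a play could fail to extend is if it ends at a Proponent node $(p_k,g_k)$ for which $\Strategy_{\prog}$ is undefined, or at an Opponent node with no successor. The former is ruled out because the construction exhibits a well-defined next move $(\HFLenv_{t,\epsilon,\vartheta}(f_1),\dots,\HFLenv_{t,\epsilon,\vartheta}(f_n))$ for every associated canonical call-sequence; validity of this move is already verified by the lemma preceding this one. Consequently, every finite maximal play conforming with $\Strategy_{\prog}$ must end at an Opponent node, which is winning for Proponent.

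For infinite plays, the monotonicity invariant promotes each infinite play $(\{\stunique\},\mainfun)\cdot O_1\cdot (p_1,g_1)\cdot O_2\cdot (p_2,g_2)\cdots$ conforming with $\Strategy_{\prog}$ to an infinite call-sequence $\mainfun\, g_1\, g_2\cdots \in \InfCallseq(\prog)$ of $\prog$, namely the common extension of the canonical call-sequences of all finite prefixes. By the bipartite structure of $\SGame_{\lts_0,(\prog,\Pfun)}$, only the Proponent nodes carry nonzero priority, and by the definition of the game the priority of $(p_k,g_k)$ is exactly $\Pfun(g_k)$ (after absorbing the harmless re-scaling $\Pfun'$ introduced above the game definition, which preserves parity of the maximum infinitely-occurring priority). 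Hence the priority sequence $\Pfun(g_1)\Pfun(g_2)\cdots$ along the play agrees, up to occurrences of $0$ at Opponent nodes that do not affect the maximum infinitely-occurring value, with the priority sequence along the associated infinite call-sequence. By the hypothesis, the latter satisfies the parity condition, so the former does as well, and the infinite play is Proponent-winning.

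The only mildly delicate step is the extension-of-call-sequences invariant, because the \emph{canonical} choice is defined via a lexicographic minimisation over $(m_1,\xi_1,\dots,m_k,\xi_k)$; one must check that, after Opponent plays a complete prime $(p_{k+1},g_{k+1})$, the canonical call-sequence of the extended play is obtained by appending a single step to the canonical call-sequence of the unextended play, rather than by recomputing from scratch. This follows because the first $k$ components of the tuple are fixed once the first $k$ rounds are played, so minimisation of $(m_{k+1},\xi_{k+1})$ is independent of them; I would record this explicitly as a lemma before the main argument. Everything else is a routine unpacking of the construction preceding the statement.
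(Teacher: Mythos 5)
Your proposal is correct and follows essentially the same route as the paper's proof: the key step in both is that the canonical associated call-sequence of a prefix of a play is the prefix of the canonical associated call-sequence (justified by the lexicographic minimality), so an infinite conforming play yields an infinite call-sequence whose priority sequence matches that of the play, and the parity condition transfers. Your additional remarks on finite maximal plays and on Opponent nodes having priority $0$ are points the paper absorbs into the preceding "the above argument shows that $\Strategy_{\prog}$ is a strategy" and into the bipartite-game setup, but they do not change the argument.
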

\begin{myproof}
  The above argument shows that \( \Strategy_{\prog} \) is a strategy of \( \SGame_{\lts_0,(\prog,\Pfun)} \) on \( (\{\stunique\}, \mainfun) \).
  We prove that it is winning.
  Assume an infinite play
  \[
  (\{\stunique\},\mainfun) \cdot v_1 \cdot (p_1,g_1) \cdot v_2 \cdot (p_2,g_2) \cdot \dots
  \]
  that conforms with \( \Strategy_{\prog} \) and starts from \( (\{\stunique\},\mainfun) \).
  Then each odd-length prefix is associated with the canonical call-sequence
  \[
  \RTyPath_{k} \Vdash
  \mainfun \CallSeqN{m_{k,1}}_{\progd} g_1\,\seq{u}_{k,1} \CallSeqN{m_{k,2}}_{\progd} g_2\,\seq{u}_{k,2} \CallSeqN{m_{k,3}}_{\progd} \dots \CallSeqN{m_{k,k}}_{\progd} g_k\,\seq{u}_{k,k}.
  \]
  Let \( \xi_{k,i} \) be the sequence of \( \{\PLeft,\PRight\} \) describing the choice made during \( g_{i-1}\,\seq{u}_{k,i-1} \CallSeqN{m_{k,i}}_{\progd} g_i\,\seq{u}_i \).
  By the definitions of \( \Strategy_{\prog} \) and canonical call-sequence (which is the minimum with respect to the lexicographic ordering on \( (m_{k,1}, \xi_{k,1}, \dots, m_{k,k}, \xi_{k,k}) \)), a prefix of the canonical call-sequence is the canonical call-sequence of the prefix.
  In other words, \( k_{k,i} = k_{k',i} \) and \( \xi_{k,i} = \xi_{k',i} \) for every \( i \ge 1 \) and \( k, k' \le i \).
  Let us write \( k_i \) for \( k_{i,i} = k_{i+1,i} = \cdots \) and \( \seq{u}_i \) for \( \seq{u}_{i,i} = \seq{u}_{i+1,i} = \cdots \).
  Let \( \RTyPath = \xi_{1,1} \xi_{2,2} \xi_{3,3} \dots \).
  Then we have
  \[
  \RTyPath \Vdash
  \mainfun \CallSeqN{m_1}_{\progd} g_1\,\seq{u}_{1} \CallSeqN{m_2}_{\progd} g_2\,\seq{u}_{2} \CallSeqN{m_3}_{\progd} \dots.
  \]
  Since every infinite call-sequence satisfies the parity condition, the infinite play is P-winning.
\end{myproof}

\subsubsection{Soundness}
The proof of soundness (i.e.~\( \lts_0 \models \HESf_{(P,\Pfun),\CSA}\) implies \(\models_\CSA (P,\Pfun)\)) can be given in a manner similar to the proof of completeness.
To show the contraposition, assume
that there exists an infinite call-sequence that violates the parity condition.
Let \( \RTyPath \) be the choice of this call-sequence.
We construct a strategy \( \Strategy_{\prog,\RTyPath} \) of the opposite game \( \CounterGame_{\lts_0,\HESf_{(P,\Pfun),\CSA}} \) by inspecting the reduction sequence.
Then, by Lemma~\ref{lem:appx:game-semantics:opposite-game}, there is no winning strategy for \( \SGame_{\lts_0,\HESf_{(P,\Pfun),\CSA}} \), which implies 
\( \lts_0 \not\models \HESf_{(P,\Pfun),\CSA}\) by Theorem~\ref{thm:appx:game-semantics:game-and-interpretation}.
\begin{definition}[Opposite parity game for a program]
The parity game \( \CounterGame_{\lts, (\prog,\Pfun)} \) is defined as follows:
\begin{align*}
  \Nodes_{P} :=\;& \{\, (p, f_i) \mid p \in \D_{\lts, \Pst_i}, \;\textrm{\( p \): complete coprime} \,\}
  \\
  \Nodes_{O} :=\;& \{\, (d_1, \dots, d_n) \mid d_1 \in \D_{\lts, \Pst_1}, \dots, d_n \in \D_{\lts, \Pst_n} \,\}
  \\
  \Edges
  :=\;& \{\, ((p, f_i), (d_1, \dots, d_n)) \mid p \sqsupseteq \sem{\lambda \tilde{x}_i. \term_i}([f_1 \mapsto d_1, \dots, f_n \mapsto d_n]) \,\}
  \\
  \cup\;& \{\, ((d_1, \dots, d_n), (p, f_i)) \mid p \sqsupseteq d_i, \;\textrm{\( p \): complete coprime} \,\}.
\end{align*}
The priority of \( (d_1, \dots, d_n) \) is \( 0 \) and that of \( (p, f_i) \) is \( \Omega(f_i) + 1\).
\end{definition}

\begin{corollary}\label{cor:appx:liveness-from-call-seq:soundness:game-equivalence}
  \( \CounterGame_{\lts_0, (\prog,\Pfun)} \) is isomorphic to \( \CounterGame_{\lts_0, \HESf_{(\prog,\Pfun),\CSA}} \), i.e.~there exists a bijection of nodes that preserves the owner and the priority of each node.
\end{corollary}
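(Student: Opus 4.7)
\begin{myproof}[Proof proposal]
The plan is to mimic the proof of Corollary~\ref{cor:appx:liveness-from-call-seq:game-equivalence} by exhibiting the obvious identity bijection on nodes and checking that owner, priority, and edges all match. First I would define the bijection by \( (d_1,\dots,d_n) \leftrightarrow (d_1,\dots,d_n) \) on Opponent nodes and \( (p,f_i) \leftrightarrow (p,X_i) \) on Proponent nodes, where on the HES side we use the fact that in \( \toHFL(\HESf_{(\prog,\Pfun),\CSA}) \) the defined variables \( X_1,\dots,X_n \) correspond to \( f_1,\dots,f_n \) and their semantic domains \( \D_{\lts_0,\typ_i} \) coincide with \( \D_{\lts_0,\Pst_i} = \D_{\lts_0,\livetrans{\Pst_i}} \) by Lemma~\ref{lem:appx:liveness:csa:term-interpretation}. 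Owner preservation is immediate from the definitions of the two opposite games.

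Next I would verify that priorities agree. By the normalization made at the start of the subsection, \(2(n-i) \le \Pfun(f_i) \le 2(n-i)+1\) and the HES orders its equations so that \(\munu_i = \nu\) when \(\Pfun(f_i)\) is even and \(\munu_i = \mu\) otherwise. Thus on the program-game side the priority of \((p,f_i)\) is \(\Pfun(f_i)+1\), while on the HES side the priority of \((p,X_i)\) is \(2(n-i)+1\) if \(\munu_i=\nu\) and \(2(n-i)+2\) if \(\munu_i=\mu\). A short case split (\(\Pfun(f_i)\) even vs.\ odd) shows these two numbers coincide; Opponent nodes both have priority~\(0\).

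Finally I would verify edge preservation, which is the only substantive step. Edges between Opponent nodes and Proponent nodes are determined in exactly the same way in both games (in terms of \(p \sqsupseteq d_i\) with \(p\) a complete coprime), so the only thing to check is edges \(((p,f_i),(d_1,\dots,d_n))\), where I must show
\[
p \sqsupseteq \sem{\lambda \seq{x}_i.\term_i}_{\lts_0}([f_1\mapsto d_1,\dots,f_n\mapsto d_n])
\]
iff
\[
p \sqsupseteq \sem{\livetrans{(\lambda \seq{x}_i.\term_i)}}_{\lts_0}([f_1\mapsto d_1,\dots,f_n\mapsto d_n]).
\]
By the definition of the translation \(\livetrans{(-)}\), this is exactly Lemma~\ref{lem:appx:liveness:csa:term-interpretation} applied to the right-hand side of each defining equation. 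The main thing to be careful about is that the HES has been put into a ``prenex'' form where all fixpoints are hoisted; but since \(\form_i = \livetrans{\term_i}\) in the HES and the interpretation \(\sem{\HESf_{(\prog,\Pfun),\CSA}}\) binds \(X_1,\dots,X_n\) in the same simultaneous manner as the program binds \(f_1,\dots,f_n\), the edge relation on the Proponent side reduces to precisely the above semantic comparison. No step is substantially harder than the corresponding one for the non-opposite game; the only mild subtlety is the priority bookkeeping, where the \(+1\) shift on both sides must be tracked consistently.
\end{myproof}
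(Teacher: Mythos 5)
Your proposal is correct and follows essentially the same route as the paper: the identity bijection on nodes together with Lemma~\ref{lem:appx:liveness:csa:term-interpretation} for edge preservation. The paper leaves the priority bookkeeping implicit, whereas you spell out the case split on the parity of \(\Pfun(f_i)\) under the normalization \(2(n-i) \le \Pfun(f_i) \le 2(n-i)+1\); that check is right and is the only part the paper does not state explicitly.
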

\begin{myproof}
  The bijection on nodes is given by \( (d_1, \dots, d_n) \leftrightarrow (d_1, \dots, d_n) \) and \( (p, f_i) \leftrightarrow (p, f_i) \).
  This bijection preserves edges because of Lemma~\ref{lem:appx:liveness:csa:term-interpretation}.
\end{myproof}

Let \( \RTyPath_0 \in \{\PLeft,\PRight\}^{\omega} \).
Given a head occurrence of \( t \) of a term reachable from \( \mainfun \) following \( \RTyPath_0 \) (i.e.~\( (\mainfun; \RTyPath_0) \red_\progd^* (t\,\tilde{u}; \RTyPath) \) for some \( \tilde{u} \)), we assign an element \( \bar{d}_{t,\tilde{u},\RTyPath} \in \D_{\lts_0,\Pst} \) where \( \Pst \) is the simple type of \( t \) by induction on the order of \( \Pst \).
\begin{itemize}
\item
  Case \( \seq{u} \) is empty:
  Then \( t : \Tunit \) and we define \( \bar{d}_{t,\epsilon,\RTyPath} = \{\,\} \).
\item
  Case \( \seq{u} = u_1 \dots u_k \):
  Then \( t\,\tilde{u} = t\,u_1\,\dots\,u_k \).
  Note that the order of \( u_j \) (\( 1 \le j \le k \)) is less than that of \( t \).
  For every \( 1 \le j \le k \), we define \( \bar{e}_j \) as follows.
  \begin{itemize}
  \item Case \( u_j : \Tint \):
    Then \( \bar{e}_j = \sem{u_j} \).
    (Note that \( u_j \) has no free variable and no function symbol.)
  \item Case \( u_j : \Pst \):
    Let \( \bar{e}_j \) be the element defined by
    \[
    \bar{e}_j = \GLB \{\, \bar{d}_{u_j, \tilde{v}, \RTyPath'} \mid (t\,\tilde{u}; \RTyPath) \red^*_\progd (u_j\,\tilde{v}; \RTyPath') \,\}.
    \]
    (By the convention, \( u_j \) in the right-hand side originates from the \( j \)-th argument of \( t \).)
  \end{itemize}
  Then \( \bar{d}_{t,\tilde{u},\RTyPath} \) is the function defined by
  \[
  \bar{d}_{t,\tilde{u},\RTyPath}(x_1, \dots, x_k) :=
  \begin{cases}
    \{ \, \} & \mbox{(if \( \bar{x}_j \sqleq \bar{e}_j \) for all \( j \in \{ 1, \dots, k \} \))} \\
    \{ \stunique \} & \mbox{(otherwise).}
  \end{cases}
  \]
  Note that \( \bar{d}_{t,\tilde{u},\RTyPath} \) is the maximum function such that \( \bar{d}_{t,\tilde{u},\RTyPath}(\bar{e}_1, \dots, \bar{e}_k) = \{\,\} \).
\end{itemize}
  
Let \( \theta = [s_1/x_1, \dots, s_k/x_k] \) and assume \( (\mainfun; \RTyPath_0) \red_\progd^* ((\theta t_0)\,\tilde{u}; \RTyPath) \).
We define a mapping from function symbols and variables
to elements of their semantic domains as follows.
\begin{itemize}
\item
  For a function symbol \( f_i \):
  \[
  \bar{\HFLenv}_{t_0, \tilde{u}, \theta, \RTyPath}(f_i) = \GLB \{\, \bar{d}_{f_i, \tilde{v},\RTyPath'} \mid
  ((\theta t_0) \seq{u}; \RTyPath) \red^*_{\progd} (f_i \tilde{v}; \RTyPath'), \,\mbox{\( f_i \) originates from \( t_0 \)} \,\}
  \]
\item
  For a variable \( x_i : \Tint \):
  \[
  \bar{\HFLenv}_{t_0,\tilde{u},\theta,\RTyPath}(x_i) = \sem{s_i}
  \]
\item
  For a variable \( x_i : \Pst \):
  \[
  \bar{\HFLenv}_{t_0, \tilde{u}, \theta,\RTyPath}(x_i) = \GLB \{\, \bar{d}_{s_i, \tilde{v},\RTyPath'} \mid
  ((\theta t_0) \seq{u}; \RTyPath) \red^*_\progd (s_i \tilde{v}; \RTyPath'), \,\mbox{\( s_i \) originates from \( \theta(x_i) \)} \,\}.
  \]
\end{itemize}

\begin{lemma}\label{lem:appx:call-seq-to-hes:soundness:arithmetic-expressions}
  Let \( \theta = [s_1/x_1, \dots, s_k/x_k] \) and assume \( (\mainfun, \RTyPath_0) \red_{\progd}^* ((\theta t_0)\,\tilde{u}; \RTyPath) \).
  Let \( v \) be an arithmetic expression (i.e.~a term of type \( \Tint \)) appearing in \( t_0 \).
  Then \( \sem{\theta v} = \sem{v}(\bar{\HFLenv}_{t_0, \tilde{u}, \theta, \RTyPath}) \).
\end{lemma}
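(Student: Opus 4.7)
The plan is to prove the lemma by straightforward structural induction on the arithmetic expression $v$, mirroring the proof already given for the ``positive'' counterpart (Lemma~\ref{lem:appx:call-seq-to-hes:completeness:arithmetic-expressions}). The only reason the statement needs a separate proof at all is that $\bar{\HFLenv}$ is defined using greatest-lower-bounds over reduction-descendants rather than least-upper-bounds; however, since $v$ is an arithmetic expression its syntax is restricted to $v ::= x_i \mid n \mid v_1 \OP v_2$ (it contains neither events, nor nondeterminism, nor function applications, nor conditionals), so the structural induction degenerates to three easy cases that do not actually exercise the asymmetry between $\HFLenv$ and $\bar{\HFLenv}$.

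First I would handle the variable case $v = x_i$: since $v : \Tint$, the simple typing forces $x_i : \Tint$, so by the second clause in the definition of $\bar{\HFLenv}$ we immediately have $\bar{\HFLenv}_{t_0,\tilde{u},\theta,\RTyPath}(x_i) = \sem{s_i}$, and thus $\sem{x_i}(\bar{\HFLenv}_{t_0,\tilde{u},\theta,\RTyPath}) = \sem{s_i} = \sem{\theta x_i}$. Next, for $v = n$, both sides reduce to the integer $n$ by the semantic clause for integer constants. Finally, for $v = v_1 \OP v_2$, both $v_1$ and $v_2$ are themselves arithmetic subexpressions appearing in $t_0$, so the induction hypothesis gives $\sem{\theta v_j} = \sem{v_j}(\bar{\HFLenv}_{t_0,\tilde{u},\theta,\RTyPath})$ for $j \in \{1,2\}$, and then the semantic clause for $\OP$ (which interprets $\OP$ as $\sem{\OP}$ in both the target HFL semantics and the source operational semantics, via the function $\sem{\cdot}$ used in the if-rule of Figure~\ref{fig:os}) yields the result in one line.

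There is essentially no obstacle: the argument does not need any property of the reduction sequence $(\mainfun;\RTyPath_0) \red^*_\progd ((\theta t_0)\,\tilde{u};\RTyPath)$ beyond its existence, nor any monotonicity/continuity argument, because the value of an arithmetic expression is completely determined by the values of its free variables. The only tiny subtlety worth mentioning is that the substitution $\theta$ sends each $x_i$ to the closed arithmetic expression $s_i$ (so $\sem{s_i}$ is well-defined as an integer), which is exactly what the definition of $\bar{\HFLenv}$ exploits in the variable case. Thus the proof can be written by simply transcribing the earlier proof of Lemma~\ref{lem:appx:call-seq-to-hes:completeness:arithmetic-expressions} with $\HFLenv$ replaced by $\bar{\HFLenv}$ throughout.
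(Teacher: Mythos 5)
Your proof is correct and matches the paper's approach: the paper simply remarks that this lemma is proved ``similar to'' its completeness counterpart, which is exactly the structural induction on $v$ (variable, constant, operation) that you carry out, and your observation that the $\bigsqcup$/$\bigsqcap$ asymmetry between $\HFLenv$ and $\bar{\HFLenv}$ is never exercised for integer-typed variables is the right justification for why the transcription goes through unchanged.
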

\begin{myproof}
  Similar to the proof of Lemma~\ref{lem:appx:call-seq-to-hes:completeness:arithmetic-expressions}.
\end{myproof}

\begin{lemma}\label{lem:appx:call-seq-to-hes:soundness:infinite-subject-expansion}
  Let \( \theta = [s_1/x_1, \dots, s_k/x_k] \) and \( \RTyPath_0 \in \{\PLeft,\PRight\}^{\omega} \).
  Assume that the unique reduction sequence starting from \( (\mainfun; \RTyPath_0) \) does not terminate and that \( (\mainfun; \RTyPath_0) \red_\progd^* ((\theta t_0)\,\tilde{u}, \RTyPath) \).
  Then \( \bar{d}_{\theta t_0,\tilde{u},\RTyPath} \sqsupseteq \sem{t_0}(\bar{\HFLenv}_{t_0, \tilde{u}, \theta,\RTyPath}) \).
\end{lemma}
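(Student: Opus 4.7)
The plan is to prove the lemma by structural induction on \(t_0\), dualising the proof of Lemma~\ref{lem:appx:call-seq-to-hes:infinite-subject-expansion}: every \(\LUB\) becomes a \(\GLB\), every \(\sqleq\) is flipped, and the \emph{maximality} of \(\bar d_{t,\tilde u,\RTyPath}\) among functions sending the infimum tuple to \(\emptyset\) plays the role formerly played by \emph{minimality} of \(d_{t,\tilde u}\). Before starting, I rule out \(t_0 = \unitexp\): otherwise \(\tilde u = \epsilon\) and \((\theta t_0)\tilde u = \unitexp\) is irreducible, contradicting the assumption that the reduction from \((\mainfun;\RTyPath_0)\) does not terminate. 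A useful observation throughout is that the IH may be applied to \emph{any} head term reached from \((\mainfun;\RTyPath_0)\) during the analysis, because the statement only requires the \emph{overall} reduction to be infinite, never the sub-reduction.

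The cases for variables, function symbols, events, and conditionals mirror those of the completeness proof with \(\LUB\) replaced by \(\GLB\): the variable and function-symbol cases are immediate from the definition of \(\bar\HFLenv\); the event case uses that every reduction from \((\theta t_0;\RTyPath)\) factors through the unique step \(\theta t_0 \red \theta\term\), giving \(\bar\HFLenv_{\term,\epsilon,\theta,\RTyPath} = \bar\HFLenv_{t_0,\epsilon,\theta,\RTyPath}\) before invoking the IH; and the conditional case applies Lemma~\ref{lem:appx:call-seq-to-hes:soundness:arithmetic-expressions} to compute the guard value under \(\bar\HFLenv\) and then IH on the taken branch.

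The application case \(t_0 = t_1 t_2\) is the technical heart. Because each of \(\bar\HFLenv_{t_2,\tilde v,\theta,\RTyPath'}\) and \(\bar\HFLenv_{t_1,(\theta t_2)\tilde u,\theta,\RTyPath}\) is a \(\GLB\) over a smaller family of reductions than \(\bar\HFLenv_{t_1 t_2,\tilde u,\theta,\RTyPath}\), the monotonicity direction is reversed: each dominates \(\bar\HFLenv_{t_1 t_2,\tilde u,\theta,\RTyPath}\) pointwise. Combining the IH on \(t_2\) (taken over all targets \(((\theta t_2)\tilde v;\RTyPath')\) reached from \(((\theta t_1)(\theta t_2)\tilde u;\RTyPath)\)) with monotonicity yields \(\sem{t_2}(\bar\HFLenv_{t_1 t_2,\tilde u,\theta,\RTyPath}) \sqleq \GLB\{\bar d_{\theta t_2,\tilde v,\RTyPath'}\}\); the IH on \(t_1\) with monotonicity yields \(\sem{t_1}(\bar\HFLenv_{t_1 t_2,\tilde u,\theta,\RTyPath}) \sqleq \bar d_{\theta t_1,(\theta t_2)\tilde u,\RTyPath}\). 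Combining these via monotonicity of \(\sem{t_1}(\bar\HFLenv)\) and invoking the dual of the completeness-proof identity, namely \(\bar d_{(\theta t_1)(\theta t_2),\tilde u,\RTyPath} = \bar d_{\theta t_1,(\theta t_2)\tilde u,\RTyPath}(\GLB\{\bar d_{\theta t_2,\tilde v,\RTyPath'}\})\)—which follows from the maximality characterisation of \(\bar d\)—delivers the required inequality.

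The case \(t_0 = t_1 \nondet t_2\) is where the proof genuinely diverges from the completeness version, and where the choice sequence \(\RTyPath\) is essential. Writing \(\RTyPath = \kappa\RTyPath'\), the unique first-step reduction is \((\theta t_0;\RTyPath) \red_\progd (\theta t_\kappa;\RTyPath')\), and every reduction from \((\theta t_0;\RTyPath)\) factors through it, so \(\bar\HFLenv_{t_0,\epsilon,\theta,\RTyPath} = \bar\HFLenv_{t_\kappa,\epsilon,\theta,\RTyPath'}\). The IH applies to \(t_\kappa\), giving \(\sem{t_\kappa}(\bar\HFLenv_{t_\kappa,\epsilon,\theta,\RTyPath'}) \sqleq \bar d_{\theta t_\kappa,\epsilon,\RTyPath'} = \emptyset\); since \(\sem{t_0}(\bar\HFLenv) = \sem{t_1}(\bar\HFLenv) \sqcap \sem{t_2}(\bar\HFLenv) \sqleq \sem{t_\kappa}(\bar\HFLenv)\), absorption by \(\emptyset\) yields \(\sem{t_0}(\bar\HFLenv) = \emptyset = \bar d_{\theta t_0,\epsilon,\RTyPath}\). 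The main obstacles I anticipate are the bookkeeping of reversed inequality directions in the application case (in particular verifying that the maximality characterisation of \(\bar d\) really does give the dual identity) and, in the nondeterminism case, correctly exploiting that only the \(\RTyPath\)-selected branch is available to the IH while the other branch is absorbed by \(\sqcap\).
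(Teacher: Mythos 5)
Your proposal is correct and follows essentially the same route as the paper's proof: structural induction on \(t_0\), dualising the completeness argument (\(\LUB\mapsto\GLB\), inequalities flipped, maximality of \(\bar d\) in place of minimality of \(d\)), with the application case resting on the identity \(\bar d_{(\theta t_1)(\theta t_2),\tilde u,\RTyPath} = \bar d_{\theta t_1,(\theta t_2)\tilde u,\RTyPath}\bigl(\GLB\{\bar d_{\theta t_2,\tilde v,\RTyPath'}\}\bigr)\) and the nondeterminism case using the choice sequence to select the branch and absorbing via \(\sqcap\). The only cosmetic difference is that in the \(\nondet\) case the paper records only the inequality \(\bar\HFLenv_{t_\kappa,\epsilon,\theta,\RTyPath'}\sqsupseteq\bar\HFLenv_{t_0,\epsilon,\theta,\RTyPath}\) (which is all that monotonicity needs) where you assert equality; this does not affect correctness.
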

\begin{myproof}
  By induction on the structure of \( t_0 \).
  \begin{itemize}
  \item Case \( t_0 = \unitexp \):
    This case never occurs since the reduction sequence does not terminate.
  \item Case \( t_0 = x_i \) for some \( 1 \le i \le k \):
    Then \( (\theta t_0) \tilde{u} = s_i\,\tilde{u} \).
    By definition, \( \bar{d}_{s_i,\tilde{u},\RTyPath} = 
    \bar{\HFLenv}_{t_0,\tilde{u},\theta,\RTyPath}(x_i) \).
  \item Case \( t_0 = f_i \):
    Then \( (\theta t_0) \tilde{u} = f_i\,\tilde{u} \).
    Since this occurrence of \( f_i \) originates from \( t_0 \), we have \( \bar{d}_{f_i,\tilde{u},\RTyPath} \sqsupseteq \bar{\HFLenv}_{t_0,\tilde{u},\theta,\RTyPath}(f_i) \).
  \item Case \( t_0 = n \) or \( \term_1 \OP \term_2 \):
    Never occurs because the type of \(t_0\) must be of the form
    \(\Pest_1\to\cdots\to\Pest_\ell\to \Tunit\).
  \item Case \( t_0 = \ifexp{p(\term'_1,\ldots,\term'_k)}{\term_1}{\term_2} \):
    Then \( \tilde{u} \) is the empty sequence.
    Assume that \( (\sem{\theta \term'_1}, \dots, \sem{\theta \term'_k}) \in \sem{p} \).
    The other case can be proved by a similar way.

    Then we have \( (\theta t_0; \RTyPath) \red_\progd (\theta \term_1; \RTyPath) \).
    By definition, \( \bar{d}_{\theta t_0, \epsilon, \RTyPath} = \bar{d}_{\theta \term_1, \epsilon, \RTyPath} = \{\,\} \).
    
    Since every reduction sequence \( (\theta t_0; \RTyPath) \red_\progd^+ (s\,\tilde{v}; \RTyPath') \) can be factored into \( (\theta t_0; \RTyPath) \red_\progd (\theta \term_1; \RTyPath) \red_\progd^* (s\,\tilde{v}; \RTyPath') \), we have \( \bar{\HFLenv}_{t_0, \epsilon, \theta, \RTyPath} = \bar{\HFLenv}_{\term_1, \epsilon, \theta, \RTyPath} \).
    By the induction hypothesis, \( \{\,\} \sqsupseteq \sem{\term_1}(\bar{\HFLenv}_{\term_1,\epsilon,\theta,\RTyPath}) \).
    Hence \( \{\,\} \sqsupseteq \sem{\term_1}(\bar{\HFLenv}_{t_0,\epsilon,\theta,\RTyPath}) \).
    
    By Lemma~\ref{lem:appx:call-seq-to-hes:soundness:arithmetic-expressions}, \( \sem{\theta \term'_j} = \sem{\term'_j}(\bar{\HFLenv}_{t_0,\tilde{u},\theta,\RTyPath}) \) for every \( j = 1, \dots, k \).
    Hence \( (\sem{\theta \term'_1}, \dots, \sem{\theta \term'_k}) \in \sem{p} \) implies \( (\sem{\term'_1}(\bar{\HFLenv}_{t_0,\tilde{u},\theta,\RTyPath}), \dots, \sem{\term'_k}(\bar{\HFLenv}_{t_0,\tilde{u},\theta,\RTyPath})) \in \sem{p} \).
    By definition, \( \sem{t_0}(\bar{\HFLenv}_{t_0,\tilde{u},\theta,\RTyPath}) = \sem{\term_1}(\bar{\HFLenv}_{t_0,\tilde{u},\theta,\RTyPath}) \sqsubseteq \{\,\} \).
  \item Case \( t_0 = \evexp{\lab}{\term} \):
    Then \( t_0 \) has type \( \Tunit \) and \( \tilde{u} \) is the empty sequence.
    We have \( \theta t_0 = \evexp{\lab}{(\theta \term)} \).
    Since every reduction sequence \( (\theta t_0; \RTyPath) \red_\progd^+ (s\,\tilde{v}; \RTyPath') \) can be factored into \( (\theta t_0; \RTyPath) \red_\progd (\theta \term;\RTyPath) \red_\progd^* (s\,\tilde{v}; \RTyPath') \), we have \( \bar{\HFLenv}_{t_0, \epsilon, \theta,\RTyPath} = \bar{\HFLenv}_{\term, \epsilon, \theta, \RTyPath} \).
    By definition, \( \bar{d}_{\theta t_0, \epsilon,\RTyPath} = \bar{d}_{\theta \term, \epsilon,\RTyPath} = \{ \, \} \) and \( \sem{t_0}(\bar{\HFLenv}_{t_0, \epsilon, \theta,\RTyPath}) = \sem{\term}(\bar{\HFLenv}_{t_0, \epsilon, \theta, \RTyPath}) = \sem{\term}(\bar{\HFLenv}_{\term, \epsilon,\theta}) \).
    By the induction hypothesis, \( \bar{d}_{\theta \term, \epsilon, \RTyPath} \sqsupseteq \sem{\term}(\bar{\HFLenv}_{\term, \epsilon, \theta, \RTyPath}) \).
    Hence \( \bar{d}_{\theta t_0, \epsilon, \RTyPath} \sqsupseteq \sem{t_0}(\bar{\HFLenv}_{t_0, \epsilon, \theta, \RTyPath}) \).
  \item Case \( t_0 = t_1\,t_2 \):
    Suppose that \( ((\theta t_1) (\theta t_2) \tilde{u}; \RTyPath) \red_\progd^* ((\theta t_2)\,\tilde{v}; \RTyPath') \).
    We first show that \( \bar{\HFLenv}_{t_2, \tilde{v}, \theta,\RTyPath'}(x) \sqsupseteq \bar{\HFLenv}_{t_1 t_2, \tilde{u}, \theta, \RTyPath}(x) \) for every \( x \in \{ f_1, \dots, f_n \} \cup \{ x_1, \dots,x_k \} \).
    \begin{itemize}
    \item Case \( x = x_i : \Tint \):
      Then \( \bar{\HFLenv}_{t_2, \tilde{v}, \theta, \RTyPath'}(x_i) = \bar{\HFLenv}_{t_1t_2, \tilde{u}, \theta, \RTyPath}(x_i) = \sem{s_i} \).
    \item Case \( x = x_i : \Pst \):
      Then
      \begin{align*}
        \bar{\HFLenv}_{t_2, \tilde{v}, \theta, \RTyPath'}(x_i)
        &= \GLB \{\, \bar{d}_{s_i, \tilde{w}, \RTyPath''} \mid ((\theta t_2)\,\tilde{v}; \RTyPath') \red^*_\progd (s_i \tilde{w}; \RTyPath''), \,\mbox{\( s_i \) originates from \( \theta(x_i) \)} \,\} \\
        &\sqsupseteq \GLB \{\, \bar{d}_{s_i, \tilde{w}, \RTyPath''} \mid ((\theta (t_1\,t_2))\,\tilde{u}; \RTyPath) \red^*_\progd (s_i \tilde{w}; \RTyPath''), \,\mbox{\( s_i \) originates from \( \theta(x_i) \)} \,\} \\
        &= \bar{\HFLenv}_{t_1 t_2, \tilde{u}, \theta, \RTyPath}(x_i)
      \end{align*}
      because \( ((\theta t_2)\,\tilde{v}; \RTyPath') \red^*_\progd (s_i \tilde{w}; \RTyPath'') \) (where \( s_i \) originates from \( \theta(x_i) \)) implies \( ((\theta (t_1\,t_2))\,\tilde{u}; \RTyPath) \red^*_\progd ((\theta t_2) \tilde{v}; \RTyPath') \red^*_\progd (s_i \tilde{w}; \RTyPath'') \) (where \( s_i \) originates from \( \theta(x_i) \)).
    \item Case \( x = f_i \):
      \begin{align*}
        \bar{\HFLenv}_{t_2, \tilde{v}, \theta, \RTyPath'}(f_i)
        &= \GLB \{\, \bar{d}_{f_i, \tilde{w}, \RTyPath''} \mid ((\theta t_2)\,\tilde{v}; \RTyPath') \red^*_\progd (f_i \tilde{w}; \RTyPath''), \,\mbox{\( f_i \) originates from \( t_2 \)} \,\} \\
        &\sqsupseteq \GLB \{\, \bar{d}_{f_i, \tilde{w}, \RTyPath''} \mid ((\theta (t_1\,t_2))\,\tilde{u}; \RTyPath) \red^*_\progd (f_i \tilde{w}; \RTyPath''), \,\mbox{\( f_i \) originates from \( t_1 t_2 \)} \,\} \\
        &= \bar{\HFLenv}_{t_1 t_2, \tilde{u}, \theta, \RTyPath}(f_i)
      \end{align*}
      because \( ((\theta t_2)\,\tilde{v}; \RTyPath') \red^*_\progd (f_i \tilde{w}; \RTyPath'') \) (where \( f_i \) originates from \( t_2 \)) implies \( ((\theta (t_1\,t_2))\,\tilde{u}; \RTyPath) \red^*_\progd ((\theta t_2) \tilde{v}; \RTyPath') \red^*_\progd (f_i \tilde{w}; \RTyPath'') \) (where \( f_i \) originates from \( t_1 t_2 \)).      
    \end{itemize}

    It is easy to see that \( \bar{\HFLenv}_{t_1, (\theta t_2)\,\tilde{u}, \theta, \RTyPath} \sqsupseteq \bar{\HFLenv}_{t_1t_2, \tilde{u}, \theta, \RTyPath} \).
    By the induction hypothesis,
    \[
    \bar{d}_{\theta t_1, (\theta t_2)\,\tilde{u}, \RTyPath} \sqsupseteq \sem{t_1}(\bar{\HFLenv}_{t_1, (\theta t_2)\,\tilde{u}, \theta, \RTyPath}) \sqsupseteq \sem{t_1}(\bar{\HFLenv}_{t_1t_2, \tilde{u}, \theta, \RTyPath}).
    \]
    By the definition of \( \bar{d}_{\theta t_1, (\theta t_2)\,\tilde{u}, \RTyPath} \),
    \[
    \bar{d}_{\theta t_1, (\theta t_2)\,\tilde{u}, \RTyPath} \big(\GLB \{ \bar{d}_{\theta t_2, \tilde{v}, \RTyPath'} \mid ((\theta t_1)\,(\theta t_2)\,\tilde{u}; \RTyPath) \red_\progd^* ((\theta t_2) \tilde{v}; \RTyPath') \}\big) = \bar{d}_{(\theta t_1)\,(\theta t_2),\,\tilde{u},\RTyPath}.
    \]
    By the induction hypothesis, for every reduction sequence \( ((\theta t_1)\,(\theta t_2)\,\tilde{u}; \RTyPath) \red^*_\progd ((\theta t_2) \tilde{v}; \RTyPath') \), we have \( \bar{d}_{\theta t_2, \tilde{v}, \RTyPath'} \sqsupseteq \sem{t_2}(\bar{\HFLenv}_{t_2, \tilde{v}, \theta, \RTyPath'}) \).
    Since \( \bar{\HFLenv}_{t_2, \tilde{v}, \theta, \RTyPath'}(x) \sqsupseteq \bar{\HFLenv}_{t_1t_2, \tilde{u}, \theta, \RTyPath'}(x) \) for every \( x \), we have
    \[
    \bar{d}_{\theta t_2, \tilde{v}, \RTyPath'} \sqsupseteq \sem{t_2}(\bar{\HFLenv}_{t_2, \tilde{v}, \theta, \RTyPath'}) \sqsupseteq \sem{t_2}(\bar{\HFLenv}_{t_1t_2, \tilde{u}, \theta, \RTyPath}).
    \]
    Because the reduction sequence \( ((\theta t_1)\,(\theta t_2)\,\tilde{u}; \RTyPath) \red^*_\progd ((\theta t_2) \tilde{v}; \RTyPath) \) is arbitrary,
    \[
    \big(\GLB \{ \bar{d}_{\theta t_2, \tilde{v}, \RTyPath'} \mid ((\theta t_1)\,(\theta t_2)\,\tilde{u}; \RTyPath) \red_\progd^* ((\theta t_2) \tilde{v}; \RTyPath') \}\big) \sqsupseteq \sem{t_2}(\bar{\HFLenv}_{t_1t_2, \tilde{u}, \theta, \RTyPath}).
    \]
    Therefore, by monotonicity of interpretation,
    \begin{align*}
      \bar{d}_{(\theta t_1) (\theta t_2), \,\tilde{u}, \RTyPath}
      &= \bar{d}_{\theta t_1, (\theta t_2)\,\tilde{u}, \RTyPath} \big(\GLB \{ \bar{d}_{\theta t_2, \tilde{v}, \RTyPath'} \mid ((\theta t_1)\,(\theta t_2)\,\tilde{u}; \RTyPath) \red_\progd^* ((\theta t_2) \tilde{v}; \RTyPath') \}\big) \\
      &\sqsupseteq \sem{t_1}(\bar{\HFLenv}_{t_1t_2, \tilde{u}, \theta, \RTyPath}) \big( \sem{t_2}(\bar{\HFLenv}_{t_1t_2, \tilde{u}, \theta, \RTyPath}) \big) \\
      &= \sem{t_1\,t_2}(\bar{\HFLenv}_{t_1t_2,\tilde{u},\theta,\RTyPath}).
    \end{align*}
  \item Case \( t_0 = \term_1 \nondet \term_2 \):
    Then \( \tilde{u} = \epsilon \).
    Assume that \( \RTyPath = \PLeft \RTyPath' \); the other case can be proved similarly.
    
    Then we have \( (\theta t_0; \RTyPath) \red_\progd (\theta \term_1; \RTyPath') \).
    By the definition of \( \bar{d}_{\theta \term_1, \epsilon, \RTyPath'} \) and the induction hypothesis,
    \[
    \{\,\} = \bar{d}_{\theta \term_1,\epsilon, \RTyPath'} \sqsupseteq \sem{\term_1}(\bar{\HFLenv}_{\term_1, \epsilon, \theta, \RTyPath'}).
    \]
    Because \( (\theta t_0; \RTyPath) \red_\progd (\theta \term_1; \RTyPath') \), we have \( \bar{\HFLenv}_{\term_1, \epsilon, \theta, \RTyPath'} \sqsupseteq \bar{\HFLenv}_{t_0, \epsilon, \theta, \RTyPath} \).
    By monotonicity of interpretation,
    \[
    \{\,\} = \bar{d}_{\theta \term_1,\epsilon, \RTyPath'} \sqsupseteq \sem{\term_1}(\bar{\HFLenv}_{\term_1, \epsilon, \theta, \RTyPath'}) \sqsupseteq \sem{\term_1}(\bar{\HFLenv}_{t_0, \epsilon, \theta, \RTyPath}).
    \]
    Hence
    \[
    \{\,\} \sqsupseteq \sem{\term_1}(\bar{\HFLenv}_{t_0, \epsilon, \theta, \RTyPath}) \sqcap \sem{\term_2}(\bar{\HFLenv}_{t_0, \epsilon, \theta, \RTyPath}) = \sem{t_0}(\bar{\HFLenv}_{t_0, \epsilon, \theta, \RTyPath}).
    \]
\end{itemize}
\end{myproof}

The strategy \( \Strategy_{\prog, \RTyPath_0} \) of \( \SGame_{\lts_0,\prog} \) on \( (\{\stunique\}, \mainfun) \) is defined as follows.
Each play in the domain of \( \Strategy \)
\[
(\{\stunique\}, \mainfun) \cdot O_1 \cdot (p_1, g_1) \cdot O_2 \cdot \dots \cdot O_k \cdot (p_k, g_k)
\]
is associated with a call-sequence
\begin{align*}
  &
  (\mainfun; \RTyPath_0) \CallSeqN{m_1}_\progd (g_1\,\tilde{u}_1; \RTyPath_1) \CallSeqN{m_2}_\progd (g_2\,\tilde{u}_2; \RTyPath_2) \CallSeqN{m_3}_\progd \dots \CallSeqN{m_{k-1}}_\progd (g_{k-1}\,\tilde{u}_{k-1}; \RTyPath_{k-1}) \CallSeqN{m_k}_\progd (g_k\,\tilde{u}_k; \RTyPath_{k})
\end{align*}
such that \( p_j \sqsupseteq \bar{d}_{g_j, \tilde{u}_j} \) for every \( j = 1, 2, \dots, k \).
The \emph{canonical associated call-sequence} of the play is the minimum one ordered by the lexicographic ordering of reduction steps \( (m_1, m_2, \dots, m_k) \).

Assume that the above call-sequence is canonical.
The next step of this reduction sequence is \( ([\tilde{u}_k/\tilde{x}_k]t; \RTyPath_k) \) if \( (g_k\,\seq{x}_k = t) \in \progd \).
Let \( \vartheta = [\tilde{u}_k/\tilde{x}_k] \).
In this situation, the strategy \( \Strategy_{\prog,\RTyPath} \) chooses \( (\bar{\HFLenv}_{t, \epsilon, \vartheta, \RTyPath_k}(f_1), \dots, \bar{\HFLenv}_{t, \epsilon, \vartheta, \RTyPath_k}(f_n)) \) as the next node.
This is a valid choice, i.e.:
\begin{lemma}
  \( p_k \sqsupseteq \sem{\progd(g_k)}([f_1 \mapsto \bar{\HFLenv}_{t, \epsilon, \vartheta, \RTyPath_k}(f_1), \dots, f_n \mapsto \bar{\HFLenv}_{t, \epsilon, \vartheta, \RTyPath_k}(f_n)]) \).
\end{lemma}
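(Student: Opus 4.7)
The plan is to mirror the proof of the analogous lemma on the completeness side, with the relevant directions of inequalities flipped and appealing to Lemma~\ref{lem:appx:call-seq-to-hes:soundness:infinite-subject-expansion} in place of Lemma~\ref{lem:appx:call-seq-to-hes:infinite-subject-expansion}. First I would invoke Lemma~\ref{lem:appx:call-seq-to-hes:soundness:infinite-subject-expansion} on the body \( t \) of \( g_k \) under the substitution \( \vartheta = [\seq{u}_k/\seq{x}_k] \); since the play is assumed to arise from a nonterminating reduction sequence, the hypothesis of that lemma is satisfied. This yields
\[
\{\,\} = \bar{d}_{\vartheta t, \epsilon, \RTyPath_k} \sqsupseteq \sem{t}(\bar{\HFLenv}_{t, \epsilon, \vartheta, \RTyPath_k}),
\]
so that \( \sem{t}(\bar{\HFLenv}_{t, \epsilon, \vartheta, \RTyPath_k}) = \{\,\} \). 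Rewriting via the semantic clauses for abstraction and application,
\[
\big(\sem{\lambda \seq{x}_k. t}([f_1 \mapsto \bar{\HFLenv}_{t, \epsilon, \vartheta, \RTyPath_k}(f_1), \dots, f_n \mapsto \bar{\HFLenv}_{t, \epsilon, \vartheta, \RTyPath_k}(f_n)])\big)\big(\bar{\HFLenv}_{t, \epsilon, \vartheta, \RTyPath_k}(\seq{x}_k)\big) = \{\,\}.
\]

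Next I would identify the argument values \( \bar{\HFLenv}_{t, \epsilon, \vartheta, \RTyPath_k}(x_i) \) with the elements \( \bar{e}_i \) that appear in the definition of \( \bar{d}_{g_k,\seq{u}_k,\RTyPath_k} \). By the definition of \( \bar{d}_{g_k,\seq{u}_k,\RTyPath_k} \), the value \( \bar{e}_i \) is the meet of \( \bar{d}_{u_{k,i},\tilde{v},\RTyPath'} \) over all reductions \( (g_k\,\seq{u}_k; \RTyPath_k) \red_\progd^* (u_{k,i}\,\tilde{v}; \RTyPath') \); by the definition of \( \bar{\HFLenv}_{t,\epsilon,\vartheta,\RTyPath_k}(x_i) \), the latter is the meet of \( \bar{d}_{s_i,\tilde{v},\RTyPath'} \) over all reductions \( (\vartheta t; \RTyPath_k) \red_\progd^* (s_i\,\tilde{v}; \RTyPath') \) with \( s_i \) originating from \( \vartheta(x_i)=u_{k,i} \). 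Since \( (g_k\,\seq{u}_k;\RTyPath_k) \red_\progd (\vartheta t;\RTyPath_k) \) in exactly one step and every subsequent reduction of \( \vartheta t \) reaching an occurrence of \( u_{k,i} \) in head position comes from the substituted \( i \)-th argument, these two indexing sets coincide, so \( \bar{e}_i = \bar{\HFLenv}_{t,\epsilon,\vartheta,\RTyPath_k}(x_i) \) for every \( i \).

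Then I would invoke the maximality clause in the definition of \( \bar{d}_{g_k,\seq{u}_k,\RTyPath_k} \): it is characterized as the greatest element of \( \D_{\lts_0,\Pst_k} \) that maps \( (\bar{e}_1,\dots,\bar{e}_m) \) to \( \{\,\} \). Since we have just shown that \( \sem{\lambda \seq{x}_k. t}([f_j \mapsto \bar{\HFLenv}_{t, \epsilon, \vartheta, \RTyPath_k}(f_j)]_j) \) also sends \( (\bar{e}_1,\dots,\bar{e}_m) \) to \( \{\,\} \), maximality gives
\[
\sem{\lambda \seq{x}_k. t}\big([f_1 \mapsto \bar{\HFLenv}_{t, \epsilon, \vartheta, \RTyPath_k}(f_1), \dots, f_n \mapsto \bar{\HFLenv}_{t, \epsilon, \vartheta, \RTyPath_k}(f_n)]\big) \sqsubseteq \bar{d}_{g_k,\seq{u}_k,\RTyPath_k}.
\]
Combining this with the hypothesis \( p_k \sqsupseteq \bar{d}_{g_k,\seq{u}_k,\RTyPath_k} \) maintained as an invariant of plays associated with canonical call-sequences, and noting that \( \progd(g_k) = \lambda \seq{x}_k. t \), we obtain the desired inequality.

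The only mildly delicate step is the identification \( \bar{e}_i = \bar{\HFLenv}_{t,\epsilon,\vartheta,\RTyPath_k}(x_i) \): here one must be careful about the bookkeeping of occurrences (that the only head occurrences of \( u_{k,i} \) reachable from \( g_k\,\seq{u}_k \) descend from the argument position and thus coincide with the occurrences tracked by \( \vartheta(x_i) \) inside \( \vartheta t \)), but the determinism of the reduction from \( g_k\,\seq{u}_k \) to \( \vartheta t \) makes this routine. The rest is a direct dualization of the completeness case.
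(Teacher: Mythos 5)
Your proof is correct and follows essentially the same route as the paper's: apply Lemma~\ref{lem:appx:call-seq-to-hes:soundness:infinite-subject-expansion} to the body of \(g_k\) under \(\vartheta\), use the maximality characterization of \(\bar{d}_{g_k,\tilde{u}_k,\RTyPath_k}\) (applied to the argument values \(\bar{\HFLenv}_{t,\epsilon,\vartheta,\RTyPath_k}(\seq{x}_k)\)) to conclude \(\bar{d}_{g_k,\tilde{u}_k,\RTyPath_k} \sqsupseteq \sem{\lambda\seq{x}_k.t}(\cdots)\), and chain with \(p_k \sqsupseteq \bar{d}_{g_k,\tilde{u}_k,\RTyPath_k}\). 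The only difference is that you make explicit the identification \(\bar{e}_i = \bar{\HFLenv}_{t,\epsilon,\vartheta,\RTyPath_k}(x_i)\), which the paper leaves implicit.
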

\begin{myproof}
  By Lemma~\ref{lem:appx:call-seq-to-hes:soundness:infinite-subject-expansion}, we have:
  \[ \bar{d}_{\vartheta t, \epsilon, \RTyPath_k} = \{\,\} \sqsupseteq \sem{t}(\bar{\HFLenv}_{t, \epsilon,\vartheta, \RTyPath_k})
  = \sem{\lambda \seq{x}_k. t}([f_1 \mapsto \bar{\HFLenv}_{t, \epsilon, \vartheta,\RTyPath_k}(f_1), \dots, f_n \mapsto \bar{\HFLenv}_{t, \epsilon, \vartheta,\RTyPath_k}(f_n)])(\bar{\HFLenv}_{t, \epsilon, \vartheta,\RTyPath}(\seq{x}_k) ).\]
  By definition,
  \(\bar{d}_{g_k, \tilde{u}_k, \RTyPath_k}\) is the greatest element such that
  \( \bar{d}_{g_k, \tilde{u}_k, \RTyPath_k}(\bar{\HFLenv}_{t, \epsilon, \vartheta,\RTyPath}(\seq{x}_k) )
   = \{\,\}\).
  Therefore
  \[
  \bar{d}_{g_k, \tilde{u}_k} \sqsupseteq \sem{\lambda \seq{x}_k. t}([f_1 \mapsto \bar{\HFLenv}_{t, \epsilon, \vartheta,\RTyPath_k}(f_1), \dots, f_n \mapsto \bar{\HFLenv}_{t, \epsilon, \vartheta,\RTyPath_k}(f_n)]).
  \]
  Since \( p_k \sqsupseteq \bar{d}_{g_k, \tilde{u}_k, \RTyPath_k} \), we have
  \[
  p_k \sqsupseteq \bar{d}_{g_k, \tilde{u}_k, \RTyPath_k} \sqsupseteq \sem{\lambda \tilde{x}_k. t}([f_1 \mapsto \bar{\HFLenv}_{t, \epsilon, \vartheta,\RTyPath_k}(f_1), \dots, f_n \mapsto \bar{\HFLenv}_{t, \epsilon, \vartheta,\RTyPath_k}(f_n)]).
  \]
\end{myproof}
Now we have a play
\[
(\{\stunique\}, \mainfun) \cdot O_1 \cdot (p_1, g_1) \cdot O_2 \cdot \dots \cdot O_k \cdot (p_k, g_k) \cdot (\bar{\HFLenv}_{t, \epsilon, \theta, \RTyPath_k}(f_1), \dots, \bar{\HFLenv}_{t,\epsilon,\theta,\RTyPath_k}(f_n))
\]
associated with the call sequence
\begin{align*}
  &
  (\mainfun; \RTyPath_0) \CallSeqN{m_1}_\progd (g_1\,\tilde{u}_1; \RTyPath_1) \CallSeqN{m_2}_\progd (g_2\,\tilde{u}_2; \RTyPath_2) \CallSeqN{m_3}_\progd \dots \CallSeqN{m_{k-1}}_\progd (g_{k-1}\,\tilde{u}_{k-1}; \RTyPath_{k-1}) \CallSeqN{m_k}_\progd (g_k\,\tilde{u}_k; \RTyPath_{k})
\end{align*}
Let \( (p_{k+1}, g_{k+1}) \) be the next opponent move.
By definition of the game,
\begin{align*}
  p_{k+1} \sqsupseteq \bar{\HFLenv}_{t, \epsilon, \theta, \RTyPath}(g_{k+1})
  &= \GLB \{\, \bar{d}_{g_{k+1}, \tilde{v}, \RTyPath} \mid (\theta t; \RTyPath_k) \red^*_\prog (g_{k+1} \tilde{v}; \RTyPath), \,\mbox{\( g_{k+1} \) originates from \( t \)} \,\} \\
  &= \GLB \{\, \bar{d}_{g_{k+1}, \tilde{v}, \RTyPath} \mid (g_k\,\tilde{u}_k; \RTyPath_k) \CallSeq_{\progd} (g_{k+1}\,\tilde{v}; \RTyPath) \,\}.
\end{align*}
Since \( p_{k+1} \) is a complete coprime, \( p_{k+1} \sqsupseteq \bar{d}_{g_{k+1}, \tilde{v}, \RTyPath} \) for some \( \tilde{v} \) and \( \RTyPath \) with \( (g_k\,\tilde{u}_k; \RTyPath_k) \CallSeq_{\progd} (g_{k+1}\,\tilde{v}; \RTyPath) \).
Hence we have an associated call-sequence
\begin{align*}
  &
  (\mainfun; \RTyPath_0) \CallSeqN{m_1}_\progd (g_1\,\tilde{u}_1; \RTyPath_1) \CallSeqN{m_2}_\progd (g_2\,\tilde{u}_2; \RTyPath_2) \CallSeqN{m_3}_\progd \dots \CallSeqN{m_k}_\progd (g_k\,\tilde{u}_k; \RTyPath_{k}) \CallSeqN{m_{k+1}} (g_{k+1}\,\tilde{u}; \RTyPath)
\end{align*}
of
\[
(\{\stunique\}, \mainfun) \cdot O_1 \cdot (p_1, g_1) \cdot O_2 \cdot \dots \cdot O_k \cdot (p_k, g_k) \cdot (\bar{\HFLenv}_{t, \epsilon, \theta, \RTyPath_k}(f_1), \dots, \bar{\HFLenv}_{t,\epsilon,\theta,\RTyPath_k}(f_n)) \cdot (p_{k+1}, g_{k+1})
\]
as desired.

\begin{lemma}\label{lem:appx:liveness-from-call-seq:soundness}
  If the (unique) call sequence following \( \RTyPath_0 \) does not satisfy the parity condition, then \( \Strategy_{\prog,\RTyPath_0} \) is an winning strategy of \( \CounterGame_{\lts_0,(\prog,\Pfun)} \) on \( (\{\}, \mainfun) \).
\end{lemma}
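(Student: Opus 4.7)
The plan is to mirror the proof of Lemma~\ref{lem:appx:liveness-from-call-seq:parity-call-seq-to-winning}, adapted to the opposite game. The construction above already shows that $\Strategy_{\prog,\RTyPath_0}$ is a well-defined P-strategy of $\CounterGame_{\lts_0,(\prog,\Pfun)}$: every Opponent node reachable from $(\{\}, \mainfun)$ lies in its domain, and the chosen next node satisfies the edge condition of $\CounterGame_{\lts_0,(\prog,\Pfun)}$ by the preceding lemma. What remains is to verify that every infinite play conforming with $\Strategy_{\prog,\RTyPath_0}$ is P-winning, i.e.\ satisfies the parity condition of $\CounterGame$.

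I would start from an arbitrary infinite play
\[
(\{\}, \mainfun) \cdot v_1 \cdot (p_1, g_1) \cdot v_2 \cdot (p_2, g_2) \cdot \dots
\]
conforming with $\Strategy_{\prog,\RTyPath_0}$. By construction, each odd-length prefix of length $2k+1$ carries its canonical associated call-sequence
\[
(\mainfun; \RTyPath_0) \CallSeqN{m_{k,1}}_\progd (g_1\,\seq{u}_{k,1}; \RTyPath_{k,1}) \CallSeqN{m_{k,2}}_\progd \dots \CallSeqN{m_{k,k}}_\progd (g_k\,\seq{u}_{k,k}; \RTyPath_{k,k}),
\]
where canonicality means minimality of $(m_{k,1}, \dots, m_{k,k})$ in the lexicographic order. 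Exactly as in the completeness argument, any prefix of a canonical call-sequence is itself canonical for the corresponding prefix of the play. This forces $m_{k,i} = m_{k',i}$, $\seq{u}_{k,i} = \seq{u}_{k',i}$ and $\RTyPath_{k,i} = \RTyPath_{k',i}$ whenever $k, k' \ge i$, so writing $m_i, \seq{u}_i, \RTyPath_i$ for the common values I obtain a single infinite call-sequence $(\mainfun; \RTyPath_0) \CallSeqN{m_1}_\progd (g_1\,\seq{u}_1; \RTyPath_1) \CallSeqN{m_2}_\progd \dots$ following $\RTyPath_0$.

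By Corollary~\ref{cor:liveness:unique-existence-of-call-seq}, this coincides with the unique infinite call-sequence following $\RTyPath_0$, which by hypothesis violates the parity condition; that is, $\max \INF(\Pfun(g_1) \Pfun(g_2) \dots)$ is odd. In $\CounterGame_{\lts_0,(\prog,\Pfun)}$ the priority of each Proponent node $(p_i, g_i)$ is $\Pfun(g_i) + 1$ while Opponent nodes have priority $0$, so the largest priority occurring infinitely often along the play equals $\max \INF(\Pfun(g_1) \Pfun(g_2) \dots) + 1$, which is even. Hence the play satisfies the parity condition of $\CounterGame$, and $\Strategy_{\prog,\RTyPath_0}$ is P-winning on $(\{\}, \mainfun)$.

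The main obstacle is really just the combinatorial stability of canonicality under taking prefixes, which is identical to the completeness proof and requires no new idea. The one truly new ingredient is the appeal to Corollary~\ref{cor:liveness:unique-existence-of-call-seq}: without uniqueness of the infinite call-sequence following $\RTyPath_0$, one could only conclude that \emph{some} call-sequence following $\RTyPath_0$ violates parity, which would not suffice to identify it with the one extracted from the play.
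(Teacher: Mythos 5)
Your proof is correct and follows essentially the same route as the paper's: extract the canonical call-sequences from the odd-length prefixes, use prefix-stability of canonicality to glue them into a single infinite call-sequence following \(\RTyPath_0\), identify it with the unique one via Corollary~\ref{cor:liveness:unique-existence-of-call-seq}, and conclude by the \(+1\) priority shift in \(\CounterGame_{\lts_0,(\prog,\Pfun)}\). Your closing remark correctly isolates the role of uniqueness, which is exactly where the paper's argument also leans.
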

\begin{myproof}
  The above argument shows that \( \Strategy_{\prog,\RTyPath_0} \) is a strategy of \( \CounterGame_{\lts_0,(\prog,\Pfun)} \) on \( (\{\}, \mainfun) \).
  We prove that this is winning.

  Assume an infinite play
  \[
  (\{\},\mainfun) \cdot v_1 \cdot (p_1,g_1) \cdot v_2 \cdot (p_2,g_2) \cdot \dots
  \]
  that conforms with \( \Strategy_{\prog,\RTyPath} \) and starts from \( (\{\stunique\},\mainfun) \).
  Then each odd-length prefix is associated with the canonical call-sequence
  \[
  (\mainfun; \RTyPath_0) \CallSeqN{k_{m,1}}_{\progd} (g_1\,\seq{u}_{m,1}; \RTyPath_{m,1}) \CallSeqN{k_{m,2}}_{\progd} (g_2\,\seq{u}_{m,2}; \RTyPath_{m,2}) \CallSeqN{k_{m,3}}_{\progd} \dots \CallSeqN{k_{m,m}}_{\progd} (g_m\,\seq{u}_{m,m}; \RTyPath_{m,m}).
  \]
  By the definitions of \( \Strategy_{\prog,\RTyPath_0} \) and canonical call-sequence (which is the minimum with respect to the lexicographic ordering on \( (k_{m,1}, \dots, k_{m,m}) \)), a prefix of the canonical call-sequence is the canonical call-sequence of the prefix.
  In other words, \( k_{m,i} = k_{m',i} \) for every \( i \ge 1 \) and \( m, m' \ge i \).
  Since the reduction sequence is completely determined by a choice \( \RTyPath_0 \), the initial term \( \mainfun \) and the number of steps, we have \( \RTyPath_{m,i} = \RTyPath_{m',i} \) for every \( i \ge 1 \) and \( m, m' \ge i \).
  Let us write \( k_i \) for \( k_{i,i} = k_{i+1,i} = \cdots \), \( \seq{u}_i \) for \( \seq{u}_{i,i} = \seq{u}_{i+1,i} = \cdots \) and \( \RTyPath_i \) for \( \RTyPath_{i,i} = \RTyPath_{i+1,i} = \cdots \).
  Now we have an infinite call-sequence
  \[
  \RTyPath_0 \Vdash
  \mainfun \CallSeqN{k_1}_{\progd} g_1\,\seq{u}_{1} \CallSeqN{k_2}_{\progd} g_2\,\seq{u}_{2} \CallSeqN{k_3}_{\progd} \dots
  \]
  following \( \RTyPath_0 \).
  By Corollary~\ref{cor:liveness:unique-existence-of-call-seq}, this is the unique call-sequence following \( \RTyPath_0 \).
  Hence, by the assumption, this infinite call-sequence does not satisfy the parity condition.
  This means that this play is P-winning (recall the definition of the priorities of \( \CounterGame_{\lts_0,(\prog,\Pfun)} \)).
\end{myproof}

\begin{myproof}[Proof of Theorem~\ref{theorem:callseq}]
  We prove the following result:
  \begin{quote}
    \(\models_\CSA (\prog,\Pfun)\) if and only if \( \lts_0 \models \HESf_{(P,\Pfun),\CSA} \).
  \end{quote}
  
  \( (\Rightarrow) \)
  Suppose that \( \models_\CSA (\prog,\Pfun) \).
  Then by Lemma~\ref{lem:appx:liveness-from-call-seq:parity-call-seq-to-winning},
  there exists a winning strategy of \( \SGame_{\lts_0,(\prog,\Pfun)} \) on \( (\{\stunique\}, \mainfun) \).
  Then by Corollary~\ref{cor:appx:liveness-from-call-seq:game-equivalence}, there also exists a winning strategy of \( \SGame_{\lts_0,\HESf_{(P,\Pfun),\CSA}} \) on \( (\{\stunique\}, \mainfun) \).
  By Theorem~\ref{thm:appx:game-semantics:game-and-interpretation}, this implies \( \stunique \in \sem{\HESf_{(\prog,\Pfun),\CSA}} \).
  By definition, \( \lts_0 \models \HESf_{(P,\Pfun),\CSA} \).
  
  \( (\Leftarrow) \)
  We prove the contraposition.
  Suppose that \( \models_\CSA (\prog,\Pfun) \) does not hold.
  Then there exist a choice \( \RTyPath_0 \) and an infinite call-sequence following \( \RTyPath_0 \) that does not satisfy the parity condition.
  Then by Lemma~\ref{lem:appx:liveness-from-call-seq:soundness}, \( \Strategy_{\prog,\RTyPath_0} \) is a winning strategy of \( \CounterGame_{\lts_0,(\prog,\Pfun)} \) on \( (\{\},\mainfun) \).
  By Corollary~\ref{cor:appx:liveness-from-call-seq:soundness:game-equivalence}, there also exists a winning strategy of \( \CounterGame_{\lts_0,\HESf_{(\prog,\Pfun),\CSA}} \) on \( (\{\},\mainfun) \).
  Then by Lemma~\ref{lem:appx:game-semantics:opposite-game}, Opponent wins \( \SGame_{\lts_0,\HESf_{(\prog,\Pfun),\CSA}} \) on \( (\{\stunique\}, \mainfun) \).
  Hence by Theorem~\ref{thm:appx:game-semantics:game-and-interpretation}, \( \stunique \notin \sem{\HESf_{(\prog,\Pfun),\CSA}} \).
\end{myproof}

\subsection{Proof of Theorem~\ref{thm:linevess:eff-sel-sound-and-complete}}
Assume a total order on the set of states of the automaton \( \PWA \), fixed in the sequel.
Recall \rn{(IT-Event)} rule in Fig.~\ref{fig:inter}:
\infrule[IT-Event]{
  \delta(q, \lab) = \{q_1, \dots, q_n\} \\
  \ITE \RaiseP \Omega(q_i) \pInter \term \COL q_i \ESRel \term'_i \quad\text{(for each \(i \in \{1, \dots, n\}\))}
}{
  \ITE
  \pInter
  (\evexp{\lab}{\term}) \COL q  \ESRel (\evexp{\lab}{\term'_1 \nondet \cdots \nondet \term'_n})
}
Since the order of \( \term'_1, \dots, \term'_n \) is not important, we can assume without loss of generality that \( q_1 < q_2 < \dots < q_n \) holds for every instance of the above rule used in a derivation.

Here we prove Theorem~\ref{thm:linevess:eff-sel-sound-and-complete} for the translation using the following rule \rn{IT-Event'} instead of \rn{IT-Event}.
\infrule[IT-Event']{
  \delta(q, \lab) = \{q_1, \dots, q_n\} \qquad q_1 < q_2 < \dots < q_n \\
  \ITE \RaiseP \Omega(q_i) \pInter \term \COL q_i \ESRel \term'_i \quad\text{(for each \(i \in \{1, \dots, n\}\))}
}{
  \ITE
  \pInter
  (\evexp{\lab}{\term}) \COL q  \ESRel (\evexp{\lab}{\term'_1 \nondet \cdots \nondet \term'_n})
}

\subsubsection{Notations}
Given an intersection type \( \Itype = \bigwedge_{1 \le i \le k} (\Atype_i, m_i) \) and a variable \( x \), we abbreviate the sequence \( x_{\Atype_1, m_1} x_{\Atype_2, m_2} \dots x_{\Atype_k, m_k} \) as \( \dup{x,\Itype} \).
For \( \Itype = \INT \), we write \( \dup{x,\Itype} \) for \( x_{\INT} \).

For an intersection type \( \Itype = \bigwedge_{1 \le i \le k} (\Atype_i, m_i) \), we write \( \ZEnv{x : \Itype} \) for the intersection type environment \( \{ x : (\Atype_i, m_i, 0) \mid 1 \le i \le k \} \).
If \(\Itype = \Tint\), \(\ZEnv{x \COL \Itype}\) means \(\{x \COL \Tint\}\).
Similarly, for a top-level environment \( \TopEnv \), we write \( \TopEnvZ \) to mean \( \{ x : (\Atype, m, 0) \mid x : (\Atype, m) \in \TopEnv \} \).

Given an intersection type environment \( \ITE \), we write \( \ITE^{\Mark} \) for \( \{ x^{\Mark} : (\Atype, m, m') \mid x : (\Atype, m, m') \in \ITE \} \cup \{ x^{\Mark} : \INT \mid x : \INT \in \ITE \} \).
Similarly, for a top-level environment \( \TopEnv \), we write \( \TopEnv^{\Mark} \) for \( \{ x^{\Mark} : (\Atype, m) \mid x : (\Atype, m) \in \TopEnv \} \).

For an intersection type environment \( \ITE \) and a top-level environment \( \TopEnv \), we write \(\ITE \EnvRel \TopEnv\) if
\(f \COL (\Atype, m) \in \TopEnv\) for each \((f \COL (\Atype, m, m')) \in \ITE\).


\subsubsection{Modified type-based translation}
Given programs \( \prog \) and \( \prog' \) with \( \TopEnv \pInter \prog \ESRel (\prog', \Pfun) \), the first step to prove Theorem~\ref{thm:linevess:eff-sel-sound-and-complete} is to compare reduction sequences of \( \prog \) and \( \prog' \).
There is a little gap between reduction sequences of \( \prog \) and \( \prog' \) since a reduction sequence of \( \prog' \) expresses a reduction sequence of \( \prog \) together with a run of \( \PWA \) over the events generated by the reduction sequence of \( \prog \).
In particular a nondeterministic branch in \( \term' \) comes from either
\begin{itemize}
\item a non-deterministic branch \( \term_1 \nondet \term_2 \) in \( P \), or
\item non-determinism of the transition rule of the automaton \( \PWA \).
\end{itemize}
To fill the gap, we shall distinguish between the two kinds of non-deterministic branches, by using \( \enondet \) for the latter.

Formally let us introduce a new binary construct \( \enondet \) to the syntax of terms.
Here we use the convention that \( \nondet \) and \( \enondet \) are right associative, i.e.~\( t_1 \nondet t_2 \nondet t_3 \) (resp.~\( t_1 \enondet t_2 \enondet t_3 \)) means \( t_1 \nondet (t_2 \nondet t_3) \) (resp.~\( t_1 \enondet (t_2 \enondet t_3) \)).
The operational behavior of \( \enondet \) is the same as that of \( \nondet \), i.e.,
\[
(\term_1 \enondet \term_2;\, \PLeft \RTyPath) \stackrel{\epsilon}{\red}_D (\term_1; \RTyPath)
\qquad
(\term_1 \enondet \term_2;\, \PRight \RTyPath) \stackrel{\epsilon}{\red}_D (\term_2; \RTyPath)
\]
where \( \RTyPath \in \{\PLeft,\PRight\}^{\omega} \) is a choice sequence.
Hence we have
\[
((\evexp{\lab}{(\term_1 \enondet \dots \enondet \term_n)});\,
\underbrace{\PRight \dots \PRight}_{i-1} \PLeft  \RTyPath)
\Preds{\lab}{\progd}
(\term_{i};\, \RTyPath)
\]
for \( 1 \le i < n \) and
\[
((\evexp{\lab}{(\term_1 \enondet \dots \enondet \term_n)});\,
\underbrace{\PRight \dots \PRight}_{n-1} \RTyPath)
\Preds{\lab}{\progd}
(\term_{n};\, \RTyPath).
\]
\begin{definition}[Modified type-based transformation]
  The \emph{modified type-based transformation judgment} \( \ITE \pInter \term : \Atype \eESRel \term' \) is a quadruple where \( \term \) is a term without \( \enondet \) and \( \term' \) is a term possibly having \( \enondet \).
  The \emph{modified type-based transformation relation} is defined by the rules in Fig.~\ref{fig:appx:liveness:inter-alternative}.
  The translation \(\pInter \prog \eESRel (P', \Pfun)\) of programs is defined in the same way as \(\pInter \prog \ESRel (P', \Pfun)\).
\end{definition}
\begin{figure}[t]
  \infrule[IT-Event-Alt]{
    \delta(q, \lab) = \{q_1, \dots, q_n\} \quad\text{(\(q_1 < q_2 < \dots < q_n\))} \\
    \ITE \RaiseP \Omega(q_i) \pInter \term \COL q_i \eESRel \term'_i \quad\text{(for each \(i \in \{1, \dots, n\}\))}
  }{
    \ITE
    \pInter
    (\evexp{\lab}{\term}) \COL q  \eESRel (\evexp{\lab}{\term'_1 \enondet \cdots \enondet \term'_n})
  }
  \caption{Modified type-based transformation rules.  Other rules are obtained by replacing \( \ESRel \) with \( \eESRel \) in the rules in Figure~\ref{fig:inter} except for \textsc{(IT-Event)}.}
  \label{fig:appx:liveness:inter-alternative}
\end{figure}
The next lemma establishes the connection between the original and modified transformations.
Given a program \( \prog' \) with \( \enondet \), we write \( [\nondet/\enondet]\prog' \) for the program obtained by replacing \( \enondet \) with \( \nondet \).
\begin{lemma}~\label{lem:appx:liveness:relating-modified-and-original-translations2}
  If \(\TopEnv \pInter \prog \eESRel (\prog', \Pfun)\), then \( \TopEnv \pInter \prog \ESRel ([\nondet/\enondet]\prog', \Pfun) \).
  Conversely, if \( \TopEnv \pInter \prog \ESRel (\prog', \Pfun) \), then there exists \( \prog'' \) such that \( \TopEnv \pInter \prog \eESRel (\prog'', \Pfun) \) and \( \prog' = [\nondet/\enondet]\prog'' \).
\end{lemma}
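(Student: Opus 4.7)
The plan is to prove the lemma by a straightforward structural induction, after first establishing the analogous statement at the level of terms: for any term $t$ without $\enondet$, we have $\ITE \pInter t : \Atype \eESRel t'$ iff there exists some $t''$ such that $\ITE \pInter t : \Atype \ESRel t''$ and $t'' = [\nondet/\enondet]t'$ (equivalently, $[\nondet/\enondet]$ provides a bijection between $\eESRel$-targets and $\ESRel$-targets, fixing the typing data on the left). Once the term-level statement is in hand, the program-level statement follows by applying it to each function body and the main term, and by observing that the priority assignment $\Pfun(f_{\Atype,m}) = m{+}1$ is determined by $\TopEnv$ alone and thus coincides on both sides.

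For the term-level statement I would induct on the derivation. Every rule in Figure~\ref{fig:inter} other than \rn{IT-Event} is mirrored verbatim in the modified system (only $\ESRel$ is replaced by $\eESRel$), and none of these rules produce or inspect occurrences of $\nondet$ or $\enondet$; in particular, $[\nondet/\enondet]$ commutes with the syntactic constructors appearing in their target terms (including the $\nondet$ introduced by \rn{IT-NonDet}, which is preserved because the source term contains no $\enondet$). Thus the induction hypothesis transports each such derivation unchanged between the two systems.

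The only case with real content is \rn{IT-Event} vs \rn{IT-Event-Alt}. For the forward direction, suppose $\ITE \pInter (\evexp{\lab}\term) : q \eESRel (\evexp{\lab}(\term'_1 \enondet \cdots \enondet \term'_n))$ is derived by \rn{IT-Event-Alt}, so that $\delta(q,\lab) = \{q_1,\dots,q_n\}$ with $q_1 < \cdots < q_n$ and $\ITE \RaiseP \Omega(q_i) \pInter \term : q_i \eESRel \term'_i$ for each $i$. By the induction hypothesis, $\ITE \RaiseP \Omega(q_i) \pInter \term : q_i \ESRel [\nondet/\enondet]\term'_i$, and \rn{IT-Event} then yields $\ITE \pInter (\evexp{\lab}\term) : q \ESRel (\evexp{\lab}([\nondet/\enondet]\term'_1 \nondet \cdots \nondet [\nondet/\enondet]\term'_n))$, which is exactly $[\nondet/\enondet]$ applied to the original target. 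The backward direction is symmetric, and it is here that we rely on the assumption (stated just before Figure~\ref{fig:appx:liveness:inter-alternative}) that instances of \rn{IT-Event} may be normalised so that $q_1 < \cdots < q_n$; this ensures that the reconstructed $\enondet$-target matches the \rn{IT-Event-Alt} format exactly.

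I do not anticipate a genuine obstacle: the two systems differ only in a single rule, and $[\nondet/\enondet]$ is manifestly compatible with the recursion. The only subtlety worth pinning down carefully in writing is that the induction hypothesis must be phrased so that both the target term \emph{and} the remaining derivation data (type environment, type, priority annotations) transfer, so that the program-level lifting, where we aggregate per-function derivations into $\TopEnv \pInter D \Rightarrow D'$ and verify $\Pfun'(f_{\Atype,m}) = m{+}1$, is transparent.
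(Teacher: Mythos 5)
Your proposal is correct and matches the paper's argument, which simply records this lemma as a straightforward induction on the derivation: the two systems differ only in the event rule, and $[\nondet/\enondet]$ commutes with everything else. Your explicit treatment of the \rn{IT-Event}/\rn{IT-Event-Alt} case and of the ordering normalisation $q_1 < \cdots < q_n$ fills in exactly the details the paper leaves implicit.
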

\begin{myproof}
  Straightforward induction.
\end{myproof}

Suppose that \( \TopEnv \pInter P \eESRel (P', \Pfun) \).
Since \( P' \) simulates both \( P \) and the automaton \( \PWA \), an infinite reduction sequence of \( P' \) induces a pair of an infinite event sequence \( \seq{\ell} \) and a run \( R \) of \( \PWA \) over \( \seq{\ell} \).
\begin{definition}[Induced event sequence and run]
  Let \( D' \) be a function definition possibly containing \( \enondet \).
  The reduction relation \( (\termalt, \RTyPath, q) \Preds[N]{\seq{\ell},\runseq}{D'} (\termalt', \RTyPath', q') \), meaning that the \( N \)-step reduction following \( \RTyPath \) from \( \termalt \) to \( \termalt' \) generates events \( \seq{\ell} \) associated with a run \( q \runseq \) ending with \( q' \), is defined by the following rules.
  \infrule{
    (\termalt, \RTyPath) \Preds[N]{\epsilon}{D'} (\termalt', \RTyPath')
  }{
    (\termalt, \RTyPath, q) \Preds[N]{\epsilon,\epsilon}{D'} (\termalt', \RTyPath', q)
  }
  \infrule{
    \delta(q, \lab) = \{q_{1}, \dots, q_{n}\} \quad\text{ (\( q_1 < q_2 < \dots < q_n \))}
    \andalso
    (\termalt_i, \RTyPath, q_i) \Preds[N]{\seq{\ell},\runseq}{D'} (\termalt', \RTyPath', q')
  }{
    (\evexp{\lab}{(\termalt_1 \enondet \dots \enondet \termalt_n)}, \underbrace{\PRight \dots \PRight}_{i-1}\PLeft \RTyPath, q)
    \Preds[N+i+1]{\lab \seq{\ell}, q_i \runseq}{D'} (\termalt', \RTyPath', q')
  }
  \infrule{
    \delta(q, \lab) = \{q_{1}, \dots, q_{n}\} \quad\text{ (\( q_1 < q_2 < \dots < q_n \))}
    \andalso
    (\termalt_n, \RTyPath, q_n) \Preds[N]{\seq{\ell},\runseq}{D'} (\termalt', \RTyPath', q')
  }{
    (\evexp{\lab}{(\termalt_1 \enondet \dots \enondet \termalt_n)}, \underbrace{\PRight \dots \PRight}_{n-1} \RTyPath, q)
    \Preds[N+n]{\lab \seq{\ell}, q_n \runseq}{D'} (\termalt', \RTyPath', q')
  }
  If \( (\termalt_0, \RTyPath_0, q_0) \Preds[N_1]{\seq{\ell}_1,\runseq_1}{D'} (\termalt_1, \RTyPath_1, q_1) \Preds[N_2]{\seq{\ell}_2,\runseq_2}{D'} \dots \), we write
  \(
  \RTyPath_0 \Vdash (\termalt_0, q_0)  \Preds[N_1]{\seq{\ell}_1,\runseq_1}{D'} (\termalt_1, q_1) \Preds[N_2]{\seq{\ell}_2,\runseq_2}{D'} \dots
  \).
  If the number of steps (resp.~\( \RTyPath \)) is not important, we write as \( \RTyPath \Vdash (\termalt, q) \Preds{\seq{\ell},\runseq}{D'} (\termalt', q') \) (resp.~\( (\termalt,q) \Preds[N]{\seq{\ell},\runseq}{D'} (\termalt', q') \)).
  Other notations such as \( (\termalt,q) \Preds{\seq{\ell},\runseq}{D'} (\termalt',q') \) are defined similarly.
\end{definition}

It is easy to show that, if \( (\termalt, q) \Preds{\seq{\ell},\runseq}{D'} (\termalt', q') \), then \( q\runseq \) is a run of \( \PWA \) over \( \seq{\ell} \).
Obviously \( \RTyPath \Vdash (\termalt, q) \Preds{\seq{\ell},\runseq}{D'} (\termalt', q') \) implies \( \RTyPath \Vdash \termalt \Preds{\seq{\ell}}{D'} \termalt' \).
As proved later in Lemma~\ref{lem:appx:liveness:summary},
the converse also holds: Given state \( q \) and \( \termalt \Preds{\seq{\ell}}{D'} \termalt' \) (with a mild condition), there exist \( \runseq \) and \( q' \) such that \( (\termalt, q) \Preds{\seq{\ell},\runseq}{D'} (\termalt', q') \).

\subsubsection{Basic properties of the type system and transformation}
We prove Weakening Lemma and Substitution Lemma.

\begin{lemma}[Weakening]\label{lem:appx:liveness:weakening}
 If \(\ITE \pInter \term \COL \Atype \eESRel \term'\) and \(\ITE \subseteq \ITE'\),
 then \(\ITE' \pInter \term \COL \Atype \eESRel \term'\).
\end{lemma}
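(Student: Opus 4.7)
The plan is to proceed by induction on the derivation of \(\ITE \pInter \term \COL \Atype \eESRel \term'\), showing that each rule can be re-applied with \(\ITE'\) in place of \(\ITE\).

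First, I would establish a small monotonicity property of the \(\RaiseP\) operation: if \(\ITE \subseteq \ITE'\), then \(\ITE \RaiseP m \subseteq \ITE' \RaiseP m\). This follows directly from the pointwise definition of \(\RaiseP m\) (each binding \(x \COL (\Atype, m_1, m_2) \in \ITE\) gets transformed into \(x \COL (\Atype, m_1, \MAX(m_2, m)) \in \ITE \RaiseP m\), and the same transformation on the larger environment still contains this image). This auxiliary fact is needed for the cases \rn{IT-App} and \rn{IT-Event-Alt}, where the induction hypothesis must be applied to subderivations whose environment is \(\ITE \RaiseP m_i\).

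The base cases are straightforward: for \rn{IT-Unit}, \rn{IT-Int}, and the propositional predicate case, the conclusion does not depend on the specific environment beyond well-formedness. For \rn{IT-Var}, \rn{IT-VarInt}, the required binding \(x \COL (\Atype, m, m)\) (resp.\ \(x \COL \INT\)) lies in \(\ITE\) and hence in \(\ITE'\) by assumption. The purely compositional rules \rn{IT-Op}, \rn{IT-If}, \rn{IT-NonDet}, \rn{IT-AppInt} follow immediately by applying the induction hypothesis to each premise with the same \(\ITE'\). The rules \rn{IT-App} and \rn{IT-Event-Alt} additionally use the monotonicity of \(\RaiseP\) above to lift premises of the form \(\ITE \RaiseP m_i \pInter \cdots\) to \(\ITE' \RaiseP m_i \pInter \cdots\).

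The main obstacle is the abstraction rules \rn{IT-Abs} and \rn{IT-AbsInt}, which carry the side condition \(x \notin \dom(\ITE)\). This condition need not survive the move to \(\ITE'\), since \(\dom(\ITE')\) may be strictly larger. The standard remedy is to work modulo \(\alpha\)-equivalence on source terms: before applying the induction hypothesis, rename the bound variable \(x\) in \(\lambda x.t\) to a fresh \(y \notin \dom(\ITE')\). Because the translation of an abstraction is defined uniformly in the bound variable (producing \(\lambda y_{\Atype_1,m_1}\dots y_{\Atype_k,m_k}.t''\)), this \(\alpha\)-conversion propagates consistently to the target term, and the resulting derivation then discharges the side condition against \(\ITE'\). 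Assuming the usual Barendregt convention that bound variables are chosen fresh with respect to the ambient context, this step is routine and the induction closes.
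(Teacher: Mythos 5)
Your proof is correct and follows essentially the same route as the paper's: a straightforward induction on the derivation, with the key observation that \(\ITE \subseteq \ITE'\) implies \(\ITE \RaiseP m \subseteq \ITE' \RaiseP m\) for the \rn{IT-Event-Alt} and \rn{IT-App} cases (the former being the only case the paper spells out). Your extra care about the freshness side condition in \rn{IT-Abs}/\rn{IT-AbsInt}, resolved by \(\alpha\)-renaming under the Barendregt convention, addresses a detail the paper silently elides and is the standard fix.
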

\begin{myproof}
  Straightforward induction on structure of the derivations.
  We discuss only one case below.
  \begin{itemize}
   \item Case for \rn{IT-Event-Alt}:
    Then we have
    \begin{align*}
      \Atype = q \qquad
      \delta(q, \lab) = \{q_1, \dots, q_n\} \qquad
      \term &= \evexp{\lab}{\termalt} \qquad
      \term' = \evexp{\lab}{(\termalt_1' \enondet \dots \enondet \termalt_n')} \\
     \forall i \in \{1, \dots, n\}. &\ITE \RaiseP \Omega(q_i) \pInter \termalt \COL \Atype \eESRel \termalt'_i.
    \end{align*}
    Since \(\ITE \subseteq \ITE'\), for each \(i \in \{1, \dots, n\}\), we have \(\ITE \RaiseP \Omega(q_i) \subseteq \ITE' \RaiseP \Omega(q_i)\).
    By the induction hypothesis,
    \[
     \forall i \in \{1, \dots, n\}. \ITE' \RaiseP \Omega(q_i) \pInter \termalt \COL \Atype \eESRel \termalt'_i.
    \]
    By using \rn{IT-Event-Alt}, we obtain
    \(\ITE' \pInter \term \COL \Atype \eESRel \term'\).
  \end{itemize}
\end{myproof}

In order to simplify the statement of the substitution lemma, we introduce the following abbreviation.
For \( \Itype = \bigwedge_{1 \le i \le l} (\Atype_i, m_i) \),
we write 
\[
\ITE \pInter \term \COL \Itype \eESRel \seq{\termalt}
\quad\stackrel{\mathrm{def}}{\Leftrightarrow}\quad
\forall i \in \{1, \dots, l\}. \ITE \RaiseP m_i \pInter \term \COL \Atype_i \eESRel \termalt_i.
\]
where \( \seq{\termalt} = s_1s_2 \dots s_l \).
If \( \Itype = \Tint \), the judgment \( \ITE \pInter \term \COL \Itype \eESRel \seq{\termalt} \) has the obvious meaning (in this case, \( \seq{\termalt} \) is of length \( 1 \)).
By using this notation, the application rule can be simply written as
\infrule{
  \ITE \pInter \term_1 \COL \Itype \to \Atype \eESRel \term'_1
  \andalso
  \ITE \pInter \term_2 \COL \Itype \eESRel \seq{\term}'_2
}{
  \ITE \pInter \term_1\,\term_2 \COL \Atype \eESRel \term'_1\,\seq{\term}'_2
}

\begin{lemma}[Substitution]\label{lem:appx:liveness:substitution}
  Assume that
  \begin{align*}
    &
    \ITE \pInter u \COL \Itype \eESRel \seq{u}'
    \\
    &
    ((\ITE \cup \ZEnv{x \COL \Itype}) \RaiseP n) \cup \ITE' \pInter \term \COL \Atype \eESRel \term' \qquad
    \text{(\( x \notin \dom(\ITE) \)).}
 \end{align*}
 Then
 \[
 (\ITE \RaiseP n) \cup \ITE' \pInter [u/x] \term \COL \Atype \eESRel [\seq{u}'/\dup{x,\Itype}]t'.
 \]
\end{lemma}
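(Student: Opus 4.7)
The plan is to prove the substitution lemma by induction on the derivation of the second hypothesis, i.e.~of
\(((\ITE \cup \ZEnv{x \COL \Itype}) \RaiseP n) \cup \ITE' \pInter \term \COL \Atype \eESRel \term'\). Before starting the induction I would record two routine algebraic facts about the \(\RaiseP\) operator, which will be used silently in every case: first, \(\RaiseP\) distributes over union, \((\ITE_1 \cup \ITE_2) \RaiseP m = (\ITE_1 \RaiseP m) \cup (\ITE_2 \RaiseP m)\); second, iterated raising composes to a maximum, \((\ITE \RaiseP m_1) \RaiseP m_2 = \ITE \RaiseP \max(m_1,m_2)\). Both are immediate from the definition of \(\RaiseP\). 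I will also assume, by \(\alpha\)-conversion, that every bound variable introduced by an \textsc{IT-Abs} node in the derivation is distinct from \(x\) and from \(\dom(\ITE)\cup\dom(\ITE')\), so that the substitution \([u/x]\) commutes with abstraction and all variables in \(\dup{x,\Itype}\) are fresh in the target language.

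The central case is \textsc{IT-Var} applied to \(x\) itself. Writing \(\Itype = \bigwedge_{1\le i\le k}(\Atype_i, m_i)\), the bindings contributed by \(\ZEnv{x\COL\Itype}\) become, after raising by \(n\), the set \(\{x \COL (\Atype_i, m_i, \max(0,n))\}_i = \{x \COL (\Atype_i, m_i, n)\}_i\). Since the rule \textsc{IT-Var} requires the second and third components of the binding to coincide, the rule is only applicable when some \(j\) satisfies \(n = m_j\), and the derivation then produces \(\term' = x_{\Atype_j, m_j}\) with \(\Atype = \Atype_j\). In this situation the substitution \([\seq{u}'/\dup{x,\Itype}]\) replaces \(x_{\Atype_j,m_j}\) by exactly \(u'_j\). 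By the first hypothesis we have \(\ITE \RaiseP m_j \pInter u \COL \Atype_j \eESRel u'_j\), i.e.~\(\ITE \RaiseP n \pInter u \COL \Atype_j \eESRel u'_j\), and a single application of Weakening (Lemma~\ref{lem:appx:liveness:weakening}) adjoins \(\ITE'\) to give the required judgement. The case \textsc{IT-VarInt} and the case of \textsc{IT-Var} applied to some \(y\neq x\) are trivial, because the substitution acts as the identity and the binding used lies in \((\ITE\RaiseP n)\cup\ITE'\).

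The remaining cases are all structural, and propagate the induction hypothesis through the subderivations; the only nontrivial bookkeeping concerns how raising interacts with the IH. For \textsc{IT-Event-Alt}, the \(i\)-th premise is derived from
\[
\bigl(((\ITE\cup\ZEnv{x\COL\Itype})\RaiseP n)\cup\ITE'\bigr)\RaiseP\Omega(q_i)
\;=\;
\bigl((\ITE\cup\ZEnv{x\COL\Itype})\RaiseP \max(n,\Omega(q_i))\bigr)\cup(\ITE'\RaiseP\Omega(q_i)),
\]
so the IH is invoked with parameters \(n\leftarrow \max(n,\Omega(q_i))\) and \(\ITE'\leftarrow \ITE'\RaiseP\Omega(q_i)\); the resulting environments reassemble to \(((\ITE\RaiseP n)\cup\ITE')\RaiseP\Omega(q_i)\), and reapplying \textsc{IT-Event-Alt} gives the conclusion. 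The case \textsc{IT-App} is analogous, where for each intersection component \((\Atype_i,m_i)\) of the argument type the IH is applied with \(n\leftarrow \max(n,m_i)\) and \(\ITE'\leftarrow \ITE'\RaiseP m_i\). The cases for \textsc{IT-Abs} and \textsc{IT-AbsInt} go through by the freshness convention above, noting that the new binding \(y\COL(\Atype',m',0)\) commutes with the raising and is disjoint from the environment restructured by the IH; the remaining logical, arithmetic, branching, and nondeterministic-choice rules are immediate by congruence of \([u/x]\) and \([\seq{u}'/\dup{x,\Itype}]\) over the corresponding term constructors.

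The main obstacle I anticipate is the \textsc{IT-Var} case, where one must notice that the apparent looseness of \(n\) in the statement is in fact constrained by the typing rule itself: only bindings with \(m_j=n\) can be used, and this is precisely what makes the correspondence between \(x_{\Atype_j,m_j}\) in the target and \(u'_j\) in the substitution coherent. Once this case is in hand, the event and application cases require only careful manipulation of \(\RaiseP\) together with the two algebraic identities stated at the outset, and the rest is routine.
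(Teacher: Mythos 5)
Your proposal is correct and follows essentially the same route as the paper's proof: induction on the derivation of the second judgment, with the crux in the \textsc{IT-Var} case where the rule's requirement that the second and third components of the binding coincide forces \(n = m_j\) and makes \(x_{\Atype_j,m_j} \mapsto u'_j\) line up with the first hypothesis (followed by Weakening to adjoin \(\ITE'\)), and with the structural cases handled by distributing \(\RaiseP\) over union and composing raisings before invoking the induction hypothesis with the adjusted \(n\) and \(\ITE'\). The paper's proof is the same argument, merely writing the composed raising as \(\RaiseP n \RaiseP \Omega(q_j)\) rather than as a single \(\RaiseP \max(n,\Omega(q_j))\).
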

\begin{myproof}
  The proof proceeds by induction on the derivation of
  \(((\ITE \cup \ZEnv{x \COL \Itype}) \RaiseP n) \cup \ITE' \pInter \term \COL \Atype \eESRel \term'\),
  with case analysis on the last rule used.
  
  \begin{itemize}
  \item Case for \rn{IT-Var}:
        The case where \(\term \neq x\) is trivial.
        Assume that \(\term = x\).
        The type \( \Itype \) is either \( \Tint \) or \( \bigwedge_{1 \le i \le l} (\Atype_i, m_i) \).
        The former case is easy; we prove the latter case.

        Assume that \( \Itype = \bigwedge_{1 \le i \le l} (\Atype_i, m_i) \).
        Since  \(((\ITE \cup \ZEnv{x \COL \Itype}) \RaiseP n) \cup \ITE' \pInter x \COL \Atype \eESRel \term'\),
        there exists \(k \in \{1, \dots, l\}\) such that \( \Atype = \Atype_k \), \( \term' = x_{\theta_k,m_k} \) and \( n = m_k \).
        By the assumption \( \ITE \pInter u \COL \Itype \eESRel \seq{u}' \), we have \(\ITE \RaiseP m_k \pInter u \COL \Atype_k \eESRel u'_k\).
        Since \([u/x]x = u\) and \([\seq{u}'/\dup{x,\Itype}]x_{\Atype_k, m_k} = u'_k\), we have
        \begin{align*}
          \ITE \RaiseP m_k &\pInter [u/x]x \COL \Atype_k \eESRel [\seq{u'}/\dup{x,\Itype}] x_{\Atype_k, m_k}.
        \end{align*}
        By Weakening (Lemma~\ref{lem:appx:liveness:weakening}), we have
        \begin{align*}
          (\ITE \RaiseP m_k) \cup \ITE' &\pInter [u/x]x \COL \Atype_k \eESRel [\seq{u'}/\dup{x,\Itype}] x_{\Atype_k, m_k}.
        \end{align*}
        as required.
  \item Case for \rn{IT-Event-Alt}:
        Then \(\term = \evexp{\lab}{\term_1}\) and we have
        \begin{align*}
         \Atype = q \qquad
         \delta(q, \lab) = \{q_1, \dots, q_k\} \qquad q_1 < q_2 < \dots < q_k \\
         \term'=\evexp{\lab}{(\term'_1 \enondet \dots \enondet \term'_k)} \\
         \forall j \in \{1, \dots, k\}. (((\ITE \cup \ZEnv{x \COL \Itype}) \RaiseP n) \cup \ITE') \RaiseP \Omega(q_j) &\pInter \term_1 \COL q_j \eESRel \term'_j.
        \end{align*}
        For each \(j \in \{1, \dots, k\}\), we have
        \begin{align*}
          (((\ITE \cup \ZEnv{x \COL \Itype}) \RaiseP n) \cup \ITE') \RaiseP \Omega(q_j) 
          & = ((\ITE \cup \ZEnv{x \COL \Itype}) \RaiseP n \RaiseP \Omega(q_j)) \cup (\ITE' \RaiseP \Omega(q_j)) \\
          ((\ITE \RaiseP n) \cup \ITE') \RaiseP \Omega(q_j) 
          & = (\ITE \RaiseP n \RaiseP \Omega(q_j)) \cup (\ITE' \RaiseP \Omega(q_j)).
        \end{align*}
        Thus, one can apply the induction hypotheses, obtaining
        \begin{align*}
         \forall j \in \{1, \dots, k\}. ((\ITE  \RaiseP n) \cup \ITE') \RaiseP \Omega(q_j) &\pInter [u/x] \term_1 \COL q_j \eESRel  [\seq{u'}/\dup{x,\Itype}] \term'_j.
        \end{align*}
        By using \rn{IT-Event-Alt}, we have the following as required:
        \begin{align*}
         (\ITE  \RaiseP n) \cup \ITE' &\pInter [u/x] (\evexp{\lab}{\term_1}) \COL q \eESRel  [\seq{u'}/\dup{x,\Itype}] (\evexp{\lab}{(\term'_1 \enondet \dots \enondet \term'_k)}).
        \end{align*}
  \item Case for \rn{IT-App}:
        Then \(\term=\term_1\;\term_2\) and we have
        \begin{align*}
         \term'=\term'_1\;\term'_{2,1}\dots \term'_{2,k} \qquad
         ((\ITE \cup \ZEnv{x \COL \Itype}) \RaiseP n) \cup \ITE' &\pInter \term_1 \COL \bigwedge_{1 \le j \le k}(\Atype'_{j}, m'_{j}) \to \Atype
         \eESRel \term'_1 \\
         \forall j \in \{1,\dots,k\}. (((\ITE \cup \ZEnv{x \COL \Itype}) \RaiseP n) \cup \ITE') \RaiseP m'_{j} &\pInter \term_2 \COL \Atype'_{j} \eESRel \term'_{2,j}.
        \end{align*}
        By the induction hypothesis, we have
        \begin{align*}
         (\ITE \RaiseP n) \cup \ITE' \pInter [u/x]\term_1 \COL \bigwedge_{1 \le j \le k}(\Atype'_{j}, m'_{j}) \to \Atype
         \eESRel [\seq{u'}/\dup{x,\Itype}]\term'_1 \\
         \forall j \in \{1,\dots,k\}. ((\ITE \RaiseP n) \cup \ITE') \RaiseP m'_{j} \pInter [u/x]\term_2 \COL \Atype'_{j} \eESRel [\seq{u'}/\dup{x,\Itype}] \term'_{2,j}.
        \end{align*}
        By using \rn{IT-App}, we obtain:
        \begin{align*}
         (\ITE \RaiseP n) \cup \ITE' &\pInter [u/x](\term_1\;\term_2) \COL \Atype
         \eESRel [\seq{u'}/\dup{x,\Itype}](\term'_1\;\term'_{2,1}\dots\term'_{2,k})
        \end{align*}
        as required.
  \item Case for \rn{IT-Abs}:
        Then \(\term=\lambda y.\termalt\) and we have
        \begin{align*}
         \Atype = \bigwedge_{1 \le j \le k} (\Atype'_j, m'_j) \to \Atype'_0 \qquad
         &\term'=\lambda y_{\Atype'_1,m'_1}\dots y_{\Atype'_k,m'_k}. s' \\
         y \not\in dom((\ITE \cup \ZEnv{x \COL \Itype}) \RaiseP n) \qquad
         &((\ITE \cup \ZEnv{x \COL \Itype}) \RaiseP n) \cup \ITE' \cup \ZEnv{y \COL \bigwedge_{1 \le j \le k} (\Atype'_j, m'_j)} \pInter \termalt \COL \Atype'_0
         \eESRel \termalt'
        \end{align*}
        By the induction hypothesis, we have
        \begin{align*}
         (\ITE \RaiseP n) \cup \ITE' \cup \ZEnv{y \COL \bigwedge_{1 \le j \le k} (\Atype'_j, m'_j)} &\pInter [u/x]\termalt \COL \Atype'_0
         \eESRel [\seq{u'}/\dup{x,\Itype}] \termalt'.
        \end{align*}
        By using \rn{IT-Abs}, we obtain:
        \begin{align*}
         (\ITE \RaiseP n) \cup \ITE'
         &\pInter [u/x]\lambda y.\termalt \COL \bigwedge_{1 \le j \le k} (\Atype'_j, m'_j) \to \Atype'_0
         \eESRel [\seq{u'}/\dup{x,\Itype}]
         \lambda y_{\Atype'_1,m'_1}\dots y_{\Atype'_k,m'_k}. \termalt'
        \end{align*}
        as required.
 \end{itemize}
\end{myproof}

\subsubsection{Simulations in both directions}
Given \( \TopEnv \pInter P \eESRel (P', \Pfun) \), we show that the following two data are equivalent:
\begin{enumerate}
\item A reduction sequence of \( P \) together with a run of \( \PWA \) over the generated event sequence.
\item A reduction sequence of \( P' \).
\end{enumerate}
We first prove the direction \( (1)\Rightarrow(2) \).
\begin{lemma}\label{lem:appx:livesess:sim1}
  Assume
  \begin{itemize}
   \item \( \TopEnv \pInter D \eESRel D' \),
   \item \( \ITE \pInter \term \COL q \eESRel \termalt \),
   \item \( \ITE \EnvRel \TopEnv \), and
   \item \(\term \Preds{\seq{\ell}}{D} \term' \).
  \end{itemize}
  Let \( q\runseq \) be an arbitrary run of \( \PWA \) over \( \seq{\ell} \) and \( q' \) be the last state of \( q\runseq \).
  Then there exist 
  a term \( \termalt' \) and a type environment \( \ITE' \EnvRel \TopEnv \) such that
  \[
  (\termalt, q) \Preds{\seq{\ell}, \runseq}{D'} (\termalt', q')
  \]
  and
  \[
  \ITE' \pInter \term' \COL q' \eESRel \termalt'.
  \]
\end{lemma}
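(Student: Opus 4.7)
The plan is to prove this by induction on the number $N$ of reduction steps in $\term \Preds{\seq{\ell}}{D} \term'$, with the inductive step reduced to a one-step simulation lemma. For $N=0$ we take $\termalt'=\termalt$, $\ITE'=\ITE$, $\runseq=\epsilon$, and $q'=q$. For $N>0$, factor the sequence as $\term \Pred{\ell_1}{D} \term_1 \Preds{\seq{\ell}'}{D} \term'$ with $\seq{\ell}=\ell_1\seq{\ell}'$; correspondingly split the run as $q\cdot q_1 \cdot \runseq''$ where $q_1=q$ when $\ell_1=\epsilon$ and $q_1 \in \delta(q,\ell_1)$ otherwise. The one-step simulation lemma will produce an $\ITE_1 \EnvRel \TopEnv$, a target term $\termalt_1$, and a target reduction $(\termalt,q)\Preds{\ell_1,\;q_1\mbox{ or }\epsilon}{D'}(\termalt_1,q_1)$ with $\ITE_1 \pInter \term_1 \COL q_1 \eESRel \termalt_1$; feeding this into the induction hypothesis on $\term_1 \Preds{\seq{\ell}'}{D} \term'$ and concatenating the two target derivations closes the case.

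The one-step lemma proceeds by case analysis on the source reduction rule applied, inverting the translation derivation of $\ITE \pInter \term \COL q \eESRel \termalt$ in each case. For $\evexp{\ell}{\term_1}$, the derivation must end with \rn{IT-Event-Alt}, so $\termalt = \evexp{\ell}{(\termalt_1'\enondet\cdots\enondet\termalt_n')}$ with $\delta(q,\ell)=\{q_1<\cdots<q_n\}$ and subderivations $\ITE \RaiseP \Pfun_{\PWA}(q_j) \pInter \term_1 \COL q_j \eESRel \termalt_j'$; if the chosen next state is $q_{j_0}$, use the choice-sequence prefix $\PRight^{j_0-1}\PLeft$ (or $\PRight^{n-1}$ when $j_0=n$) to produce the prescribed target reduction in $j_0+1$ or $n$ steps, and set $\ITE_1 := \ITE\RaiseP \Pfun_{\PWA}(q_{j_0})$, which still satisfies $\ITE_1 \EnvRel \TopEnv$ since $\RaiseP$ only updates the auxiliary priority annotation. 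For function calls $f\,\seq{s} \Pred{\epsilon}{D} [\seq{s}/\seq{x}]u$, invert the stack of \rn{IT-App} and \rn{IT-AppInt} rules terminating in \rn{IT-Var} for $f$ to extract $f \COL (\Atype,m,m')\in\ITE$ and argument-typing derivations $\ITE \pInter s_i \COL \Itype_i \eESRel \seq{s}_i'$, with $\Atype = \Itype_1\to\cdots\to\Itype_k\to q$. From $\ITE \EnvRel \TopEnv$ we get $f \COL (\Atype,m)\in\TopEnv$, so $\TopEnv\pInter D \eESRel D'$ supplies a definition $f_{\Atype,m}\,\dup{\seq{x}_1,\Itype_1}\cdots\dup{\seq{x}_k,\Itype_k} = u'$ with $\TopEnvZ \cup \bigcup_i \ZEnv{\seq{x}_i\COL\Itype_i} \pInter u \COL q \eESRel u'$; repeated application of the substitution lemma (Lemma~\ref{lem:appx:liveness:substitution}), together with weakening (Lemma~\ref{lem:appx:liveness:weakening}) to move from $\TopEnvZ$ into $\ITE$, yields $\ITE \pInter [\seq{s}/\seq{x}]u \COL q \eESRel [\seq{s}_1'/\dup{\seq{x}_1,\Itype_1}]\cdots[\seq{s}_k'/\dup{\seq{x}_k,\Itype_k}]u'$ in one silent target step. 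The conditional and nondeterministic cases invert \rn{IT-If} and \rn{IT-NonDet} and are immediate (a $\PLeft$ or $\PRight$ choice suffices for nondeterminism; no choice is consumed for conditionals since the predicate is evaluated identically by Lemma~\ref{lem:appx:liveness:csa:term-interpretation} restricted to integer subterms).

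The main obstacle is the function-call case: one must align the callee-side typing, which is carried out under the zeroed top-level environment $\TopEnvZ \cup \ZEnv{\seq{x}\COL\seq\Itype}$, with the caller-side environment $\ITE$ whose priority counters $m'$ are arbitrary. This alignment is exactly what the substitution lemma is engineered for, since its $n=0$ instantiation of the $\RaiseP n$ operator folds a zeroed callee environment into a caller environment without disturbing the latter's accumulated priorities. A secondary subtlety is book-keeping: iterating the substitution over $k$ arguments requires threading the shrinking callee-context and the shifting intersection types correctly, but each step is a straightforward application of the lemma once the right instance is found. All other cases are routine inversions that preserve both $\EnvRel$ and the one-step correspondence.
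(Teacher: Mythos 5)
Your proof is correct and follows essentially the same route as the paper's: induction on the length of the source reduction sequence, case analysis with inversion of the translation derivation, splitting of the run at event steps, and weakening plus repeated application of the substitution lemma (with the $\RaiseP 0$ instantiation) for the function-call case; packaging the inductive step as a separate one-step simulation lemma, rather than inlining it, is a purely presentational difference. One tiny slip: in the conditional case, the fact that the predicate's arguments evaluate identically follows directly from well-typedness (integer subterms contain only constants and operations, so \textsc{IT-Int}, \textsc{IT-Op}, \textsc{IT-VarInt} translate them to themselves), not from Lemma~\ref{lem:appx:liveness:csa:term-interpretation}.
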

\begin{myproof}
  By induction on the length of the reduction sequence \(\term \Preds{\seq{\ell}}{D} \term' \).
  The claim trivially holds if the length is \( 0 \); we assume that the length is not \( 0 \).
  The proof proceeds by case analysis on the shape of \(\term\).
 \begin{itemize}
  \item Case \( \term = \evexp{\lab}{\term_1} \):
        Then the reduction sequence \(\term \Preds{\seq{\ell}}{D} \term' \) is of the form
        \[
        \evexp{\lab}{\term_1} \Pred{\lab}{D} \term_1 \Preds{\seq{\ell}'}{D} \term'
        \]
        with \( \seq{\ell} = \lab \seq{\ell'} \).
        Since the last rule used to derive \( \ITE \pInter \term \COL q \eESRel \termalt \) is \rn{IT-Event-Alt}, we have:
        \begin{align*}
         \delta(q, \lab) &= \{q_1, \dots, q_k\} \qquad\text{ (\( q_1 < q_2 < \dots < q_k \))} \\
         \forall i \in \{1, \dots, k\}. \ITE \RaiseP \Omega(q_i) &\pInter \term_1 \COL q_i \eESRel \termalt_i \\
         \termalt &= \evexp{\lab}{(\termalt_1 \enondet \dots \enondet \termalt_k)}.
        \end{align*}
        Since \( q\runseq \) is a run over \( \seq{\ell} = \lab \seq{\ell}' \), it must be of the form \( q\runseq = q q_i \runseq' \) for some \( 1 \le i \le k \), where \( q_i \runseq' \) is a run over \( \seq{\ell}' \).
        By applying the induction hypothesis to \( \term_1 \Preds{\seq{\ell}'}{D} \term' \), there exists \( \termalt' \) such that
        \[
        \ITE' \pInter \term' \COL q' \eESRel \termalt'
        \qquad
        (\termalt_i, q_i) \Preds{\seq{\ell}', \runseq'}{D'} (\termalt', q')
        \]
        for some \( \ITE' \EnvRel \TopEnv \).
        Then we have
        \[
        (\evexp{\lab}{(\termalt_1 \enondet \dots \enondet \termalt_k)}, q) \Preds{\lab,q_i}{D'} (\termalt_i,q_i) \Preds{\seq{\ell}',\runseq'}{D'} \termalt'.
        \]
  \item Case \(\term = \term_1 \nondet \term_2\):
        Suppose that the reduction sequence is \( \term \Pred{\epsilon}{D} \term_1 \Preds{\seq{\ell}}{D} \term' \);
        the case that \( \term \Pred{\epsilon}{D} \term_2 \Preds{\seq{\ell}}{D} \term' \) can be proved similarly.
        Since the last rule used to derive \( \ITE \pInter \term \COL q \eESRel \termalt \) is \rn{IT-NonDet}, we have:
        \begin{align*}
         \termalt = \termalt_{1} \nondet \termalt_{2} \qquad
         \ITE \pInter \term_{1} \COL q \eESRel \termalt_{1} \qquad
         \ITE \pInter \term_{2} \COL q \eESRel \termalt_{2}
        \end{align*}
        By applying the induction hypothesis to \( \term_1 \Preds{\seq{\ell}}{D} \term' \), we have \( \termalt' \) such that
        \[
        \ITE' \pInter \term' \COL q' \eESRel \termalt'
        \qquad
        (\termalt_1, q) \Preds{\seq{\ell},\runseq}{D'} (\termalt',q')
        \]
        for some \( \ITE' \EnvRel \TopEnv \).
        Then
        \[
        (\termalt_1 \nondet \termalt_2,q) \Preds{\epsilon,\epsilon}{D'} (\termalt_1, q) \Preds{\seq{\ell},\runseq}{D'} (\termalt', q').
        \]
  \item Case \(\term = \ifexp{p(\term_{1}, \dots, \term_{n})}{\term_{n+1}}{\term_{n+2}}\):
        Suppose \((\sem{\term_{1}}, \dots, \sem{\term_{n}}) \in \sem{p}\).
        Then the reduction sequence is \( \term \Pred{\epsilon}{D} \term_{n+1} \Preds{\seq{\ell}}{D} \term' \).
        Since the last rule used to derive \( \ITE \pInter \term \COL q \eESRel \termalt \) is \rn{IT-IF}, we have:
        \begin{align*}
         \termalt &= \ifexp{p(\termalt_{1}, \dots, \termalt_{n})}{\termalt_{n+1}}{\termalt_{n+2}} \qquad
         \forall i \in \{1,\dots, n\}.
         \ITE \pInter \term_{i} \COL \Tint \eESRel \termalt_{i} \\
         \ITE &\pInter \term_{n+1} \COL q \eESRel \termalt_{n+1} \qquad
         \ITE \pInter \term_{n+2} \COL q \eESRel \termalt_{n+2}
        \end{align*}
        By the well-typedness of the program,
        for every \(i \in \{1,\dots,n\}\), \(\term_{i}\) consists of only integers and integer operations.
        Thus, we have \(\sem{\term_i} = \sem{\termalt_i}\) for each \(i \in \{1,\dots,n\}\) and
        \(\termalt \Pred{\epsilon}{D'} \termalt_{n+1}\).
        
        By applying the induction hypothesis to \( \term_{n+1} \Preds{\seq{\ell}}{D} \term' \), we have
        \[
        \ITE' \pInter \term' \COL q' \eESRel \termalt'
        \qquad
        (\termalt_{n+1}, q) \Preds{\seq{\ell},\runseq}{D'} (\termalt',q')
        \]
        for some \( \ITE' \EnvRel \TopEnv \) and \( \termalt' \).
        Then
        \[
        (\termalt, q) \Preds{\epsilon,\epsilon}{D'} (\termalt_{n+1}, q) \Preds{\seq{\ell},\runseq}{D'} (\termalt',q').
        \]

        The case where
        \((\sem{\term_{1}}, \dots, \sem{\term_{n}}) \notin \sem{p}\) is similar.
  \item Case \(\term = f\;\term_1\dots \term_n\):
        Suppose \(D(f) = \lambda x_1\dots x_n. \term_0\).
        In this case, the reduction sequence is \( \term \Pred{\epsilon}{D} [\term_1/x_1]\dots[\term_n/x_n]\term_0 \Preds{\seq{\ell}}{D} \term' \).
        Since \( \ITE \pInter \term : q \eESRel \termalt \), we have \( f : (\Atype, m, m) \in \ITE \) such that
        \begin{align*}
         \Atype = \Itype_1 \to \dots \Itype_n \to q \\
         \forall i \in \{ 1, \dots, n\}.\; \ITE \pInter \term_i \COL \Itype_i \eESRel \seq{\termalt}_i \\
         \termalt = f_{\Atype,m}\,\seq{\termalt}_1\,\dots\,\seq{\termalt}_n.
        \end{align*}
        Since \(\TopEnv \pInter D \eESRel D' \), 
        we have
        \begin{align*}
          D'(f_{\Atype,m}) = \lambda \dup{x_1, \Itype_1}\dots\dup{x_n, \Itype_n}. \termalt_0
          \\
         \TopEnvZ \cup \ZEnv{x_1 \COL \Itype_1} \cup \dots \cup \ZEnv{x_n \COL \Itype_n}
         &\pInter \term_0 \COL q
         \eESRel \termalt_0
        \end{align*}
        By using Weakening (Lemma~\ref{lem:appx:liveness:weakening}),
        we have
        \begin{align*}
         \forall i \in \{ 1, \dots, n \}.\; \TopEnvZ \cup \ITE &\pInter \term_{i} \COL \Itype_{i} \eESRel \seq{\termalt}_{i} \\
         \TopEnvZ \cup \ITE \cup \ZEnv{x_1 \COL \Itype_1} \cup \dots \cup \ZEnv{x_n \COL \Itype_n}
         &\pInter \term_0 \COL q \eESRel \termalt_0
        \end{align*}
        Since
        \[
        (\TopEnvZ \cup \ITE \cup \ZEnv{x_1 \COL \Itype_1} \cup \dots \cup \ZEnv{x_n \COL \Itype_n}) \RaiseP 0 =
        \TopEnvZ \cup \ITE \cup \ZEnv{x_1 \COL \Itype_1} \cup \dots \cup \ZEnv{x_n \COL \Itype_n},
        \]
        by using Substitution Lemma~\ref{lem:appx:liveness:substitution} repeatedly,
        we have
        \begin{align*}
         \TopEnvZ \cup \ITE &\pInter [\term_{1}/x_1] \dots [\term_{n}/x_n] \term_0 \COL q
         \eESRel [\seq{\termalt_1}/\dup{x_1, \Itype_1}]\dots [\seq{\termalt_n}/\dup{x_n, \Itype_n}] \termalt_0.
        \end{align*}
        Since \( (\TopEnvZ \cup \ITE) \EnvRel \TopEnv \), we can apply the induction hypothesis to \( [\term_1/x_1]\dots[\term_n/x_n]\term_0 \Preds{\seq{\ell}}{D} \term' \); we have
        \[
        \ITE' \pInter \term' \COL q' \eESRel \termalt'
        \qquad
        (\termalt'', q) \Preds[N']{\seq{\ell},\runseq}{D'} (\termalt', q')
        \]
        for some \( \ITE' \EnvRel \TopEnv \) and \( \termalt' \), where
        \[
        \termalt'' = [\seq{\termalt_1}/\dup{x_1, \Itype_1}]\dots [\seq{\termalt_n}/\dup{x_n, \Itype_n}] \termalt_0.
        \]
        We have \( (\termalt, q) \Preds{\epsilon,\epsilon}{D'} (\termalt'',q) \Preds{\seq{\ell},\runseq}{D'} (\termalt',q') \) as desired.
 \end{itemize}
\end{myproof}

An infinite analogue of this lemma can be obtained as a corollary.
\begin{corollary}\label{cor:appx:liveness:infinite-forward}
  Assume
  \[
  \TopEnv \pInter D \eESRel D'
  \qquad
  \ITE \pInter \term_0 \COL q_0 \eESRel \termalt_0
  \qquad
  \ITE \EnvRel \TopEnv
  \]
  and an infinite reduction sequence
  \[
  \term_0 \Pred{\ell_1}{D} \term_1 \Pred{\ell_1}{D} \term_2 \Pred{\ell_2}{D} \cdots
  \]
  generating an infinite event sequence \( \seq{\ell} = \ell_1 \ell_2 \cdots \).
  Let \( q_0\runseq \) be an infinite run of \( \PWA \) over the infinite sequence \( \seq{\ell} \).
  Then there exist \( \{ (q_i, \runseq_i, \termalt_i) \}_{i \in \omega} \) such that
  \[
  (\termalt_0, q_0) \Preds{\ell_1,\runseq_1}{D'} (\termalt_1, q_1) \Preds{\ell_2,\runseq_2}{D'} (\termalt_2, q_2) \Preds{\ell_3, \runseq_3}{D'} \cdots
  \]
  and \( \runseq = \runseq_1 \runseq_2 \dots \).
\end{corollary}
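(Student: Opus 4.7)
The plan is to reduce the corollary to an iterated application of Lemma~\ref{lem:appx:livesess:sim1} by decomposing both the reduction sequence and the run of $\PWA$ into matched one-step chunks. First I would partition the infinite run $q_0\runseq$ according to where the events actually occur: set $\runseq_i$ to be the empty sequence whenever $\ell_i = \epsilon$, and to be the single state immediately following $q_{i-1}$ in $q_0\runseq$ whenever $\ell_i \in \EvSet$. Taking $q_i$ to be the last state of $q_0\runseq_1\cdots\runseq_i$ (so $q_i = q_{i-1}$ in the $\epsilon$ case), one obtains that $q_{i-1}\runseq_i$ is a valid run of $\PWA$ over $\ell_i$, and that $\runseq = \runseq_1\runseq_2\cdots$ by construction.

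Next I would prove by induction on $i \geq 0$ that there exist $\ITE_i \EnvRel \TopEnv$ and $\termalt_i$ with $\ITE_i \pInter \term_i \COL q_i \eESRel \termalt_i$ and $(\termalt_{i-1}, q_{i-1}) \Preds{\ell_i, \runseq_i}{D'} (\termalt_i, q_i)$ for each $i \geq 1$. The base case $i=0$ is exactly the hypothesis of the corollary, with $\ITE_0 := \ITE$ and the given $\termalt_0$. For the inductive step, I would apply Lemma~\ref{lem:appx:livesess:sim1} to the single-step reduction $\term_i \Pred{\ell_{i+1}}{D} \term_{i+1}$ together with the one-step run $q_i\runseq_{i+1}$, using the typing $\ITE_i \pInter \term_i \COL q_i \eESRel \termalt_i$ produced at the previous stage together with the assumption $\TopEnv \pInter D \eESRel D'$ and $\ITE_i \EnvRel \TopEnv$. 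This yields the required $\ITE_{i+1}$, $\termalt_{i+1}$ and transition segment.

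Concatenating the one-step segments gives the desired infinite reduction $(\termalt_0, q_0) \Preds{\ell_1, \runseq_1}{D'} (\termalt_1, q_1) \Preds{\ell_2, \runseq_2}{D'} (\termalt_2, q_2) \Preds{\ell_3, \runseq_3}{D'} \cdots$, and the equality $\runseq = \runseq_1\runseq_2\cdots$ is immediate from the partition. The proof is essentially a bookkeeping exercise on top of Lemma~\ref{lem:appx:livesess:sim1}; the only delicate point is to keep the alignment between reduction steps (indexed by $i$) and the states in $q_0\runseq$ (indexed only by the non-$\epsilon$ steps) consistent, which is precisely what the choice of the partition $\{\runseq_i\}_{i \in \omega}$ ensures. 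No genuine mathematical obstacle arises beyond this indexing care.
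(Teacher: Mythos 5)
Your proof is correct and follows essentially the same route as the paper's: an iterated application of Lemma~\ref{lem:appx:livesess:sim1} to the individual reduction steps, feeding in the corresponding one-step piece of the given run and carrying along the invariants $\ITE_i \EnvRel \TopEnv$ and $\ITE_i \pInter \term_i \COL q_i \eESRel \termalt_i$. The only cosmetic difference is that you fix the partition of $\runseq$ up front, whereas the paper maintains ``$\runseq_1\cdots\runseq_i$ is a prefix of $\runseq$'' as an inductive invariant and concludes equality from the length count; both handle the $\epsilon$-versus-event alignment in the same way.
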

\begin{myproof}
  By using Lemma~\ref{lem:appx:livesess:sim1}, one can construct by induction on \( i > 0 \) a family \( \{ (q_i, \runseq_i, \termalt_i, \ITE_i) \}_{i \in \omega} \) that satisfies
  \[
  (\termalt_{i-1},q_{i-1}) \Preds{\ell_i,\runseq_i}{D'} (\termalt_i, q_i)
  \qquad
  \ITE_i \EnvRel \TopEnv
  \qquad
  \ITE_i \pInter \term_i \COL q_i \eESRel \termalt_i
  \qquad
  \text{\( \runseq_1 \runseq_2 \dots \runseq_i \) is a prefix of \( \runseq \).}
  \]
  Since the length of \( \runseq_1 \runseq_2 \dots \runseq_i \) is equivalent to that of \( \ell_1 \ell_2 \dots \ell_i \), the sequence \( \runseq_1 \runseq_2 \dots \) is indeed an infinite sequence and thus equivalent to \( \runseq \).
\end{myproof}

We show the converse in a bit stronger form.
\begin{lemma}\label{lem:appx:liveness:summary}
  Assume
  \begin{align*}
    \TopEnv \pInter D \eESRel D' \qquad
    \ITE_1 \EnvRel \TopEnv \qquad
    \ITE_2 \EnvRel \TopEnv^\Mark \qquad
    \ITE_1 \cup \ITE_2 &\pInter \term \COL q \eESRel \termalt \qquad
    (\termalt, \RTyPath) \Preds[N]{\seq{\ell}}{D'^\Mark} (\termalt', \RTyPath')
  \end{align*}
  where \( \term \) does not contain \( \lambda \)-abstraction and \( \termalt' \) is not of the form \( \termalt'_1 \enondet \termalt'_2 \).
  Then there exist a run \( q\runseq \) over \( \seq{\ell} \) and a state \(q'\) such that
  \[
  (\termalt,\RTyPath,q) \Preds[N]{\seq{\ell},\runseq}{D'^\Mark} (\termalt', \RTyPath', q')
  \]
  Furthermore there exist a term \( \term' \) and a type environment \( \ITE_1' \) such that
  \[
  \ITE_1' \EnvRel \TopEnv \qquad
  \ITE_1' \cup (\ITE_2 \RaiseP{\MAX(\Omega(\runseq))}) \pInter \term' \COL q' \eESRel \termalt' \qquad
  \term \Preds{\seq{\ell}}{D^\Mark} \term'.
  \]
\end{lemma}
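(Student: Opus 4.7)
I plan to prove this by induction on the step count $N$ of the transformed reduction, with case analysis on the shape of $\term$ (which, since $\term$ is $\lambda$-free and well-typed, determines the last rule used in the derivation of $\ITE_1 \cup \ITE_2 \pInter \term \COL q \eESRel \termalt$). The base case $N=0$ is immediate: take $\runseq = \epsilon$, $q' = q$, $\term' = \term$ and $\ITE_1' = \ITE_1$, using the convention $\MAX(\Omega(\epsilon)) = 0$ so that $\ITE_2 \RaiseP 0 = \ITE_2$.

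For the inductive step, I inspect the first step of $(\termalt, \RTyPath) \Pred{}{D'^{\Mark}} \cdots$. When $\term$ has the form $\term_1 \nondet \term_2$, a conditional, or an application of a (possibly marked) function symbol, the transformation rules force $\termalt$ into the corresponding shape, and its first reduction step directly mirrors a silent reduction step of $\term$. For function applications, invertibility of \rn{IT-App}/\rn{IT-AppInt} combined with the hypothesis $\TopEnv \pInter D \eESRel D'$ (which supplies a typing of the body of the called function) and the Substitution Lemma~\ref{lem:appx:liveness:substitution} yields a typing for the one-step reduct of $\term$ in the environment $\TopEnvZ \cup \ITE_1 \cup \ITE_2$; the induction hypothesis, applied to the residual reduction of length $<N$, then delivers $\runseq$, $\term''$ and $\ITE_1''$, and I prepend the one-step reduction of $\term$ to obtain the conclusion.

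The delicate case is $\term = \evexp{\lab}{\term_0}$. Here \rn{IT-Event-Alt} forces $\termalt = \evexp{\lab}{(\termalt_1 \enondet \cdots \enondet \termalt_k)}$ where $\delta(q, \lab) = \{q_1, \dots, q_k\}$ with $q_1 < \cdots < q_k$ and $(\ITE_1 \cup \ITE_2) \RaiseP \Omega(q_i) \pInter \term_0 \COL q_i \eESRel \termalt_i$ for every $i$. The first step of the transformed reduction fires $\lab$; then, since $\termalt'$ is not of $\enondet$-form, the choice sequence $\RTyPath$ must walk through the $\enondet$'s to select some branch $\termalt_i$, reaching it after exactly $i+1$ total steps (or $k$ steps when $i=k$). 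I prepend $q_i$ to the run. Because $\RaiseP$ preserves $\EnvRel$, both $\ITE_1 \RaiseP \Omega(q_i) \EnvRel \TopEnv$ and $\ITE_2 \RaiseP \Omega(q_i) \EnvRel \TopEnv^{\Mark}$, so the induction hypothesis applies to the residual reduction from $\termalt_i$ at state $q_i$ and produces a run $\runseq_0$, a reduct $\term'$ with $\term_0 \Preds{\seq{\ell}_0}{D^{\Mark}} \term'$, and an $\ITE_1'$ such that
\[
\ITE_1' \cup (\ITE_2 \RaiseP \Omega(q_i)) \RaiseP \MAX(\Omega(\runseq_0)) \pInter \term' \COL q' \eESRel \termalt'.
\]
Using the identity $(\ITE_2 \RaiseP m_1) \RaiseP m_2 = \ITE_2 \RaiseP \MAX(m_1, m_2)$, this environment is exactly $\ITE_1' \cup \ITE_2 \RaiseP \MAX(\Omega(q_i \runseq_0))$, giving what is demanded with $\runseq = q_i \runseq_0$ and the combined reduction $\term \Preds{\lab \seq{\ell}_0}{D^{\Mark}} \term'$.

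The principal obstacle is precisely this event case: aligning the choice-sequence-driven $\enondet$ cascade in the transformed reduction with the nondeterministic selection of a next automaton state, and then verifying that the resulting priority accumulation matches $\MAX(\Omega(\runseq))$ on the nose. The remaining cases are disciplined bookkeeping via Weakening (Lemma~\ref{lem:appx:liveness:weakening}) and Substitution (Lemma~\ref{lem:appx:liveness:substitution}) to thread the typing environments through reductions.
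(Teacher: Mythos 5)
Your proposal is correct and follows essentially the same route as the paper's proof: induction on the length of the reduction, with the event case resolved exactly as in the paper by tracking the $\enondet$ cascade to the selected branch $\termalt_i$, prepending $q_i$ to the run, and collapsing $(\ITE_2 \RaiseP \Omega(q_i)) \RaiseP \MAX(\Omega(\runseq_0))$ to $\ITE_2 \RaiseP \MAX(\Omega(q_i\runseq_0))$, and with the application case discharged via $\TopEnv \pInter D \eESRel D'$, Weakening and Substitution. The only cosmetic difference is that you case-analyze on the shape of $\term$ while the paper cases on $\termalt$ and recovers the shape of $\term$ by inversion (e.g.\ ruling out an applied event term by simple typing); both are equivalent here.
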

\begin{myproof}
  \tk{Now the proof does not rely on the simulation lemmas.}
  By induction on the length of the reduction sequence.
  The claim trivially holds if the length is \( 0 \).
  Assume that the length is greater than \( 0 \).
  
  The proof proceeds by case analysis on the shape of \(\termalt\).
  By the assumption
  \(\termalt \Preds{\seq{\ell}}{D'^\Mark} \termalt'\), it suffices to consider
  only the cases where the shape of \(\termalt\) matches the lefthand side of a transition rule.
 \begin{itemize}
  \item Case \(\termalt = \evexp{\lab}{\termalt_0}\):
        By the shape of \(\termalt\),
        the last rule used on the derivation of \(\ITE \cup \ITE' \pInter \term \COL q \eESRel \termalt\) is
        either \rn{IT-Event-Alt} or \rn{IT-App} (\rn{IT-Abs} is not applicable since \( \term \) is assumed to have no abstraction).
        By induction on the derivation, one can prove that \( \term = (\evexp{\lab}{\term_0})\,\term_1\,\dots\,\term_k \) for some \( k \).
        By the simple-type system, one cannot apply \( \evexp{\lab}{\term_0} \) to a term; hence \( k = 0 \) (i.e.~\( \term = \evexp{\lab}{\term_0} \)) and the last rule used on the derivation is \rn{IT-Event-Alt}.
        Then we have:
        \begin{align*}
         \delta(q, \lab) &= \{q_1, \dots, q_n\} \text{ (\( q_1 < q_2 < \dots < q_n \))} \\
         \forall i \in \{1, \dots, n\}. (\ITE_1 \cup \ITE_2) \RaiseP \Omega(q_i) &\pInter \term_0 \COL q_i \eESRel \termalt_i \\
         \termalt_0 = (\termalt_1 \enondet \dots \enondet \termalt_l)
        \end{align*}
        Recall that \( \termalt' \) is not of the form \( \termalt'_1 \enondet \termalt'_2 \); hence the reduction sequence
        \[
        (\termalt, \RTyPath) \Preds[N]{\seq{\ell}}{D'^\Mark} (\termalt', \RTyPath')
        \]
        must be of the form
        \[
        (\termalt, \RTyPath) \Preds[k]{\lab}{D'^{\Mark}} (\termalt_i, \RTyPath'') \Preds[N-k]{\seq{\ell}'}{D'^{\Mark}} (\termalt', \RTyPath')
        \]
        where \( k = i + 1 \) if \( i < n \) and \( k = i \) if \( i = n \).
        By the induction hypothesis, there exist a run \( q_i\runseq' \) over \( \ell' \), a term \( \term' \) and a type environment \( \ITE' \) with \( \ITE_1' \EnvRel \TopEnv \) such that
        \[
        (\termalt_i, \RTyPath'', q_i) \Preds[N-k]{\seq{\ell}',\runseq'}{D'^\Mark} (\termalt',\RTyPath', q')
        \]
        and
        \[
        \ITE_1' \cup (\ITE_2 \RaiseP \Omega_{\PWA}(q_i) \RaiseP \MAX(\Omega_{\PWA}(\runseq'))) \pInter \term' : q' \eESRel \termalt'.
        \]
        We have \( (\ITE_2 \RaiseP \Omega_{\PWA}(q_i) \RaiseP \MAX(\Omega_{\PWA}(\runseq'))) = (\ITE_2 \RaiseP \MAX(\Omega_{\PWA}(q_i \runseq'))) \) and \( (\termalt,\RTyPath,q) \Preds[k]{\lab,q_i}{D'^\Mark} (\termalt_i, \RTyPath'',q_i) \Preds[N-k]{\seq{\ell}',\runseq'}{D'^\Mark} (\termalt',\RTyPath',q') \).
        By construction, \( qq_i\runseq' \) is a run over \( \lab \seq{\ell}' \).
  \item Case \(\termalt = \termalt_1 \nondet \termalt_2\):
        As with the previous case,
        the last rule used on the derivation is \rn{IT-NonDet}.
        Then we have
        \begin{align*}
          (\ITE_1 \cup \ITE_2) \pInter \term_{1} \COL q \eESRel \termalt_{1} \qquad
          (\ITE_1 \cup \ITE_2) \pInter \term_{2} \COL q \eESRel \termalt_{2} \qquad
          \term = \term_{1} \nondet \term_{2}
        \end{align*}
        Suppose \( \RTyPath = \PLeft \RTyPath'' \); the other case can be proved by a similar way.
        Then the reduction sequence
        \(
        (\termalt, \RTyPath) \Preds[N]{\seq{\ell}}{D'^\Mark} (\termalt',\RTyPath')
        \)
        must be of the form
        \[
        (\termalt, \RTyPath) \Pred{\epsilon}{D'^{\Mark}} (\termalt_1,\RTyPath'') \Preds[N-1]{\seq{\ell}}{D'^{\Mark}} (\termalt', \RTyPath').
        \]
        Hence, by applying the induction hypothesis to \( \ITE_1 \cup \ITE_2 \pInter \term_{1} \COL q \eESRel \termalt_{1} \), we obtain the desired result.
  \item Case \(\termalt = \ifexp{p(\termalt_{1}, \dots, \termalt_{n})}{\termalt_{n+1}}{\termalt_{n+2}}\):
        As with the previous cases,
        the last rule used on the derivation is \rn{IT-If}.
        %
        Hence we have
        \begin{align*}
          &
          \term = \ifexp{p(\term_{1}, \dots, \term_{n})}{\term_{n+1}}{\term_{n+2}} \\&
          \forall i \in \{1,\dots, n\}.
          \ITE_1 \cup \ITE_2 \pInter \term_{i} \COL \Tint \eESRel \termalt_{i} \\&
          \ITE_1 \cup \ITE_2 \pInter \term_{n+1} \COL q \eESRel \termalt_{n+1} \qquad
          \ITE_1 \cup \ITE_2 \pInter \term_{n+2} \COL q \eESRel \termalt_{n+2}
        \end{align*}
        Recall that the result type of a function cannot be the integer type in our language.
        This implies that \( \term_i \) (\( 1 \le i \le n \)) consists only of constants and numerical operations, and thus \( \term_i = \termalt_i \) for every \( 1 \le i \le n \).

        Suppose that \((\sem{\termalt_{1}}, \dots, \sem{\termalt_{n}}) = (\sem{\term_1}, \dots, \sem{\term_n}) \in \sem{p}\).
        Then the reduction sequence
        \(
        (\termalt, \RTyPath) \Preds[N]{\seq{\ell}}{D'^\Mark} (\termalt', \RTyPath')
        \)
        must be of the form
        \[
        (\termalt,\RTyPath) \Pred{\epsilon}{D'^{\Mark}} (\termalt_{n+1},\RTyPath) \Preds[N-1]{\seq{\ell}}{D'^{\Mark}} (\termalt', \RTyPath').
        \]
        We obtain the desired result by applying the induction hypothesis to \( \ITE_1 \cup \ITE_2 \pInter \term_{n+1} \COL q \eESRel \termalt_{n+1} \); the corresponding reduction sequence is \( \term \Pred{\epsilon}{D^{\Mark}} \term_{n+1} \Preds{\seq{\ell}}{D^{\Mark}} \term' \).
        The case that \((\sem{\termalt_{1}}, \dots, \sem{\termalt_{n}}) \notin \sem{p}\) can be proved similarly.
  \item Case \(\termalt = g\;\termalt_1\dots \termalt_n\) with \(g\in\dom(D')\):
        \newcommand\bdy{\mathit{body}}
        By the shape of \(\termalt\),
        the last rule used on the derivation is \rn{IT-App} or \rn{IT-AppInt}.
        By induction on the derivation,
        we have \(\term = f \term_1 \dots \term_k\) for some \(k\).
        Then \( \termalt = f_{\Atype,m}\,\seq{\termalt}_1'\,\dots\,\seq{\termalt}_k' \) and
        \begin{align*}
          f : (\Atype, m, m) \in \ITE_1 \\
          \Atype = \Itype_1 \to \dots \Itype_{k} \to q \\
          \forall i \in \{ 1, \dots, k \}.\; \ITE_1 \cup \ITE_2 \pInter \term_i \COL \Itype_i \eESRel \seq{\termalt}_i'.
        \end{align*}
        Assume that \( D(f) = \lambda x_1\dots x_k. \term_\bdy \) and \( D'(f_{\Atype, m}) = \lambda y_1\dots y_n. \termalt_\bdy \).
        Since \( \TopEnv \pInter D \eESRel D' \) and \( f : (\Atype,m,m) \in \ITE_1 \EnvRel \TopEnv \), we have
        \begin{align*}
         \TopEnvZ \pInter \lambda x_1\dots x_k. \term_\bdy \COL \Atype
         \eESRel
         \lambda y_1\dots y_n. \termalt_\bdy.
        \end{align*}
        Hence \( y_1\dots y_n = \dup{x_1,\Itype_1} \dots \dup{x_k,\Itype_k} \) and
        \[
        \TopEnvZ \cup [x_1 : \Itype_1] \cup \dots \cup [x_k : \Itype_k] \pInter \term_\bdy \COL q
        \eESRel
        \termalt_\bdy.
        \]
        By using Weakening (Lemma~\ref{lem:appx:liveness:weakening}) and Substitution Lemma (Lemma~\ref{lem:appx:liveness:substitution}) repeatedly,
        we have
        \begin{align*}
         \TopEnvZ \cup \ITE_1 \cup \ITE_2 &\pInter [\term_{1}/x_1] \dots [\term_{k}/x_k] \term_\bdy \COL q
         \eESRel [\seq{u_1}/\dup{x_1, \Itype_1}]\dots [\seq{u_k}/\dup{x_k, \Itype_k}] \termalt_{\bdy}.
        \end{align*}
        The reduction sequence
        \(
        (\termalt,\RTyPath) \Preds[N]{\seq{\ell}}{D'^\Mark} (\termalt',\RTyPath')
        \)
        must be of the form
        \[
        (\termalt,\RTyPath) \Pred{\epsilon}{D'^{\Mark}} ([\seq{u_1}/\dup{x_1, \Itype_1}]\dots [\seq{u_k}/\dup{x_k, \Itype_k}] \termalt_\bdy, \RTyPath)
        \Preds[N-1]{\seq{\ell}}{D'^{\Mark}} (\termalt', \RTyPath').
        \]
        By applying the induction hypothesis to the above judgment, we complete the proof; the corresponding reduction sequence is \( \term \Pred{\epsilon}{D^\Mark}  [\term_{1}/x_1] \dots [\term_{k}/x_k] \term_\bdy \Preds{\seq{\ell}}{D^{\Mark}} \term' \) where \( \term' \) is the term obtained by the induction hypothesis.
      \item Case \(\termalt = g^\Mark\;\termalt_1\dots \termalt_n\) with \(g\in\dom(D')\):
            Similar to the above case.
 \end{itemize}
\end{myproof}

\begin{corollary}\label{cor:appx:liveness:infinite-backward}
  Assume
  \begin{align*}
    \TopEnv \pInter D \eESRel D' \qquad
    \ITE \EnvRel \TopEnv \qquad
    \ITE \pInter \term_0 \COL q_0 \eESRel \termalt_0
  \end{align*}
  and an infinite reduction sequence
  \[
  (\termalt_0, \RTyPath_0) \Preds[N_1]{\ell_1}{D'} (\termalt_1, \RTyPath_1) \Preds[N_2]{\ell_2}{D'} (\termalt_2, \RTyPath_2) \Preds[N_3]{\ell_3}{D'} \cdots.
  \]
  Suppose that, for every \( i \), \( \termalt_i \) is not of the form \( \termalt_{i1} \enondet \termalt_{i2} \).
  Then there exist \( \{ (q_i, \runseq_i, \term_i) \}_{i \in \omega} \) such that
  \[
  (\termalt_0, \RTyPath_0, q_0) \Preds[N_1]{\ell_1,\runseq_1}{D'} (\termalt_1, \RTyPath_1, q_1) \Preds[N_2]{\ell_2,\runseq_2}{D'} (\termalt_2, \RTyPath_2, q_2) \Preds[N_3]{\ell_3, \runseq_3}{D'} \cdots
  \]
  and
  \[
  \term_0 \Preds{\ell_1}{D} \term_1 \Preds{\ell_2}{D} \term_2 \Preds{\ell_3}{D} \cdots.
  \]
  Furthermore \( q_0\runseq_1\runseq_2 \dots \) is an infinite run of \( \PWA \) over \( \ell_1 \ell_2 \dots \).
\end{corollary}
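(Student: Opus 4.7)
\begin{myproof}[Proof proposal for Corollary~\ref{cor:appx:liveness:infinite-backward}]
The plan is to construct the family $\{(q_i, \runseq_i, \term_i)\}_{i\in\omega}$ together with auxiliary intersection type environments $\ITE_i$ by induction on $i$, applying Lemma~\ref{lem:appx:liveness:summary} at each step. This mirrors the proof of Corollary~\ref{cor:appx:liveness:infinite-forward}, but in the opposite simulation direction.

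First I would set $\ITE_0 := \ITE$ so that the base case is exactly the hypothesis of the corollary. For the inductive step, assume we already have $(q_i, \runseq_1, \dots, \runseq_i, \term_0, \dots, \term_i, \ITE_i)$ such that $\ITE_i \EnvRel \TopEnv$ and $\ITE_i \pInter \term_i \COL q_i \eESRel \termalt_i$. We apply Lemma~\ref{lem:appx:liveness:summary} to the reduction segment $(\termalt_i, \RTyPath_i) \Preds[N_{i+1}]{\ell_{i+1}}{D'} (\termalt_{i+1}, \RTyPath_{i+1})$ with $\ITE_1 := \ITE_i$ and $\ITE_2 := \emptyset$ (so that reducing in $D'^{\Mark}$ coincides with reducing in $D'$ since no marked symbols are present). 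The hypothesis that $\termalt_{i+1}$ is not of the form $\termalt'_1 \enondet \termalt'_2$ is exactly the side condition of the lemma; and $\term_i$ contains no $\lambda$-abstraction because such abstractions only occur at the top of function bodies in $D$ and are never introduced by reduction. The lemma then yields $q_{i+1}$, $\runseq_{i+1}$, $\term_{i+1}$ and $\ITE_{i+1} \EnvRel \TopEnv$ with the desired three properties:
\[
(\termalt_i, \RTyPath_i, q_i) \Preds[N_{i+1}]{\ell_{i+1}, \runseq_{i+1}}{D'} (\termalt_{i+1}, \RTyPath_{i+1}, q_{i+1}),
\quad \term_i \Preds{\ell_{i+1}}{D} \term_{i+1},
\quad \ITE_{i+1} \pInter \term_{i+1} \COL q_{i+1} \eESRel \termalt_{i+1},
\]
where $q_i \runseq_{i+1}$ is a run of $\PWA$ over $\ell_{i+1}$. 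Concatenating the individual run segments gives $q_0\runseq_1\runseq_2\dots$ as a run of $\PWA$ over $\ell_1\ell_2\dots$; concatenating the reductions gives the desired infinite reduction sequence of $P$.

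The only subtle point is to ensure that the resulting run is \emph{infinite}, i.e.\ that $\ell_1\ell_2\dots$ is an infinite event sequence (the individual $\ell_i$ may be $\epsilon$, in which case $\runseq_i$ is empty). This follows from the standing assumption made at the beginning of Section~\ref{sec:to-callsequence} that every infinite reduction sequence of a program in this section produces infinitely many events: since the $\term_i$ form an infinite reduction sequence of $P$ generating events $\ell_1\ell_2\dots$, that sequence must be infinite, and hence so is the concatenated run. I do not expect any serious obstacles here; the main content of the argument is already packaged inside Lemma~\ref{lem:appx:liveness:summary}, and the only thing to check is that its hypotheses remain satisfied along the inductive step, which follows directly from the assumptions of the corollary.
\end{myproof}
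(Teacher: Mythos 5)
Your proposal is correct and follows essentially the same route as the paper: both construct the family $\{(q_i,\runseq_i,\term_i,\ITE_i)\}$ by induction on $i$, invoking Lemma~\ref{lem:appx:liveness:summary} on each segment and observing that no marked symbols appear, so the $\progd'^{\Mark}$-reductions given by the lemma are in fact $\progd'$-reductions. Your added remark on why the concatenated run is infinite is a harmless elaboration of a point the paper's proof states without comment.
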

\begin{myproof}
  By using Lemma~\ref{lem:appx:liveness:summary}, one can define a family \( \{ (q_i, \runseq_i, \term_i, \ITE_i) \}_{i \in \omega} \) that satisfies
  \begin{align*}
    (\termalt_{i-1},\RTyPath_{i-1},q_{i-1}) \Preds[N_i]{\ell_i,\runseq_i}{D'^\Mark} (\termalt_i, \RTyPath_i,q_i) \\
    \ITE_i \EnvRel \TopEnv \\
    \ITE_i \pInter \term_i \COL q_i \eESRel \termalt_i \\
    \term_{i-1} \Preds{\seq{\ell}}{D^\Mark} \term_i
  \end{align*}
  by induction on \( i > 0 \).
  Since \( \term_0 \) and \( \termalt_0 \) do not contain marked symbols as well as the bodies of function definitions in \( D \) and \( D' \), \( \term_i \) and \( \termalt_i \) dose not have marked symbols for every \( i \).
  Hence \( (\termalt_{i-1},\RTyPath_{i-1},q_{i-1}) \Preds[N_i]{\ell_i,\runseq_i}{D'} (\termalt_i, \RTyPath_i,q_i) \) and \( \term_{i-1} \Preds{\seq{\ell}}{D} \term_i \) for every \( i > 0 \).
  Furthermore \( q_0\runseq_1 \dots \runseq_i \) is a run over \( \ell_1 \dots \ell_i \) for every \( i \).
  Hence the infinite sequence \( q_0 \runseq_1 \runseq_2 \dots \) is an infinite run over \( \ell_1 \ell_2 \dots \).
  So \( \{(q_i,\runseq_i,\term_i)\}_{i \in \omega} \) satisfies the requirements.
\end{myproof}


\begin{lemma}\label{lem:appx:liveness:eff-sel-ev-seq-vs-choice-seq}
  Assume that \( \TopEnv \pInter P \eESRel (P', \Pfun) \).
  Let \( P = (\mainfun, D) \) and \( P' = (\mainfun', D') \).
  The following conditions are equivalent.
  \begin{enumerate}
  \item
    \( \exists \seq{\ell} \in \InfTraces(\prog). \exists \runseq \COL \text{run of \(\PWA\) over \(\seq{\ell}\)}.\;
    \MAX\mathbf{Inf}(\Omega_{\PWA}(\runseq)) \text{ is \emph{even}.} \)
  \item
    There exist \( \RTyPath \) and an infinite reduction sequence
    \[
    \RTyPath \Vdash (\mainfun', \qinit) = (\termalt_0, q_0) \Preds{\ell_1,\runseq_1}{D'} (\termalt_1, q_1) \Preds{\ell_2,\runseq_2}{D'} \cdots
    \]
    such that \( \runseq_1 \runseq_2 \cdots \) is an infinite sequence and 
    \( \MAX\mathbf{Inf}(\Omega_{\PWA}(\runseq_1 \runseq_2 \cdots)) \) is even.
  \end{enumerate}
\end{lemma}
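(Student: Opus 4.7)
The plan is to reduce the equivalence to a direct application of the two simulation corollaries (\ref{cor:appx:liveness:infinite-forward} and \ref{cor:appx:liveness:infinite-backward}) that have just been established. By the hypothesis $\TopEnv \pInter P \eESRel (P',\Pfun)$, the translation of the main term gives $\{x : (\Atype,m,0) \mid x : (\Atype,m) \in \TopEnv\} \pInter \mainfun \COL \qinit \eESRel \mainfun'$, and since $\mainfun$ is a function symbol we actually have $\TopEnvZ \pInter \mainfun \COL \qinit \eESRel \mainfun'$ with $\TopEnvZ \EnvRel \TopEnv$. This provides exactly the typing premise needed to feed into both corollaries.

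For $(1) \Rightarrow (2)$, I will take the witness $\seq{\ell} = \ell_1\ell_2\cdots \in \InfTraces(P)$ together with a run $q_0 \runseq = q_0 q_1 q_2 \cdots$ of $\PWA$ over $\seq{\ell}$ whose maximum infinitely-occurring priority is even. Unfolding $\InfTraces(P)$ gives an infinite reduction sequence $\mainfun \Pred{\ell_1}{D} t_1 \Pred{\ell_2}{D} t_2 \Pred{\ell_3}{D} \cdots$ in $P$. Applying Corollary~\ref{cor:appx:liveness:infinite-forward} to this reduction and the chosen run produces $\{(q_i,\runseq_i,\termalt_i)\}_{i\in\omega}$ with $(\mainfun',q_0) \Preds{\ell_1,\runseq_1}{D'}(\termalt_1,q_1) \Preds{\ell_2,\runseq_2}{D'} \cdots$ and $\runseq_1\runseq_2\cdots = \runseq$. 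A suitable $\RTyPath$ exists because the macro step $\Preds{\ell_i,\runseq_i}{D'}$ is defined by composing ordinary $\red_{D'}$ steps along a concrete choice sequence. Since $\runseq_1\runseq_2\cdots$ equals $\runseq$ verbatim, the parity condition transfers unchanged.

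For $(2) \Rightarrow (1)$, I will take the given infinite sequence and apply Corollary~\ref{cor:appx:liveness:infinite-backward}. Its hypothesis requires that each $\termalt_i$ is not of the form $\termalt_{i1} \enondet \termalt_{i2}$; this is automatic here because the macro step $\Preds{\ell_i,\runseq_i}{D'}$ from the definition of condition~(2) always consumes exactly one event label and the associated resolution of the $\enondet$-branch produced by rule \textsc{(IT-Event-Alt)}, so the resulting term $\termalt_i$ is either equal to $\mainfun'$ or arises from a configuration one step after firing an event and hence is itself not a top-level $\enondet$. The corollary then produces terms $\term_0 = \mainfun, \term_1, \term_2, \dots$ with $\term_0 \Preds{\ell_1}{D} \term_1 \Preds{\ell_2}{D} \cdots$, witnessing $\ell_1 \ell_2 \cdots \in \InfTraces(P)$, together with an infinite run $q_0\runseq_1\runseq_2\cdots$ of $\PWA$ over that trace whose maximum infinitely-occurring priority is exactly that of the given sequence, hence even.

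The main obstacle I expect is bookkeeping around the distinction between $\nondet$ and $\enondet$ and confirming that the macro-step formulation in condition~(2) indeed satisfies the structural precondition of Corollary~\ref{cor:appx:liveness:infinite-backward}; this requires checking that the rules defining $\Preds[N]{\seq{\ell},\runseq}{D'}$ force each observable configuration to arise after completing the $\enondet$-resolution triggered by an event, not in the middle of it. A secondary but purely technical point is to verify that the infinite concatenation $\runseq_1\runseq_2\cdots$ is genuinely infinite in both directions: in $(1)\Rightarrow(2)$ it inherits infiniteness from $\seq{\ell}$ via the length-preserving property used in the proof of Corollary~\ref{cor:appx:liveness:infinite-forward}, and in $(2)\Rightarrow(1)$ it follows because each macro step contributes at least one state to $\runseq_i$ whenever it contributes an event $\ell_i$.
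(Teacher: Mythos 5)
Your proof is correct and follows essentially the same route as the paper: apply Corollary~\ref{cor:appx:liveness:infinite-forward} for \( (1)\Rightarrow(2) \) and Corollary~\ref{cor:appx:liveness:infinite-backward} for \( (2)\Rightarrow(1) \), with the parity condition transferring because the induced run \( \runseq_1\runseq_2\cdots \) coincides with the given run. You are in fact slightly more careful than the paper's own proof, which applies the backward corollary without explicitly verifying its precondition that the observed configurations \( \termalt_i \) are never top-level \( \enondet \)-terms; your justification of that point (the annotated macro-step bundles each event with the resolution of the \( \enondet \) introduced by \textsc{IT-Event-Alt}) is sound.
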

\begin{myproof}
  (\( (1) \Rightarrow (2) \))
  Assume an infinite reduction sequence
  \[
  \RTyPath \Vdash \mainfun = \term_0 \Pred{\ell_1}{D} \term_1 \Pred{\ell_2}{D} \term_2 \Pred{\ell_3}{D} \cdots
  \]
  and an infinite run \( \qinit \runseq \) of \( \PWA \) over \( \ell_1 \ell_2 \dots \) (here we implicitly assume that \( \ell_1 \ell_2 \dots \) is an infinite sequence).
  Since \( \TopEnv \pInter P \eESRel (P', \Pfun) \), we have \( [\TopEnv] \pInter \mainfun : \qinit \eESRel \mainfun' \).
  Then by Corollary~\ref{cor:appx:liveness:infinite-forward},
  there exist \( \{ (q_i, \runseq_i, \termalt_i) \}_{i \in \omega} \) such that
  \[
  (\mainfun', \qinit) = (\termalt_0, q_0) \Preds{\ell_1,\runseq_1}{D'} (\termalt_1, q_1) \Preds{\ell_2,\runseq_2}{D'} (\termalt_2, q_2) \Preds{\ell_3, \runseq_3}{D'} \cdots
  \]
  and \( \runseq = \runseq_1 \runseq_2 \dots \).
  Then \( \MAX\mathbf{Inf}(\Omega_{\PWA}(\runseq_1 \runseq_2 \cdots)) = \MAX\mathbf{Inf}(\Omega_{\PWA}(\qinit \runseq_1 \runseq_2 \cdots))  = \MAX\mathbf{Inf}(\Omega_{\PWA}(\runseq)) \) is even by the assumption.

  (\( (2) \Rightarrow (1) \))
  Assume an infinite reduction sequence
  \[
  \RTyPath \Vdash (\mainfun', \qinit) = (\termalt_0, q_0) \Preds{\ell_1,\runseq_1}{D'} (\termalt_1, q_1) \Preds{\ell_2,\runseq_2}{D'} \cdots
  \]
  such that \( \runseq_1 \runseq_2 \cdots \) is an infinite sequence and \( \MAX\mathbf{Inf}(\Omega_{\PWA}(\runseq_1 \runseq_2 \cdots)) \) is even.
  By Corollary~\ref{cor:appx:liveness:infinite-backward}, there exists \( \{\term_i\}_i \) such that
  \[
  \mainfun \Preds{\ell_1}{D} \term_1 \Preds{\ell_2}{D} \cdots.
  \]
  By construction, \( \runseq = \qinit \runseq_1 \runseq_2 \cdots \) is an infinite run over \( \ell_1 \ell_2 \cdots \).
  Since \( \runseq \) is infinite by the assumption, \( \ell_1 \ell_2 \cdots \) is also infinite.
  Then \( \MAX\mathbf{Inf}(\Omega_{\PWA}(\runseq)) = \MAX\mathbf{Inf}(\Omega_{\PWA}(\qinit \runseq_1 \runseq_2 \cdots)) = \MAX\mathbf{Inf}(\Omega_{\PWA}(\runseq_1 \runseq_2 \cdots)) \) is even by the assumption.
\end{myproof}

\begin{lemma}\label{lem:appx:liveness:callseq-summary-one}
 Assume that
 \begin{align*}
  \TopEnv \pInter D \eESRel D' \qquad
  &\ITE_0 \EnvRel \TopEnv \qquad
  \ITE_0 \pInter \term : q \eESRel g\,\seq{s} \qquad
  & (g\,\seq{s}; \RTyPath) \CallSeqN{N}_{D'} (h_{\Atype,m}\,\seq{u}; \RTyPath')
 \end{align*}
 and \( (g\,\seq{s}, \RTyPath, q) \Preds[N]{\seq{\ell},\runseq}{D'} (h_{\Atype,m}\,\seq{u}, \RTyPath', q') \).
 Then \( m=\MAX(\Omega(\runseq)) \).
\end{lemma}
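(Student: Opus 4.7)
The plan is to apply the Summary Lemma (Lemma~\ref{lem:appx:liveness:summary}) to the body of $g$ after one $\beta$-reduction, so that the third type-component carried by the marked occurrence of $h$ is forced to coincide with the accumulated maximum priority along the reduction. The key invariant is that, in an IT-derivation, the third component of a binding is raised precisely by the priorities of events encountered (via IT-Event-Alt) and by argument-type annotations (via IT-App), while IT-Var requires the second and third components to agree at the point of use.

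Let $g = f_{\Atype_g, m_g}$, $D(f) = \lambda \seq{x}.u$, and $D'(g) = \lambda \seq{x}'.\termalt$, where $\seq{x}' = \dup{x_1,\Itype_1}\cdots\dup{x_k,\Itype_k}$ and $\Atype_g = \Itype_1 \to \cdots \to \Itype_k \to q$. Unfolding $\CallSeqN{N}_{D'}$ and factoring the first $\beta$-reduction (which emits no event and leaves the state unchanged) out of the hypothesized reduction, then lifting the remainder from $D'$ to $D'^\Mark$ (reductions agree modulo marking, and no marked symbols occur in $\seq{s}$ or $\termalt$), we obtain
\[
  ([\seq{s}/\seq{x}'][f^\Mark/f]\termalt,\; \RTyPath) \Preds[N-1]{\seq{\ell}}{D'^\Mark} (h^\Mark_{\Atype,m}\,\seq{u}',\; \RTyPath'),
\]
where $\seq{u}'$ mark-erases to $\seq{u}$.

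Next we construct the type derivation required by the Summary Lemma. Inverting $\ITE_0 \pInter \term : q \eESRel g\,\seq{s}$ through IT-App/IT-AppInt, we obtain $f : (\Atype_g, m_g, m_g) \in \ITE_0$ and $\ITE_0 \pInter \term_i : \Itype_i \eESRel \seq{s}_i$ for each argument component $\term_i$ of $\term$. From $\TopEnv \pInter D \eESRel D'$ and inversion of IT-Abs we extract $\TopEnvZ \cup [x_1:\Itype_1] \cup \cdots \cup [x_k:\Itype_k] \pInter u : q \eESRel \termalt$. Renaming $f \mapsto f^\Mark$ throughout this derivation (valid by a straightforward induction on the derivation, since $f$ is used only via IT-Var whose rule is symmetric in the symbol) replaces $\TopEnvZ$ by $\TopEnvZM$ and marks the function symbols in $u$ and $\termalt$. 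Applying the Substitution Lemma (Lemma~\ref{lem:appx:liveness:substitution}) iteratively with the argument judgments yields
\[
  \ITE_0 \cup \TopEnvZM \pInter [\seq{\term}/\seq{x}][f^\Mark/f]u : q \eESRel [\seq{s}/\seq{x}'][f^\Mark/f]\termalt.
\]

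Invoke the Summary Lemma with $\ITE_1 := \ITE_0$ and $\ITE_2 := \TopEnvZM$ on the $D'^\Mark$-reduction above. Its side-conditions hold: the source term has no $\lambda$-abstraction (the language permits them only at top level), and $h^\Mark_{\Atype,m}\,\seq{u}'$ is an application rather than an $\enondet$-branch. Since the lifted reduction traverses the same state transitions as the original $D'$-reduction, the run returned by the lemma is precisely $\runseq$. We thus obtain $\term'$ and $\ITE_1' \EnvRel \TopEnv$ with
\[
  \ITE_1' \cup (\TopEnvZM \RaiseP \MAX(\Omega(\runseq))) \pInter \term' : q' \eESRel h^\Mark_{\Atype,m}\,\seq{u}'.
\]
Inverting through IT-App/IT-AppInt to the head, the marked variable $h^\Mark$ must be typed via IT-Var; since $h^\Mark$ is marked, its binding must come from $\TopEnvZM \RaiseP \MAX(\Omega(\runseq))$, and is thus of the form $h^\Mark : (\Atype_h, m_h, \MAX(\Omega(\runseq)))$ for some $h : (\Atype_h, m_h) \in \TopEnv$. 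IT-Var forces the second and third components to agree, giving $m_h = \MAX(\Omega(\runseq))$, while the subscript of the target $h^\Mark_{\Atype,m}$ pins down $\Atype_h = \Atype$ and $m_h = m$; hence $m = \MAX(\Omega(\runseq))$. The main obstacle is the bookkeeping around marked versus unmarked symbols: justifying the $f \mapsto f^\Mark$ renaming, faithfully lifting the $D'$-reduction to a $D'^\Mark$-reduction whose states and events coincide with those of the original, and identifying the abstract run reconstructed by the Summary Lemma with the concrete $\runseq$ in the hypothesis.
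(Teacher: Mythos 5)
Your proposal is correct and follows essentially the same route as the paper's proof: invert the typing of \(g\,\seq{s}\) to recover \(f\COL(\Atype_g,m_g,m_g)\in\ITE_0\) and the argument judgments, mark the body derivation, apply the Substitution Lemma, invoke Lemma~\ref{lem:appx:liveness:summary} on the \((N-1)\)-step reduction of the marked body, identify the reconstructed run with \(\runseq\) (the run-annotated reduction is determined by the choice sequence, and marking does not affect it), and read off \(m\) from the \textsc{IT-Var} constraint on the marked binding in \(\TopEnvZM \RaiseP \MAX(\Omega(\runseq))\). The only cosmetic point is that the renaming must mark \emph{all} function symbols \(f_1,\dots,f_n\) occurring in the body (as in the definition of \(\recall{\progd}\)), not just \(f\) itself --- which your replacement of \(\TopEnvZ\) by \(\TopEnvZM\) already presupposes.
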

\begin{myproof}
  Since \(\ITE_0 \pInter t \COL q \eESRel g\,\seq{s} \),
  there exist \( \Atype_0 \) and \( m_0 \) such that
  \begin{align*}
    & \term = f\,t_1\,\dots\,t_n \\
    & \seq{\termalt} = \seq{s}_1\,\dots\,\seq{s}_n \\
    & f \COL (\Atype_0, m_0, m_0) \in \ITE_0 \\
    & \Atype_0 = \Itype_1 \to \dots \to \Itype_n \to q \\
    & \forall i \in \{ 1,\dots,n \}.\; \ITE_0 \pInter t_i \COL \Itype_i \eESRel \seq{s}_i.
  \end{align*}
  Suppose \( D(f) = \lambda x_1 \dots x_n. t_0 \).
  Since \( \TopEnv \pInter D \eESRel D' \), 
  we have
  \begin{align*}
    & D'(f_{\Atype_0, m_0}) = \lambda \seq{x}_1 \dots \seq{x}_n. s_0
    &\TopEnvZ \cup \ZEnv{x_1 \COL \Itype_1} \cup \dots \cup \ZEnv{x_n \COL \Itype_n}
    &\pInter t_0 \COL q \eESRel s_0
  \end{align*}
  where \(\seq{x}_i = \dup{x_i, \Itype_i}\) for each \(i \in \{1, \dots, n\}\).
  By easy induction on the structure of the derivation, we have
  \begin{align*}
    \TopEnvZ \cup \TopEnvZM \cup \ZEnv{x_1 \COL \Itype_1} \cup \dots \cup \ZEnv{x_n \COL \Itype_n}
    &\pInter \term_0^\Mark \COL q \eESRel \termalt_0^\Mark.
  \end{align*}

  By using Weakening (Lemma~\ref{lem:appx:liveness:weakening}) and Substitution Lemma (Lemma~\ref{lem:appx:liveness:substitution}),
  we have
  \begin{align*}
    \TopEnvZ \cup \TopEnvZM \cup \ITE_0
    &\pInter [\term_1/x_1]\dots[\term_n/x_n] \term_0^\Mark \COL q \eESRel [\seq{s}_1/\seq{x}_1]\dots[\seq{s}_n/\seq{x}_n]s_0^\Mark.
  \end{align*}
  
  Now, by the definition of call sequence, we have
  \begin{align*}
    ([\seq{\termalt}_1/\seq{x}_1]\dots[\seq{\termalt}_n/\seq{x}_n] s_0^\Mark, \RTyPath) \Preds[N-1]{\seq{\ell}}{D'^\Mark} (h_{\Atype,m}^\Mark\;\seq{u'}, \RTyPath'). \label{eq:appx:lem:summary4:2}
  \end{align*}
  By Lemma~\ref{lem:appx:liveness:summary}, there exist \( \runseq' \), \( \term' \) and \( \ITE' \EnvRel \TopEnv \) such that
  \[
  ([\seq{\termalt}_1/\seq{x}_1]\dots[\seq{\termalt}_n/\seq{x}_n] s_0^\Mark, \RTyPath,q) \Preds[N-1]{\seq{\ell}, \runseq'}{D'^\Mark} (h_{\Atype,m}^\Mark\;\seq{u'}, \RTyPath', q') \label{eq:appx:lem:summary4:2:1}
  \]
  and
  \[
  \ITE' \cup (\TopEnvZM \RaiseP \MAX(\Omega_{\PWA}(\runseq'))) \pInter \term' : q' \eESRel h_{\Atype,m}^\Mark\;\seq{u'}
  \]
  where \( q' \) is the last state in \( \runseq' \).
  
  By \( (g\,\seq{s}, \RTyPath, q) \Preds[N]{\seq{\ell},\runseq}{D'} (h_{\Atype,m}\,\seq{u}, \RTyPath', q') \), we have
  \[
  (g\,\seq{s}, \RTyPath, q) \Pred{\epsilon,\epsilon}{D'} ([\seq{\termalt}_1/\seq{x}_1]\dots[\seq{\termalt}_n/\seq{x}_n] s_0, \RTyPath, q) \Preds[N-1]{\seq{\ell},\runseq}{D'} (h_{\Atype,m}\,\seq{u}, \RTyPath', q').
  \]
  Since the mark does not affect the induced run of the automaton, we have
  \[
  ([\seq{\termalt}_1/\seq{x}_1]\dots[\seq{\termalt}_n/\seq{x}_n] s_0^\Mark, \RTyPath, q) \Preds[N-1]{\seq{\ell},\runseq}{D'} (\termalt'', \RTyPath', q')
  \]
  for some \( \termalt'' \) (which is equivalent to \( h_{\Atype,m}\,\seq{u} \) except for marks).
  Comparing this reduction sequence with
  \[
  ([\seq{\termalt}_1/\seq{x}_1]\dots[\seq{\termalt}_n/\seq{x}_n] s_0^\Mark, \RTyPath,q) \Preds[N-1]{\seq{\ell}, \runseq'}{D'^\Mark} (h_{\Atype,m}^\Mark\;\seq{u'}, \RTyPath', q') \label{eq:appx:lem:summary4:2:2}
  \]
  given above, we conclude that \( \termalt'' = h_{\Atype,m}^\Mark\;\seq{u'} \) and \( \runseq = \runseq' \).
  Thus
  \[
  \ITE' \cup (\TopEnvZM \RaiseP \MAX(\Omega(\runseq))) \pInter \term' : q' \eESRel h_{\Atype,m}^\Mark\;\seq{u'}.
  \]
  Hence \( h^\Mark : (\Atype, m, m) \in (\TopEnvZM \RaiseP \MAX(\Omega(\runseq))) \) and thus \( m = \MAX(\Omega(\runseq)) \) by definition of \( \TopEnvZM \).
\end{myproof}

\begin{lemma}\label{lem:liveness:summarising-priority}
  Assume that \( \TopEnv \pInter P \eESRel (P', \Pfun) \).
  Let \( P = (\mainfun, D) \) and \( P' = (\mainfun', D') \).
  For each choice sequence \( \RTyPath \in \{\PLeft,\PRight\}^{\omega} \), the following conditions are equivalent.
  \begin{enumerate}
  \item There exists an infinite reduction sequence
    \[
    \RTyPath \Vdash (\mainfun', \qinit) = (\termalt_0, q_0) \Preds{\ell_1,\runseq_1}{D'} (\termalt_1, q_1) \Preds{\ell_2,\runseq_2}{D'} \cdots
    \]
    such that \( \runseq_1 \runseq_2 \cdots \) is an infinite sequence and 
    \( \MAX\mathbf{Inf}(\Omega_{\PWA}(\runseq_1 \runseq_2 \cdots)) \) is even.
  \item There exists an infinite call-sequence
    \[
    \RTyPath \Vdash \mainfun' = g^0_{\Atype_0,m_0}\,\tilde{u}_0 \CallSeqN{k_0}_{D'} g^1_{\Atype_1,m_1}\,\tilde{u}_1 \CallSeqN{k_1}_{D'} g^2_{\Atype_2, m_2}\,\tilde{u}_2 \CallSeqN{k_2}_{D'} \cdots.
    \]
    such that \( \MAX \mathbf{Inf}(\Omega(\seq{g})) \) is odd.
  \end{enumerate}
\end{lemma}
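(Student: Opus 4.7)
The plan is to combine Corollary~\ref{cor:liveness:unique-existence-of-call-seq} (uniqueness of the infinite call sequence associated with an infinite reduction sequence) with Lemma~\ref{lem:appx:liveness:callseq-summary-one} (the identification $m=\MAX(\Omega_{\PWA}(\runseq))$) to reduce the lemma to a combinatorial identity relating two $\MAX\INF$'s.

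First, I would observe that both conditions describe the \emph{same} underlying trajectory. Following a fixed $\RTyPath$ from $\mainfun'$, the reduction sequence of $(D',\mainfun')$ is unique, and by Corollary~\ref{cor:liveness:unique-existence-of-call-seq} it carries a unique infinite call sequence; condition (1) additionally attaches the induced run of $\PWA$, whereas (2) reads off the tagged symbols $g^i_{\Atype_i,m_i}$. Non-termination of this reduction sequence transfers from $P$ to $P'$ because the type-based transformation preserves the one-to-one correspondence between events in $P$ and in $P'$, so our standing assumption ensures that $\runseq_1\runseq_2\dots$ is indeed infinite. Hence only the parity comparison remains.

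Second, I would decompose the sequences segment by segment. Setting $N_i := k_0+\dots+k_{i-1}$, the $i$-th call step $g^{i-1}_{\Atype_{i-1},m_{i-1}}\,\tilde{u}_{i-1} \CallSeqN{k_{i-1}}_{D'} g^i_{\Atype_i,m_i}\,\tilde{u}_i$ corresponds to the block of reduction steps $N_{i-1}+1,\dots,N_i$, giving a run segment $\runseq^{(i)} := \runseq_{N_{i-1}+1}\cdots\runseq_{N_i}$. To invoke Lemma~\ref{lem:appx:liveness:callseq-summary-one} on this block, I need a typing $\ITE_i \pInter \term_i : q_i \eESRel g^i\,\tilde{u}_i$ with $\ITE_i \EnvRel \TopEnv$. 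This I would set up by induction on $i$: the base case is the top-level transformation $[\TopEnv] \pInter \mainfun : \qinit \eESRel \mainfun'$, and the induction step follows from Lemma~\ref{lem:appx:liveness:summary} applied to the $i$-th block (whose endpoint, being a head call $g^i\,\tilde{u}_i$, is not of the form $\termalt'_1 \enondet \termalt'_2$, and whose start term $\term_{i-1}$ contains no $\lambda$-abstractions since reduction does not introduce them). With the typing in hand, Lemma~\ref{lem:appx:liveness:callseq-summary-one} gives $m_i = \MAX(\Omega_{\PWA}(\runseq^{(i)}))$ for every $i\geq 1$.

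Third, I would finish with a short combinatorial argument. Since the set of priorities is finite, a value $v$ occurs infinitely often in the concatenation $\runseq_1\runseq_2\cdots$ iff it occurs in $\runseq^{(i)}$ for infinitely many $i$, iff $M_i \geq v$ holds for infinitely many $i$, where $M_i := \MAX(\Omega_{\PWA}(\runseq^{(i)})) = m_i$. A standard pigeon-hole/contradiction argument then gives
\[
\MAX\INF(\Omega_{\PWA}(\runseq_1\runseq_2\cdots)) \;=\; \MAX\INF(m_1,m_2,\dots).
\]
Because $\Pfun(g^i_{\Atype_i,m_i}) = m_i + 1$, we obtain $\MAX\INF(\Omega(g^0 g^1 g^2 \cdots)) = \MAX\INF(m_1,m_2,\dots) + 1$, so ``even on the run side'' is equivalent to ``odd on the call-sequence side'', which is exactly the equivalence of (1) and (2).

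I expect the main obstacle to be the inductive bookkeeping in the second step: carefully threading an appropriately related typing $\ITE_i \pInter \term_i : q_i \eESRel g^i\,\tilde{u}_i$ through the iterated application of Lemma~\ref{lem:appx:liveness:summary} while remaining inside the hypotheses of Lemma~\ref{lem:appx:liveness:callseq-summary-one}. In particular, a possibly empty run segment $\runseq^{(i)}$ must be handled by the convention $\MAX(\emptyset) = 0$, which is consistent with the initial third component $0$ appearing in $\TopEnvZ$ and hence with the $(\Atype,m,m)$-form required by \textsc{(IT-Var)}.
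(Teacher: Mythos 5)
Your proposal is correct and follows essentially the same route as the paper's proof: identify the unique infinite call-sequence via Corollary~\ref{cor:liveness:unique-existence-of-call-seq}, apply Lemma~\ref{lem:appx:liveness:callseq-summary-one} blockwise to obtain \(m_i = \MAX(\Omega_{\PWA}(\runseq^{(i)}))\), and conclude with the \(\MAX\mathbf{Inf}\) identity and the \(+1\) priority shift. The only cosmetic difference is that the paper treats the two directions separately (using Corollary~\ref{cor:appx:liveness:infinite-backward} to attach the run in the \((2)\Rightarrow(1)\) direction), whereas you thread the typing invariant through Lemma~\ref{lem:appx:liveness:summary} inline — the same machinery, and your flagged "inductive bookkeeping" is exactly the small gap the paper itself acknowledges when rewriting the reduction sequence into call-sequence blocks.
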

\begin{myproof}
  (\( (1) \Rightarrow (2) \))
  Assume an infinite reduction sequence
  \[
  \RTyPath \Vdash (\mainfun', \qinit) = (\termalt_0, q_0) \Preds{\ell_1,\runseq_1}{D'} (\termalt_1, q_1) \Preds{\ell_2,\runseq_2}{D'} \cdots.
  \]
  By Corollary~\ref{cor:liveness:unique-existence-of-call-seq}, we have a (unique) infinite call-sequence
  \[
  \RTyPath \Vdash \mainfun' = g^0_{\Atype_0,m_0}\,\tilde{u}_0 \CallSeqN{k_0}_{D'} g^1_{\Atype_1,m_1}\,\tilde{u}_1 \CallSeqN{k_1}_{D'} g^2_{\Atype_2, m_2}\,\tilde{u}_2 \CallSeqN{k_2}_{D'} \cdots.
  \]
  So the given reduction sequence can be rewritten as \tk{Here is a (quite) small gap, indeed.}
  \[
  \RTyPath \Vdash (\mainfun, \qinit) = (g^0_{\Atype_0,m_0}\,\tilde{u}_0, q'_0) \Preds[k_0]{\seq{\ell}'_1,\runseq_1'}{D'} (g^1_{\Atype_1,m_1}\,\tilde{u}_1, q'_1) \Preds[k_1]{\seq{\ell}'_2,\runseq'_2}{D'} (g^2_{\Atype_2, m_2}\,\tilde{u}_2, q'_2) \Preds[k_2]{\seq{\ell}'_3,\runseq'_3}{D'} \cdots.
  \]
  Note that \( \runseq_1 \runseq_2 \dots = \runseq_1' \runseq_2' \dots \).
  By Lemma~\ref{lem:appx:liveness:callseq-summary-one}, \( m_i = \MAX(\Omega_\PWA(R'_i)) \).
  Since \( \MAX\mathbf{Inf}(\Omega_{\PWA}(\runseq_1 \runseq_2 \cdots)) \) is even,
  \begin{align*}
    \MAX\mathbf{Inf}(\Omega(\seq{g}))
    &=
    \MAX\mathbf{Inf}(m_1+1, m_2+1, \dots)
    \\
    &=
    \MAX\mathbf{Inf}(\MAX(\Omega_\PWA(R'_1))+1,\MAX(\Omega(R'_2))+1, \dots)
    \\
    &=
    \MAX\mathbf{Inf}(\MAX(\Omega_{\PWA}(R'_1)),\MAX(\Omega_\PWA(R'_2)), \dots) +1
    \\
    &=
    \MAX\mathbf{Inf}(\Omega_{\PWA}(R'_1R'_2\dots)) +1
    \\
    &=
    \MAX\mathbf{Inf}(\Omega_{\PWA}(R_1R_2\dots)) +1
  \end{align*}
  is odd.

  (\( (2) \Rightarrow (1) \))
  Assume an infinite call-sequence
  \[
  \RTyPath \Vdash \mainfun' = g^0_{\Atype_0,m_0}\,\tilde{u}_0 \CallSeqN{k_0}_{D'} g^1_{\Atype_1,m_1}\,\tilde{u}_1 \CallSeqN{k_1}_{D'} g^2_{\Atype_2, m_2}\,\tilde{u}_2 \CallSeqN{k_2}_{D'} \cdots.
  \]
  This is an infinite reduction sequence
  \[
  \RTyPath \Vdash \mainfun' = g^0_{\Atype_0,m_0}\,\tilde{u}_0 \Preds[k_0]{\seq{\ell}_1}{D'} g^1_{\Atype_1,m_1}\,\tilde{u}_1 \Preds[k_1]{\seq{\ell}_2}{D'} g^2_{\Atype_2, m_2}\,\tilde{u}_2 \Preds[k_2]{\seq{\ell}_3}{D'} \cdots.
  \]
  By the assumption on the program, \( \seq{\ell}_1 \seq{\ell}_2 \dots \) is an infinite sequence.
  Then by Corollary~\ref{cor:appx:liveness:infinite-backward},
  \[
  \RTyPath \Vdash (\mainfun', q_0) = (g^0_{\Atype_0,m_0}\,\tilde{u}_0, q_0) \Preds[k_0]{\seq{\ell}_1, \runseq_1}{D'} (g^1_{\Atype_1,m_1}\,\tilde{u}_1, q_1) \Preds[k_1]{\seq{\ell}_2,\runseq_2}{D'} (g^2_{\Atype_2, m_2}\,\tilde{u}_2, q_2) \Preds[k_2]{\seq{\ell}_3,\runseq_3}{D'} \cdots
  \]
  for some \( \{(q_i,\runseq_i)\}_{i} \).
  By Lemma~\ref{lem:appx:liveness:callseq-summary-one}, \( m_i = \MAX(\Omega_{\PWA}(R_i)) \).
  Hence
  \begin{align*}
    \MAX\mathbf{Inf}(\Omega_{\PWA}(R_1R_2\dots))
    &=
    \MAX\mathbf{Inf}(\MAX(\Omega_{\PWA}(R_1)),\MAX(\Omega_\PWA(R_2)), \dots)
    \\
    &=
    \MAX\mathbf{Inf}(m_1  m_2 \dots)
    \\
    &=
    \MAX\mathbf{Inf}(\Omega(\seq{g})) - 1
  \end{align*}
  Since \( \MAX \mathbf{Inf}(\Omega(\seq{g})) \) is odd, \( \MAX\mathbf{Inf}(\Omega_{\PWA}(R_1R_2\dots)) \) is even.
  The sequence \( \runseq_1 \runseq_2 \dots \) is infinite since \( \seq{\ell}_1 \seq{\ell}_2 \dots \) is.
\end{myproof}

\subsubsection{Proof of Theorem~\ref{thm:linevess:eff-sel-sound-and-complete}}
Assume that \( \TopEnv \pInter \prog \ESRel (\prog_0, \Pfun) \).
Then by Lemma~\ref{lem:appx:liveness:relating-modified-and-original-translations2}, there exists \( \prog_1 \) such that \( \TopEnv \pInter \prog \eESRel (\prog_1, \Pfun) \) and \( \prog_0 = [\nondet/\enondet]\prog_1 \).
Obviously \( \models_\CSA (\prog_1, \Pfun) \) if and only if \( \models_\CSA (\prog_0, \Pfun) \).
Let \( \prog_1 = (\progd_1, \mainfun_1) \).

We prove the lemma by establishing the equivalence of the following propositions:
\begin{enumerate}
\item
  \( \InfTraces(\prog) \cap \Lang(\PWA) \neq \emptyset \).
\item
  \( \exists \seq{\ell} \in \InfTraces(\prog). \exists \runseq \COL \text{run of \(\PWA\) over \(\seq{\ell}\)}.\;
  \MAX\mathbf{Inf}(\Omega_{\PWA}(\runseq)) \text{ is \emph{even}.} \)
\item
  There exist \( \RTyPath \) and an infinite reduction sequence
  \[
  \RTyPath \Vdash (\mainfun_1, \qinit) = (\termalt_0, q_0) \Preds{\ell_1,\runseq_1}{D_1} (\termalt_1, q_1) \Preds{\ell_2,\runseq_2}{D_1} \cdots
  \]
  such that \( \runseq_1 \runseq_2 \cdots \) is an infinite sequence and 
  \( \MAX\mathbf{Inf}(\Omega_{\PWA}(\runseq_1 \runseq_2 \cdots)) \) is even.
\item There exist \( \RTyPath \) and an infinite call-sequence
  \[
  \RTyPath \Vdash \mainfun_1 = g^0_{\Atype_0,m_0}\,\tilde{u}_0 \CallSeqN{k_0}_{D_1} g^1_{\Atype_1,m_1}\,\tilde{u}_1 \CallSeqN{k_1}_{D_1} g^2_{\Atype_2, m_2}\,\tilde{u}_2 \CallSeqN{k_2}_{D_1} \cdots.
  \]
  such that \( \MAX \mathbf{Inf}(\Omega(\seq{g})) \) is odd.
\item \( \neg (\models_\CSA (\prog_1, \Pfun)) \).
\item \( \neg (\models_\CSA (\prog_0, \Pfun)) \).
\end{enumerate}

\((1) \Leftrightarrow (2)\): By definition.

\((2) \Leftrightarrow (3)\): By Lemma~\ref{lem:appx:liveness:eff-sel-ev-seq-vs-choice-seq}.

\((3) \Leftrightarrow (4)\): By Lemma~\ref{lem:liveness:summarising-priority}.

\((4) \Leftrightarrow (5)\): By definition

\((5) \Leftrightarrow (6)\): Obvious.


\subsection{Proof of Theorem~\ref{thm:liveness:eff-sel-effective}}
\label{sec:proof-effectiveness}

Here we provide a proof of Theorem~\ref{thm:liveness:eff-sel-effective}.
The proof also implies that the reduction from the temporal verification to
\HFLZ{} model checking can be performed \emph{in polynomial time}.

For each simple type \(\Pest\), we define the set \(\Theta_{\Pest}\) of \emph{canonical intersection types} by:
\[
\begin{array}{l}
\Theta_{\Tunit} = \set{q \mid q\in Q_\PWA}\qquad
\Theta_{\INT} = \set{\INT}\\
\Theta_{\INT\to\Pst} = \set{\INT\to \Atype \mid \Atype\in \Theta_{\Pst}}\\
\Theta_{\Pst_1\to\Pst_2} = \set{\IT\set{(\Atype_1,m)\mid \Atype_1\in \Theta_{\Pst_1}, m\in\set{0,\ldots,M}}\to \Atype_2 \mid 
\Atype_2\in \Theta_{\Pst_2}}
\end{array}
\]
Here, \(M\) is the largest priority used in \(\PWA\).
Note that, for any simple type \(\Pst\;(\neq \INT)\),  the size of the set \(\Theta_{\Pest}\) is \(|Q|\).
For a simple type environment \(\STE\), we define the canonical type environment \(\Gamma_{\STE,m}\) as:
\[
\set{x\COL \INT \mid x\COL\INT\in \STE}
\cup 
\set{x\COL (\Atype,m',m)\mid x\COL\Pst\in \STE, \Atype\in \Theta_{\Pst}, m'\in\set{0,\ldots,M}}.
\]

\begin{lemma}
\label{lem:canonical-translation}
Suppose \(\STE \p t:\Pest\). Then, for any \(m\in\set{0,\ldots,M}\) and \(\Atype\in \Theta_{\Pest}\),
there exists \(t'\) such that \(\Gamma_{\STE,m}\pInter t:\Atype\Ra t'\). Furthermore, \(t'\) can be effectively
constructed.
\end{lemma}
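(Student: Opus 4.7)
The plan is to prove a strengthened statement parameterized by a priority function \(\mu \COL \dom(\STE) \to \set{0,\ldots,M}\), and obtain the lemma as the special case where \(\mu\) is the constant function with value \(m\). Define
\[\Gamma_{\STE,\mu} \DEF \set{x\COL\INT \mid x\COL\INT\in\STE} \cup \set{x\COL(\Atype,m',\mu(x)) \mid x\COL\Pst\in\STE,\, \Atype\in\Theta_{\Pst},\, m'\in\set{0,\ldots,M}}.\]
The strengthened claim is: for any \(\STE\p t\COL\Pest\), any \(\mu\), and any \(\Atype\in\Theta_{\Pest}\), one can effectively construct \(t'\) such that \(\Gamma_{\STE,\mu}\pInter t\COL\Atype\Ra t'\). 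The original lemma is the instance where \(\mu(x) = m\) for every \(x\).

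The proof proceeds by structural induction on \(t\), applying the corresponding rule of Figure~\ref{fig:inter} at each step and invoking the induction hypothesis on subterms. The variable, integer, and unit cases are immediate because \(\Gamma_{\STE,\mu}\) contains exactly the binding \(x\COL(\Atype,\mu(x),\mu(x))\) needed by \rn{IT-Var}. The arithmetic, conditional, non-deterministic-choice, and integer application/abstraction cases are routine. For \(t = \evexp{\lab}{t_0}\) targeting \(q\) with \(\delta_{\PWA}(q,\lab) = \set{q_1,\ldots,q_k}\), one verifies the identity \(\Gamma_{\STE,\mu}\RaiseP\Pfun_{\PWA}(q_i) = \Gamma_{\STE,\mu_i}\) with \(\mu_i(x) \DEF \MAX(\mu(x),\Pfun_{\PWA}(q_i))\), so the IH supplies the \(k\) required subderivations and \rn{IT-Event} combines them. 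For an application \(t_1\,t_2\) with \(t_2\) of non-integer type \(\Pst_2\), pick the canonical argument \(\Itype = \IT\set{(\Atype_2,m_2) \mid \Atype_2\in\Theta_{\Pst_2},\, m_2\in\set{0,\ldots,M}}\), so that \(\Itype\to\Atype\in\Theta_{\Pst_2\to\Pest'}\), and apply the IH to \(t_1\) at type \(\Itype\to\Atype\) and to \(t_2\) at each \((\Atype_2,m_2)\) under \(\Gamma_{\STE,\mu}\RaiseP m_2\).

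The main obstacle is the abstraction rule \rn{IT-Abs}, which extends the environment with bindings \(\set{x\COL(\Atype_1,m_1,0) \mid (\Atype_1,m_1)\in \Theta_{\Pst}\times\set{0,\ldots,M}}\) whose third components for the freshly bound \(x\) are \(0\), rather than the uniform \(m\) of the original formulation. This mixed environment is exactly what forces the generalization to \(\mu\): with \(\mu\)-indexed environments, the required extension coincides with \(\Gamma_{(\STE,x\COL\Pst),\mu[x\mapsto 0]}\), so the induction hypothesis applies directly with the extended priority function.

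Effectiveness is immediate from the construction: each inductive step produces \(t'\) from the recursively constructed subterms in a fixed finite number of operations, the canonical intersection type for every argument position is uniquely determined by its simple type, and the recursion terminates on the structure of \(t\).
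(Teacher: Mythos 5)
Your proof is correct, and its skeleton is the same as the paper's: an induction that, at every argument position, commits to the single canonical intersection type \(\IT\set{(\Atype',m')\mid \Atype'\in\Theta_{\Pst},\,m'\in\set{0,\ldots,M}}\) determined by the simple type, so that the premises of \rn{IT-App}/\rn{IT-Abs} are always instances of the induction hypothesis. The one genuine difference is your strengthening of the induction hypothesis from a uniform priority \(m\) to a per-variable priority function \(\mu\). The paper does not do this: its induction is on the derivation of \(\STE\p t\COL\Pest\), and since the simple type system for terms (Figure~\ref{fig:simple-typing}) has no abstraction rule, the case you identify as the obstacle never arises inside the induction; \(\lambda\)-abstractions occur only at top-level definitions, where the lemma is invoked with \(m=0\), and there the bindings \(x\COL(\Atype_i,m_i,0)\) introduced by \rn{IT-Abs} happen to agree with the uniform environment \(\Gamma_{\cdot,0}\). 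So the paper's uniform-\(m\) formulation survives, but only by this coincidence at the top level (and the wrapping by \rn{IT-Abs} is left implicit in the proof of Theorem~\ref{thm:liveness:eff-sel-effective}). Your \(\mu\)-indexed environments make the abstraction case go through at arbitrary \(m\) and make the treatment of top-level \(\lambda\)s explicit rather than tacit; the identity \(\Gamma_{\STE,\mu}\RaiseP m=\Gamma_{\STE,\lambda x.\MAX(\mu(x),m)}\) you use in the event and application cases is the correct generalization of the paper's \(\Gamma_{\STE,m}\RaiseP\Pfun_{\PWA}(q_i)=\Gamma_{\STE,\MAX(m,\Pfun_{\PWA}(q_i))}\). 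The cost is a slightly heavier statement; what it buys is a proof that is robust to abstractions occurring anywhere, not just at priority \(0\). Both arguments yield the same effectiveness and the same polynomial bound.
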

\begin{proof}
The proof proceeds by induction on the derivation of \(\STE \p t:\Pest\), with case analysis on the last
rule used.
\begin{itemize}
\item Case \rn{LT-Unit}: In this case, \(t=\unitexp\) and \(\Pest=\Tunit\). By rule \rn{IT-Unit}, we have
\(\Gamma_{\STE,m}\pInter t:\Atype\Ra t'\) for \(t'=\unitexp\).
\item Case \rn{LT-Var}: In this case, \(t=x\) and \(\STE=\STE',x\COL\Atype\). If \(\Pest=\INT\), then 
\(\Atype=\INT\) and \(\Gamma_{\STE,m}=\Gamma_{\STE',m}, x\COL \INT\). By rule \rn{IT-VarInt}, we have \(\Gamma_{\STE,m}\pInter t:\Atype\Ra t'\)
for \(t'=x_\INT\), as required.
If \(\Pest\neq \INT\), then \(\Gamma_{\STE,m} = \Gamma_{\STE',m}\cup \set{x\COL(\Atype',m',m)\mid \Atype'\in\Theta_{\Pest},
m'\in\set{0,\ldots,M}}\ni x\COL (\Atype,m,m)\). Therefore, by rule \rn{IT-Var}, we have
\(\Gamma_{\STE,m}\pInter t\COL\Atype\ESRel t'\) for \(t'=x_{\Atype,m}\), as required.
\item Case \rn{LT-Int}: In this case, \(t=n\) and \(\Pest=\INT\).
By rule \rn{IT-Int}, we have
\(\Gamma_{\STE,m}\pInter t:\Atype\ESRel t'\) for \(t'=n\), as required.
\item Case \rn{LT-Op}: In this case, \(\term=\term_1\OP \term_2\) and \(\Atype=\INT\), with \(\STE\p \term_i:\INT\)
for each \(i\in\set{1,2}\). By the induction hypothesis,
there exists \(\term_i'\) such that \(\Gamma_{\STE,m}\pInter \term_i:\Atype\ESRel \term_i'\) for each \(i\).
By rule \rn{IT-Op}, we have 
\(\Gamma_{\STE,m}\pInter t:\Atype\ESRel t'\) for \(t'=\term'_1\OP\term'_2\), as required.
\item Case \rn{LT-Ev}: In this case, \(\term=\evexp{\lab}{\term_1}\) and \(\Pest=\Tunit\),
with \(\STE\p \term_1:\Tunit\) and \(\Atype=q\in Q_{\PWA}\). Let \(\delta_{\PWA}(q,\lab)=\set{q_1,\ldots,q_n}\)
and \(m_i = \max(m,\Pfun_\A(q_i))\). By the induction hypothesis,
we have \(\Gamma_{\STE,m_i}\pInter \term_1:q_i\ESRel \term_i'\) for each \(i\in\set{1,\ldots,n}\).
Since \(\Gamma_{\STE,m_i} = \Gamma_{\STE,m}\RaiseP \Pfun_{\PWA}(q_i)\), by applying rule \rn{IT-Event},
we obtain \(\Gamma_{\STE,m}\pInter t:\Atype\ESRel t'\) for
\(t'=\evexp{\lab}{(\term'_1 \nondet \cdots \nondet \term'_n)}\), as required.
\item Case \rn{LT-If}:
In this case, \(\term = \ifexp{p(\term_1, \dots, \term_k)}{\term_{k+1}}{\term_{k+2}}\) and
\(\Pest=\Tunit\), with
(i) \(\STE\p \term_i:\INT\) for each \(i\in\set{1,\ldots,k}\),
(ii) \(\STE\p \term_i:\Tunit\) for each \(i\in\set{k+1,k+2}\), and (iii) \(\Atype=q\in Q_{\PWA}\).
By the induction hypothesis, we have:
\(\Gamma_{\STE,m}\pInter \term_i:\INT\ESRel \term'_i\) for each \(i\in\set{1,\ldots,k}\), and 
\(\Gamma_{\STE,m}\pInter \term_i:\Atype\ESRel \term'_i\) for each \(i\in\set{k+1,k+2}\).
Thus, by rule \rn{IT-If}, we have 
\(\Gamma_{\STE,m}\pInter t:\Atype\ESRel t'\) for
\(t'=\ifexp{p(\term'_1, \dots, \term'_k)}{\term'_{k+1}}{\term'_{k+2}}\) as required.
\item Case \rn{LT-App}:
In this case, \(\term=\term_1\term_2\) with
\(\STE\p \term_1:\Pest_2\to\Pest\) and \(\STE\p \term_2:\Pest_2\).
Let \(\set{(\Atype_1,m_1),\ldots,(\Atype_k,m_k)} = \Theta_{\Pest_2}\times \set{0,\ldots,M}\)
with \((\Atype_1,m_1)<\cdots < (\Atype_k,m_k)\), and 
let \(\Atype'\) be \(\IT_{1\leq i\leq k} (\Atype_i,m_i)\to \Atype\).
By the induction hypothesis, we have
\(\Gamma_{\STE,m}\pInter \term_1:\Atype'\ESRel \term_1'\)
and \(\Gamma_{\STE,m}\RaiseP m_i\pInter\term_2:\Atype_i\ESRel\term'_{2,i}\) for each \(i\in\set{1,\ldots,k}\).
Thus, by rule \rn{IT-App}, we have
\(\Gamma_{\STE,m}\pInter t:\Atype\ESRel t'\) for
\(t'=\term_1' \term'_{2,1}\cdots \term'_{2,k}\) as required.
\item Case \rn{LT-NonDet}:
In this case, \(\term=\term_1\nondet\term_2\) and \(\Pest=\Tunit\), with
\(\STE\p \term_i:\Tunit\) for each \(i\in\set{1,2}\). By the assumption \(\Atype\in  \Theta_{\Pest}\), \(\Atype\in Q_{\PWA}\).
By the induction hypothesis, we have \(\Gamma_{\STE,m}\pInter \term_i:\Atype\ESRel \term'_i\) for each \(i\).
By rule \rn{IT-NonDet}, we have 
\(\Gamma_{\STE,m}\pInter t:\Atype\ESRel t'\) for
\(t'=\term_1' \nondet\term'_2\) as required.
\end{itemize}
\qed
\end{proof}

Theorem~\ref{thm:liveness:eff-sel-effective} is an immediate corollary of the lemma above.

\begin{proof}[Theorem~\ref{thm:liveness:eff-sel-effective}]
Let \(P=(\set{f_1 = \term_1,\ldots,f_n=\term_n},\term)\), and
 \(\STE\) be a simple type environment for \(P\), i.e.,
\(\STE \p P\). 
Let \(\TopEnv\)  be:
\[ \set{f_i\COL (\Atype, m)\mid i\in\set{1,\ldots,n}, \Atype\in \Theta_{\STE(f_i)}}.\]
For each \(f_i\COL(\Atype,m)\in \TopEnv\), let \(\term_{i,\Atype}'\) and \(\term'\) be
the terms given by Lemma~\ref{lem:canonical-translation}, such that
\(\Gamma_{\STE,0} \pInter \term_i:\Atype\ESRel \term_{i,\Atype}\)
and 
\(\Gamma_{\STE,0} \pInter \term:q_I\ESRel \term'\).
Let \(P'\) be:
\[(\set{f_{i,\Atype,m}=\term'_{i,\Atype} \mid f_i\COL(\Atype,m)\in \TopEnv}, \term'),\]
and \(\Pfun\) be 
\[\set{f_{i,\Atype,m}\mapsto m+1 \mid f_i\COL(\Atype,m)\in \TopEnv}.\]
Then \( \TopEnv \pInter P \Rightarrow (P',\Pfun) \) holds, and
\(\TopEnv, P',\Pfun\) can be effectively constructed as described above. \qed
\end{proof}

Any program \(P=(\set{f_1 = \term_1,\ldots,f_n=\term_n},\term)\) can be normalized (with an at most polynomial
blow-up of the size), so that
each of \(\term_i\) and \(\term\) is one of the following forms:
\begin{itemize}
\item
\(\unitexp\)
\item \( y\,(z_1\,x_{1,1}\,\cdots\,x_{1,\ell_1})\,\cdots\,(z_k\,x_{k,1}\,\cdots\,x_{k,\ell_k}) \)
\item \( \evexp{\lab}{(y\,x_1\,\cdots\,x_k)}\)
\item \(\ifexp{p(\term_1, \dots, \term_k)}{(y\,x_1\,\cdots\,x_k)}{(y'\,x'_1\,\cdots\,x'_{k'})}\)
\end{itemize}
\noindent
where \(k,k',\ell_i\) may be \(0\). The normalization can be performed by
adding auxiliary functions: See \cite{Kobayashi13JACM}, Section~4.3 for a similar normalization.
For a normalized program \(P\), 
 the size of \(P'\) obtained by the transformation in the proof above
is polynomial in the size of \(P\) and \(|Q_{\PWA}|\).
Furthermore, \(P'\) and \(\Pfun\) can be constructed in polynomial time.
Thus, the whole reduction from temporal property verification to call-sequence analysis (hence also to
\HFLZ{} model checking) can be
performed in polynomial time.

\section{An Example of the Translation of Section~\ref{sec:liveness}}
\label{sec:ex-tr-derivation}
Here we show derivation trees for the translation in Example~\ref{ex:tr}.

The body of \(g\) is translated as follows (where we omit irrelevant type bindings),
where \(\ITE_0=k\COL (q_a,0,0), k\COL (q_b,1,0) \).
\[
\small
\infers{\ITE_0 \pInter (\evexp{\Lab{a}}k)\nondet (\evexp{\Lab{b}}k):q_a
\Ra (\evexp{\Lab{a}}k_{q_a,0})\nondet (\evexp{\Lab{b}}k_{q_b,1})}
{
\infers{\ITE_0 \pInter (\evexp{\Lab{a}}k):q_a
\Ra (\evexp{\Lab{a}}k_{q_a,0})}
{k\COL (q_a,0,0), k\COL (q_b,1,0) \pInter k: q_a
\Ra k_{q_a,0}}
 &
\infers{\ITE_0 \pInter (\evexp{\Lab{b}}k):q_a
\Ra (\evexp{\Lab{a}}k_{q_b,1})}
{k\COL (q_a,0,1), k\COL (q_b,1,1) \pInter k:q_b
\Ra k_{q_b,1}}
}
\]

The body of \(f\) is translated as follows.
\[
\small
\infers{\ITE_1,x\COL\INT\pInter \ifexp{x>0}{g\,(f(x-1))}{(\evexp{\Lab{b}}f\,5)}: q_a\Ra \term_{f,q_a}}
{\ITE_1,x\COL\INT\pInter x:\INT\Ra x_\INT & 
\ITE_1,x\COL\INT\pInter 0:\INT\Ra 0 & \pi_1 & \pi_2 }
\]
Here, \(\ITE_1\) is:
\[
\begin{array}{l}
g\COL ((q_a,0)\land (q_b,1)\to q_a, 0, 0), \quad
g\COL ((q_a,0)\land (q_b,1) \to q_b, 0, 0),\\
f\COL (\INT\to q_a, 0, 0), \quad f\COL (\INT\to q_b, 1, 0)
\end{array}
\]
and \(\pi_1\) and \(\pi_2\) are:
\[
\small
\pi_1 = \raisebox{-2ex}{
\infers
{\ITE_1,x\COL\INT\pInter {g\,(f(x-1))}: q_a\Ra 
g_{(q_a,0)\land (q_b,1)\to q_a, 0}\,(f_{\INT\to q_a, 0}(x_\INT-1))\,
(f_{\INT\to q_b, 1}(x_\INT-1))}
{\ITE_1,x\COL\INT\pInter {g}: (q_a,0)\land (q_b,1)\to q_a \Ra g_{(q_a,0)\land (q_b,1)\to q_a, 0} 
 & \pi_3 & \pi_4}}
\]
\vspace*{1ex}
\[
\small
\pi_2 = \raisebox{-4ex}{
\infers
{\ITE_1,x\COL\INT\pInter \evexp{\Lab{b}}f\,5: q_a\Ra \evexp{\Lab{b}}f_{\INT\to q_b, 1} 5}
{\infers{\ITE_1\RaiseP 1,x\COL\INT\pInter f\,5: q_a\Ra f_{\INT\to q_b, 1} 5}
 {\ITE_1\RaiseP 1,x\COL\INT\pInter f: \INT\to q_b\Ra f_{\INT\to q_b, 1}
  & \ITE_1\RaiseP 1,x\COL\INT\pInter 5: \INT\Ra 5
}}}
\]
\vspace*{1ex}

\[
\footnotesize
\pi_3 =\raisebox{-2ex}{
\infers{
\ITE_1\RaiseP 0,x\COL\INT\pInter {f(x-1)}: q_a \Ra f_{\INT\to q_a, 0}(x_\INT-1)}
{\ITE_1\RaiseP 0, x\COL\INT\pInter f:\INT\to q_a\Ra f_{\INT\to q_a, 0}
 & \infers{\ITE_1\RaiseP 0, x\COL\INT\pInter x-1:\INT\Ra x_\INT-1}{\cdots}}}
\]
\vspace*{1ex}

\[
\footnotesize
\pi_4 =\raisebox{-2ex}{
\infers{
\ITE_1\RaiseP 1,x\COL\INT\pInter {f(x-1)}: q_b \Ra f_{\INT\to q_b, 1}(x_\INT-1)}
{\ITE_1\RaiseP 1, x\COL\INT\pInter f:\INT\to q_b\Ra f_{\INT\to q_b, 1}
 & \infers{\ITE_1\RaiseP 1, x\COL\INT\pInter x-1:\INT\Ra x_\INT-1}{\cdots}}}
\]

\section{Proving \HFLZ{} formulas in Coq}
\label{sec:coq}

Given program verification problems, the reductions presented in 
Sections~\ref{sec:reachability}--\ref{sec:liveness} yield \HFLZ{} model checking problems,
which may be thought as a kind of ``verification conditions'' (like those for Hoare triples).
Though we plan to develop automated/semi-automated tools for discharging the ``verification conditions'',
we demonstrate here that it is also possible to use an interactive theorem prover to do so.

Here we use Coq proof assistant, and consider \HFLN (HFL extended with natural numbers)
instead of \HFLZ{}.
Let us consider the termination of the following program:
\begin{verbatim}
let sum n k =
  if n<=0 then k 0 
  else sum (n-1) (fun r -> k(n+r))
in sum m (fun r->())
\end{verbatim}
Here, we assume \(m\) ranges over the set natural numbers.

The translation in Section~\ref{sec:mustreach} yields the following \HFLN{} formulas.
\[
(\mu \mathtt{sum}.\lambda n.\lambda k.(n\leq 0\imply k\, 0)\land (n>0\imply \mathtt{sum}\,(n-1)\,\lambda r.k(n+r)))\,
m\, (\lambda r.\TRUE).
\]
The goal is to prove that for every \(m\),
the formula is satisfied by the trivial model \(\lts_0= (\set{\stunique},\emptyset,\emptyset,\stunique)\).

In order to avoid the clumsy issue of representing variable bindings in Coq, we represent
the \emph{semantics} of the above formula in Coq.
The following definitions correspond to those of \(\D_{\lts,\typ}\) in Section~\ref{sec:pre}.

\begin{verbatim}
(* syntax of simple types: 
   "arint t" and "ar t1 t2" represent nat->t and t1->t2 respectively *)
Inductive ty: Set :=
    o: ty
  | arint: ty -> ty    
  | ar: ty -> ty -> ty.

(* definition of semantic domains, minus the monotonicity condition *)
Fixpoint dom (t:ty): Type :=
  match t with
     o => Prop
   | arint t' => nat -> dom t'
   | ar t1 t2 => (dom t1) -> (dom t2)
  end. 
\end{verbatim}
Here, we use \texttt{Prop} as the semantic domain \(\D_{\lts_0,\typProp} = \set{\emptyset, \set{\stunique}}\)
and represent \(\set{\stunique}\) as a proposition \texttt{True}.

Above, we have deliberately omitted the monotonicity condition, which is separately defined
by induction on simple types, as follows.
\begin{verbatim}
Fixpoint ord (t:ty) {struct t}: dom t -> dom t -> Prop :=
  match t with
	    o => fun x: dom o => fun y: dom o => (x -> y)
  | arint t' =>
     fun x: dom (arint t') => fun y: dom (arint t') =>	    
          forall z:nat, ord t' (x z) (y z)
  | ar t1 t2 => 
      fun x: dom (ar t1 t2) => fun y: dom (ar t1 t2) =>	    
       forall z w:dom t1, ord t1 z z -> ord t1 w w ->
       ord t1 z w -> ord t2 (x z) (y w)
  end.
 
Definition mono (t: ty) (f:dom t) :=
    ord t f f.
\end{verbatim}
Here, \texttt{ord}\ \(\typ\) corresponds to \(\sqleq_{\lts_0,\typ}\) in Section~\ref{sec:pre},\footnote{Note, however,
that since
\texttt{dom t} may be inhabited by non-monotonic functions, 
``\texttt{ord}\ \(\typ\)'' is not reflexive.}
and the monotonicity condition on \(f\) is expressed as the reflexivity condition \(\mathtt{ord}\; \typ\; f\; f\).

We can then state 
the claim that the sum program is terminating for every \(m\) as the following theorem:

\begin{verbatim}
Definition sumt := arint (ar (arint o) o).
Definition sumgen :=
  fun sum: dom sumt => 
   fun n:nat=> fun k:nat->Prop =>
    (n<=0 -> k 0) /\ (n>0 -> sum (n-1) (fun r:nat=>k(r+n))).
	   
Theorem sum_is_terminating:
  forall sum: dom sumt,
  forall FPsum: (* sum is a fixpoint of sumgen *)
     (forall n:nat, forall k:nat->Prop, sum n k <-> sumgen sum n k),
  forall LFPsum: (* sum is the least fixpoint of sumgen *)
     (forall x:dom sumt,
       mono sumt x ->
       ord sumt (sumgen x) x -> ord sumt sum x),		      			       
  forall m:nat, sum m (fun r:nat => True).
(* can be automatically generated up to this point *)
Proof.
(* this part should be filled by a user *)
...
Qed.
\end{verbatim}
Here, \texttt{sumgen} is the semantics of the argument of the \(\mu\)-operator
(i.e., \(\lambda \texttt{sum}.\lambda n.\lambda k. (n\leq 0\imply k\, 0)\land (n>0\imply \mathtt{sum}\,(n-1)\,\lambda r.k(n+r))\)),
and the first three ``\texttt{forall } ...'' assumes that \(\texttt{sum}\) is the least fixpoint of it,
and the last line says that \(\texttt{sum}\;m\;\lambda x.\texttt{True}\) is equivalent to \(\texttt{True}\) for every \(m\).

Note that except the proof (the part ``\texttt{...}''), all the above script
can be \emph{automatically} generated based on the development in the paper (like \texttt{Why3}\cite{why3}, but
\emph{without} any invariant annotations). 

The following is a proof of the above theorem.
\begin{verbatim}
Proof.
  intros.
  (* apply induction on m *)
  induction m; 
  apply FPsum; 
  unfold sumgen; simpl;auto.
  split; auto.
  omega.
  (* induction step *)
  assert (m-0=m); try omega.
  rewrite H; auto.
Qed.
\end{verbatim}

More examples
\ifsubmission
are found at \url{http://www-kb.is.s.u-tokyo.ac.jp/~koba/papers/hfl_in_coq.zip}
(provided also as supplementary materials).
\else
are found at \url{http://www-kb.is.s.u-tokyo.ac.jp/~koba/papers/hfl_in_coq.zip}.
\fi
The Coq proofs for some of those examples are much longer.
For each example, however, there are only a few places where human insights are required
(like ``\texttt{induction m}'' above). We, therefore,
 expect that the proofs can be significantly shortened by preparing appropriate libraries.

\section{HORS- vs HFL-based Approaches to Program Verification}
\label{sec:hors-vs-hfl}

In this section, we provide a more detailed comparison between our new HFL-based approach and
HORS-based approaches~\cite{Kobayashi09POPL,KSU11PLDI,Kobayashi13JACM,Ong11POPL,Kuwahara2014Termination,MTSUK16POPL,Watanabe16ICFP} 
to program verification. Some familiarity with HORS-based approaches may be required to fully understand the comparison.

HORS model checking algorithms~\cite{Ong06LICS,KO09LICS} usually consist of two phases,
one for computing a kind of higher-order ``procedure summaries'' in the form of variable profiles~\cite{Ong06LICS}
or intersection types~\cite{KO09LICS} (that summarize which states are visited between two function calls), 
and the other for solving games, which consists in nested least/greatest fixpoint computations.
In the case of finite-data programs, the combination of the two phases provides a sound and complete verification
algorithm~\cite{Ong06LICS,KO09LICS}. 
To deal with infinite-data programs, however,
the HORS-based 
approaches~\cite{Kobayashi09POPL,KSU11PLDI,Kobayashi13JACM,Ong11POPL,Kuwahara2014Termination,MTSUK16POPL,Watanabe16ICFP} 
had to apply various transformations to map verification problems to HORS model checking problems in a sound but
incomplete manner (incompleteness is inevitable because, in the presence of values from infinite data domains, 
the former is undecidable whereas the latter is decidable),
as illustrated on the lefthand side of Figure~\ref{fig:hors-vs-hfl}.
 A problem about this approach is that
the second phase of HORS model checking -- nested least/greatest fixpoint computations -- actually does not help
much to solve the original problem, because least fixpoint computations are required for proving liveness (such as termination), 
but the liveness of infinite-data programs usually depends on properties about infinite data domains such as ``there is no
infinite decreasing sequence of natural numbers,'' which are not available after the transformations to HORS.
For this reason, the previous techniques for proving termination~\cite{Kuwahara2014Termination} and 
fair termination~\cite{MTSUK16POPL} used HORS model checking only as a backend of a safety property checker, where
only greatest fixpoint computations are performed in the second phase; reasoning about liveness was performed during
the transformation to HORS model checking.

\begin{figure}
\includegraphics[scale=0.5]{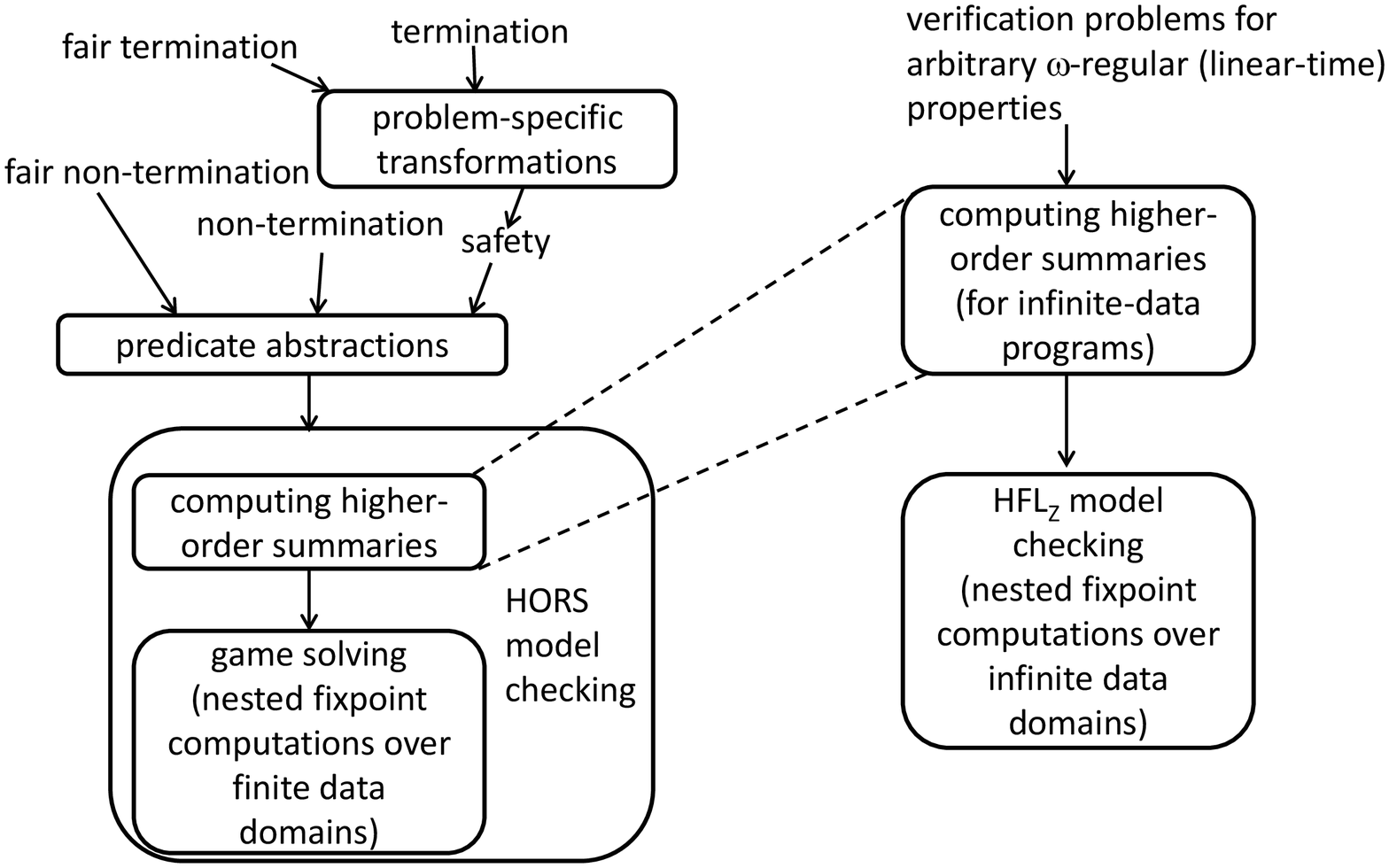}
\caption{Comparison between HORS- and HFL-based approaches. In the HORS-based approaches (shown on the lefthand side), 
there are actually some
feedback loops (due to counterexample-guided abstraction refinement, etc.), which are omitted.
The right hand side shows the approach based on the reduction in Section~\ref{sec:liveness}. In the reductions
in Section~\ref{sec:reachability} and \ref{sec:path}, the first phase is optimized for degenerate cases.}
\label{fig:hors-vs-hfl}
\end{figure}

As shown on the righthand side of
Figure~\ref{fig:hors-vs-hfl}, in our new HFL-based approach,
 we have extended the first phase of HORS model checking -- the computation of higher-order 
procedure summaries -- to deal with infinite data programs, and moved it up front; we then
formalized the remaining problems as HFL model checking 
problems. That can be viewed as the essence of the reduction in Section~\ref{sec:liveness}, and the reductions
in Sections~\ref{sec:reachability} and \ref{sec:path} are those for degenerate cases.
Advantages of the new approach include: (i) necessary information on infinite data is
available in the second phase of least/greatest computations (cf. the discussion above on the HORS-based approach),
 and (ii) various verification problems boil down to the issue of how to prove
least/greatest fixpoint formulas; thus we can reuse and share the techniques developed for different verification problems.
The price to pay is that the first phase (especially a proof of its correctness)
is technically more involved, because now we need to deal with infinite-data
programs instead of HORS's (which are essentially finite-data functional programs). 
That explains long proofs in Appendices (especially Appendix~\ref{sec:game-characterization} and \ref{sec:proof-liveness}).

\end{document}